\newif\ifcomment
\newif\ifdraft
\newif\ifsub
\newif\iffull\fulltrue 
\newcommand\reallywidehat[1]{%
\savestack{\tmpbox}{\stretchto{%
  \scaleto{%
    \scalerel*[\widthof{\ensuremath{#1}}]{\kern.1pt\mathchar"0362\kern.1pt}%
    {\rule{0ex}{\textheight}}
  }{\textheight}%
}{2.4ex}}%
\stackon[-8.5pt]{#1}{\tmpbox}%
}
   \theoremstyle{acmdefinition}
   \newtheorem{remark}[theorem]{Remark}}
\renewcommand{\cref}[1]{\Cref{#1}}
\crefname{theorem}{Thm.}{Theorems}
\crefname{definition}{Def.}{Defs}
\crefname{proposition}{Prop.}{Props}
\Crefname{equation}{Eq.}{Eqs}
\crefname{equation}{Eq.}{Eqs}
\crefname{lemma}{Lem.}{Lemmas}
\crefname{remark}{Rem.}{Remarks}
\crefname{example}{Ex.}{Examples}
\crefname{proof}{Proof.}{Proofs}
\crefname{appendix}{Appendix}{Appendixes}
\crefname{figure}{Fig.}{Figs}
\newcommand{\uncurryStep}{SPS transformation}
\newcommand{\subdist}{\mathcal{D}_{\leq 1}}
\newcommand{\nat}{\mathbb{N}}
\newcommand{\nonnegrat}{\mathbb{Q}_{\geq 0}}
\newcommand{\real}{\mathbb{R}}
\newcommand{\nonnegreal}{\mathbb{R}_{\geq 0}}
\newcommand{\boolsets}{\mathbf{2}}
\newcommand{\fstrings}{A^{\ast}}
\newcommand{\Acoin}{A_\coinflip}
\newcommand{\fstringscoin}{\Acoin^{\ast}}
\newcommand{\fstringssubdistcoin}{\subdist(\fstringscoin)}
\newcommand{\Ycoin}{Y_\coinflip}
\newcommand{\sets}{\mathbf{Set}}
\newcommand{\hoaremonad}{\mathcal{L}}
\newcommand{\omegaCPO}{\mathbf{\mathbf{\omega CPO}}}
\newcommand{\dCPO}{\mathbf{\mathbf{DCPO}}}
\newcommand{\scottTop}[1]{\mathcal{O}(#1)}
\newcommand{\comProbPowerdommonad}{\mathcal{M}}
\newcommand{\gendfa}[1]{#1^{A^{\ast}}}
\newcommand{\defeq}{\coloneqq}
\newcommand{\id}{\mathrm{id}}
\newcommand{\type}{\mathbf{Typ}(B)}
\newcommand{\gtype}{\mathbf{GTyp}(B)}
\newcommand{\arfunc}{\mathrm{ar}}
\newcommand{\carfunc}{\mathrm{car}}
\newcommand{\arfuncl}[1]{\mathrm{ar}}
\newcommand{\carfuncl}[1]{\mathrm{car}}
\newcommand{\term}{\mathbf{Term}}
\newcommand{\aitp}[3]{{#1\llbracket #2\rrbracket_{#3}}}
\newcommand{\itp}[3]{{\llbracket #2\rrbracket_{#3}}}
\newcommand{\litp}[4]{{\llbracket #2\rrbracket_{#3}^{#4}}}
\newcommand{\ev}[2]{\mathrm{ev}_{#1, #2}} 
\newcommand{\evsyb}{\mathrm{ev}}
\newcommand{\wpcond}[4]{\mathrm{wp}[#1,#4,#2](#3)}
\newcommand{\rtanmod}[2]{{#2}^{#1}}
\newcommand{\uncurry}[1]{#1^{\_ \times Y}}
\newcommand{\uncurrycoin}[1]{#1^{\_ \times \Ycoin}}
\newcommand{\rec}{\mathrm{Rec}}
\newcommand{\varintro}{\mathrm{Var}}
\newcommand{\exchange}{\mathrm{Ex}}
\newcommand{\conintro}{\mathrm{Con}}
\newcommand{\genintro}{\mathrm{Gen}}
\newcommand{\unitprod}{\mathrm{Unit}_{\times}}
\newcommand{\prodintro}{\mathrm{Prod}}
\newcommand{\projintroone}{\mathrm{Proj}_1}
\newcommand{\projintrotwo}{\mathrm{Proj}_2}
\newcommand{\unitcoprod}{\mathrm{Unit}_{+}}
\newcommand{\coprodone}{\mathrm{Coprod}_{1}}
\newcommand{\coprodtwo}{\mathrm{Coprod}_{2}}
\newcommand{\coprodelm}{\mathrm{Coprod}}
\newcommand{\abst}{\mathrm{Abst}}
\newcommand{\appl}{\mathrm{App}}
\newcommand{\mapLambdac}[2]{#2(#1)}
\newcommand{\mapAssignment}[2]{#1(#2)}
\newcommand{\distterm}[3]{\mathrm{d}}
\newcommand{\primetrans}[1]{#1{'}}
\newcommand{\cal}[3]{\mathrm{al}^{#1}_{#2, #3}}
\newcommand{\leastelement}[2]{\bot^{#1}_{#2}}
\newcommand{\fixop}[3]{\mathrm{fix}^{#1}_{#2}(#3)}
\newcommand{\lstrength}[2]{{t^{#1}_{#2}}} 
\newcommand{\rstrength}[2]{{s^{#1}_{#2}}} 
\newcommand{\op}{\mathrm{op}}
\newcommand{\bt}{\mathbf{t}}
\newcommand{\cljoin}{\mathbf{CL}_{\vee}}
\newcommand{\letrec}[4]{\mathbf{let}\ \mathbf{rec}\ #1\ #2 = #3\ \mathbf{in}\ #4}
\newcommand{\ifexpr}[3]{\mathbf{if}\ #1\ \mathbf{then}\ #2 \ \mathbf{else}\ #3}
\newcommand{\probbranch}[3]{#1 \oplus_{#2} #3}
\newcommand{\exprobbranch}[4]{#1 \oplus_{#2,#3} #4}
\newcommand{\coinflip}{\mathtt{coin}}
\newcommand{\nefunc}{\mathtt{ne}}
\newcommand{\gainreward}{\mathtt{gr}}
\newcommand{\randomwalk}{\mathtt{rw}}
\newcommand{\success}{\mathtt{s}}
\newcommand{\fail}{\mathtt{f}}
\newcommand{\heads}{\mathtt{h}}
\newcommand{\tails}{\mathtt{t}}
\newcommand{\up}{\mathtt{u}}
\newcommand{\down}{\mathtt{d}}
 \newcommand{\writech}{\mathtt{w}}
  \newcommand{\closech}{\mathtt{c}}
  \newcommand{\terminatech}{\mathtt{t}}
\newcommand{\unit}{\mathbf{unit}}
\newcommand{\sstructure}{\mathcal{A}_S}
\newcommand{\tstructure}{\mathcal{A}_T}
\newcommand{\lambdacfix}{\lambda_c^\mathrm{fix}}
\newcommand{\darrow}{\mathrel{\dot\to}}
\newcommand{\typeReal}{\mathbf{real}}
\newcommand{\natu}[1]{u^{#1}}
\newcommand{\pullbackmark}[2]{\save ;p+<.8pc,0pc>:(0,-1)::%
(#1) *{\phantom{Z}} %
;p+(#2)-(0,0) **@{-}%
;p-(#1)+(0,0) *{\phantom{Z}} **@{-} \restore}
\begin{document}

\title{A Denotational Product Construction for Temporal Verification of Effectful Higher-Order Programs}

\author{Kazuki Watanabe}
\email{kazukiwatanabe@nii.ac.jp}
\affiliation{%
  \institution{National Institute of Informatics}
  \country{Japan}
}
\author{Mayuko Kori}
\email{mkori@kurims.kyoto-u.ac.jp}
\affiliation{%
  \institution{Kyoto University}
  \country{Japan}
}
\author{Taro Sekiyama}
\email{tsekiyama@acm.org}
\affiliation{%
  \institution{National Institute of Informatics}
  \country{Japan}
}
\author{Satoshi Kura}
\email{satoshi.kura@aoni.waseda.jp}
\affiliation{%
  \institution{Waseda University}
  \country{Japan}
}
\author{Hiroshi Unno}
\email{hiroshi.unno@acm.org}
\affiliation{%
  \institution{Tohoku University}
  \country{Japan}
}









\begin{abstract}
We propose a categorical framework for linear-time temporal verification of effectful higher-order programs, including probabilistic higher-order programs.
Our framework provides a generic denotational reduction---namely, a denotational product construction---from linear-time safety verification of effectful higher-order programs to computation of weakest pre-conditions of product programs. 
This reduction enables us to apply existing algorithms for such well-studied computations of weakest pre-conditions, some of which are available as off-the-shelf solvers.
We show the correctness of our denotational product construction by proving a preservation theorem under strong monad morphisms and an existence of suitable liftings along a fibration.
We instantiate our framework with both probabilistic and angelic nondeterministic higher-order programs, and implement an automated solver for the probabilistic case based on the existing solver developed by Kura and Unno.
To the best of our knowledge, this is the first automated verifier for linear-time temporal verification of probabilistic higher-order programs with recursion. 
\end{abstract}

\settopmatter{printacmref=false}
\setcopyright{none}
\renewcommand\footnotetextcopyrightpermission[1]{}
\pagestyle{plain}

\maketitle

\section{Introduction}
\emph{Temporal verification} is a classic but central subject in formal verification for decades (e.g.~\cite{Pnueli77,Vardi85,VardiW86,Clarke18,BaierK08}).
Given a system model (like programs or transition systems) and a specification on its \emph{temporal behaviour}, temporal verification asks to check whether the model behaves as described by the specification.
For example, \emph{linear-time} temporal verification aims to reason about \emph{traces}, i.e., sequences of observations that the system model makes~\cite{BaierK08,Clarke18}. 
Its applications encompass time or space cost analysis, as well as reachability analysis to bad states.

A well-known approach to temporal verification is the use of \emph{product constructions}~\cite{Pnueli77, VardiW86, CookGPRV07,Vardi85}.
The key idea is to construct a \emph{product} of a given model and specification, thereby reducing the temporal verification problem to a non-temporal
one on the product.
Since the problem over the product no longer involves any temporal specifications, it can be solved using off-the-shelf non-temporal verification techniques~\cite{Clarke18,BaierK08}.
This fundamental idea has been applied to a wide range of temporal verification problems.
For example, it is well known that temporal safety properties of a transition system can be reduced to the emptiness checking of the corresponding product finite automaton (see e.g.~\cite{BaierK08}). 
In the probabilistic setting, regular safety properties of (labelled) Markov chains can be reduced to reachability probability problems over (unlabelled) product Markov chains composed with deterministic finite automata (e.g.~\cite{BaierK08}).

\citet{WatanabeJRH25} proposed a categorical framework for product constructions, demonstrating that it accommodates a variety of such constructions, including temporal safety properties of nondeterministic transition systems and Markov chains.
However, the framework is tailored specifically to transition systems as system models and thus offers no insight into \emph{symbolic} product constructions for programs.
This lack of symbolic reasoning is critical, as it provides no guidance on how to systematically reduce a temporal verification problem to a non-temporal one in a modular (i.e., inductive) manner.

In this work, we introduce a \emph{denotational product construction} that provides a general recipe for reducing linear-time temporal safety verification of effectful higher-order programs with recursion to non-temporal problems, by generalizing the traditional product constructions in a symbolic manner.
For instance, our framework reduces the problem of verifying the likelihood of accepting terminating traces in probabilistic programs to that of verifying the reachability probabilities of corresponding probabilistic product programs.
To systematically handle a variety of effects, including nondeterminism and probability, we build our framework on Moggi’s computational $\lambda$-calculus~\cite{Moggi89} with generic effects~\cite{PlotkinP03}, where effects are interpreted via strong monads.

\Cref{fig:outline} illustrates our framework.
We assume that a given program $M$ is a $T(\_ \times A^{\ast})$-effectful higher-order term, where $T(\_ \times A^{\ast})$ is the monad composed of a base monad $T$---which may represent nondeterminism or probability---and the writer (or action) monad $\_ \times A^{\ast}$, which yields traces consisting of observations over an alphabet $A$.
The specification $S$ is a trace property that defines the set of accepting traces over $A$. Specifically, we assume that it is given as a \emph{deterministic transition system}---a standard instance being a deterministic finite automaton (DFA) for safety properties---with a state set $Y$ and the shared alphabet $A$.

In the first step of~\cref{fig:outline}, we construct a \emph{synchronised program} by translating both the program $M$ and the specification $S$.
The synchronised program $M_s$ is a $T(\_ \times Y)^Y$-effectful higher-order term\footnote{Here we omit the parenthesis and it corresponds to $\big(T(\_ \times Y)\big)^Y$.}, where $T(\_ \times Y)^Y$ can be understood as an \emph{effectful} state monad, that is, the monad obtained by composing the base monad $T$ with the state monad $(\_ \times Y)^Y$, where $Y$ is the state set $Y$ of the transition system for the specification $S$.
We refer to this translation as \emph{synchronisation}, as it aligns the observations made by the program $M$ with the transitions in the specification $S$, effectively synchronising the program with its specification.
We formalise this synchronisation via a (strong) monad transformation $\alpha\colon T(\_ \times A^{\ast}) \Rightarrow T(\_ \times Y)^Y$, and establish its correctness by extending existing work~\cite{Katsumata13, AguirreKK22, WatanabeJRH25}.

In the second step, we translate the synchronised $T(\_\times Y)^Y$-effectful term $M_s$ into a $T$-effectful higher-order term $M_c$ by extending the \emph{store-passing style (SPS) transformation}~\cite{LaunchburyJ94} to the effectful state monad $T(\_\times Y)^Y$ for an arbitrary strong monad $T$. 
We refer to the resulting term $M_c$ as a \emph{product term}, and to the transformation as the \emph{effectful SPS transformation}.
The key idea of this transformation is to decompose the effect $T(\_\times Y)^Y$ into two separate effects: $T$ and $(\_\times Y)^Y$. The latter is encoded as a pure computation, while the former, $T$, remains as an effect in the resulting product term $M_c$.

We show that, as a main contribution, the original verification problem for the program $M$ and specification $S$ can be reduced to the problem of computing the \emph{weakest pre-condition}~\cite{Dijkstra75} of the product term $M_c$.
For instance, given a probabilistic program $M$ and a specification $S$ that defines the set of accepting traces, verifying the likelihood of accepting terminating traces generated by $M$ reduces to computing (or approximating) the weakest pre-expectation of the probabilistic product term $M_c$, which corresponds to the reachability probability~\cite{McIverM05,Kaminski19}.
For this concrete reduced problem, we can apply an existing automated solver~\cite{KuraUnno2024} as an off-the-shelf tool to compute an over-approximation (i.e., an upper bound) of the weakest pre-expectation.

A major challenge in proving the correctness of this reduction—namely, the denotational product construction—lies in establishing the correctness of the effectful SPS transformation.
To this end, we introduce a fibrational framework, following~\cite{Katsumata13,hermidaThesis}.
Within this framework, we demonstrate the existence of liftings along suitable fibrations, which is precisely equivalent to our key lemmas that establish the correctness of the effectful SPS transformation.
Technically, we extend the interpretation of effectful terms and design an appropriate fibration to accommodate our effectful SPS transformation, thereby enabling a comparison between the interpretations of two syntactically distinct terms: the synchronised term and the product term.

\begin{figure}
	\begin{tikzpicture}
		\node[draw, align=center] at (-3, -1) (program) {Program $M$: \\ $T(\_\times A^{\ast})$-effectful higher-order term (\cref{sec:preliminaries})};
		\node[draw, align=center] at (4, -1) (spec) {Specification $S$: deterministic transition \\ system with state set $Y$ and alphabet $A$ (\cref{sec:prodLambdaCalc})};
    \path (program) -- node[midway] (progspec) {+} (spec);
    \node[draw] at (0.45, -3) (product) {$T(\_\times Y)^Y$-effectful higher-order synchronised term $M_s$ (\cref{sec:prodLambdaCalc})};
    \node[draw] at (0.45, -5) (simpleproduct) {$T$-effectful higher-order product term $M_c$ (\cref{sec:uncurrying,sec:recursion})};
    \draw[->, very thick, align=left] (progspec) -- node[midway, right] {Our synchronisation \\by strong monad morphism (\cref{sec:prodLambdaCalc}).  } (product);
    \draw[->, very thick, align=left] (product) -- node[midway, right] {Our effectful \uncurryStep\\(\cref{sec:uncurrying,sec:recursion}).  } (simpleproduct);
	\end{tikzpicture}
	\caption{Outline of our denotational product construction.}
	\label{fig:outline}
\end{figure}

\begin{table}[t]
  \caption{Instances of our denotational product construction. ATT denotes accepting terminating trace. PHOP denotes probabilistic higher-order program, and NHOP denotes angelic non-deterministic higher-order program. 
    }
    \scalebox{1}{
    \begin{tabular}{lcc}\toprule
    Section & Verification Problem & Reduced Problem\\\midrule
    \cref{subsec:ovSafety},\cref{subsec:probAcceptingTrace}   &  Probability of ATTs of PHOPs      & Reachability Probability of PHOPs\\
    \cref{subsec:expectedRewardAcceptingTrace}   &  Expected Reward of ATTs of PHOPs      & Partial Expected Reward of PHOPs \\
    \cref{subsec:emptinessCheckingAcceptingTraces}   &  Emptiness Checking of ATTs of NHOPs     & Reachability of NHOPs \\
    \cref{subsec:maximumRewardsAcceptingTraces}   &  Optimal Reward of ATTs of NHOPs     & Optimal Reward of NHOPs
    \\\bottomrule
    \end{tabular}
    }
    \label{fig:list_of_instances}
\end{table}

We demonstrate the generality and expressiveness of our product construction framework by presenting several instances of our generic product constructions.
These instances are listed in the~\cref{fig:list_of_instances}.
We demonstrate the feasibility of our general approach to temporal verification of probabilistic higher-order programs through a preliminary experiment.
Specifically, we reduce probabilistic temporal verification problems (corresponding to the first two problems in~\cref{fig:list_of_instances}) to threshold problems for weakest pre-conditions (as over-approximations), and then apply the existing automated algorithm~\cite{KuraUnno2024}.
To the best of our knowledge, this is the first automated verifier for temporal safety verification of probabilistic higher-order programs with recursion.

In summary, our contributions are as follows:
\begin{itemize}
  \item  We propose a general, denotational product construction, which translates a pair of a $T(\_\times A^{\ast})$-effectful higher-order program and temporal specification represented by a transition system, into a $T$-effectful higher-order program via synchronisation (\cref{sec:prodLambdaCalc}) and effectful \uncurryStep~(\cref{sec:uncurrying}).
  \item  We show that our approach naturally extends to recursive programs with  a domain-theoretic extension (\cref{sec:recursion}).
  \item  We develop a fibrational framework for proving the correctness of our effectful SPS transformation (\cref{sec:fibrationalapproach}).
  \item  We instantiate temporal verification problems that are listed in \cref{fig:list_of_instances} (\cref{sec:caseStudy}), derived systematically from 
  our generic recipe developed in \cref{sec:genericSynchronisation}.
  \item  We propose an automated verifier for the first two probabilistic instances in~\cref{fig:list_of_instances} using the existing automated solver~\cite{KuraUnno2024} designed for weakest pre-conditions (\cref{sec:experiments}).
\end{itemize}

The rest of the paper is organised as follows.
We give an overview of our framework with a running example in \cref{sec:overview}.
\cref{sec:preliminaries} introduces preliminaries for the formal development of our theoretical contributions (\cref{sec:prodLambdaCalc}--\cref{sec:caseStudy}).
\cref{sec:experiments} shows our automated verifier and a preliminary experiment.
Finally, after discussing related work in \cref{sec:relatedWork}, we conclude in \cref{sec:conclusion}.

\ifsub
\else 
  We remark that we have a full version with appendices that include omitted definitions and proofs as the supplemental material. 
\fi

\section{Overview}
\label{sec:overview}

In this section, we illustrate how our general framework enables a systematic reduction of a temporal verification problem to a non-temporal one, through a specific instance: temporal safety verification for probabilistic programs.
Additional instances formally derived from our framework are presented in \cref{sec:caseStudy}.

\subsection{Temporal Verification Problem of Probabilistic Programs}
\label{subsec:ovSafety}

Consider the following probabilistic program $\vdash M\colon \unit$ in an ML-like syntax:
\begin{equation}
  \label{eq:exProbProg}
  M \defeq
  \letrec{\coinflip}{x}{\probbranch{\big(\probbranch{(\coinflip\ ())^{\heads}}{1/2}{(\coinflip\ ())^{\tails}}\big)}{1/4}{()}}{\coinflip\ ()}
\end{equation}
Here, we use two generic effects~\cite{PlotkinP03}: \emph{probabilistic branching} $\probbranch{M_1}{p}{M_2}$, which evaluates the expression $M_1$ with probability $p$ and $M_2$ with probability $1-p$ for some real number $p \in [0,1]$; and \emph{event-raising} $(M)^a$, which emits the event $a$ to be observed (as a character of traces), and then evaluates the expression $M$.
The program $M$ invokes a recursive function $\coinflip$ defined with these effects: it terminates immediately with probability $3/4$, or flips a coin and recursively calls itself with probability $1/4$. The coin flip yields heads ($\heads$) or tails ($\tails$), each with probability $1/2$.
Let $\subdist$ denote the monad of subdistributions, and let $\fstringscoin$ be the set of finite traces over the alphabet $\Acoin \defeq \{\heads, \tails\}$.
From a denotational perspective, the program can then be interpreted as a $\subdist(\_ \times \fstringscoin)$-effectful program, where the effect $\subdist(\_ \times \fstringscoin)$ is the composition of the probabilistic monad $\subdist$ and the \emph{writer} (or \emph{action}) monad $\_ \times \fstringscoin$ for trace accumulation.

The verification problem we consider here is: \emph{how likely is it that the program $M$ generates accepting terminating traces?}
For example, suppose that only traces containing the event $\heads$ at least once are considered accepting.
A specification for such traces can be represented by a DFA $D$ with two states $\Ycoin = \{y_1, y_2\}$ and alphabet $\Acoin$, defined as follows:
\begin{equation}
  \label{eq:exProbSpec-inOverview}
\begin{tikzpicture}
      \node[state, initial] (y1) at (0, 0) {\tiny $y_1$};
      \node[state, accepting] (y2) at (2, 0) {\tiny $y_2$};
      \draw[->] (y1) to node[pos=0.5, inner sep=3pt, above] {$\heads$} (y2);
      \draw[->] (y1) edge [loop above] node[pos=0.5, inner sep=3pt, above] {$\tails$} (y1);
      \draw[->] (y2) edge [loop above] node[pos=0.5, inner sep=3pt, above] {$\heads, \tails$} (y2);
\end{tikzpicture}
\end{equation}

This specification DFA $D$ defines the \emph{inference query} $q \colon \fstringssubdistcoin \rightarrow [0, 1]^{\Ycoin}$, which maps a subdistribution over finite traces to the probability that the DFA accepts those traces from each state $y \in \Ycoin$ as the initial state.
The verification problem then reduces to computing $q(\nu)(y_1) \in [0, 1]$, where $\nu \in \fstringssubdistcoin$ is the subdistribution of terminating traces generated by the program $M$.

The subdistribution $\nu$ can be represented as a \emph{weakest pre-condition} of the program $M$ with respect to a post-condition $Q$, where $Q$ is given by the Dirac distribution $d_{\epsilon}$ concentrated on the empty trace $\epsilon$.
This weakest pre-condition for the program $\vdash M\colon \unit$ w.r.t. the post condition $Q$ is precisely the denotation of the program, that is, the subdistribution over the terminating traces of the program $M$.
To formalize this representation, we follow the existing approach to \emph{denotational weakest pre-conditions}~\cite{AguirreKK22,Kura2023}, which expresses weakest pre-conditions via a \emph{semantics} $\mathcal{A}$, which defines an interpretation of the generic effects (in this case, probabilistic branching and event-raising), together with an \emph{algebra} $\tau^{\subdist(\_\times \fstringscoin)} \colon \subdist\big(\Omega \times \fstringscoin\big) \to \Omega$, which is used to resolve the effect $\subdist(\_ \times \fstringscoin)$ of the program $M$ in computing the weakest pre-condition.
Here, $\Omega \defeq \fstringssubdistcoin$ serves as the \emph{semantic domain} of $M$.

With these data,  we formulate the problem as follows: 
\vspace{0.1cm}
\begin{mdframed}
{\bf Problem: Probability of Accepting Terminating Traces of Probabilistic Program.}
Let $M$ be a closed $\subdist(\_\times \fstrings)$-effectful program, and $D$ be a  DFA.
Compute the value
\begin{equation*}
  \label{eq:startingpoint}
\big(q\circ (\wpcond{M}{\tau^{\subdist(\_\times \fstrings)}}{Q}{\mathcal{A}})\big)(\star)(y_1)\in  [0, 1],
\end{equation*}
where $q$ is the inference query for the DFA $D$ and $\wpcond{M}{\tau^{\subdist(\_\times \fstrings)}}{Q}{\mathcal{A}}\colon \{\star\}\rightarrow \Omega$ is the weakest pre-condition of $M$ with respect to the post-condition $Q\colon \{\star\}\rightarrow \Omega$ that is the function to the Dirac distribution $d_{\epsilon}$. Note that $y_1$ is the initial state of the DFA. 
\end{mdframed}
\vspace{0.1cm}
Here, $\star$ is the unit value that works as the valuation for the closed program $M$. 
In our denotational semantics, $\{\star\}$ is the interpretation of the empty context and $\unit$ type; in fact, we have $\vdash M\colon \unit$.

By the definition of the temporal problem, it suffices to compute $\wpcond{M}{\tau^{\subdist(\_\times \fstrings)}}{Q}{\mathcal{A}}(\star)(y_1)\allowbreak\in \fstringssubdistcoin$ to solve the problem. 
However, directly computing the weakest pre-condition can be computationally expensive, as it involves reasoning about essentially all (terminating) traces generated by the program.
Yet, not all traces are necessary, since we are concerned only with \emph{accepting} traces.
Moreover, we are interested solely in the probability of accepting traces, not in the full distribution over them.
This naturally raises the question: is it necessary to compute the entire distribution $\wpcond{M}{\tau^{\subdist(\_\times \fstrings)}}{Q}{\mathcal{A}}(\star)(y_1) \in \Omega$ at all?
In the remainder of this overview, we show that—thanks to our novel denotational product construction—the answer is, in fact, no.

Specifically, our product construction framework comprises two steps: \emph{synchronisation} and \emph{store-passing style (SPS) transformation}.
In what follows, we illustrate the key ideas behind these steps using the probabilistic program $M$ given in \cref{eq:exProbProg}.

\subsection{Synchronisation}

First, given the $\subdist(\_ \times \fstringscoin)$-effectful program $M$ and the specification DFA $D$ with state set $\Ycoin$,
we construct a new $\subdist(\_ \times \Ycoin)^{\Ycoin}$-effectful program, referred to as the \emph{synchronised} program $M_s$.
The effect $\subdist(\_ \times \Ycoin)^{\Ycoin}$ can be understood as a $\subdist$-\emph{effectful} state monad over the state space $\Ycoin$, which determines the next state $y' \in \Ycoin$ stochastically from the current state $y \in \Ycoin$.
This synchronisation is achieved via a strong monad morphism $\alpha\colon \subdist(\_ \times \fstringscoin) \Rightarrow \subdist(\_ \times \Ycoin)^{\Ycoin}$.
Notably, this transformation does not alter the syntax of $M$ but may change the interpretation of the generic effects it uses, in order to update the current state of the DFA.
Recall that $M$ involves two generic effects: probabilistic branching $\probbranch{\_}{p}{\_}$ and event-raising $(\_)^a$.
In the synchronised program, the interpretation of probabilistic branching $\probbranch{\_}{p}{\_}$ remains unchanged and preserves the current DFA state.
In contrast, the interpretation of event-raising $(\_)^a$ updates the current state according to the emitted event $a$, transitioning to the successor state $y'$.
For example, if the current state is $y_1$, the effect $(\_)^{\heads}$ triggers a transition to the accepting state $y_2$.

This synchronisation allows us to reformulate the verification problem: instead of reasoning about the original program, we now consider the weakest pre-condition of the synchronised program $M_s$ with respect to the post-condition $q \circ Q$, under the translated semantics $\alpha(\mathcal{A})$ for the synchronised program and a new algebra $\tau^{\subdist(\_\times \Ycoin)^{\Ycoin}}\colon \subdist([0, 1]^{\Ycoin}\times \Ycoin)^{\Ycoin}\rightarrow [0, 1]^{\Ycoin}$,
where $[0,1]^{\Ycoin}$ serves as the semantic domain of the \emph{synchronised program}. The resulting weakest pre-condition $\wpcond{M}{\tau^{\subdist(\_\times \Ycoin)^{\Ycoin}}}{q\circ Q}{\mapLambdac{\mathcal{A}}{\alpha}}$ computes, roughly speaking, the probability of generating accepting terminating traces starting from \emph{each} state $y \in \Ycoin$, regarded as an initial state.
This explains why the semantic domain is $[0,1]^{\Ycoin}$ rather than just $[0,1]$.

In general, our synchronisation ensures the following equation holds:
\begin{equation}
  \label{eq:overviewCorrectSync}
q\circ (\wpcond{M}{\tau^{\subdist(\_\times \fstringscoin)}}{Q}{\mathcal{A}}) = \wpcond{M_s}{\tau^{\subdist(\_\times \Ycoin)^{\Ycoin}}}{q\circ Q}{\mapLambdac{\mathcal{A}}{\alpha}}.
\end{equation}
From this equation, it suffices to compute the right-hand side $\wpcond{M_s}{\tau^{\subdist(\_\times \Ycoin)^{\Ycoin}}}{q\circ Q}{\mapLambdac{\mathcal{A}}{\alpha}}\colon \{\star\}\rightarrow [0, 1]^{\Ycoin}$, where traces are no longer tracked explicitly.
To ensure the correctness of this equation, we provide a sufficient condition concerning the \emph{one-step compositionality} of the transformation $\alpha$ with respect to the two algebras $\tau^{\subdist(_ \times \Ycoin)^{\Ycoin}}$ and $\tau^{\subdist(_ \times \fstringscoin)}$, mediated by the inference query $q$.
This is the only condition that needs to be checked in order to apply our framework; the subsequent step is guaranteed to be correct without any additional conditions.

\subsection{SPS Transformation}

Secondly, from the synchronised program $M_s$, we obtain the \emph{product} program $M_c \defeq \uncurrycoin{M_s}$ via our \emph{effectful store-passing style (SPS) transformation}.
Specifically, given the $\subdist(\_ \times \Ycoin)^{\Ycoin}$-effectful program $M_s$, we construct a $\subdist$-effectful program—that is, a probabilistic program that produces no trace.
Unlike the synchronisation step, this transformation \emph{does} modify the syntax of $M_s$: it separates the state monad $(\_ \times \Ycoin)^{\Ycoin}$ from the combined monad $\subdist(\_ \times \Ycoin)^{\Ycoin}$, reifies the hidden state into explicit program values, and encodes state transitions as program-level operations.
For example, the resulting product program $\uncurrycoin{M_s}$ is given as follows:\footnote{More precisely, the product program shown here is obtained by extending our SPS transformation to \emph{selective} transformation~\cite{Nielsen01}, which only transforms effectful terms and does not modify pure terms, and eliminating \emph{administrative redexes}~\cite{Plotkin75}, which are redexes introduced by the transformation and do not appear in the source program. These optimisations do not affect our theoretical results.}
\begin{align*}
  \label{eq:exUncurryProg}
  \letrec{\coinflip}{x\ y}{\probbranch{\big(\probbranch{\coinflip\ ((), y_2)}{1/2}{\coinflip\ ((), y)}\big)}{1/4}{((), y)}}{\coinflip\ ((), y)}.
\end{align*}
Note that in the product program, the current state $y$ is passed as an argument and returned as a part of the return value, and the event-raising effect is modified to update the current state according to the transitions of the DFA $D$.
For example, if the two probabilistic branching effects both choose the left branches, then the current state is changed to $y_2$ because the resulting branch invokes the event-raising effect $(\_)^{\heads}$ and the transition by $\heads$ always changes the current state to $y_2$ in the DFA $D$.
In contrast, the effect $(\_)^{\tails}$, as well as effect-free constructs such as values, does not change the current state.

For $\uncurrycoin{M_s}$, we can automatically derive the corresponding semantics $\uncurrycoin{\alpha(\mathcal{A})}$ and algebra $\tau^{\subdist} \colon \subdist([0,1]) \to [0,1]$, which computes the expectation via $\tau^{\subdist}(\nu) = \int_p p d\nu$, based on those for $M_s$.
Here, the semantic domain for the product program is the interval $[0, 1]$. 
As a main technical contribution, we show that the following equation holds:
\begin{equation}
  \label{eq:uncurryEx}
  \big(\wpcond{M_s}{\tau^{\subdist(\_\times \Ycoin)^{\Ycoin}}}{q \circ Q}{\alpha(\mathcal{A})}\big)^{\dagger} =  \wpcond{\uncurrycoin{M_s}}{\tau^{\subdist}}{(q \circ Q)^{\dagger}}{\uncurrycoin{\alpha(\mathcal{A})}},
\end{equation}
where $(\_)^\dagger$ is an operation to produce \emph{transposes}, and
\[\begin{array}{r@{\ }c@{\ }l}
 \big(\wpcond{M_s}{\tau^{\subdist(\_\times \Ycoin)^{\Ycoin}}}{q\circ Q}{\alpha(\mathcal{A})}\big)^{\dagger} &\colon& \{\star\}\times \Ycoin\rightarrow [0, 1] \text{\ and} \\
 (q\circ Q)^{\dagger} &\colon& \{\star\}\times \Ycoin\rightarrow [0, 1].
  \end{array}
\]
Recall that $\{\star\}$ is the interpretation of the empty context and $\unit$, and $\vdash M\colon \unit$.
\cref{eq:uncurryEx} ensures the desired \emph{correctness} of our effectful SPS transformation, that is,  
the transpose of the weakest pre-condition of the synchronised program $M_s$ (the LHS of \cref{eq:uncurryEx}) coincides with the weakest pre-condition of the SPS-transformed program $\uncurrycoin{M_s}$ (the RHS of \cref{eq:uncurryEx}). 
With~\cref{eq:overviewCorrectSync}, we can now see that the temporal verification problem for the program $M$ with the DFA $D$ is reduced to the computation of the weakest pre-condition of $\uncurrycoin{M_s}$:
\begin{align}
  q\circ (\wpcond{M}{\tau^{\subdist(\_\times \fstringscoin)}}{Q}{\mathcal{A}}) = \big(\wpcond{\uncurrycoin{M_s}}{\tau^{\subdist}}{(q\circ Q)^{\dagger}}{\uncurrycoin{\alpha(\mathcal{A})}}\big)^{\dagger}.
\end{align}
Notably, our effectful SPS transformation is used to reduce the problem to a rather standard problem. 
For instance, the weakest pre-condition $\wpcond{\uncurrycoin{M_s}}{\tau^{\subdist}}{(q \circ Q)^{\dagger}}{\uncurrycoin{\alpha(\mathcal{A})}}\colon \{\star\}\times \Ycoin\rightarrow [0, 1]$ is the standard weakest pre-expectation for reachability probabilities, which has been actively studied (e.g.~\cite{Kaminski19,McIverM05,KuraUnno2024}). 
This enables us to use the existing optimised solver~\cite{KuraUnno2024} as an off-the-shelf tool to solve the temporal verification problem, without the need for new algorithms for less-explored weakest pre-conditions.

We conclude this overview by presenting the list of instantiated examples in~\cref{fig:list_of_coalgebraic_inference}.
As we have seen, the source program $M$, used as our running example, is reduced to a target program whose property of interest is reachability probability.
\begin{table}
      \caption{Examples of source programs and target programs in our framework.}
    \scalebox{0.7}{
    \begin{tabular}{lccc}\toprule
    & \multicolumn{1}{c}{Source Program $M$} &\multicolumn{2}{c}{Target Program $\uncurry{M_s}$}  
    \\\cmidrule(lr){2-2}\cmidrule(lr){3-4}
    Section& Property  & Property & Semantic domain $\Omega$ \\\midrule
    \cref{subsec:ovSafety},\cref{subsec:probAcceptingTrace} & (sub)distribution of terminating traces  & reachability probability (e.g.~\cite{McIverM05,Kaminski19}) & $[0, 1]$\\
    \cref{subsec:expectedRewardAcceptingTrace} & (sub)distribution of terminating traces with rewards  & partial expected reward (e.g.~\cite{Baier0KW17,WatanabeJRH25}) & $[0, 1]\times [0, \infty]$\\
    \cref{subsec:emptinessCheckingAcceptingTraces}  & sets of terminating traces  & (angelic) reachability (e.g.~\cite{UnnoST18,MaillardAAMHRT19,KobayashiTW2018,Floyd67}) & $\boolsets$ (Boolean domain) \\
    \cref{subsec:maximumRewardsAcceptingTraces}  & (downward-closed) sets of terminating traces with rewards & optimal reward & $[0, \infty]$\\
     \bottomrule
    \end{tabular}
    }
    \label{fig:list_of_coalgebraic_inference}
\end{table}

\paragraph{Notation.}
We prepare some notations for our framework; see e.g.~\cite{Moggi89, BentonHM00} for further details. 
Throughout the paper, we assume that categories are \emph{bicartesian closed categories (bi-CCC)} and \emph{stable}~\cite{FioreS99} (see~\cref{def:stableCoproduct} for the definition).
A \emph{strong monad} $(T, \eta^T, \mu^T, \lstrength{T}{})$ is a monad $(T, \eta^T, \mu^T)$ equipped with a natural transformation $\lstrength{T}{}\colon (\times) \circ (\id \times T)\rightarrow T \circ (\times)$, called the \emph{left-strength}, which satisfies coherence conditions with respect to the cartesian structure of $\mathbb{C}$ and the monad structure of $T$ (see e.g.~\cite{kock1972strong}).
We write $\rstrength{T}{}$ for the \emph{right-strength} of $T$. 
We write 
$\pi_1, \pi_2$ for the first and second projections of a product, respectively,
$\ev{X}{Y}$ for the counit $\epsilon_X\colon X^Y\times Y\rightarrow X$,
and $(\_)^{\dagger}$ for the transpose 
of the adjunction $(\_\times Y)\dashv (\_)^Y$. 
Given a strong monad $T$ and an object $Y$, we define a natural transformation $\natu{T, Y}_{X}\colon T(X^Y)\rightarrow T(X)^Y$
by $\natu{T, Y}_{X}\defeq \big(T(\ev{X}{Y}) \circ \rstrength{T}{X^Y, Y}\big)^\dagger$.
Given a strong monad $T$ and objects $X$ and $Z$, we write $\cal{T}{X}{Z}$ for the canonical Eilenberg-Moore algebra $\cal{T}{X}{Z}\colon T\big(T(Z)^X\big)\rightarrow T(Z)^X$ defined by $\cal{T}{X}{Z}\defeq 
\big(\mu^T_{Z} \circ T(\ev{T(Z)}{X}) \circ \rstrength{T}{T(Z)^X,X}\big)^\dagger$.
We often omit superscript and subscript if they are clear from the context.

\section{Source Program: $\lambda_c$-Calculus with Generic Effects}
\label{sec:preliminaries}
Our source language is based on the $\lambda_c$-calculus~\cite{Moggi89} extended with \emph{generic effects}, which correspond bijectively to algebraic operations~\cite[Theorem 3]{PlotkinP03}.
We adopt this calculus as the foundation of our framework.
In~\cref{sec:recursion}, we further extend it with recursion, following the approaches developed in~\cite{Katsumata13,HasegawaK02,SimpsonP00}.
\subsection{Syntax}
Let $B$ be the set of \emph{base types}.
We define the set $\type$ of \emph{types}, and the set $\gtype$ of \emph{ground types}, respectively:
\begin{align*}
  \type \ni&  \quad \bt ::=  b \mid \mathbf{1} \mid \bt_1 \times \bt_2 \mid \mathbf{0} \mid \bt_1 + \bt_2 \mid \bt_1 \rightarrow \bt_2,\\
  \gtype \ni&  \quad \bt ::=  b \mid \mathbf{1} \mid \bt_1 \times \bt_2 \mid \mathbf{0} \mid \bt_1 + \bt_2 .
\end{align*}
Let $K$ be a set of \emph{effect-free constants}, and $E$ be a set of \emph{generic effects}.
Let $\arfunc, \carfunc\colon (K + E)\rightarrow \gtype$ be two functions 
assigning \emph{arities} and \emph{coarities}, respectively.
\begin{definition}[$\lambda_c$-signature]
  A \emph{$\lambda_c$-signature} $\Sigma$ is the tuple $(B, K, E, \arfunc, \carfunc)$.
  For an effect-free constant $c\in K$ and a generic effect $e\in E$, we often write $c\colon \arfunc(c)\rightarrow \carfunc(c)$ and $e\colon \arfunc(e)\rightarrow \carfunc(e)$.
\end{definition}
We then define the set $\term$ of \emph{terms} as follows:
\begin{align*}
  \term \ni \quad M, N ::= \, \, &x\mid c\ M\mid e\ M \mid () \mid (M, N) \mid \pi_i\ M\mid \delta(M) \mid \iota_i\ M\mid \\
                           &\delta(M, x_1\colon \bt_1.\ N_1, x_2\colon \bt_2.\ N_2) \mid \lambda x\colon \bt.\ M\mid M\ N,
\end{align*}
where $i\in \{1, 2\}$, $c\in K$, and $e\in E$.
As usual, a \emph{context} $\Gamma$ is a (linearly ordered) mapping from variables to types $\Gamma\defeq x_1\colon \bt_1,\dots,  x_n\colon \bt_n$.
We write $x\in \Gamma$ if the variable $x$ appears in $\Gamma$.
A \emph{well-typed term} $\Gamma\vdash M\colon \bt$ is then defined by the standard typing rule: See~\cref{sec:appSourceProgram} for the full typing rules.

\paragraph{Syntax Sugar.}
\begin{table}
	\caption{Syntactic sugars for types and terms.}\label{tab:syntactic-sugar}
	\begin{tabular}{l|l}
		Syntactic sugar & Definition \\
		\hline
		$\mathbf{bool}$ \hspace{2em} (boolean type) & $1 + 1$ \\
		$\mathbf{let}\ x = M\ \mathbf{in}\ N$ & $(\lambda x. N)\ M$ \\
		$M; N$ & $\mathbf{let}\ x = M\ \mathbf{in}\ N$\hspace{2em} ($x$ is not in $N$)\\
		$\letrec{f}{x}{M}{N}$ & $\mathbf{let}\ f = \mu fx. M\ \mathbf{in}\ N$ \\
		$\mathbf{if}\ M\ \mathbf{then}\ N_1\ \mathbf{else}\ N_2$ \hspace{2em} (for $M : \mathbf{bool}$) & $\delta(M, z_1. N_1, z_2. N_2)$ \hspace{2em} ($z_1, z_2$ are fresh)
	\end{tabular}
\end{table}
We use an ML-like syntax sugar, which is also used in~\cite{Kura2023,KuraUnno2024} for examples and implementations.
We list them in~\cref{tab:syntactic-sugar}. 
Note that we will introduce recursive terms $\mu fx. M$ in~\cref{sec:recursion}. 
We also often adopt an abuse of notation when dealing with generic effects as long as they are fairly standard. For instance, $(\coinflip\ ())^{\heads}$ in~\cref{eq:exProbProg} means $(e^{\heads}\ ()); (\coinflip\ () )$  with the corresponding generic effect $e^{\heads}\colon \unit\rightarrow \unit$.

\subsection{Semantics and Weakest Precondition}
We recall the \emph{semantics} of $\lambda_c$-calculus with generic effect~\cite{Katsumata13}.
  Let $\mathbb{C}$ be a bi-CCC, and
  $T$ be a strong monad on $\mathbb{C}$.
  A \emph{generic effect} (on $\mathbb{C}$ with $T$) is a morphism of the form $C \to TD$~\cite{PlotkinP03}.

\begin{definition}[$\lambda_c(\Sigma)$-structure and $\aitp{\mathcal{A}}{\_}{T}$] \label{def:lambdac}
  Given a \emph{$\lambda_c$-signature} $\Sigma$,
a \emph{$\lambda_c(\Sigma)$-structure} is a tuple $\mathcal{A} \defeq (\mathbb{C}, T, A, a)$,
where (i) $\mathbb{C}$ is a bi-CCC and stable;
(ii) $T$ is a strong monad on $\mathbb{C}$;
(iii) $A\colon B\rightarrow \mathbb{C}$ is a functor (the set $B$ of bases is regarded as a discrete category);
and (iv) $a$ is an assignment of interpretations for each effect-free constant and generic effects, which is defined as follows.
We extend $A\colon  B\rightarrow \mathbb{C}$  into
$\itp{\mathcal{A}}{\_}{T}\colon \type\rightarrow \mathbb{C}$ by the bi-CCC structures.
Then, for each $c\in K$, the assignment $a(c)$ is a morphism $a(c)\colon \itp{\mathcal{A}}{\arfunc(c)}{T}\rightarrow\itp{\mathcal{A}}{\carfunc(c)}{T} $,
and for each $e\in E$, the assignment $a(e)$ is a generic effect $a(e)\colon\itp{\mathcal{A}}{\arfunc(e)}{T}\rightarrow T(\itp{\mathcal{A}}{\carfunc(e)}{T})$.
  For each context $\Gamma\defeq x_1\colon \bt_1, \dots, x_n\colon \bt_n$,
  we define $\itp{\mathcal{A}}{\Gamma}{T}$ to be the product $\prod_{i=1}^n \itp{\mathcal{A}}{\bt_i}{T}$.

  Then for a $\lambda_c(\Sigma)$-structure $\mathcal{A}$,
  the \emph{interpretation} $\aitp{\mathcal{A}}{M}{T}\colon \itp{\mathcal{A}}{\Gamma}{T} \to T(\itp{\mathcal{A}}{\bt}{T})$ of a well-typed term $\Gamma \vdash M\colon \bt$ is inductively defined by 
  the typing rules (see e.g.~\cite{Moggi89,BentonHM00,Katsumata13,Kura2023} for the details);
  specifically, the interpretation of effect-free constants and generic effects are given by
  \begin{align*}
    \aitp{\mathcal{A}}{c\ M}{T} &\defeq T(a(c))\circ \aitp{\mathcal{A}}{M}{T},\\
    \aitp{\mathcal{A}}{e\ M}{T}&\defeq \mu_{T\itp{\mathcal{A}}{\carfunc(e)}{T}}\circ T\big( a(e)\big)\circ \aitp{\mathcal{A}}{M}{T}.
  \end{align*}
When no confusion arises, we write $\itp{\mathcal{A}}{M}{T}$ for $\aitp{\mathcal{A}}{M}{T}$.
\end{definition}

Note that the interpretation of the arrow type $\bt_1 \rightarrow \bt_2$ is given by  $\itp{\mathcal{A}}{\bt_1 \rightarrow\bt_2}{T} \defeq T(\itp{\mathcal{A}}{\bt_2}{T})^{\itp{\mathcal{A}}{\bt_1}{T}}$. 

Finally, we recap the \emph{(semantic) weakest pre-condition} for $\lambda_c$-calculus with generic effect. 
There, an Eilenberg-Moore algebra $\tau\colon T\Omega\rightarrow \Omega$ determines the type of weakest pre-condition (see~\cite{AguirreKK22} for the details). 
We call $\Omega$ \emph{semantic domain}; we can think $\Omega$ as the set of (generalized) truth-values for (generalized) predicates. 
\begin{definition}[weakest pre-condition~\cite{AguirreKK22}]
  Let $\mathcal{A} = (\mathbb{C}, T, A, a)$ be a $\lambda_c(\Sigma)$-structure, and $\tau\colon T\Omega\rightarrow \Omega$ be an Eilenberg-Moore algebra.
  Given a well-typed term $\Gamma\vdash M\colon \bt$ and a morphism $Q\colon \itp{\mathcal{A}}{\bt}{T}\rightarrow \Omega$,
  the \emph{weakest pre-condition} $\wpcond{M}{\tau}{Q}{\mathcal{A}}\colon \itp{\mathcal{A}}{\Gamma}{T}\rightarrow \Omega$ of $Q$ with $\Gamma\vdash M\colon \bt$ and $\tau$ is defined by
  \begin{align*}
    \wpcond{M}{\tau}{Q}{\mathcal{A}}\defeq \tau \circ T(Q)\circ \itp{\mathcal{A}}{M}{T}.
  \end{align*}
\end{definition}

\begin{example}[angelic nondeterminism]
  \label{ex:angnonSec3}
We informally illustrate what a weakest pre-condition looks like for programs with angelic nondeterminism.
Suppose we are interested in whether a given $T$-effectful program $\Gamma \vdash M \colon \bt$ may terminate (i.e., reaching a state in $\itp{\mathcal{A}}{\bt}{T}$), where the monad $T$ models angelic nondeterminism.
The semantic domain $\Omega$ in this setting is the Boolean domain $\boolsets = \{\bot, \top\}$, ordered by $\bot \prec \top$.
In what follows, we show that $\wpcond{M}{\tau}{Q}{\mathcal{A}}(x) = \top$ if and only if $M$ may terminate (in $\itp{\mathcal{A}}{\bt}{T}$) when executed from $x \in \itp{\mathcal{A}}{\Gamma}{T}$, for appropriately chosen $Q$ and $\tau$.

We interpret $\itp{\mathcal{A}}{M}{T} \colon \itp{\mathcal{A}}{\Gamma}{T} \rightarrow T(\itp{\mathcal{A}}{\bt}{T})$ as representing the set of reachable terminating states $\itp{\mathcal{A}}{M}{T}(x) \subseteq \itp{\mathcal{A}}{\bt}{T}$ of the program $M$ from an initial state $x \in \itp{\mathcal{A}}{\Gamma}{T}$.
Since we focus on the termination property into some states in $\itp{\mathcal{A}}{\bt}{T}$,
we define the postcondition $Q \colon \itp{\mathcal{A}}{\bt}{T} \rightarrow \boolsets$ as the constant function that always returns $\top$.
The Eilenberg-Moore algebra $\tau \colon T(\boolsets) \to \boolsets$ resolves nondeterminism in an angelic manner by taking the disjunction: $\tau(S) = \bigvee S$.
The weakest pre-condition $\wpcond{M}{\tau}{Q}{\mathcal{A}} \colon \itp{\mathcal{A}}{\Gamma}{T} \rightarrow \boolsets$ is then given as follows: for any initial state $x$,
$\wpcond{M}{\tau}{Q}{\mathcal{A}}(x) = \top$ iff $\itp{\mathcal{A}}{M}{T}(x)\not = \emptyset$, that is, program $M$ may terminate in $\itp{\mathcal{A}}{\bt}{T}$ when executed from $x$.

\end{example}

\begin{example}[probability]
  \label{ex:overviewDenotation}
  We illustrate the weakest pre-condition for probabilistic programs introduced in~\cref{subsec:ovSafety}, through a simple example.
To simplify the presentation, we work informally as if we are in the category of sets, although we will later introduce domain-theoretic models in~\cref{sec:caseStudy} to properly accommodate recursive functions.
Let us write $U$ for the strong monad $\subdist(\_\times \fstringscoin)$ to simplify notation.
Consider the program $\vdash N \colon \unit$ defined by $N \defeq \probbranch{()^{\heads}}{1/2}{()^{\tails}}$.
Its interpretation $\itp{\mathcal{A}}{N}{U}\colon \{\star\} \to U(\{\star\})$ is given by $\itp{\mathcal{A}}{N}{U}(\star)(\star, \heads) = 1/2$ and  $\itp{\mathcal{A}}{N}{T}(\star)(\star, \tails) = 1/2$.
It means that the program $N$ terminates with probability $1/2$ while raising the event $\heads$, and with probability $1/2$ while raising the event $\tails$.
Note that the interpretation of the type $\unit$ under $U$ is the singleton set $\{\star\}$.
Let $Q\colon \{\star\} \rightarrow \fstringssubdistcoin$ be a post-condition defined by $Q(\star) = d_{\epsilon}$, where $d_{\epsilon}$ denotes the Dirac distribution concentrated at the empty string $\epsilon$.
We compute the weakest pre-condition of the program $N$ with respect to the post-condition $Q$ and an algebra $\tau^U\colon U(\Omega) \rightarrow \Omega$, where $\Omega = \fstringssubdistcoin$.
Define $f \defeq \big(U(Q) \circ \itp{\mathcal{A}}{N}{U}\big)\colon \{\star\} \rightarrow U(\Omega)$.
This function yields a distribution $f(\star)$ such that $f(\star)(d_{\epsilon}, \heads) = 1/2$ and $f(\star)(d_{\epsilon}, \tails) = 1/2$.
Then, the value $\tau^U(f(\star)) \in \fstringssubdistcoin$, which corresponds to the weakest pre-condition $\wpcond{N}{\tau^{U}}{Q}{\mathcal{A}}(\star)$, is given as follows:
 \begin{align*}
  \tau^{U}\big(f(\star)\big)(\heads) = f(\star)(d_{\epsilon}, \heads)\cdot d_{\epsilon}(\epsilon) = 1/2, \quad \tau^{U}\big(f(\star)\big)(\tails) = f(\star)(d_{\epsilon}, \tails)\cdot d_{\epsilon}(\epsilon) = 1/2.
 \end{align*}
 In fact,  $\wpcond{N}{\tau^{U}}{Q}{\mathcal{A}}(\star)$ is precisely the distribution of traces generated by $N$: it raises $\heads$ with the probability $1/2$, and raises $\tails$ with the probability $1/2$.  
\end{example}

\section{Synchronisation of $\lambda_c$-Calculus with Specifications} 
\label{sec:prodLambdaCalc}

We begin with the first step of our generic product construction (see~\cref{fig:outline}), namely, the synchronisation of terms in $\lambda_c$-calculus with specifications. 
\subsection{Translation of Effectful Programs by Strong Monad Morphisms}
We employ a strong monad morphism, which enables us to translate 
a source $T$-effectful program into a synchronised $U$-effectful program.
This translation realizes our synchronisation.
\begin{definition}[strong monad morphism]
  Given two strong monads $(T, \eta^T, \mu^T, t^T)$ and $(U, \eta^U, \mu^U, t^U)$ on the same category,
  a natural transformation $\alpha\colon T\Rightarrow U$ is a \emph{strong monad morphism} if it satisfies the following conditions for each object $X, Y \in \mathbb{C}$:
  \begin{align*}
    \eta^U_X &= \alpha_X \circ \eta^T_X,\quad \alpha_X \circ \mu^T_X = \mu^U_X\circ U(\alpha_X) \circ \alpha_{TX},\quad\text{and}\quad  \alpha_{X\times Y}\circ t^T = t^U \circ (\id_X\times \alpha_Y).
  \end{align*}
\end{definition}
\begin{definition}[map of $\lambda_c(\Sigma)$-structure~\cite{Katsumata13}]
  Given a $\lambda_c(\Sigma)$-structure $\mathcal{A} = (\mathbb{C}, T, A, a)$ and strong monad morphism $\alpha\colon T\Rightarrow U$,
  we define a \emph{$\lambda_c(\Sigma)$-structure} $\mapLambdac{\mathcal{A}}{\alpha}$ to be the tuple $(\mathbb{C}, U, A, \mapAssignment{\alpha}{a})$, where
    $\mapAssignment{\alpha}{a}(c) \defeq a(c)$ and $\mapAssignment{\alpha}{a}(e) \defeq \alpha_{\itp{\mathcal{A}}{\carfunc(e)}{T}} \circ a(e)$,
  for each $c\in K$ and $e\in E$.
\end{definition}

The following lemma shows that a strong monad morphism provides a correct translation of $\lambda_c$-terms with respect to the interpretation under the source and target monads.
\begin{lemma}[\cite{Katsumata13,Kura2023}]
  \label{lem:monadTransGroundTypes}
  Let $\alpha\colon T\Rightarrow U$ be a strong monad morphism.
  For each well-typed $\lambda_c$-term $x_1\colon \bt_1,\dots, x_n\colon \bt_n\vdash M\colon \bt$,
  where $\bt_1, \dots, \bt_n, \bt$ are ground types, we have
  \begin{equation*}
    \pushQED{\qed} 
    \alpha_{\itp{\mathcal{A}}{\bt}{T}} \circ \aitp{\mathcal{A}}{M}{T} = \aitp{\mapLambdac{\mathcal{A}}{\alpha}}{M}{U}. \qedhere
    \popQED
  \end{equation*}
\end{lemma}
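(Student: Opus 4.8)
The plan is to prove the statement by induction on the derivation of the typing judgment $x_1\colon \bt_1,\dots,x_n\colon \bt_n\vdash M\colon \bt$, keeping in mind that all types appearing in the context and in the conclusion are ground types. The crucial restriction to ground types is what makes the argument go through: under this hypothesis the interpretation $\itp{\mathcal{A}}{\bt}{T}$ does \emph{not} depend on the monad (ground types are built only from base types, $\mathbf{1}$, $\mathbf{0}$, products and coproducts, none of which invoke $T$), so $\itp{\mathcal{A}}{\bt}{T} = \itp{\mapLambdac{\mathcal{A}}{\alpha}}{\bt}{U}$ as objects, and similarly for the context. This is what allows the two sides of the equation to be comparable in the first place. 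I would first record this observation as a preliminary remark, together with the fact that $A$ and the effect-free-constant assignment $a(c)$ are literally shared between $\mathcal{A}$ and $\mapLambdac{\mathcal{A}}{\alpha}$.

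Next I would set up the induction. For each syntactic form I need to check that $\alpha_{\itp{\mathcal{A}}{\bt}{T}}$ commutes past the clause defining $\itp{\mathcal{A}}{M}{T}$ to yield $\itp{\mapLambdac{\mathcal{A}}{\alpha}}{M}{U}$. For variables and $()$ the interpretation is (post-composed with) $\eta^T$, so the first monad-morphism law $\eta^U = \alpha\circ\eta^T$ closes the case. For pairing $(M,N)$, projections $\pi_i\,M$, the injections $\iota_i\,M$, the $\mathbf{0}$-eliminator $\delta(M)$ and the case construct $\delta(M,x_1.N_1,x_2.N_2)$, the monad-specific content is the Kleisli-style sequencing of subcomputations, which is built from $\mu^T$, $T(-)$ and the strengths $t^T, s^T$; here I would use naturality of $\alpha$ together with the remaining two laws — compatibility of $\alpha$ with $\mu$ and with the strength — to push $\alpha$ through, invoking the induction hypothesis on each immediate subterm (each of which is again a term between ground types, since all these formation rules preserve groundness of the component types). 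For effect-free constants, $\itp{\mathcal{A}}{c\,M}{T} = T(a(c))\circ\itp{\mathcal{A}}{M}{T}$; naturality of $\alpha$ at $a(c)$ plus $\mapAssignment{\alpha}{a}(c)=a(c)$ and the IH finishes it. For generic effects, $\itp{\mathcal{A}}{e\,M}{T} = \mu^T\circ T(a(e))\circ\itp{\mathcal{A}}{M}{T}$; here I expand $\mapAssignment{\alpha}{a}(e) = \alpha_{\itp{\mathcal{A}}{\carfunc(e)}{T}}\circ a(e)$ in the target interpretation, then use naturality of $\alpha$ and the $\mu$-compatibility law $\alpha\circ\mu^T = \mu^U\circ U(\alpha)\circ\alpha_{T(-)}$ to reconcile the two expressions, again with the IH on $M$.

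The main obstacle — and the reason the lemma restricts to ground types — is the $\lambda$-abstraction and application cases. A priori $\lambda x.M$ forces us to interpret an arrow type $\bt_1\to\bt_2$, whose interpretation $T(\itp{}{\bt_2}{})^{\itp{}{\bt_1}{}}$ genuinely depends on $T$, and there is no reason for $\alpha$ to be compatible with the cartesian-closed transpose in general (this is exactly the gap that the rest of the paper's fibrational machinery is built to bridge). The clean resolution is the standard one: because the final type $\bt$ and all context types are ground, \emph{no arrow type can occur at the top level}, so by inspecting the typing rules the derivation of $\Gamma\vdash M\colon\bt$ with $\Gamma,\bt$ ground can be taken to use only the non-arrow formation rules — any use of $\appl$ or $\abst$ internally would have to be cancelled, but a function-typed subterm can never be eliminated back to a ground type without an application whose argument is again function-built, and one shows by a straightforward sub-induction on term structure that a term of ground type in a ground context has no function-typed subterms at all, or alternatively one proves the statement for the larger class of terms all of whose subterms have ground type, which is closed under the relevant rules. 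I would state this closure property as an auxiliary observation, then run the induction over that class; the abstraction/application rules simply never fire, so the potentially problematic cases are vacuous. (This is the same device used in \cite{Katsumata13}, whose proof I would cite for the routine computations and reproduce only the structure here, since the statement is quoted verbatim from there and from \cite{Kura2023}.)
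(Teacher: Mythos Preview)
Your induction breaks at abstraction and application, and the escape hatch you propose does not work. The claim that ``a term of ground type in a ground context has no function-typed subterms at all'' is simply false: $(\lambda x\colon\mathbf{1}.\ ())\ ()$ has type $\mathbf{1}$ in the empty context, yet its immediate subterm $\lambda x.\ ()$ has type $\mathbf{1}\to\mathbf{1}$. More generally, any use of $\mathbf{let}\ x = M\ \mathbf{in}\ N$ (which is sugar for $(\lambda x.N)\ M$) introduces a function-typed subterm, and such terms are ubiquitous. Your alternative---restricting to the class of terms all of whose subterms are of ground type---proves the lemma only for that strict subclass, not for all terms with ground context and conclusion, so it does not establish the statement as written.

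The paper explicitly notes (just after the lemma) that the proof in \cite{Katsumata13} uses a \emph{fibrational logical relation}, and this is not an accident of presentation: it is the standard device for exactly the obstacle you identify. One defines a type-indexed family of relations $R_\bt$ between $\itp{\mathcal{A}}{\bt}{T}$ and $\itp{\mapLambdac{\mathcal{A}}{\alpha}}{\bt}{U}$---equality at ground types, and the usual ``related inputs to related outputs'' clause at arrow types---and proves by induction on derivations that every term respects the relation. At ground conclusion and context this yields the desired equality, but the induction hypothesis at function-typed subterms is now a relational statement rather than an equation, which is what lets the $\abst$ and $\appl$ cases go through. Your ground-type cases are fine and would survive as the base of such an argument; what is missing is the logical-relations scaffolding for the higher-order part.
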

It should be noted that the stability of $\mathbb{C}$ is required in the proof of~\cref{lem:monadTransGroundTypes}; see~\cite{Kura2023}. 
We also remark that~\cref{lem:monadTransGroundTypes} only holds for ground types: in~\cite{Katsumata13} this has been proved by the fibrational logical relation. 
We now move from the translation of interpretations to that of weakest preconditions.
This translation requires an additional ingredient: an \emph{inference query} (see~\cref{def:inferenceMap}), which is a map from the semantic domain of the source program to that of the target program.
\begin{definition}[inference query]
\label{def:inferenceMap}
  Let $T, U$ be strong monads, $\alpha\colon T\Rightarrow U$ be a monad morphism, and $\tau^T\colon T(\Omega^T)\rightarrow \Omega^T$ and $\tau^U\colon U(\Omega^U)\rightarrow \Omega^U$ be two Eilenberg-Moore algebras.
  An \emph{inference query} (for $\alpha$, $\tau^T$, and $\tau^U$) is a homomorphism $q\colon \Omega^T\rightarrow \Omega^U$ such that
    $q \circ \tau^T = \tau^U \circ U(q)\circ \alpha_{\Omega^T}$.
\end{definition}
Our definition of inference queries is indeed inspired by~\cite{WatanabeJRH25}, but it differs in two key aspects: 1)
we exploit strong monad structures to establish correctness in a \emph{compositional} manner; and
2) we encode specifications (e.g., DFAs) as strong monad morphisms, whereas their approach treats specifications as models akin to system models.

The following theorem establishes the correctness of the translation of effectful programs in view of weakest preconditions.
\begin{theorem}[correctness]
\label{thm:correctProd}
  Assume the following data:
  \begin{itemize}
    \item Strong monads $T, U$ and a strong monad morphism $\alpha\colon T\Rightarrow U$,
    \item Eilenberg-Moore algebras $\tau^T\colon T(\Omega^T)\rightarrow \Omega^T$ and $\tau^U\colon U(\Omega^U)\rightarrow \Omega^U$ and an inference query $q\colon \Omega^T\rightarrow \Omega^U$.
  \end{itemize}
  Let $\mathcal{A}$ be a $\lambda_c$-signature, $\Gamma\vdash M\colon \bt$ be a well-typed term such that all types in $\Gamma$ and $\bt$ are ground types, and let $Q\colon \itp{\mathcal{A}}{\bt}{T}\rightarrow \Omega^{T}$ be a post condition.  Then the following equality holds:
  \begin{equation}
    \label{eq:correctSynch}
    q\circ \big(\wpcond{M}{\tau^T}{Q}{\mathcal{A}}\big) = \wpcond{M}{\tau^U}{q\circ Q}{\mapLambdac{\mathcal{A}}{\alpha}}.
  \end{equation}
\end{theorem}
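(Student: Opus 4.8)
The plan is to unfold both sides to their defining composites and then perform a short diagram chase using three facts: functoriality of $U$ and naturality of $\alpha$, the defining equation of an inference query, and \cref{lem:monadTransGroundTypes}. By definition of the weakest pre-condition, the left-hand side of \cref{eq:correctSynch} is $q \circ \tau^T \circ T(Q) \circ \aitp{\mathcal{A}}{M}{T}$ and the right-hand side is $\tau^U \circ U(q\circ Q) \circ \aitp{\mapLambdac{\mathcal{A}}{\alpha}}{M}{U}$, so it suffices to prove the equality of these two composites $\itp{\mathcal{A}}{\Gamma}{T} \to \Omega^U$. Note that the typing hypothesis---all types in $\Gamma$ and $\bt$ are ground---is exactly what is needed to invoke \cref{lem:monadTransGroundTypes}.

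First I would rewrite $\aitp{\mapLambdac{\mathcal{A}}{\alpha}}{M}{U}$ on the right-hand side using \cref{lem:monadTransGroundTypes}, obtaining $\aitp{\mapLambdac{\mathcal{A}}{\alpha}}{M}{U} = \alpha_{\itp{\mathcal{A}}{\bt}{T}} \circ \aitp{\mathcal{A}}{M}{T}$; this is legitimate precisely because $\Gamma$ and $\bt$ are ground types. Next I would split $U(q \circ Q) = U(q) \circ U(Q)$ by functoriality of $U$, and then use naturality of $\alpha\colon T \Rightarrow U$ at the morphism $Q\colon \itp{\mathcal{A}}{\bt}{T} \to \Omega^T$ to rewrite $U(Q) \circ \alpha_{\itp{\mathcal{A}}{\bt}{T}} = \alpha_{\Omega^T} \circ T(Q)$. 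At this point the right-hand side has become $\tau^U \circ U(q) \circ \alpha_{\Omega^T} \circ T(Q) \circ \aitp{\mathcal{A}}{M}{T}$. Finally I would apply the defining equation of the inference query, $q \circ \tau^T = \tau^U \circ U(q)\circ \alpha_{\Omega^T}$ (\cref{def:inferenceMap}), which collapses the prefix $\tau^U \circ U(q) \circ \alpha_{\Omega^T}$ to $q \circ \tau^T$, yielding $q \circ \tau^T \circ T(Q) \circ \aitp{\mathcal{A}}{M}{T} = q \circ \big(\wpcond{M}{\tau^T}{Q}{\mathcal{A}}\big)$, which is the left-hand side.

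The argument is thus a routine chain of rewrites once \cref{lem:monadTransGroundTypes} is available; I would present it as a short display-aligned calculation, annotating each step with the fact used. The only genuine subtlety---and the step I expect to be the main obstacle conceptually---is the appeal to \cref{lem:monadTransGroundTypes}, whose restriction to ground types is essential: for non-ground (in particular arrow) types the commutation $\alpha_{\itp{\mathcal{A}}{\bt}{T}} \circ \aitp{\mathcal{A}}{M}{T} = \aitp{\mapLambdac{\mathcal{A}}{\alpha}}{M}{U}$ fails in general, so the hypothesis on $\Gamma$ and $\bt$ in the theorem statement cannot be dropped, and any attempt to generalize would need the fibrational logical-relations machinery used to prove that lemma (and developed further in \cref{sec:fibrationalapproach}) rather than a bare diagram chase. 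Everything else---functoriality, naturality, and the inference-query equation---is immediate.
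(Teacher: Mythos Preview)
Your proposal is correct and matches the paper's own proof essentially step-for-step: both arguments are the same short equational chain using the inference-query equation (\cref{def:inferenceMap}), naturality of $\alpha$ together with functoriality of $U$, and \cref{lem:monadTransGroundTypes}. The only cosmetic difference is direction---the paper starts from the left-hand side and rewrites toward the right, whereas you start from the right---but the content is identical.
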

\begin{proof}
  We prove the statement as follows:
  \begin{align*}
    & q\circ (\wpcond{M}{\tau^T}{Q}{\mathcal{A}}) = q\circ \tau^T\circ T(Q)\circ \itp{\mathcal{A}}{\Gamma\vdash M\colon \bt}{T}\\
     &= \tau^U \circ U(q) \circ \alpha_{\Omega^T}\circ T(Q) \circ  \itp{\mathcal{A}}{\Gamma\vdash M\colon \bt}{T} = \tau^U \circ U(q\circ Q) \circ \alpha_{\itp{\mathcal{A}}{\bt}{T}} \circ \itp{\mathcal{A}}{\Gamma\vdash M\colon \bt}{T}\\
     &= \tau^U \circ U(q\circ Q) \circ \itp{\mapLambdac{\mathcal{A}}{\alpha}}{\Gamma\vdash M\colon \bt}{U}= \wpcond{M}{\tau^U}{q\circ Q}{\mapLambdac{\mathcal{A}}{\alpha}}.
  \end{align*}
  The second and fourth equality hold by~\cref{def:inferenceMap} and~\cref{lem:monadTransGroundTypes}, respectively.  
\end{proof}

\begin{example}[angelic nondeterminism]
  \label{ex:synchAnon}
Here, we provide an explanation informally; 
we focus only on illustrating the types of data required to apply \cref{thm:correctProd} for synchronisation, along with their intuitions.
The formal definitions will appear in~\cref{subsec:emptinessCheckingAcceptingTraces}. 
We fix a specification as a DFA with a set of states $Y$.
For angelic nondeterminism, we adopt a strong monad morphism $\alpha$ of the form 
\begin{displaymath}
\alpha\colon \hoaremonad(\_\times A^{\ast}) \Rightarrow \hoaremonad(\_\times Y)^Y, 
\end{displaymath}
where $\hoaremonad$ is a strong monad representing angelic nondeterminism (the extended Hoare powerdomain monad that includes the emptyset), $A$ is the alphabet, and $Y$ is the set of states of the DFA.
To define weakest preconditions, we consider Eilenberg-Moore algebras in the following form: 
\begin{itemize}
  \item for the source monad $T \coloneqq \hoaremonad(\_\times A^{\ast})$, $\tau^{T}\colon \hoaremonad\big(\hoaremonad(A^{\ast}) \times A^{\ast}\big)\rightarrow \hoaremonad(A^{\ast})$, and
  \item for the target monad $U \coloneqq \hoaremonad(\_\times Y)^Y$, $\tau^{U}\colon \hoaremonad(\boolsets^Y\times Y)^Y\rightarrow \boolsets^Y$.
\end{itemize} 
Roughly speaking, a truth-value in $\hoaremonad(A^{\ast})$ denotes a set of traces of a program, and a truth-value in $\boolsets^Y$ denotes whether there is an accepting trace from each initial state $y\in Y$; in this sense, our focus is on angelic nondeterminism (or may semantics).
Then the inference query $q\colon \hoaremonad(A^{\ast})\rightarrow \boolsets^Y$ translates a set $L$ of traces to a function $q(L)\colon Y\rightarrow \boolsets$ such that $q(L)(y) = \top$ iff there is an accepting trace in $L$ from $y$. 
Under this setting with $Q\colon \itp{\mathcal{A}}{\bt}{\hoaremonad(\_\times A^{\ast})}\rightarrow \hoaremonad(A^{\ast})$ defined by a constant function to the singleton $\{\epsilon\}$,  
\cref{eq:correctSynch} states that the temporal problem (LHS of \cref{eq:correctSynch}) that asks  whether there is an accepting trace
is equivalent to the weakest precondition (RHS of \cref{eq:correctSynch}) of the post-condition $q \circ Q\colon \itp{\mathcal{A}}{\bt}{\hoaremonad(\_\times Y)^Y} (= \itp{\mathcal{A}}{\bt}{\hoaremonad(\_\times A^{\ast})}) \rightarrow \boolsets^Y$.
\end{example}

\begin{example}[probability]
  \label{ex:probSynchSec4}
  We continue~\cref{ex:overviewDenotation}. Let $V$ denote the strong monad $\subdist(\_\times \Ycoin)^{\Ycoin}$.
  We can construct a strong monad morphism $\alpha\colon \subdist(\_\times \fstringscoin)\Rightarrow V$ from the DFA $D$ (provided in~\cref{subsec:ovSafety}), and  the interpretation $\aitp{\mapLambdac{\mathcal{A}}{\alpha}}{N}{V}\colon\allowbreak \{\star\}\rightarrow V(\{\star\})$  of  $N$ is a function such that 
  \begin{align*}
    \aitp{\mapLambdac{\mathcal{A}}{\alpha}}{N}{V}(\star)(y_1)(\star, y_2) = 1/2, \quad  \aitp{\mapLambdac{\mathcal{A}}{\alpha}}{N}{V}(\star)(y_1)(\star, y_1) = 1/2,
  \end{align*}
  because it raises $\heads$ with the probability $1/2$, which changes the state $y_1$ into the state $y_2$ by the synchronisation (and similarly for $\heads$). 
  The post-condition of interest is $q\circ Q\colon \{\star\}\rightarrow [0, 1]^{\Ycoin}$ such that $(q\circ Q)(\star)(y_1) = 0$ and $(q\circ Q)(\star)(y_2) = 1$, which assigns $1$ to accepting states and $0$ to non-accepting states. 
Recall that the weakest pre-condition of $N$ with the post-condition $q\circ Q$ and an Eilenberg-Moore algebra $\tau^V\colon \subdist([0, 1]^{\Ycoin}\times \Ycoin)^{\Ycoin}\rightarrow [0, 1]^{\Ycoin}$, which computes the expectation associated with each $y\in \Ycoin$, is given by 
  \begin{align*}
    \wpcond{M}{\tau^{V}}{q\circ Q}{\mapLambdac{\mathcal{A}}{\alpha}} \defeq \tau^{V} \circ V(q\circ Q)\circ \aitp{\mapLambdac{\mathcal{A}}{\alpha}}{N}{V}\colon \{\star\}\rightarrow [0, 1]^{\Ycoin},
  \end{align*}
  and we can see that $f\defeq V(q\circ Q)\circ \aitp{\mapLambdac{\mathcal{A}}{\alpha}}{N}{V}$ is a function such that $f(\star)(y_1)\big((q\circ Q)(\star), y_1\big) = 1/2$ and $f(\star)(y_1)\big((q\circ Q)(\star), y_2\big) = 1/2$. 
  Finally, the value $\tau^V(f)(y_1)\in [0, 1]$, which is the weakest pre-condition $\big(\wpcond{M}{\tau^{V}}{q\circ Q}{\mapLambdac{\mathcal{A}}{\alpha}}\big)(\star)(y_1)$, is given as follows: 
  \begin{align*}
    & \, f(\star)(y_1)\big((q\circ Q)(\star), y_1\big)\cdot (q\circ Q)(\star)(y_1) + f(\star)(y_1)\big((q\circ Q)(\star), y_2\big)\cdot (q\circ Q)(\star)(y_2)\\
   =&\, 1/2\cdot 0 + 1/2 \cdot 1 = 1/2.
  \end{align*}
  Indeed, this is the probability of accepting traces generated by $N$ from $y_1$, since $\heads$ is the unique accepting trace, which is generated by $N$ with the probability $1/2$.
\end{example}

\subsection{Generic Framework for Synchronisations} 
\label{sec:genericSynchronisation}
To ensure the correctness of our synchronisation,~\cref{thm:correctProd} requires an inference query $q$ that, by definition, satisfies the condition  $q \circ \tau^T = \tau^U \circ U(q)\circ \alpha_{\Omega^T}$ 
where $\tau^T$ and $\tau^U$ are Eilenberg-Moore algebras and $\alpha$ is a monad morphism.
In this section, we present a generic construction---referred to as the \emph{product situation}---that produces data satisfying this condition by design.
Indeed, all of the instances discussed in~\cref{sec:caseStudy} are derived from this construction.

\begin{definition}[product situation]
  We call a tuple $(T, Z, Y, \tau^{\_ \times Z}, \tau^T, \alpha, q)$ \emph{product situation} if the tuple satisfies the following conditions:
  \begin{itemize}
    \item $T$ is a strong monad, $\_\times Z$ is a writer (or action) monad, and $Y$ is an object. 
    \item $\tau^{\_\times Z}\colon \Omega^{\_\times Z}\times Z\rightarrow \Omega^{\_\times Z}$ and $\tau^T\colon T(\Omega^T)\rightarrow \Omega^T$ are Eilenberg-Moore algebras.
    \item $\alpha$ is a strong monad morphism $\alpha\colon \_\times Z \Rightarrow (\_\times Y)^Y$. 
    \item $q\colon \Omega^{\_\times Z}\rightarrow (\Omega^T)^Y$ is an inference query for the strong monads $(\_)\times Z$ and $(\_\times Y)^Y$ with the strong monad morphism $\alpha$, the Eilenberg-Moore algebra $\tau^{\_\times Z}$, and the Eilenberg-Moore algebra $\tau^{(\_\times Y)^Y}\colon \big((\Omega^T)^Y\times Y\big)^Y\rightarrow (\Omega^T)^Y$ defined by $(\ev{\Omega^T}{Y})^Y$.
  \end{itemize}
  We remark that such a monad morphism $\alpha$ bijectively corresponds to a morphism $Z \times Y \to Y$ satisfying certain coherence conditions; see \cref{ap:st_monad_morphism_bij}.
\end{definition}
A product situation provides the data for synchronisation by way of \emph{deterministic} higher-order programs, which are much simpler to handle than non-deterministic or probabilistic programs considered in our case studies.
The tuple $(Z, Y, \tau^{\_ \times Z}, \tau^{(\_\times Y)^Y}, \alpha, q)$ specifies the data for synchronisation of deterministic higher-order programs, and the Eilenberg-Moore algebra $\tau^T$
determines how to resolve the effect $T$, such as non-determinism or probability.
For example, $\tau^T$ takes the disjunction (or angelic choice) or the expectation, depending on $T$.

We then construct the data required for \cref{thm:correctProd} from a product situation.
This construction yields a translation from a source $T$-effectful program to a synchronised $U$-effectful program.
The correctness of this synchronisation is guaranteed by \cref{thm:correctProd} for weakest pre-conditions.
\begin{proposition}
  \label{prop:liftingCorrectness}
  Let $(T, Z, Y, \tau^{\_ \times Z}, \tau^T, \alpha, q)$ be a product situation.
  The following data satisfies the assumption of~\cref{thm:correctProd}:
  \begin{itemize}
    \item Strong monads $T(\_\times Z)$, $T(\_\times Y)^Y$, and a strong monad morphism $\beta\colon T(\_\times Z)\Rightarrow T(\_\times Y)^Y$ defined by $\beta_X\defeq \natu{T, Y}_{X\times Y}\circ T(\alpha_X)$.
    \item Eilenberg-Moore algebras  $\tau^{T(\_\times Z)}\colon T\big(T(\Omega^{\_\times Z})\times Z\big)\rightarrow T(\Omega^{\_\times Z})$  and $\tau^{T(\_\times Y)^Y}\colon T\big((\Omega^T)^Y\times Y\big)^Y\rightarrow (\Omega^T)^Y$  defined by
  \begin{align*}
  \tau^{T(\_\times Z)}&\defeq \mu^{T}_{\Omega^{\_\times Z}} \circ T^2\big(\tau^{\_\times Z}\big) \circ T\big(\rstrength{}{\Omega^{\_\times Z},Z}\big), \\
  \tau^{T(\_\times Y)^Y} &\defeq (\tau^T)^Y\circ T\big(\ev{\Omega^T}{Y}\big)^Y,
  \end{align*}
and an inference query $q^T\colon T\big(\Omega^{\_\times Z}\big)\rightarrow \big(\Omega^T\big)^Y$ defined by
$q^T\defeq \big(\tau^{T}\big)^Y\circ \natu{T, Y}_{\Omega^T}\circ T(q)$.
  \qed
  \end{itemize}
\end{proposition}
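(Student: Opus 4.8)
The plan is to verify, one component at a time, that the displayed data satisfies every hypothesis of \cref{thm:correctProd}: that $T(\_\times Z)$ and $T(\_\times Y)^Y$ are strong monads, that $\beta$ is a strong monad morphism between them, that $\tau^{T(\_\times Z)}$ and $\tau^{T(\_\times Y)^Y}$ are Eilenberg-Moore algebras, and that $q^T$ is an inference query. The first point is standard and I would dispatch it quickly: $T(\_\times Z)$ is the writer monad transformer, whose monad structure comes from the distributive law $(T\_)\times Z\Rightarrow T(\_\times Z)$ given by the right strength $\rstrength{T}{}$ together with the monoid structure of $Z$; and $T(\_\times Y)^Y$ is the state monad transformer. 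Both are strong, with strengths inherited from the strength of $T$ and the bi-CCC structure. At this stage I would record the explicit units and multiplications of both monads so that the later diagram chases are concrete.

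The technical heart is showing that $\beta_X\defeq \natu{T,Y}_{X\times Y}\circ T(\alpha_X)$ is a strong monad morphism. Here I would exploit the bijection recorded in the remark and proved in \cref{ap:st_monad_morphism_bij}, between strong monad morphisms $\alpha\colon \_\times Z\Rightarrow (\_\times Y)^Y$ and morphisms $h\colon Z\times Y\to Y$ satisfying the unit and associativity conditions of a monoid action. Through this correspondence $\beta$ acquires a concrete description---``run the $T$-computation returning $(x,z)$-pairs, and update the state component $y\in Y$ along $h$ according to the emitted $z\in Z$''---and each of the unit, multiplication, and strength equations for $\beta$ then reduces to the unit and associativity laws for $h$, combined with naturality of $\natu{T,Y}$ and the strength--multiplication coherences of $T$. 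I expect the multiplication law for $\beta$ to be the main obstacle, since it is the single place where the two composite-monad structures (writer-over-$T$ on the source, state-over-$T$ on the target) genuinely interact, and one must track carefully where the two copies of $T$ go and when they are merged by $\mu^T$.

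For the algebras, $\tau^{T(\_\times Z)}=\mu^T_{\Omega^{\_\times Z}}\circ T^2(\tau^{\_\times Z})\circ T(\rstrength{T}{})$ is precisely the canonical lift of the writer-algebra $\tau^{\_\times Z}$ to a $T(\_\times Z)$-algebra on $T(\Omega^{\_\times Z})$: it combines the free $T$-algebra $\mu^T_{\Omega^{\_\times Z}}$ with the $Z$-action obtained by transporting $\tau^{\_\times Z}$ through $T$ along $\rstrength{T}{}$, and its Eilenberg-Moore laws follow from the algebra laws of $\tau^{\_\times Z}$, the monad laws of $T$, and the strength coherences. Dually, $\tau^{T(\_\times Y)^Y}=(\tau^T)^Y\circ T(\ev{\Omega^T}{Y})^Y$ has the same shape as the canonical Eilenberg-Moore algebra $\cal{T}{X}{Z}$ recalled in the Notation paragraph---built from the state-algebra $(\ev{\Omega^T}{Y})^Y$ on $(\Omega^T)^Y$ together with the $T$-algebra $\tau^T$---so the same verification applies. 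Finally, to see that $q^T=(\tau^T)^Y\circ\natu{T,Y}_{\Omega^T}\circ T(q)$ is an inference query for $\beta$, $\tau^{T(\_\times Z)}$, and $\tau^{T(\_\times Y)^Y}$, I would expand both sides of its defining square by the explicit formulas and chase the resulting diagram; the ingredients are exactly that $q$ is an inference query for $\alpha$ (the square $q\circ\tau^{\_\times Z}=\tau^{(\_\times Y)^Y}\circ(\_\times Y)^Y(q)\circ\alpha$), that $\tau^T$ is an Eilenberg-Moore algebra (used as the multiplication square), naturality of $\rstrength{T}{}$ and $\natu{T,Y}$, and the monad laws of $T$. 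No single step is deep, but this last chase carries the heaviest bookkeeping, so I would organise it as a sequence of small named squares rather than one large diagram.
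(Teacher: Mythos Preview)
Your proposal is correct and follows the same overall decomposition as the paper: verify the monad structures, then that $\beta$ is a strong monad morphism, then the two algebra laws, then the inference-query square for $q^T$.

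The one organisational difference worth noting is how the monad-morphism and inference-query verifications are handled. You propose to pass through the bijection of \cref{ap:st_monad_morphism_bij} and work with the underlying monoid action $h\colon Z\times Y\to Y$, reducing the $\beta$-equations to the unit/associativity of $h$ plus strength coherences. The paper instead stays at the abstract level of $\alpha$ and isolates a single reusable lemma (\cref{lem:alphaStrengthInvariant}): the square
\[
  (\rstrength{T}{})^Y\circ \alpha_{TU} \;=\; \natu{T,Y}\circ T(\alpha_U)\circ \rstrength{T}{U,Z}
\]
saying that ``first strength, then $T(\alpha)$, then $\natu{T,Y}$'' equals ``first $\alpha$ at the $T$-level, then strength pointwise in $Y$''. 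This lemma is proved once (from a simpler \cref{lem:alphaInvariant}, which in turn uses only that $\alpha$ is strong), and is then invoked both in the multiplication law for $\beta$ and in the inference-query square for $q^T$. Your route via $h$ would rederive the content of this lemma in situ each time; the paper's factorisation is a bit more modular and makes the shared structure of the two verifications visible, but the underlying work is the same.
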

See~\cref{sec:omitDefProofProd} for the proof.
Here, 
the strong monad structure of $T(\_\times Z)$ is the one induced by the strength as a distributive law of $T$ over $\_\times Z$, and that of $T(\_\times Y)^Y$ is written in~\cref{def:effectStateMonad}.

\cref{prop:liftingCorrectness} and \cref{thm:correctProd} immediately imply that for each well-typed term $\Gamma \vdash M\colon \bt$ and post condition $Q\colon \itp{\mathcal{A}}{\bt}{T(\_ \times Z)}\rightarrow T(\Omega^{\_\times Z})$, the synchronisation is correct, that is, 
\begin{equation*}
  q^T\circ (\wpcond{M}{\tau^{T(\_ \times Z)}}{Q}{\mathcal{A}}) = \wpcond{M}{\tau^{T(\_\times Y)^Y}}{q^T\circ Q}{\mapLambdac{\mathcal{A}}{\beta}}.
\end{equation*}

\begin{example}[angelic nondeterminism]
  \label{ex:angndPS}
  We continue~\cref{ex:synchAnon}. The data used there can be derived from a product situation. 
  The product situation consists of 
  a tuple $(A^\ast, Y, \tau^{\_ \times A^{\ast}}, \alpha', q')$, 
  for synchronisation of deterministic programs and an Eilenberg-Moore algebra $\tau^{\hoaremonad}$ that resolves non-determinism by taking the disjunction.
  Here, note that
  $\tau^{\_ \times A^{\ast}}\colon A^{\ast}\times A^{\ast}\rightarrow A^{\ast}$,
  $\alpha'\colon \_\times A^{\ast} \Rightarrow (\_\times Y)^Y$, 
  and $q'\colon A^{\ast}\rightarrow \boolsets^Y$.
  We provide the complete definitions of these data in~\cref{subsec:emptinessCheckingAcceptingTraces}. 
\end{example}

\begin{example}[probability]
In the probabilistic setting, the tuple $(A^\ast, Y, \tau^{\_ \times A^{\ast}}, \alpha', q')$ used for synchronising deterministic programs is the same as that for angelic nondeterminism,
except for the inference query $q'\colon A^{\ast} \rightarrow [0, 1]^Y$.
The Eilenberg-Moore algebra $\tau^{\subdist}\colon \subdist([0, 1]) \rightarrow [0, 1]$ computes the expected value, defined by $\tau^{\subdist}(\nu) \defeq \int_p p  d\nu$.
We formally define them within a domain-theoretic model in~\cref{subsec:probAcceptingTrace}.

\end{example}

 \section{SPS Transformation of $\lambda_c$-Calculus}
 \label{sec:uncurrying}

 We now proceed to the second step in~\cref{fig:outline}: the SPS transformation of effectful higher-order programs. 
 The idea of SPS transformation is that we decompose the effect $T(\_\times Y)^Y$ into two effects $T$ and $(\_\times Y)^Y$, and regard the state monad $(\_\times Y)^Y$ as a pure computation by syntactically encoding these computation of states. 

 In this section, we fix 
a strong monad $T$ and the strong monad $S\defeq T(\_\times Y)^Y$ (see \cref{def:effectStateMonad}).
We also fix
a $\lambda_c(\Sigma)$-structure $\sstructure = (\mathbb{C}, S, A, a)$.
We introduce a  new base type $b^Y$ to represent the current state of the specification as a term of the type $b^Y$.  
To reflect this, we expand $B$ by $B + \{b^Y\}$
and $A$ by $A(b^Y) \defeq Y$.

\begin{definition}[SPS transformation] \label{def:uncurry}
  For each type $\bt$, the \emph{SPS transformation} $\uncurry{\bt}$ of $\bt$ is defined by $\bt' \times b^Y$ where $\primetrans{\bt}$ is
  recursively defined as follows.
  \begin{align*}
  &\primetrans{\bt} \coloneqq \bt \text{ for each }\bt \in \{b, \mathbf{1}, \mathbf{0}\}, \quad 
  \primetrans{(\bt_1 \times \bt_2)} \coloneqq \primetrans{\bt_1}\times \primetrans{\bt_2}, \quad
  \primetrans{(\bt_1 + \bt_2)} \coloneqq \primetrans{\bt_1}+ \primetrans{\bt_2}, \\
  &\primetrans{(\bt_1 \rightarrow \bt_2)} \coloneqq \big((\primetrans{\bt_1} \times b^Y) \rightarrow (\primetrans{\bt_2} \times b^Y)\big).
  \end{align*}
Note that for each ground type $\bt\in \gtype$, we have $\primetrans{\bt} = \bt$, which implies $\uncurry{\bt} = \bt\times b^Y$.

For each context $\Gamma \defeq x_1\colon \bt_1,\dots,x_m\colon \bt_m$, the \emph{SPS transformation} $\uncurry{\Gamma}$ is defined by $\uncurry{\Gamma}\coloneqq x_1\colon \primetrans{\bt_1},\dots,x_m\colon \primetrans{\bt}_m, y\colon b^Y$ where $y$ is a distinguished variable.
  The \emph{SPS transformation} $\uncurry{M}$ of a term $M$ is inductively defined in~\cref{fig:uncurrying}.
\end{definition}

We can see that the SPS transformation is well-defined in the following sense: 
\begin{lemma}
  If $\Gamma\vdash M\colon \bt$ is a well-typed term,
  then $\uncurry{\Gamma}\vdash \uncurry{M}\colon \uncurry{\bt}$ is also a well-typed term.
\end{lemma}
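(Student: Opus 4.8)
The plan is to argue by induction on the derivation of $\Gamma \vdash M : \bt$, reading off for each typing rule the corresponding clause of the term translation in~\cref{fig:uncurrying}. The crucial bookkeeping fact is the compositional shape of the type translation from~\cref{def:uncurry}: $\uncurry{\bt} = \primetrans{\bt} \times b^Y$, where $\primetrans{(\cdot)}$ commutes with $\times$ and $+$, sends $\bt_1 \to \bt_2$ to $(\primetrans{\bt_1} \times b^Y) \to (\primetrans{\bt_2} \times b^Y)$, and---most importantly---fixes every ground type, i.e.\ $\primetrans{\bt} = \bt$ for $\bt \in \gtype$. Likewise $\uncurry{\Gamma}$ replaces each $x_i : \bt_i$ by $x_i : \primetrans{\bt_i}$ and appends the distinguished state variable $y : b^Y$. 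Thus the induction hypothesis always supplies, for each immediate subterm $N$ with $\Delta \vdash N : \bt_N$, a derivation $\uncurry{\Delta} \vdash \uncurry{N} : \primetrans{\bt_N} \times b^Y$, and the task in each case is to reassemble these into a derivation of $\uncurry{\Gamma} \vdash \uncurry{M} : \primetrans{\bt} \times b^Y$ via the translation clause for $M$. (There is no recursion case to handle here, since $\mu f x.\, M$ is introduced only in~\cref{sec:recursion}.)

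Concretely, I would proceed case by case. For a variable $x : \bt \in \Gamma$, the translated term returns the pair of $x : \primetrans{\bt}$ (available in $\uncurry{\Gamma}$) with the current state $y : b^Y$, of type $\primetrans{\bt} \times b^Y$. For the pairing, projection, injection, unit, empty-elimination, and case-analysis constructs, the commutation of $\primetrans{(\cdot)}$ with $\times$ and $+$ makes the reassembly routine: one destructs the state-carrying pairs produced by the IH, threads the state variable sequentially through the subcomputations so that each subterm receives the state left by the previous one, and repackages with the final state. For an effect-free constant $c : \arfunc(c) \to \carfunc(c)$ or a generic effect $e : \arfunc(e) \to \carfunc(e)$, the arity and coarity are ground, so the fixed-point property gives $\uncurry{M} : \arfunc(c) \times b^Y$ for the argument $M$, and the clause applies $c$ (resp.\ sequences the computation $e$) on the first component while carrying the state component through unchanged, yielding $\carfunc(c) \times b^Y = \uncurry{\carfunc(c)}$ as required. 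For abstraction $\lambda x : \bt_1.\, M$ with $\Gamma, x : \bt_1 \vdash M : \bt_2$, the IH gives a derivation of $\uncurry{M} : \primetrans{\bt_2} \times b^Y$ in the context $\uncurry{(\Gamma, x:\bt_1)}$ (which is $\uncurry{\Gamma}$ with $x : \primetrans{\bt_1}$ inserted before the final state variable); the clause wraps this into a function of type $(\primetrans{\bt_1} \times b^Y) \to (\primetrans{\bt_2} \times b^Y)$ that destructs its argument into $x$ and a fresh state variable, and pairs that function with the ambient state $y$, producing $\primetrans{(\bt_1 \to \bt_2)} \times b^Y$. Application $M\,N$ is dual: run $\uncurry{M}$ to obtain a function together with a state, feed that state into $\uncurry{N}$ to obtain an argument together with an updated state, then apply, obtaining a result paired with the final state.

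I expect the main obstacle to be the arrow and application cases---not because they are deep, but because they are where the state-passing discipline and binder hygiene must be checked precisely against the clauses of~\cref{fig:uncurrying}: one must confirm that the distinguished variable $y$ is in scope wherever it is consumed, that the fresh state variable introduced when destructing a function argument shadows the outer $y$ without capturing free occurrences, and that the order in which the state flows through $M\,N$ (and through the $n$-ary constructs) is consistent with the types. Once these clauses are unfolded the remainder is a mechanical structural induction, with the ground-type fixed-point property $\primetrans{\bt} = \bt$ for $\bt \in \gtype$ doing the real work in the constant and generic-effect cases.
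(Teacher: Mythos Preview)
Your approach is correct and matches the paper's, which simply records ``by the straightforward structural induction.'' One small clarification on the constant and generic-effect cases: the clause $\uncurry{(c\ M)} = c\ \uncurry{M}$ does not syntactically decompose the pair and apply $c$ to the first component; rather, the target judgment lives in the modified signature $\uncurry{\Sigma}$ where $c$ has arity $\arfunc(c) \times b^Y$ and coarity $\carfunc(c) \times b^Y$, so $c$ is applied to the whole pair---your conclusion about the resulting type is right, but the mechanism is the signature change, not a syntactic projection.
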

\begin{proof}
  By the straightforward structural induction.
\end{proof}

\begin{figure}[t]
  \small
  \textbf{Term}
  \begin{align*}
    &\uncurry{x} \defeq (x, y),\quad  \uncurry{(c\ M)} \defeq c\ \uncurry{M}, \quad \uncurry{(e\ M)} \defeq e\ \uncurry{M}, \quad \uncurry{()}\defeq ((), y),\\
    &\uncurry{(M, N)}\defeq \Big(\lambda z. \big((\pi_1 z, \pi_1 \pi_2 z), \pi_2 \pi_2 z \big) \Big)\ \Big(\big(\lambda z. (\pi_1 z, (\lambda y. \uncurry{N})(\pi_2 z) )\big) (\uncurry{M})\Big), \\
    &\uncurry{(\pi_1\ M)} \defeq  \big(\lambda z. (\pi_1 \pi_1 z, \pi_2 z)\big)(\uncurry{M}),  \quad \uncurry{(\pi_2\ M)} \defeq  \big(\lambda z. (\pi_2 \pi_1 z, \pi_2 z)\big)(\uncurry{M}),\\
    &\uncurry{\delta(M)} \defeq \big(\lambda z. (\delta(\pi_1 z), \pi_2 z)\big) (\uncurry{M}), \quad \uncurry{(\iota_1 \ M)} \defeq \big(\lambda z. (\iota_1 \pi_1 z, \pi_2 z)\big)(\uncurry{M}),\\
    &\uncurry{(\iota_2 \ M)} \defeq \big(\lambda z. (\iota_2 \pi_1 z, \pi_2 z)\big)(\uncurry{M}),\\
    &\uncurry{\big(\delta(M, x_1\colon \bt_1.\ M_1, x_2\colon \bt_2.\ M_2)\big)}\defeq  \delta(N, z\colon {\uncurry{\bt_1}}.\ \uncurry{M_1}[\pi_1 z / x_1, \pi_2 z / y],\, z\colon \uncurry{\bt_2}.\ \uncurry{M_2}[\pi_1 z / x_2, \pi_2 z / y]), \\
    &\text{where } N \defeq \big(\lambda z.~\delta(\pi_1 z,\, x_1\colon \primetrans{\bt_1}.~\iota_1(x_1, \pi_2 z),\, x_2\colon \primetrans{\bt_2}.~\iota_2(x_2, \pi_2 z))\big)(\uncurry{M}),\\
    &\uncurry{(\lambda x\colon \bt_1.\ M)}\defeq \big(\lambda z\colon \uncurry{\bt_1}.\ \uncurry{M}[\pi_1 z/x, \pi_2 z/y],\, y\big),\\
    &\uncurry{(M\ N)}\defeq \Big(\lambda z. \big(\pi_1 z\big)\big((\lambda y. \uncurry{N})(\pi_2 z) \big)\Big)\big(\uncurry{M}\big).
  \end{align*}
  \caption{The SPS transformation for terms.}
  \label{fig:uncurrying}
\end{figure}

Our goal in this section is to show that the weakest pre-condition of the source program $M$ corresponds to that of its SPS transformed term $\uncurry{M}$, as~\cref{eq:uncurryEx}.

Let us introduce $\lambda_c$-signature and structure to interpret the transformed term as follows.
We define a \emph{$\lambda_c$-signature} $\uncurry{\Sigma} \defeq (B, K, E, \uncurry{\arfunc(\_)}, \uncurry{\carfunc(\_)})$
where $\uncurry{\arfunc(\_)}$ and $\uncurry{\carfunc(\_)}$ are defined by $\uncurry{\arfunc(c)}\defeq \arfunc(c)\times b^Y$ and $\uncurry{\carfunc(c)}\defeq \carfunc(c)\times b^Y$, and $\uncurry{\arfunc(e)}\defeq \arfunc(e)\times b^Y$ and $\uncurry{\carfunc(e)}\defeq \carfunc(e)\times b^Y$ for each $c\in K$ and $e\in E$. 
We also define a \emph{$\lambda_c(\uncurry{\Sigma})$-structure} $\tstructure \coloneqq (\mathbb{C}, T, A, \uncurry{a})$ where
$\uncurry{a}(c) \defeq a(c) \times \id_Y$ and $\uncurry{a}(e)$ is the generic effect $\itp{\mathcal{A}}{\uncurry{\arfunc(e)}}{T} \to T\itp{\mathcal{A}}{\uncurry{\carfunc(e)}}{T}$
given
as the transpose ${a(e)}^\dagger$ of $a(e)\colon \itp{\mathcal{A}}{\arfunc(e)}{S} \to S\itp{\mathcal{A}}{\carfunc(e)}{S}$.

To establish the correspondence of weakest pre-conditions between $M$ and $\uncurry{M}$, we prepare a family of isomorphisms for interpretations of types through the SPS transformation.

\begin{definition} \label{def:rho}
  For any type $\bt$, 
  we define the isomorphism $\rho_{\bt}\colon \itp{\mathcal{A}_S}{\bt}{S} \rightarrow \itp{\mathcal{A}_T}{\primetrans{\bt}}{T}$ for \emph{types}, where $\bt'$ is defined in \cref{def:uncurry}, as follows:
  1) if $\bt = \mathbf{0}$, $\bt = \mathbf{1}$, or $\bt \defeq b$, then the map $\rho_{\bt}$ is the identity;
  2) if $\bt = \bt_1\times \bt_2$, then $\rho_{\bt} \defeq \rho_{\bt_1}\times \rho_{\bt_2}$;
  3) if $\bt = \bt_1 + \bt_2$, then $\rho_{\bt} \defeq \rho_{\bt_1} + \rho_{\bt_2}$; and
  4) if $\bt = \bt_1\rightarrow \bt_2$, then $\rho_{\bt}$ is defined by $\rho_{\bt}\defeq \cong \circ S(\rho_{\bt_2})^{\rho_{\bt_1}^{-1}}$,  where $\cong$ is the isomorphism $\cong\colon \big(T(\itp{\mathcal{A}_T}{\primetrans{\bt_2}}{T}\times Y)^{Y}\big)^{\itp{\mathcal{A}_T}{\primetrans{\bt_1}}{T}}\rightarrow \big(T(\itp{\mathcal{A}_T}{\primetrans{\bt_2}}{T}\times Y)\big)^{\itp{\mathcal{A}_T}{\primetrans{\bt_1}}{T}\times Y}$.

For any $\Gamma \defeq x_1\colon \bt_1,\dots, x_m\colon \bt_m$, we define the isomorphism $\rho_{\Gamma}$ for \emph{contexts} by $\rho_{\Gamma}\defeq \rho_{\bt_1\times\dots \times  \bt_m}$.
\end{definition}

We then present a key lemma to show the correctness of the SPS transformation as follows: 
it says that the interpretation $\itp{\mathcal{A}}{\uncurry{M}}{T}$ of the transformed term is \emph{almost} the transpose of the interpretation $\itp{\mathcal{A}}{M}{S}$ of the source program. 
\begin{lemma}
  \label{lem:keyLemma}
  Let $\Gamma\vdash M\colon \bt$ be a well-typed term. Then we have
  \begin{equation}\label{eq:uncurryTransKeyLemma}
  \itp{\mathcal{A}}{\uncurry{M}}{T} = (S(\rho_{\bt})\circ  \itp{\mathcal{A}}{M}{S}\circ \rho^{-1}_{\Gamma})^\dagger.
  \end{equation}
  In particular,  if all types in $\Gamma$ and $\bt$ are ground types, then we have $\itp{\mathcal{A}}{\uncurry{M}}{T} = \itp{\mathcal{A}}{M}{S}^{\dagger}$. 
  \qed
\end{lemma}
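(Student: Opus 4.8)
The plan is to prove \cref{eq:uncurryTransKeyLemma} by structural induction on the typing derivation of $\Gamma \vdash M \colon \bt$, showing that both sides of the equation agree as morphisms $\itp{\mathcal{A}_T}{\uncurry{\Gamma}}{T} \to T(\itp{\mathcal{A}_T}{\uncurry{\bt}}{T})$. The statement is phrased so that the right-hand side takes the transpose (along $(\_ \times Y) \dashv (\_)^Y$) of the $S$-interpretation of $M$, conjugated by the type-isomorphisms $\rho$; unfolding $S = T(\_ \times Y)^Y$, this transpose is exactly a morphism into $T(\itp{\mathcal{A}_T}{\bt'}{T} \times Y) = T(\itp{\mathcal{A}_T}{\uncurry{\bt}}{T})$, so the types match up. The ``in particular'' clause is then immediate: for ground types $\rho_{\bt}$ is the identity (by \cref{def:rho}, clauses 1--3, since ground types contain no arrows), hence $\rho_{\Gamma}$ is the identity as well, and \cref{eq:uncurryTransKeyLemma} collapses to $\itp{\mathcal{A}}{\uncurry{M}}{T} = \itp{\mathcal{A}}{M}{S}^{\dagger}$.

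First I would handle the base cases: variables $x$, the unit $()$, effect-free constants $c\,M$, and generic effects $e\,M$. For variables and unit, the transformation in \cref{fig:uncurrying} produces $(x,y)$ and $((),y)$ respectively, and one checks directly that pairing with the state component $y$ realizes the transpose of the $S$-interpretation (which, for a value, is $\eta^S$ composed with a projection-like map). For constants and generic effects, the relevant point is the definition of $\tstructure$: $\uncurry{a}(c) = a(c) \times \id_Y$ and $\uncurry{a}(e) = a(e)^{\dagger}$, which were chosen precisely so that $T(\uncurry{a}(c))$ and the Kleisli extension of $\uncurry{a}(e)$ match the transposes of $S(a(c))$ and of $\mu^S \circ S(a(e))$. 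Here I expect to need the coherence laws relating $\natu{T,Y}$, the strengths, and the transpose operation — these are the ``plumbing'' lemmas packaged in the Notation paragraph — to rewrite the Kleisli composition for $S$ (which threads the $Y$-component through $(\_)^Y$) into the form obtained by transposing.

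The inductive cases are the products, coproducts, abstraction, and application. For products/projections/injections/case-analysis, the transformation inserts administrative $\lambda$-redexes that thread the state variable $y$ through each subterm in evaluation order; the induction hypothesis gives the transpose form for each subterm, and the task is a (somewhat tedious) diagram chase showing that the composite of these threaded pieces equals the transpose of the $S$-interpretation of the whole — essentially because the strength of $S = T(\_\times Y)^Y$ \emph{is} the mechanism that sequences the state through a pair or a case split, and the SPS transformation makes that sequencing syntactically explicit. The abstraction case $\uncurry{(\lambda x.M)} \defeq (\lambda z.\,\uncurry{M}[\pi_1 z/x, \pi_2 z/y],\, y)$ is where the arrow clause of $\rho$ enters: here I would use that $\rho_{\bt_1 \to \bt_2} = {\cong} \circ S(\rho_{\bt_2})^{\rho_{\bt_1}^{-1}}$, with $\cong$ the currying isomorphism $\big(T(\_\times Y)^Y\big)^{(\_)} \cong \big(T(\_\times Y)\big)^{(\_)\times Y}$, is designed exactly so that transposing an abstraction and then applying $\rho$ reproduces the SPS-abstraction. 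The application case is dual and is the most delicate: one must verify that the threading in $\uncurry{(M\,N)} = \big(\lambda z.\,(\pi_1 z)((\lambda y.\uncurry{N})(\pi_2 z))\big)(\uncurry{M})$ matches the Kleisli/strength composition defining $\itp{\mathcal{A}}{M\,N}{S}$, using the induction hypotheses for $M$ and $N$ together with the naturality of $\ev{\_}{Y}$ and of the transpose.

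The main obstacle I anticipate is the generic-effect case together with the application case, because both require pushing the transpose operation past a Kleisli extension of the effect-state monad $S$, and the monad multiplication $\mu^S$ of $S = T(\_\times Y)^Y$ is built from $\mu^T$, $\natu{T,Y}$, the right strength of $T$, and $\ev{\_}{Y}$ (as recorded in the canonical algebra $\cal{T}{X}{Z}$ and promised in \cref{def:effectStateMonad}). Establishing the needed identity — morally, that ``transpose commutes with Kleisli composition for $S$'' — will be the technical heart of the argument; once it is in hand, the value cases and the bi-CCC cases follow by routine (if lengthy) naturality and uniqueness-of-transpose calculations. I would isolate this commutation fact as an auxiliary lemma about $S$, prove it once using the strong-monad coherence conditions, and then apply it uniformly in the $e\,M$ and $M\,N$ cases.
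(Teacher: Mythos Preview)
Your proposal is correct and would succeed, but the paper takes a genuinely different, more structured route. Rather than a single structural induction establishing \cref{eq:uncurryTransKeyLemma} directly, the paper factors the argument into two pieces (see \cref{sec:fibrationalLambdac}): first, \cref{prop:trans} introduces an \emph{extended interpretation} $\litp{\mathcal{A}}{M}{T}{L}$ parameterised by a strong functor $L$ and shows, by structural induction, that $\aitp{\tstructure}{\uncurry{M}}{T} = \litp{\tstructure}{M}{T}{\_\times Y}$; second, \cref{thm:mainfib} constructs a fibration $p_D\colon \mathbb{K}_D \to \mathbb{C}^2$ and logical-relation functors $V,C\colon\type\to\mathbb{K}_D$ (encoding exactly the graph of $\rho_{\bt}$ and of its transpose), and proves---again by structural induction---that every well-typed term lifts to a morphism $\dot{L}V(\Gamma)\to C(\bt)$ above the paired interpretation $\litp{\mathcal{A}}{M}{S\times T}{L}$. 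Instantiating this lifting at $(\id,\rho_\Gamma\times\id)$ yields \cref{eq:uncurryTransKeyLemma}.

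The mathematical content is essentially the same: your ``transpose commutes with Kleisli composition for $S$'' auxiliary lemma is precisely what the paper packages as \cref{lem:liftcv}(4), and your analysis of the abstraction/application cases via $\rho_{\bt_1\to\bt_2}$ is what \cref{lem:liftcv}(6) formalises. What the paper's decomposition buys is modularity and reuse: the extended-interpretation step cleanly separates ``what the SPS transformation does to syntax'' from ``why it is semantically correct'', and the fibrational machinery carries over verbatim to the recursion extension in \cref{sec:fibrationalLambdacfix}, where one additionally needs admissibility of the logical relation under least fixpoints (\cref{lem:admissible_lift_al}). Your direct approach is more elementary and avoids fibrations altogether, but the single induction would be bulkier (both translations happen at once), and extending it to $\lambdacfix$ would require redeveloping the fixpoint compatibility argument from scratch rather than reusing an abstract admissibility lemma.
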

We prove this key lemma in~\cref{sec:fibrationalLambdac}. 
Finally, we have the following main theorem, the correctness of the SPS transformation. 
\begin{theorem}[correctness]
  \label{thm:mainwithoutrecursion}
  Let $\tau^T\colon T(\Omega) \to \Omega$ be an Eilenberg-Moore algebra.
  Given a well-typed term $\Gamma\vdash M\colon \bt$, and a post condition $Q\colon \itp{\mathcal{A}}{\bt}{S}\rightarrow \Omega^{Y}$,  the following equality holds:
  \[
  \big(\wpcond{M}{\rtanmod{T(\_\times Y)^Y}{\tau}}{Q}{\mathcal{A}}\circ \rho^{-1}_{\Gamma}\big)^{\dagger} =  \wpcond{\uncurry{M}}{\tau^T}{(Q\circ \rho^{-1}_{\bt})^{\dagger}}{\uncurry{\mathcal{A}}}.
  \]
  Here, $\tau^{T(\_ \times Y)^Y}$ is defined as in \cref{prop:liftingCorrectness}.

  In particular, if all types in $\Gamma$ and $\bt$ are ground types, then it immediately induces the following:
  \begin{equation}
    \label{eq:thmuncurry}
    \big(\wpcond{M}{\rtanmod{T(\_\times Y)^Y}{\tau}}{Q}{\mathcal{A}}\big)^{\dagger} =  \wpcond{\uncurry{M}}{\tau^T}{Q^{\dagger}}{\uncurry{\mathcal{A}}}.
  \end{equation}
\end{theorem}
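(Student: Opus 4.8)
The plan is to reduce the statement to \cref{lem:keyLemma} by unfolding the definition of the weakest pre-condition on both sides and then doing a routine bookkeeping with the transpose $(\_)^{\dagger}$ of the adjunction $(\_\times Y)\dashv(\_)^Y$. First I would expand the left-hand side: by definition, $\wpcond{M}{\rtanmod{T(\_\times Y)^Y}{\tau}}{Q}{\mathcal{A}} = \tau^{T(\_\times Y)^Y}\circ S(Q)\circ \itp{\mathcal{A}}{M}{S}$ with $S \defeq T(\_\times Y)^Y$. Substituting the definition $\tau^{T(\_\times Y)^Y} = (\tau^T)^Y\circ T(\ev{\Omega}{Y})^Y$ from \cref{prop:liftingCorrectness} and using $S(Q) = T(Q\times\id_Y)^Y$ (functoriality of $T(\_\times Y)^Y$), the composite $T(\ev{\Omega}{Y})^Y\circ T(Q\times\id_Y)^Y$ rewrites to $\bigl(T(\ev{\Omega}{Y}\circ(Q\times\id_Y))\bigr)^Y = T(Q^{\dagger})^Y$, since $\ev{\Omega}{Y}\circ(Q\times\id_Y)$ is exactly $Q^{\dagger}$. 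Hence the argument of the outer $(\_)^{\dagger}$ becomes $\bigl(\tau^T\circ T(Q^{\dagger})\bigr)^Y\circ \itp{\mathcal{A}}{M}{S}\circ\rho^{-1}_{\Gamma}$.

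Next I would push the transpose through $(\_)^Y$. For any $f\colon A\to B^Y$ and $j\colon B\to C$ one has $(j^Y\circ f)^{\dagger} = j\circ f^{\dagger}$ (naturality of the transpose in the codomain), which turns the left-hand side into $\tau^T\circ T(Q^{\dagger})\circ\bigl(\itp{\mathcal{A}}{M}{S}\circ\rho^{-1}_{\Gamma}\bigr)^{\dagger}$. Now I apply \cref{lem:keyLemma}: $\itp{\mathcal{A}}{\uncurry{M}}{T} = \bigl(S(\rho_{\bt})\circ\itp{\mathcal{A}}{M}{S}\circ\rho^{-1}_{\Gamma}\bigr)^{\dagger}$; since $S(\rho_{\bt}) = T(\rho_{\bt}\times\id_Y)^Y$ is again of the form $(\_)^Y$, the same pull-out identity gives $\itp{\mathcal{A}}{\uncurry{M}}{T} = T(\rho_{\bt}\times\id_Y)\circ\bigl(\itp{\mathcal{A}}{M}{S}\circ\rho^{-1}_{\Gamma}\bigr)^{\dagger}$, so $\bigl(\itp{\mathcal{A}}{M}{S}\circ\rho^{-1}_{\Gamma}\bigr)^{\dagger} = T(\rho^{-1}_{\bt}\times\id_Y)\circ\itp{\mathcal{A}}{\uncurry{M}}{T}$. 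Substituting this in, using functoriality of $T$, and then naturality of the transpose in the domain, $Q^{\dagger}\circ(\rho^{-1}_{\bt}\times\id_Y) = (Q\circ\rho^{-1}_{\bt})^{\dagger}$, I obtain $\tau^T\circ T\bigl((Q\circ\rho^{-1}_{\bt})^{\dagger}\bigr)\circ\itp{\mathcal{A}}{\uncurry{M}}{T} = \wpcond{\uncurry{M}}{\tau^T}{(Q\circ\rho^{-1}_{\bt})^{\dagger}}{\uncurry{\mathcal{A}}}$, which is the right-hand side. The special case for ground types follows at once, since then $\primetrans{\bt_i} = \bt_i$ and $\primetrans{\bt} = \bt$, so $\rho_{\Gamma}$ and $\rho_{\bt}$ are identities and the general identity collapses to \cref{eq:thmuncurry}.

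The only genuine difficulty here is the bookkeeping: one must be careful to use the correct instance of naturality of the transpose at each step (domain versus codomain), and to keep track of the fact that $S(h) = T(h\times\id_Y)^Y$ so that the exponential layer can always be commuted past the transpose. All of the mathematically substantive content—relating $\itp{\mathcal{A}}{\uncurry{M}}{T}$ to the transposed interpretation of $M$, which requires the fibrational logical-relations argument—has been isolated into \cref{lem:keyLemma}, whose proof is deferred to \cref{sec:fibrationalLambdac}; given that lemma, the present theorem is a short calculation.
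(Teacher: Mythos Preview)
Your argument is correct and follows essentially the same route as the paper: unfold the weakest pre-condition, use naturality of the transpose/evaluation to strip the $(\_)^Y$ layers, and invoke \cref{lem:keyLemma}. The paper packages the same naturality facts into a single commuting diagram (with evaluation maps as vertical arrows) rather than an equational chain, but the mathematical content is identical.
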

\begin{proof}

We can see that the following diagram commutes:

  \adjustbox{scale=0.9,center}{
      \begin{tikzcd}
        S(\itp{\mathcal{A}}{\primetrans{\bt}}{T})\times Y \arrow[r, "S(\rho^{-1}_{\bt})\times \id"] \arrow[d, "\evsyb"] & S(\itp{\mathcal{A}}{\bt}{S})\times Y \arrow[rr, "S(Q)\times \id"] \arrow[d, "\evsyb"]&& S(\Omega^Y)\times Y \arrow[rr, "T(\ev{\Omega}{Y})^Y\times \id"] \arrow[d, "\evsyb"]&& T(\Omega)^Y\times Y \arrow[r, "\tau^Y\times \id"] \arrow[d, "\evsyb"]&  \Omega^Y\times Y \arrow[d, "\evsyb"]\\
        T(\itp{\mathcal{A}}{\primetrans{\bt}}{T}\times Y) \arrow[r, "T(\rho^{-1}_{\bt}\times \id)"] & T(\itp{\mathcal{A}}{\bt}{S}\times Y) \arrow[rr, "T(Q\times \id)"] && T(\Omega^Y\times Y) \arrow[rr, "T(\evsyb)"] && T(\Omega) \arrow[r, "\tau"] & \Omega
      \end{tikzcd}
  }
  By~\cref{lem:keyLemma}, we can show the desired equality.
\end{proof}

By~\cref{thm:mainwithoutrecursion}, we reduce temporal verification problems to rather simple problems, including reachability or reachability probabilities, which is summarised in~\cref{fig:list_of_coalgebraic_inference}. 

\begin{example}[angelic nondeterminism]
  \label{ex:uncurryAnon}
  We continue~\cref{ex:synchAnon}. By the RHS of~\cref{eq:correctSynch}, the post-condition we have to consider is $q\circ Q\colon \itp{\mathcal{A}}{\bt}{S}\rightarrow \boolsets^Y$. 
  By~\cref{eq:thmuncurry}, to solve the target verification problem it suffices to compute $\wpcond{\uncurry{M}}{\tau^{\hoaremonad}}{(q\circ Q)^{\dagger}}{\uncurry{\mathcal{A}}}$ for $\Gamma\vdash M\colon \bt$ such that all types in $\Gamma$ and $\bt$ are ground types, where $(q\circ Q)^{\dagger}\colon \itp{\mathcal{A}}{\bt}{S}\times Y\rightarrow \boolsets$ is the function such that $(q\circ Q)^{\dagger}(x, y) = \top$ iff $y$ is an accepting state of the DFA. 
  The Eilenberg-Moore algebra $\tau^{\hoaremonad}$ is precisely the one introduced in~\cref{ex:angndPS}, thus $\wpcond{\uncurry{M}}{\tau^{\hoaremonad}}{(q\circ Q)^{\dagger}}{\uncurry{\mathcal{A}}}$ can be thought as a standard reachability property.
\end{example}

\section{Recursion}
\label{sec:recursion}
In this section, we extend our framework for $\lambda_c$-calculus by including recursive functions. 
The resulting calculus is called \emph{$\lambdacfix$-calculus}.
Following~\cite{Katsumata13}, we assume that the base category $\mathbb{C}$ is an \emph{$\omegaCPO$-enriched bi-CCC}, that is, each homset is an $\omega$CPO (that may not have a least element), and compositions, tupling, cotupling, and currying of the bi-CC structure are all $\omega$-continuous. 
We write  $\mathbb{C}_0$ for the underlying CCC of $\mathbb{C}$, which forgets the $\omega$CPO-structure on each homset.

\begin{definition}[\cite{Katsumata13}]
  A \emph{pseudo-lifting strong monad} $T$ on  $\mathbb{C}$ is a strong monad on  $\mathbb{C}_0$ such that there is a generic effect $\leastelement{T}{0}\colon 1\rightarrow T(0)$, and for each $X \in \mathbb{C}$, $\leastelement{T}{X}\defeq T(!_X) \circ \leastelement{T}{0}$ is the least element in $\mathbf{C}(1, TX)$, where $0$ and $1$ are initial and terminal objects, respectively. 
\end{definition}
Note that 
for each strong monad morphism
$\alpha\colon T\Rightarrow S$ on $\mathbb{C}_0$,
any morphism $\leastelement{S}{X}$ that is induced by the generic effect $\leastelement{S}{0} \defeq \alpha\circ \leastelement{T}{0}$ is the least element in $\mathbb{C}(1, SX)$; see~\cite{Katsumata13}.

\begin{definition}[\cite{Katsumata13}] \label{def:lambdacfix}
  An \emph{$\lambdacfix(\Sigma)$-structure} is a 
  tuple $\mathcal{A} \defeq (\mathbb{C}, T, A, a)$ where $T$ is a pseudo-lifting strong monad on $\mathbb{C}$ and $(\mathbb{C}_0, T, A, a)$ is a $\lambda_c(\Sigma)$-structure.
\end{definition}

With a $\lambdacfix(\Sigma)$-structure, we add  the following recursion rule to the typing rules. 
\begin{mathpar}
  \inferrule*[Left=$\rec$]{
    \Gamma,\, f\colon \bt_1\rightarrow \bt_2,\, x\colon \bt_1 \vdash M\colon \bt_2
	}{
    \Gamma \vdash \mu f x.\, M\colon \bt_1\rightarrow \bt_2
	}
\end{mathpar}

Following~\cite{Katsumata13}, we define the interpretation of recursion as a \emph{stable uniform call-by-value fixpoint
operator}~\cite{HasegawaK02}. Note that stable uniform call-by-value fixpoint operators bijectively correspond to \emph{uniform $T$-fixpoint operators}~\cite{SimpsonP00}; see~\cite{HasegawaK02} for details. 
\begin{definition}[parametrised fixed point~\cite{SimpsonP00,HasegawaK02,Katsumata13}]
  Given $f\colon X\times T(Z)\rightarrow T(Z)$, its \emph{(parametrised) fixed point} $\fixop{X}{Z}{f}\colon X\rightarrow T(Z)$ is given by $\pi_2\circ \bigvee_{n\in \nat} \langle\pi_1, f \rangle^{n}\circ \langle \id_X, \leastelement{T}{Z} \rangle$.
\end{definition}
\begin{definition}[interpretation~\cite{HasegawaK02,Katsumata13}]
  Given a well-typed term $\Gamma \vdash \mu f x.\, M\colon \bt_1\rightarrow \bt_2$, the \emph{interpretation}
  $\itp{\mathcal{A}}{\mu f x.\, M}{T}$ is given by
  \[
    \eta^T_{T(\itp{\mathcal{A}}{\bt_2}{T})^{\itp{\mathcal{A}}{\bt_1}{T}}}\circ  \cal{T}{\itp{\mathcal{A}}{\bt_1}{T}}{\itp{\mathcal{A}}{\bt_2}{T}}\circ \fixop{\itp{\mathcal{A}}{\Gamma}{T}}{\itp{\mathcal{A}}{\bt_1\rightarrow \bt_2}{T}}{\itp{\mathcal{A}}{\lambda x.\ M}{T}\circ \id_{\itp{\mathcal{A}}{\Gamma}{T}}\times \cal{T}{\itp{\mathcal{A}}{\bt_1}{T}}{\itp{\mathcal{A}}{\bt_2}{T}}}.
  \]
  Recall that $\cal{T}{X}{Z}$ is the canonical Eilenberg-Moore algebra $\cal{T}{X}{Z}\colon T\big(T(Z)^X\big)\rightarrow T(Z)^X$.

\end{definition}

We then define our SPS transformation of the recursive function $\mu f x.\, M$. 
\begin{definition}[SPS transformation: recursion]
  Given a well-typed term $\Gamma \vdash\mu f x.\, M\colon \bt_1\rightarrow \bt_2$, we define the \emph{SPS transformation} $\uncurry{(\mu f x. M)}$ by
  $(\mu f z.\, \uncurry{M}[\pi_1 z/x, \pi_2 z/y], y)$.
\end{definition}

  We now extend the results for $\lambda_c(\Sigma)$-calculus to $\lambdacfix(\Sigma)$-calculus.
  The next lemma corresponds to \cref{lem:keyLemma} in the setting with recursive functions.
\begin{lemma}
  \label{lem:keyrecursion}
  Let $\Gamma \vdash M \colon \bt$ be a well-typed term in $\lambdacfix(\Sigma)$-calculus.
  Then \cref{eq:uncurryTransKeyLemma} in~\cref{lem:keyLemma} holds.
  \qed
\end{lemma}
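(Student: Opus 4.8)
The plan is to extend the structural induction used to prove \cref{lem:keyLemma} by adding the single new case for recursive functions $\mu f x.\, M$; all other cases go through verbatim since the term formers and their SPS transformations are unchanged. So I would first invoke the induction hypothesis for the subterm, which in the context $\Gamma,\, f\colon \bt_1 \to \bt_2,\, x\colon \bt_1 \vdash M\colon \bt_2$ gives
\[
\itp{\mathcal{A}}{\uncurry{M}}{T} = \bigl(S(\rho_{\bt_2})\circ \itp{\mathcal{A}}{M}{S}\circ \rho^{-1}_{\Gamma, f\colon \bt_1\to\bt_2, x\colon \bt_1}\bigr)^{\dagger},
\]
where $\rho$ on the extended context decomposes as $\rho_\Gamma \times \rho_{\bt_1\to\bt_2}\times \rho_{\bt_1}$. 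Unfolding the definition of $\uncurry{(\mu fx.\,M)} = (\mu f z.\, \uncurry{M}[\pi_1 z/x, \pi_2 z/y],\, y)$, its interpretation under $T$ is (essentially) $\eta^T$ applied to the $T$-fixpoint operator built from $\itp{\mathcal{A}}{\uncurry{M}}{T}$, while the interpretation of $\mu fx.\,M$ under $S$ is $\eta^S$ applied to the $S$-fixpoint of $\itp{\mathcal{A}}{M}{S}$. The goal is then to show these two sides agree after transposing and pre/post-composing with the appropriate $\rho$'s.

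The key step is a \emph{transfer of fixpoint operators along the isomorphism family $\rho$ and the transpose}. Concretely, I would establish that the diagram relating $\fixop{\itp{\mathcal{A}_T}{\Gamma}{T}\times Y}{-}{-}$ (the $T$-fixpoint appearing on the SPS side) and $\fixop{\itp{\mathcal{A}_S}{\Gamma}{S}}{-}{-}$ (the $S$-fixpoint on the source side) commutes, using: (i) the naturality/uniformity of the parametrised fixpoint operator with respect to the iso $\rho$; (ii) that the canonical Eilenberg--Moore algebra $\cal{T}{X}{Z}$ and its $S$-counterpart $\cal{S}{X}{Z}$ are intertwined by $\rho$ and $\natu{T,Y}{}$ — this is where the specific shape $S = T(\_\times Y)^Y$ and the definition of $\cal{}{}{}$ via $\mu^T$, $\evsyb$, and right-strength enter; and (iii) that transposition $(\_)^\dagger$ commutes with the relevant $\omega$-sups, which follows from the $\omegaCPO$-enrichment assumption on $\mathbb{C}$ (compositions, tupling, currying are $\omega$-continuous, and $(\_\times Y)\dashv(\_)^Y$ is an adjunction of $\omegaCPO$-enriched categories). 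Since $\fixop{}{}{}$ is the $\omega$-sup of the finite unfoldings $\langle\pi_1, f\rangle^n\circ\langle\id,\bot\rangle$, it suffices to prove the iso-transfer for each finite unfolding (by induction on $n$, using that $\rho$ maps $\leastelement{S}{}$ to $\leastelement{T}{}$, which holds because $\bot$ is induced from $\leastelement{T}{0}$ along the monad structure and $\rho$ is the identity on $\mathbf{0}$) and then pass to the sup by $\omega$-continuity.

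I expect the main obstacle to be the bookkeeping in step (ii): matching the canonical algebra $\cal{S}{\itp{\mathcal{A}}{\bt_1}{S}}{\itp{\mathcal{A}}{\bt_2}{S}}$ for the composite monad $S = T(\_\times Y)^Y$ against the canonical algebra $\cal{T}{\itp{\mathcal{A}}{\primetrans{\bt_1}}{T}\times Y}{\itp{\mathcal{A}}{\primetrans{\bt_2}}{T}\times Y}$ for $T$, under the isomorphism $\rho_{\bt_1\to\bt_2} = {\cong}\circ S(\rho_{\bt_2})^{\rho_{\bt_1}^{-1}}$ from \cref{def:rho} and the $\natu{T,Y}{}$ natural transformation; this requires carefully unwinding the strength-based definitions of the monad structure of $S$ (\cref{def:effectStateMonad}) and checking the coherence between $\mu^S$, the strength, and $\mu^T$. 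Once this compatibility square is in place, the rest is a routine diagram chase combining it with the already-established \cref{lem:keyLemma} for the non-recursive constructs (in particular for $\lambda x.\,M$ and application, which appear implicitly when unfolding the interpretation of $\mu fx.\,M$). As in \cref{lem:keyLemma}, the ground-type specialization $\itp{\mathcal{A}}{\uncurry{M}}{T} = \itp{\mathcal{A}}{M}{S}^\dagger$ then follows since $\rho_{\bt}$ is the identity for ground $\bt$. \qed
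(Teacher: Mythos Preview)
Your proposal is correct but takes a more direct route than the paper. The paper does not carry out a bare structural induction with an explicit finite-unfolding transfer; instead it recasts both \cref{lem:keyLemma} and \cref{lem:keyrecursion} as the existence of a lifting along a purpose-built fibration $p_D\colon \mathbb{K}_D \to \mathbb{C}^2$ (a fibrational logical relation in the style of Hermida and Katsumata). An auxiliary ``extended interpretation'' $\litp{\mathcal{A}}{\_}{T}{L}$ is introduced so that $\aitp{\tstructure}{\uncurry{M}}{T} = \litp{\tstructure}{M}{T}{\_\times Y}$, logical relations $V$ and $C$ are defined above the pair of interpretations, and the recursive case is handled as an \emph{admissibility} argument: the pullback of $V(\bt_1\to\bt_2)$ along the canonical algebra is shown to be admissible, whence the fixpoint lifts by a general lemma. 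The technical core is shared between the two approaches: your step~(ii), the compatibility of $\cal{S}{}{}$ and $\cal{T}{}{}$ under $\rho$, is exactly the paper's \cref{lem:commuteCanAl}, and your finite-unfolding-plus-$\omega$-continuity argument is precisely what admissibility packages abstractly. Your route is more elementary and self-contained; the paper's pays an upfront abstraction cost but gains reusability, making the fixpoint case a one-line corollary of a general lifting-of-fixpoints lemma. One small correction to your justification for $\bot$-preservation: ``$\rho$ is the identity on $\mathbf{0}$'' is not the operative fact; what is actually needed is that $\leastelement{S}{0}$ arises from $\leastelement{T}{0}$ via the canonical monad morphism $T \Rightarrow T(\_\times Y)^Y$, so that the bottoms correspond under transpose (this is the middle square in the paper's proof of \cref{lem:admissible_lift_al}).
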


\begin{theorem}
  \label{thm:extendedMain}
  The extensions of~\cref{thm:correctProd,thm:mainwithoutrecursion} that allow recursion hold. 
\end{theorem}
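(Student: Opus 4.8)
The plan is to note that, just as in the non-recursive development, both asserted extensions reduce to their respective ``key'' lemmas in the $\lambdacfix$-setting: recursion enters only through those lemmas, while everything downstream is purely equational and indifferent to the term formers.

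For the extension of \cref{thm:mainwithoutrecursion} I would simply re-run its proof. The commuting diagram there relates $\itp{\mathcal{A}}{\uncurry{M}}{T}$, the post-condition $Q$, the algebras $\tau$ and $\tau^{T(\_\times Y)^Y}$, and the evaluation maps, and it mentions no term constructor at all. Hence, feeding in \cref{lem:keyrecursion} in place of \cref{lem:keyLemma}, the very same chase delivers $\big(\wpcond{M}{\rtanmod{T(\_\times Y)^Y}{\tau}}{Q}{\mathcal{A}}\circ \rho^{-1}_{\Gamma}\big)^{\dagger} = \wpcond{\uncurry{M}}{\tau^T}{(Q\circ \rho^{-1}_{\bt})^{\dagger}}{\uncurry{\mathcal{A}}}$ and, at ground types, \cref{eq:thmuncurry}. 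The only side conditions to check are that $\tstructure = (\mathbb{C},T,A,\uncurry{a})$ is a bona fide $\lambdacfix(\uncurry{\Sigma})$-structure---immediate, since $T$ is pseudo-lifting by hypothesis---and that $S = T(\_\times Y)^Y$ is pseudo-lifting so that $\sstructure$ makes sense with recursion (cf.~\cref{def:effectStateMonad}); both are routine.

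For the extension of \cref{thm:correctProd}, its proof is a short equational chain whose only term-sensitive step invokes \cref{lem:monadTransGroundTypes}, so it suffices to extend that lemma to $\lambdacfix$, the sole new case being the rule $\rec$. I would carry this out inside the fibrational logical-relations argument of \cite{Katsumata13} already underlying \cref{lem:monadTransGroundTypes}: work with the binary logical-relations fibration over $\mathbb{C}\times\mathbb{C}$ whose relation at the monadic level is the graph of $\alpha$, lift $T$ and $U$ along it, and prove the fundamental lemma (every term relates to itself, which at ground types is exactly the stated $\alpha$-equality). The $\rec$ case rests on three ingredients. First, $\leastelement{T}{X}$ and $\leastelement{U}{X}$ are related: this follows from $\leastelement{U}{0} = \alpha_0\circ\leastelement{T}{0}$ together with the fact---recalled in \cref{sec:recursion}---that a strong monad morphism sends the least element to the least element, plus both being postcompositions with $T(!_X)$ and $U(!_X)$. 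Second, the lifted fibration is \emph{admissible}, i.e.~each relation is closed under suprema of $\omega$-chains, obtained by transporting the $\omegaCPO$-enrichment of $\mathbb{C}$ to the total category. Third, and consequently, the uniform call-by-value fixpoint operator preserves relatedness: since $\fixop{X}{Z}{f} = \pi_2\circ\bigvee_{n}\langle\pi_1,f\rangle^{n}\circ\langle\id,\leastelement{T}{Z}\rangle$, each finite iterate is related---built by composition and tupling from a related $f$ and the related $\leastelement{T}{Z}$---and the supremum is related by admissibility. Combined with the relatedness of $\eta^T$, of the canonical algebras $\cal{T}{X}{Z}$, and of $\lambda x.\,M$ (all already handled in the non-recursive proof), this gives relatedness of $\itp{\mathcal{A}}{\mu f x.\,M}{T}$ and $\itp{\mapLambdac{\mathcal{A}}{\alpha}}{\mu f x.\,M}{U}$; specialising to ground types extends \cref{lem:monadTransGroundTypes}, after which the chain of \cref{thm:correctProd} applies verbatim.

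I expect the crux to be the admissibility and stability bookkeeping underlying the $\rec$ case: one must present the fibration so that its total category is $\omegaCPO$-enriched compatibly with $\mathbb{C}$, the lifted monads stay pseudo-lifting, and the uniform (stable) call-by-value fixpoint operator lifts and commutes with reindexing. This is precisely the content packaged into the fibrational framework of \cref{sec:fibrationalLambdac} and \cref{sec:fibrationalapproach}; once it is in place, the $\rec$ case is the $\omega$-chain argument above, and every remaining case---together with the diagram chase for the SPS part---is identical to the non-recursive treatment.
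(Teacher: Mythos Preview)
Your proposal is correct and follows essentially the same approach as the paper: for the extension of \cref{thm:mainwithoutrecursion} you substitute \cref{lem:keyrecursion} for \cref{lem:keyLemma} and re-run the diagram chase, and for the extension of \cref{thm:correctProd} you reduce to extending \cref{lem:monadTransGroundTypes} to $\lambdacfix$ and then reuse the equational chain verbatim. The paper does exactly this, except that where you sketch the fibrational logical-relations argument for the $\rec$ case (related $\bot$'s, admissibility, preservation of fixpoints), the paper simply cites \cite{Katsumata13,Kura2023} as establishing that extension of \cref{lem:monadTransGroundTypes}.
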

\begin{proof}
  It is known that~\cref{lem:monadTransGroundTypes} can be extended for recursion (see~\cite{Katsumata13,Kura2023}), which implies the extension of~\cref{thm:correctProd} that allows recursion by the same argument. 
  We can easily see that \cref{lem:keyrecursion} implies the extension of~\cref{thm:mainwithoutrecursion} that allows recursion as well. 
\end{proof}
As with~\cref{lem:keyLemma},
the proof of~\cref{lem:keyrecursion} is deferred to \cref{sec:fibrationalLambdacfix}.

\section{A Fibrational Account for SPS transformation}
\label{sec:fibrationalapproach}
In this section, we aim to provide a proof of our key lemmas---\cref{lem:keyLemma,lem:keyrecursion}---within a fibrational framework.
These lemmas play a crucial role in proving~\cref{thm:mainfib,thm:mainfibfix}, which justify the SPS transformation at the level of weakest pre-conditions.

Our approach builds on the fibrational logical relation~\cite{hermidaThesis, Katsumata13}, which is a categorical formalization of logical relations.
It enables us to lift the structures arising in the interpretation of types and terms to relational structures, offering a clear way to prove~\cref{lem:keyLemma,lem:keyrecursion}.
To lift these structures, we adopt a fibration with bi-CCC structures, as introduced in~\cite{Katsumata13}.

\begin{definition}
  We say that $p\colon \mathbb{E} \to \mathbb{C}$ is a \emph{fibration for logical relations}
  if it satisfies the following conditions:
  (i) both $\mathbb{E}$ and $\mathbb{C}$ are bi-CCC,
  (ii) $p$ is a partial order bifibration with fibrewise small products, and
  (iii) $p$ strictly preserves the bi-CCC structure.
\end{definition}
It is immediate that such fibrations are faithful since they are partial order bifibrations.
For functors $p\colon \mathbb{E} \to \mathbb{B}$ and $q\colon \mathbb{F} \to \mathbb{C}$,
we say that a functor $\dot{F}\colon \mathbb{E} \to \mathbb{F}$ is a \emph{lifting of a functor $F\colon \mathbb{B} \to \mathbb{C}$}
if it satisfies $q \circ \dot{F} = F \circ p$.

\subsection{For $\lambda_c$-calculus} \label{sec:fibrationalLambdac}
\newcommand{\natdisc}[1]{{d_{#1}}} 
\newcommand{\natcoprod}[1]{{g_{#1}}} 
We begin by extending the interpretation $\itp{A}{\_}{T}$ to a new interpretation $\litp{A}{\_}{T}{L}$, in which the interpretation of contexts and types are wrapped by a functor $L$.
This allows us to obtain the two interpretations $\itp{A}{M}{S}$ and $\itp{A}{\uncurry{M}}{T}$, which appear in \cref{lem:keyLemma},
as instances of the interpretation $\litp{A}{M}{T}{L}$
by varying parameters $T$ and $L$.
\begin{definition}[extended $\lambda_c(\Sigma)$-structure and $\mathcal{A}\litp{\mathcal{A}}{\_}{T}{L}$] \label{def:eval_L}
  Let $L$ be a strong endo functor $(\mathbb{C}, L, \lstrength{L}{})$ with natural transformations 
  $\natcoprod{}\colon L \circ (+) \Rightarrow (+) \circ (L\times L)$ and
  $\natdisc{}\colon L \Rightarrow \id$.
  An \emph{extended $\lambda_c(\Sigma)$-structure} for the functor $L$ is a tuple $(\mathbb{C}, T, A, a)$ defined 
  in the same way as $\lambda_c(\Sigma)$-structures (see \cref{def:lambdac}),
  except for the definition of the interpretation of types and the assignment $a$.
  The \emph{interpretation} 
  $\mathcal{A}\litp{\mathcal{A}}{\_}{T}{L}\colon \type \to \mathbb{C}$ of types, or shortly written as $\litp{\mathcal{A}}{\_}{T}{L}$, is defined in the same way as $\aitp{\mathcal{A}}{\_}{T}$ (see \cref{def:lambdac}), except that the function types are interpreted as follows:
  $\litp{\mathcal{A}}{\bt_1 \to \bt_2}{T}{L} = (TL\litp{A}{\bt_2}{T}{L})^{L\litp{A}{\bt_1}{T}{L}}$.
 For a context $\Gamma = x_1\colon \bt_1, \cdots, x_m\colon \bt_m$,
 we define $\litp{\mathcal{A}}{\Gamma}{T}{L} \coloneqq \prod_i \litp{\mathcal{A}}{\bt_i}{T}{L}$.
 For each $c \in K$ and $e \in E$, the \emph{assignments} are defined as morphisms
  $a(c)\colon L\litp{A}{\arfunc(c)}{T}{L} \to L\litp{A}{\carfunc(c)}{T}{L}$
  and
  $a(e)\colon L\litp{A}{\arfunc(e)}{T}{L} \rightarrow TL\litp{A}{\carfunc(e)}{T}{L}$.

Given a $\lambda_c(\Sigma)$-structure $\mathcal{A}$,
we define
the \emph{interpretation} $\mathcal{A}\litp{\mathcal{A}}{M}{T}{L}\colon L\litp{A}{\Gamma}{T}{L} \to TL\litp{A}{\bt}{T}{L}$ (or shortly, $\litp{\mathcal{A}}{M}{T}{L}$) of a well-typed term $\Gamma \vdash M\colon \bt$
by the following:
\begin{align*}
  \mathcal{A}\litp{A}{x}{T}{L} &\coloneqq \eta_{L\litp{\tstructure}{\bt_i}{T}{L}} \circ L\pi_i,
  \quad \text{where }\Gamma_i = (x\colon \bt), \\
  \mathcal{A}\litp{A}{(M, N)}{T}{L} &\coloneqq \mu \circ T^2(\lstrength{L}{\litp{A}{\bt_1}{T}{L}, \litp{A}{\bt_2}{T}{L}}) \circ T(\lstrength{T}{\litp{A}{\bt_1}{T}{L}, \litp{A}{\bt_2}{T}{L}}) \circ T(\id \times \mathcal{A}\litp{A}{N}{T}{L}) \\ 
  &\qquad \circ T(\langle \pi_1 \natdisc{\litp{A}{\bt_1}{T}{L} \times \litp{A}{\bt_2}{T}{L}}, L\pi_2 \rangle \circ \rstrength{L}{\litp{A}{\bt_1}{T}{L}, \litp{A}{\Gamma}{T}{L}})  \circ \rstrength{T}{L\litp{A}{\bt_1}{T}{L}, \litp{A}{\Gamma}{T}{L}} \circ \langle \mathcal{A}\litp{\mathcal{A}}{M}{T}{L}, \natdisc{\litp{A}{\Gamma}{T}{L}}\rangle, \\
  &\qquad \text{where }\Gamma\vdash M\colon \bt_1\text{, and } \Gamma\vdash N\colon \bt_2.
\end{align*}
See \cref{ap:litp} for the other terms.
Note that when $L=\id$ (with trivial components $\lstrength{L}{}, g, d$), this extended interpretation coincides with the original one, i.e., $\aitp{\mathcal{A}}{M}{T} = \mathcal{A}\litp{\mathcal{A}}{M}{T}{\id}$.
\end{definition}

The following proposition shows that the interpretation $\aitp{\tstructure}{\uncurry{M}}{T}$ of the SPS transformation of a well-typed term $M$ in a $\lambda_c(\Sigma)$-structure $\tstructure$ can be expressed as the extended interpretation $\litp{\tstructure}{M}{T}{L}$ for a suitable functor $L$.
Throughout this section, we assume the setting of  \cref{sec:uncurrying}.
Recall that in~\cref{sec:uncurrying},
given a $\lambda_c(\Sigma)$-structure $\sstructure = (\mathbb{C}, S \coloneqq T(\_ \times Y)^Y, A, a)$,
a $\lambda_c(\uncurry{\Sigma})$-structure $\tstructure = (\mathbb{C}, T, A, \uncurry{a})$ is defined in \cref{def:uncurry}.
\begin{proposition} \label{prop:trans}
  The following statements hold.
  \begin{enumerate}
    \item 
  For each  type $\bt$ and context $\Gamma$,
  we have
  $\itp{\mathcal{A}}{\uncurry{\bt}}{T} = \litp{\mathcal{A}}{\bt}{T}{\_ \times Y} \times Y$ and $\itp{\tstructure}{\uncurry{\Gamma}}{T} = \litp{\tstructure}{\Gamma}{T}{\_ \times Y} \times Y$.
  \item
  The $\lambda_c(\uncurry{\Sigma})$-structure $\tstructure$ is an extended $\lambda_c(\Sigma)$-structure for the functor $(\_ \times Y) \colon \mathbb{C} \to \mathbb{C}$.
  \item
  For any well-typed term $\Gamma \vdash M\colon \bt$, the SPS transformation $\uncurry{M}$ defined in
  \cref{def:uncurry} 
  satisfies
  $\aitp{\tstructure}{\uncurry{M}}{T} = \tstructure\litp{\tstructure}{M}{T}{\_ \times Y}$.
  \end{enumerate}
\end{proposition}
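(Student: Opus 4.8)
The plan is to establish the three items in order; the first two are essentially bookkeeping, and the third is where the real work lies. For item (1) I would argue by structural induction on the type $\bt$. Since $\uncurry{\bt} = \primetrans{\bt} \times b^Y$ and $\itp{\tstructure}{b^Y}{T} = A(b^Y) = Y$ by the convention set up in \cref{sec:uncurrying}, it suffices to show $\itp{\tstructure}{\primetrans{\bt}}{T} = \litp{\tstructure}{\bt}{T}{\_ \times Y}$. When $\bt \in \{b, \mathbf 1, \mathbf 0\}$ we have $\primetrans{\bt} = \bt$ and both sides are literally the same object ($A(b)$, the terminal object, or the initial object). The product and coproduct cases are immediate from the induction hypotheses, since both interpretations form these connectives componentwise. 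For $\bt = \bt_1 \rightarrow \bt_2$, unfolding the ordinary interpretation of $\primetrans{(\bt_1 \rightarrow \bt_2)} = (\primetrans{\bt_1} \times b^Y) \rightarrow (\primetrans{\bt_2} \times b^Y)$ yields $T\big(\itp{\tstructure}{\primetrans{\bt_2}}{T} \times Y\big)^{\itp{\tstructure}{\primetrans{\bt_1}}{T} \times Y}$, which by the induction hypotheses equals $\big(TL\litp{\tstructure}{\bt_2}{T}{L}\big)^{L\litp{\tstructure}{\bt_1}{T}{L}}$ with $L \defeq (\_ \times Y)$; this is precisely $\litp{\tstructure}{\bt_1 \rightarrow \bt_2}{T}{L}$ by \cref{def:eval_L}. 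The claim for contexts follows at once, as $\uncurry{\Gamma}$ merely appends one variable of type $b^Y$.

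For item (2), I would first exhibit the data making $L = (\_ \times Y)$ a strong endofunctor equipped with the extra natural transformations required by \cref{def:eval_L}: its strength is the associativity isomorphism $X \times (W \times Y) \cong (X \times W) \times Y$; the transformation $L \circ (+) \Rightarrow (+) \circ (L \times L)$ is the canonical distributivity isomorphism $(X_1 + X_2) \times Y \cong (X_1 \times Y) + (X_2 \times Y)$, which exists since $\mathbb{C}$ is cartesian closed (so $\_ \times Y$ preserves coproducts); and $d \colon L \Rightarrow \id$ is the projection $\pi_1$. The coherence laws are standard diagram chases in a bi-CCC. It then remains to check that $\uncurry{a}$ has the domains and codomains demanded of an extended $\lambda_c(\Sigma)$-structure: because $\arfunc$ and $\carfunc$ take values in ground types, item (1) identifies $\itp{\tstructure}{\uncurry{\arfunc(c)}}{T}$ with $L\litp{\tstructure}{\arfunc(c)}{T}{L}$ (and likewise for coarities and for $e \in E$), so $\uncurry{a}(c) = a(c) \times \id_Y$ indeed has type $L\litp{\tstructure}{\arfunc(c)}{T}{L} \to L\litp{\tstructure}{\carfunc(c)}{T}{L}$, and $\uncurry{a}(e) = a(e)^{\dagger}$, the transpose along $(\_ \times Y) \dashv (\_)^Y$, has type $L\litp{\tstructure}{\arfunc(e)}{T}{L} \to TL\litp{\tstructure}{\carfunc(e)}{T}{L}$, as required.

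Item (3) is the main step, and I would prove it by induction on the derivation of $\Gamma \vdash M \colon \bt$. By items (1) and (2) both $\aitp{\tstructure}{\uncurry{M}}{T}$ and $\tstructure\litp{\tstructure}{M}{T}{\_ \times Y}$ are morphisms $L\litp{\tstructure}{\Gamma}{T}{L} \to TL\litp{\tstructure}{\bt}{T}{L}$, so the equation is well-typed. For each term former I would unfold the left-hand side via \cref{fig:uncurrying} together with the ordinary interpretation $\aitp{\tstructure}{\_}{T}$ (simplifying the administrative redexes introduced by the transformation, which is legitimate since the interpretation is sound for the $\beta$-reductions that remove them), unfold the right-hand side via \cref{def:eval_L} and \cref{ap:litp}, and match the two using the induction hypotheses on subterms. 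Variables, effect-free constants, and generic effects are direct, once one notes that $\uncurry{a}(c)$ and $\uncurry{a}(e)$ are by construction the assignments of the extended structure. The cases $()$, $\pi_i\ M$, $\iota_i\ M$, and $\delta(M)$ reduce to the observation that the state component $Y$ is carried along and left untouched. The substantive cases are pairing $(M, N)$, case analysis $\delta(M, x_1.\,M_1, x_2.\,M_2)$, abstraction $\lambda x.\,M$, and application $M\ N$: the nested $\lambda$-abstractions and projections in \cref{fig:uncurrying} implement exactly the left-to-right threading of the hidden state, which on the semantic side is carried out by the iterated right-strength $\rstrength{T}{}$, the monad multiplication $\mu^T$, and---for the coproduct---the distributivity $g$ and the projection $d$. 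The main obstacle is precisely this diagram chase: verifying, constructor by constructor, that the $\beta$-normal form of the interpreted transformed term coincides with the strength-laden expression prescribed by \cref{def:eval_L}; each such check is routine but long, and none requires any hypothesis beyond those already used for items (1) and (2). In contrast to \cref{lem:keyLemma}, no coercion isomorphisms $\rho$ enter here, since both sides are interpreted in the single structure $\tstructure$ — this is exactly why $\uncurry{\Sigma}$ and $\tstructure$ were set up with the extra base type $b^Y$ and $A(b^Y) = Y$.
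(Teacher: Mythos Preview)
Your proposal is correct and follows essentially the same approach as the paper: structural induction on types for item~(1), checking the data $(\lstrength{L}{}, g, d)$ for item~(2), and term-by-term induction for item~(3), where the administrative redexes in the SPS transformation are simplified via a semantic $\beta$-reduction identity (the paper isolates this as \cref{lem:betaReduction}, together with the weakening lemma \cref{lem:weakening}). The detailed case analysis in the appendix matches your outline, with pairing, case analysis, abstraction, and application being the substantive cases exactly as you anticipate.
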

\begin{proof}
  By induction on the structure of terms. 
  See \cref{ap:transproof} for the detailed proof.
\end{proof}

\newcommand{\yoneda}{\mathbf{y}}
Finally, we reformulate \cref{lem:keyLemma} within a fibrational framework.
To this end,
we define an extended $\lambda_c(\Sigma)$-structure
$\sstructure \times \tstructure$ for the functor $L \coloneqq \id \times ((\_) \times Y)\colon \mathbb{C}^2 \to \mathbb{C}^2$ to be the product structure $(\mathbb{C}^2, S \times T, \langle A, A\rangle, a \times \uncurry{a})$.
Under this structure, a well-typed term $M$ is interpreted as $\litp{A}{M}{S \times T}{L} = \itp{\mathcal{A}}{M}{S} \times \litp{\mathcal{A}}{M}{T}{\_ \times Y}$ where $\litp{\mathcal{A}}{M}{T}{\_ \times Y} = \itp{\mathcal{A}}{\uncurry{M}}{T}$ by \cref{prop:trans}.
Moreover, the isomorphism $\rho_\bt$ defined in~\cref{def:rho} gives an isomorphism $\itp{\mathcal{A}_S}{\bt}{S} \cong \litp{\mathcal{A}_T}{\bt}{T}{\_ \times Y} \ (= \itp{\mathcal{A}_T}{\primetrans{\bt}}{T})$.
Here we assume that $\mathbb{C}$ is small.
We introduce a fibration $p_D\colon \mathbb{K}_D \to \mathbb{C}^2$ by the following change-of-base diagram:
\begin{displaymath}
  \xymatrix{
    \mathbb{K}_D \pullbackmark{1, 0}{0, 1}\ar[d]_{p_D} \ar[r] &\mathbf{Sub}([\mathbb{C}^\op, \sets]) \ar[d] \\
    \mathbb{C}^2 \ar[r]^-{D} &[\mathbb{C}^\op, \sets].
  }
\end{displaymath}
along the functor $D\colon \mathbb{C}^2 \to [\mathbb{C}^\op, \sets]$ defined by
$D(I, J) = \yoneda I \times  (\yoneda J \circ (\_ \times Y))$ for $(I, J) \in \mathbb{C}^2$ and
$D(f, g) = (f \circ \_) \times (g \circ \_)$ for $(f, g)\colon (I, J) \to (I', J')$
where 
$\yoneda\colon \mathbb{C} \to [\mathbb{C}^\op, \sets]$ is the Yoneda embedding.
The right vertical functor in the diagram is the subobject fibration.
Since the functor $D$ preserves finite products, it is a fibration for logical relations by \cite[Example 3.3]{Katsumata13}.

To establish the correspondence between the interpretations $\itp{\mathcal{A}}{M}{S}$ and $\litp{\mathcal{A}}{M}{T}{\_ \times Y}$, we introduce two functors $V, C\colon \type \to \mathbb{K}_{D}$, which can be seen as logical relations.
For each type $\bt$ and object $H \in \mathbb{C}$, we define
\begin{align*}
V(\bt)(H) &= \big\{(f, \rho_\bt \circ f \circ \pi_1) \ \big|\  f \in \mathbb{C}\big(H, \itp{\mathcal{A}}{\bt}{S}\big)\big\} \subseteq \mathbb{C}\big(H, \itp{\mathcal{A}}{\bt}{S}\big) \times \mathbb{C}\big(H \times Y, \litp{A}{\bt}{T}{\_ \times Y}\big), \\
C(\bt)(H) &\coloneqq \big\{\big(f, T(\rho_\bt \times \id_Y) \circ f^\dagger\big) \ \big|\  f \in \mathbb{C}\big(H, S\itp{\mathcal{A}}{\bt}{S}\big)\big\} \subseteq \mathbb{C}\big(H, S\itp{\mathcal{A}}{\bt}{S}\big) \times \mathbb{C}\big(H \times Y, T(\litp{A}{\bt}{T}{\_ \times Y} \times Y)\big),
\end{align*}
and for each $h\colon H \to H'$ in $\mathbb{C}$, both $V(\bt)(h)$ and $C(\bt)(h)$ are defined as $(\_ \circ h) \times (\_ \circ h \times \id_Y)$.
Letting 
$\dot{L}\colon \mathbb{K}_{D} \to \mathbb{K}_{D}$ be a lifting of $L$ defined by $\dot{L}(P) = \{(f, \langle g, \pi_2 \rangle) \mid (f, g) \in P\}$,
the functors $\dot{L}V$ and $C$ are liftings of the functors
$\litp{A}{\_}{S \times T}{L}$ and $(S \times T)\litp{A}{\_}{S \times T}{L}$, respectively.
We extend the definitions of $V$ and $C$ to
contexts $\Gamma \defeq x_1\colon \bt_1,\dots,x_m\colon \bt_m$ by 
identifying $\Gamma$ with $\prod_{i=1}^m x_i$.
Note that $V$ preserves the (monoidal) product $\prod$ on $\type$: 
$V(\Gamma) = \prod_{i=1}^m V(\bt_i)$ and $V(\mathbf{1})$ is the terminal object in $\mathbb{K}_D$.

\begin{theorem} \label{thm:mainfib}
  For a well-typed term $\Gamma \vdash M\colon \bt$, 
  there exists a morphism $\dot{L}V(\Gamma) \to C(\bt)$
  above 
  $\litp{\mathcal{A}}{M}{S \times T}{L}\colon L\litp{\mathcal{A}}{\Gamma}{S \times T}{L} \to (S \times T)L\litp{\mathcal{A}}{\bt}{S \times T}{L}$ along the fibration $p_D$.
  \qed
\end{theorem}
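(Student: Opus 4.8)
The plan is to run the fibrational logical-relations argument of \citet{Katsumata13} over $p_D$, i.e.\ to replay the extended interpretation of \cref{def:eval_L} inside $\mathbb{K}_D$ against lifted structure, and then to induct on the derivation of $\Gamma\vdash M\colon\bt$. Since the functor $D$ preserves finite products, $p_D$ is a fibration for logical relations, so $\mathbb{K}_D$ carries a bi-CCC structure that $p_D$ strictly preserves; this handles at once all the purely bi-cartesian-closed ingredients of the extended structure $\sstructure\times\tstructure = (\mathbb{C}^2, S\times T, \langle A,A\rangle, a\times\uncurry{a})$ with $L \defeq \id\times((\_)\times Y)$. What remains is to lift the rest of the structure and check the base cases.

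Concretely I would carry out the following steps. (1) Check that $\dot L$ — with the natural transformations $g$ (distributivity of $\times Y$ over $+$) and $d$ ($\pi_1$ on the $(\_\times Y)$-component) inherited fibrewise — is a strong-endofunctor lifting of $L$ along $p_D$. (2) Exhibit a strong-monad lifting $\widetilde{S\times T}$ of $S\times T$ along $p_D$, spelling out its action on relations and verifying compatibility with $\eta^{S\times T}$, $\mu^{S\times T}$ and the strength, chosen so that $C(\bt)=\widetilde{S\times T}\,\dot L\,V(\bt)$ for every type $\bt$; this amounts to diagram chases relating the transpose $(\_)^\dagger$, the canonical morphism $\natu{T,Y}$, and the Eilenberg--Moore algebras $\cal{T}{X}{Z}$ out of which $S = T(\_\times Y)^Y$ is built from $T$. (3) Check that $V$ is closed under the type formers in the fibres of $p_D$: $V(\mathbf{1})$ is terminal, $V(\bt_1\times\bt_2)=V(\bt_1)\times V(\bt_2)$, $V(\bt_1+\bt_2)=V(\bt_1)+V(\bt_2)$, and $V(\bt_1\to\bt_2)=C(\bt_2)^{\dot L V(\bt_1)}$ (the last reconciling \cref{def:rho} with the exponential of $\mathbb{K}_D$, using \cref{prop:trans}). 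By (1)--(3), $\dot L V$ and $C$ are exactly the liftings along $p_D$ of the domain functor $L\litp{\mathcal{A}}{\_}{S\times T}{L}$ and the codomain functor $(S\times T)L\litp{\mathcal{A}}{\_}{S\times T}{L}$ of the extended interpretation. (4) Lift the semantic assignment of $\sstructure\times\tstructure$: for each base type $b$, $V(b)$ lies over $\langle A,A\rangle(b)$ because it is the graph of $\rho_b=\id$; for each constant $c$, the morphism $(a\times\uncurry{a})(c)=a(c)\times(a(c)\times\id_Y)$ lifts to $\dot L V(\arfunc(c))\to\dot L V(\carfunc(c))$ (immediate since $\arfunc(c),\carfunc(c)$ are ground, so there $\rho=\id$ and $\primetrans{\bt}=\bt$); and for each generic effect $e$, the morphism $(a\times\uncurry{a})(e)=a(e)\times a(e)^\dagger$ lifts to $\dot L V(\arfunc(e))\to\widetilde{S\times T}\,\dot L\,V(\carfunc(e))=C(\carfunc(e))$, which is essentially definitional, since $\uncurry{a}(e)$ is by construction the transpose of $a(e)$ and $C$ is built from $(\_)^\dagger$.

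Granting (1)--(4), the extended interpretation $\litp{\mathcal{A}}{M}{\widetilde{S\times T}}{\dot L}$ computed in $\mathbb{K}_D$ clause by clause (as in \cref{def:eval_L}) is a morphism $\dot L V(\Gamma)\to C(\bt)$, and since $p_D$ strictly preserves the bi-CCC structure and commutes with $\dot L$ and $\widetilde{S\times T}$, this morphism lies over $\litp{\mathcal{A}}{M}{S\times T}{L}$, which is the claim. Equivalently, as $p_D$ is a partial-order fibration, the content is that $\litp{\mathcal{A}}{M}{S\times T}{L}$ carries the relation $\dot L V(\Gamma)$ into the relation $C(\bt)$, and the induction is then a routine fundamental-lemma argument once the structure is lifted.

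I expect the main obstacle to be step (2) together with the function-type part of (3): the monad lifting forces one to commute $\widetilde{S\times T}$ past $\dot L$ and to reconcile it with the auxiliary $g,d$ that appear in the interpretation of pairing in \cref{def:eval_L}, while the function-type clause requires matching the exponential $C(\bt_2)^{\dot L V(\bt_1)}$ — whose fibre is built from the twisted presheaf $\yoneda(\_)\circ((\_)\times Y)$ occurring in $D$ — with the explicit state re-threading performed by the SPS transformation of $\lambda$-abstraction and application in \cref{def:uncurry} and \cref{prop:trans}. The coproduct clause $V(\bt_1+\bt_2)=V(\bt_1)+V(\bt_2)$, and the corresponding lift of the case-analysis assignment, is the one place where stability of coproducts in $\mathbb{C}$ is genuinely needed: it is what makes the fibrewise coproduct of $p_D$ compute pointwise on the underlying hom-presheaves, so that a morphism into $\litp{\mathcal{A}}{\bt_1+\bt_2}{S\times T}{L}$ lying in the relation is precisely a copairing of morphisms lying in the respective relations.
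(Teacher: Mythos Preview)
Your plan follows the Katsumata $\top\top$-lifting recipe, but the paper explicitly departs from it, and the departure is precisely at the point you flag as the main obstacle. In step (3) you claim $V(\bt_1 \to \bt_2) = C(\bt_2)^{\dot{L}V(\bt_1)}$; the paper states only the inclusion $V(\bt_1 \to \bt_2) \subseteq C(\bt_2)^{\dot{L}V(\bt_1)}$ and remarks that \emph{the converse does not hold in general}. The reason is that $V(\bt_1\to\bt_2)$ is the graph of the isomorphism $\rho_{\bt_1\to\bt_2}$, hence a functional relation, whereas the fibrewise exponential $C(\bt_2)^{\dot{L}V(\bt_1)}$ contains any pair $(f,g)$ that merely respects the relations componentwise, with no reason for $g$ to be the unique transpose of $f$ determined by $\rho$. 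Consequently you cannot simply replay the extended interpretation inside $\mathbb{K}_D$ using its bi-CCC structure and expect to land in the paper's $V$; the paper says this in as many words when it contrasts its approach with Katsumata's (``we define logical relations as just functors\dots\ not constructed via bi-CCC structures or monad structures'').

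What the paper actually does is keep $V$ and $C$ as the explicit graph relations built from $\rho_\bt$ and $T(\rho_\bt\times\id_Y)\circ(\_)^\dagger$, and prove a tailored list of closure properties (\cref{lem:liftcv}) that suffices for the induction: $\dot L$ lifts with $\natdisc{}$; constants and generic effects lift; $\eta$ carries $\dot{L}V(\bt)$ into $C(\bt)$; a \emph{bind lemma} (if $f\colon V(\bt_1)\times\dot{L}V(\bt_2)\to C(\bt_3)$ lifts then so does $\mu\circ U(f)\circ\lstrength{U}{}$ from $V(\bt_1)\times C(\bt_2)$), used in place of your full strong-monad lifting of step (2); the coproduct clauses; and, crucially, the biconditional $f\colon V(\Gamma)\times\dot{L}V(\bt_1)\darrow C(\bt_2)$ iff $f^\dagger\colon V(\Gamma)\darrow V(\bt_1\to\bt_2)$, established by a direct computation with the definition of $\rho_{\bt_1\to\bt_2}$ rather than by the universal property of the exponential in $\mathbb{K}_D$. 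The induction on $M$ then invokes these properties case by case. Your steps (1) and (4) survive essentially as stated, but (2) is weakened to the bind lemma and the function-type clause of (3) must be replaced by this hand-checked biconditional; the explicit form of $V$ at function types also matters later for the admissibility argument in the recursion case (\cref{thm:mainfibfix}).
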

See \cref{ap:proof_liftlambdac} for the proof.
In~\cite{Katsumata13}, Katsumata also show a correspondence between two interpretations under different computational effects, with a fibrational framework. 
He constructs logical relations as interpretations of types by using the so-called \emph{categorical $\top\top$-lifting}~\cite{Katsumata05}, and utilizes the structures used in interpretations to prove the correspondence.
In contrast, we define logical relations as just functors. Although these logical relations are not constructed via bi-CCC structures or monad structures, they still satisfy properties similar to those of logical relations in~\cite{Katsumata13}; see \cref{lem:liftcv}.

Since $(\id, \rho_\Gamma \times \id)\in \dot{L}V(\Gamma)(\itp{\mathcal{A}}{\Gamma}{S})$ holds, 
such a morphism $\dot{L}V(\Gamma) \to C(\bt)$ above $\litp{\mathcal{A}}{M}{S \times T}{L}$ discussed in \cref{thm:mainfib} induces 
$(\itp{\mathcal{A}}{M}{S}, \itp{\mathcal{A}}{\uncurry{M}}{T} \circ (\rho_\Gamma \times \id)) \in C(\bt)(\itp{\mathcal{A}}{\Gamma}{S})$,
which is equivalent to 
\cref{eq:uncurryTransKeyLemma} in \cref{lem:keyLemma}.
The converse direction also holds, that is, \cref{eq:uncurryTransKeyLemma} implies the existence of a morphism $\dot{L}V(\Gamma) \to C(\bt)$ above $\litp{\mathcal{A}}{M}{S \times T}{L}$.

\subsection{For $\lambdacfix$-calculus} \label{sec:fibrationalLambdacfix}
We now extend the fibrational approach for the $\lambdacfix$-calculus introduced in \cref{sec:recursion}.
We assume that the base category $\mathbb{C}$ is an \emph{$\omegaCPO$-enriched bi-CCC}.
As in \cref{sec:fibrationalLambdac}, we first extend the interpretation $\itp{A}{\_}{T}$.

\begin{definition}[extended $\lambdacfix$-structure]
  Let $L$ be a strong functor $(\mathbb{C}_0, L, \lstrength{L}{})$ with natural transformations 
  $\natcoprod{}\colon L \circ (+) \Rightarrow (+) \circ (L\times L)$ and
  $\natdisc{}\colon L \Rightarrow \id$.
  An \emph{extended $\lambdacfix(\Sigma)$-structure} for the functor $L$
  is a tuple $(\mathbb{C}, T, A, a)$ where $T$ is a pseudo-lifting strong monad on $\mathbb{C}$ and $(\mathbb{C}_0, T, A, a)$ is an extended $\lambdacfix(\Sigma)$-structure.
\end{definition}
Given an extended $\lambdacfix(\Sigma)$-structure $\mathcal{A}$, 
the interpretation $\litp{\mathcal{A}}{M}{T}{L}$ 
of a well-typed term $\Gamma \vdash M\colon \bt$ is defined 
in the same way as for an extended $\lambda_c(\Sigma)$-structure, with the following additional rule to interpret recursive terms:
\begin{align*}
  \litp{A}{\mu fx.~M}{T}{L} &\coloneqq 
    \eta^T_{L\big(TL(\litp{\mathcal{A}}{\bt_2}{T}{L})^{L\litp{\mathcal{A}}{\bt_1}{T}{L}}\big)}\circ  L\cal{T}{L\litp{\mathcal{A}}{\bt_1}{T}{L}}{L\litp{\mathcal{A}}{\bt_2}{T}{L}}\circ L\fixop{\litp{\mathcal{A}}{\Gamma}{T}{L}}{\litp{\mathcal{A}}{\bt_1\rightarrow \bt_2}{T}{L}}{
      \eta \circ (\litp{\mathcal{A}}{M}{T}{L} \circ \lstrength{L}{})^\dagger \circ \id \times \cal{T}{L\litp{\mathcal{A}}{\bt_1}{T}{L}}{L\litp{\mathcal{A}}{\bt_2}{T}{L}}}.
\end{align*}
Again, it is consistent with $\itp{\mathcal{A}}{\mu fx.~M}{T}$ in \cref{def:lambdacfix} when $L = \id$, i.e.~$\itp{\mathcal{A}}{\mu fx.~M}{T} = \litp{\mathcal{A}}{\mu fx.~M}{T}{\id}$.

Then we can extend the results \cref{prop:trans} and \cref{thm:mainfib} to the setting with recursive functions.
See \cref{ap:proof_transLambdacfix} and \cref{ap:proof_mainfibfix} for the proofs.
\begin{proposition} \label{prop:transLambdacfix}
  The $\lambdacfix(\uncurry{\Sigma})$-structure $\tstructure$ (see \cref{sec:uncurrying} and \cref{sec:recursion}) can be seen as an extended $\lambdacfix(\Sigma)$-structure for the functor $(\_ \times Y) \colon \mathbb{C} \to \mathbb{C}$.
  Moreover, for any well-typed term $\Gamma \vdash \mu f x.~M\colon \bt$, $\aitp{\tstructure}{\uncurry{(\mu fx.~M)}}{T} = \tstructure\litp{\tstructure}{\mu fx.~M}{T}{\_ \times Y}$ holds.
  \qed
\end{proposition}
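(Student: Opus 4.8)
The plan is to mirror the proof of \cref{prop:trans}, adding the single new case of the typing rule $\rec$ to the structural induction. The first claim is almost immediate: by \cref{def:lambdacfix}, the $\lambdacfix(\uncurry{\Sigma})$-structure $\tstructure$ consists of a pseudo-lifting strong monad $T$ on $\mathbb{C}$ together with the $\lambda_c(\uncurry{\Sigma})$-structure $(\mathbb{C}_0, T, A, \uncurry{a})$; by \cref{prop:trans}(2) the latter is an extended $\lambda_c(\Sigma)$-structure for the functor $(\_ \times Y)\colon \mathbb{C}\to\mathbb{C}$, and $T$ is pseudo-lifting by hypothesis. The only point worth noting is that the generic effect $\leastelement{T}{0}\colon 1\to T(0)$ is untouched by the SPS transformation, so $\leastelement{T}{X}$ is still the least element of $\mathbb{C}(1, TX)$ for every $X$; hence $\tstructure$ is an extended $\lambdacfix(\Sigma)$-structure for $(\_\times Y)$.

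For the second claim I would induct on the derivation of $\Gamma\vdash \mu f x.~M\colon \bt_1\to\bt_2$. Every non-recursion case is handled exactly as in \cref{prop:trans}(3), so it remains to treat $\rec$. By \cref{def:uncurry} and the SPS transformation of recursive functions in \cref{sec:recursion}, $\uncurry{(\mu f x.~M)} = (\mu f z.~\uncurry{M}[\pi_1 z/x, \pi_2 z/y],~y)$. First I would unfold the $\tstructure$-interpretation of this pair: it is a pair of values, so its interpretation is $\eta^T$ applied to the tupling of the interpretation of the recursive-function value $\mu f z.~\uncurry{M}[\pi_1 z/x, \pi_2 z/y]$ with the projection onto $Y$ coming from $y$. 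By the interpretation of recursion, the former has the shape $\cal{T}{X}{Z}\circ \fixop{W}{U}{g}$, with $U = T(Z)^X$ and $g$ built from $\itp{\tstructure}{\uncurry{M}[\pi_1 z/x, \pi_2 z/y]}{T}$ and $\cal{T}{X}{Z}$; a routine substitution lemma (proved by the same induction) rewrites $\itp{\tstructure}{\uncurry{M}[\pi_1 z/x, \pi_2 z/y]}{T}$ in terms of $\itp{\tstructure}{\uncurry{M}}{T}$, and the induction hypothesis then replaces $\itp{\tstructure}{\uncurry{M}}{T}$ by $\litp{\tstructure}{M}{T}{\_\times Y}$. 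It remains to compare the resulting expression with the defining clause of $\litp{\tstructure}{\mu f x.~M}{T}{\_\times Y}$ from \cref{sec:fibrationalLambdacfix}, instantiated at $L = \_\times Y$.

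This comparison rests on two facts. First, the canonical Eilenberg--Moore algebra $\cal{T}{X}{Z}$ over the transformed function type corresponds, through the currying isomorphism $\cong$ that is built into $\rho_{\bt_1\to\bt_2}$ in \cref{def:rho}, to the $L$-wrapped algebra $L\,\cal{T}{L\litp{\tstructure}{\bt_1}{T}{L}}{L\litp{\tstructure}{\bt_2}{T}{L}}$; the $\dagger$-transpose appearing in the extended recursion clause is precisely what records this currying. Second, the parametrised fixpoint operator $\fixop{W}{U}{-}$ is stable under applying the functor $L = \_\times Y$ and under the transpose $(-)^\dagger$ of $(\_\times Y)\dashv(\_)^Y$: since $\fixop{W}{U}{f} = \pi_2\circ\bigvee_{n\in\nat}\langle\pi_1, f\rangle^n\circ\langle\id, \leastelement{T}{U}\rangle$ is assembled from composition, tupling, and currying, all $\omega$-continuous in the $\omegaCPO$-enriched bi-CCC $\mathbb{C}$, and $L$ preserves the least elements $\leastelement{T}{U}$, the operator commutes with $L$ and with $(-)^\dagger$ levelwise in the approximants and hence in the colimit; alternatively one may invoke the uniformity of the stable uniform call-by-value fixpoint operator to transport the fixpoint along these isomorphisms. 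Combining these two facts with the induction hypothesis for $M$ yields $\aitp{\tstructure}{\uncurry{(\mu f x.~M)}}{T} = \tstructure\litp{\tstructure}{\mu f x.~M}{T}{\_\times Y}$.

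I expect the main obstacle to be this last step: checking that the parametrised fixpoint operator genuinely commutes with $L = \_\times Y$ and with $\dagger$-transposition while keeping track of the threaded state variable $y$ and of the auxiliary base type $b^Y$ introduced by the SPS transformation, so that after transposition the fixpoint equations on the two sides become literally identical. Once $\omega$-continuity and preservation of least elements are in place, the remaining work is the same style of diagram chase as in the proof of \cref{prop:trans}.
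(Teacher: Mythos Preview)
Your overall plan (unfold the pair, rewrite the substituted body, match against the extended-interpretation clause) is the paper's plan, but you have misidentified the crucial technical step and pulled in machinery that does not belong here.

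First, the isomorphisms $\rho_\bt$ play no role in this proposition. \cref{prop:transLambdacfix} compares two interpretations within the \emph{same} structure $\tstructure$: the ordinary $T$-interpretation of the SPS-transformed term and the extended $L$-interpretation (with $L=\_\times Y$) of the original term. The maps $\rho_\bt$ from \cref{def:rho} mediate between the $S$- and $T$-structures and only enter later, in \cref{lem:keyrecursion} and \cref{thm:mainfibfix}. In particular, the canonical algebras appearing on both sides are literally the same map $\cal{T}{\itp{\tstructure}{\uncurry{\bt_1}}{T}}{\itp{\tstructure}{\uncurry{\bt_2}}{T}}=\cal{T}{L\litp{\tstructure}{\bt_1}{T}{L}}{L\litp{\tstructure}{\bt_2}{T}{L}}$; there is nothing to reconcile via $\cong$ or $\rho$, and the $\dagger$ in the recursion clause is the currying for the argument type $L\litp{\tstructure}{\bt_1}{T}{L}$, not for $Y$.

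Second, the step you describe as ``commutation of the parametrised fixpoint with $L$ and with $(-)^\dagger$'' is not well-posed and is not what is needed. After unfolding, the SPS side takes a parametrised fixpoint over the context $\itp{\tstructure}{\uncurry{\Gamma}}{T}=\itp{\tstructure}{\primetrans{\Gamma}}{T}\times Y$, whereas the extended side has $L\big(\fixop{\itp{\tstructure}{\primetrans{\Gamma}}{T}}{\litp{\tstructure}{\bt_1\to\bt_2}{T}{L}}{\dots}\big)$, i.e.\ a fixpoint over $\itp{\tstructure}{\primetrans{\Gamma}}{T}$ paired with $\id_Y$. The bridge is a \emph{parameter-reduction} lemma: because $y$ is substituted out in the body $\uncurry{M}[\pi_1 z/x,\pi_2 z/y]$, the function fed to the first fixpoint factors through $\pi_1\colon \itp{\tstructure}{\primetrans{\Gamma}}{T}\times Y\to\itp{\tstructure}{\primetrans{\Gamma}}{T}$, and one needs
\[
\fixop{X\times X'}{Z}{f\circ(\pi_1\times\id_{TZ})}=\fixop{X}{Z}{f}\circ\pi_1,
\]
proved directly by unrolling the $\omega$-chain (this is the paper's \cref{lem:fixpi}). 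Once this is applied, the two sides match on the nose via $\langle g\circ\pi_1,\pi_2\rangle=g\times\id_Y$. No uniformity of the fixpoint operator, no preservation of least elements by $L$, and no transposition along $(\_\times Y)\dashv(\_)^Y$ is used.
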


\begin{theorem} \label{thm:mainfibfix}
  The statement of~\cref{thm:mainfib} holds for the extended $\lambdacfix(\Sigma)$-structure $\sstructure \times \tstructure$.
  \qed
\end{theorem}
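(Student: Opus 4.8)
Our plan is to reuse the proof of \cref{thm:mainfib} essentially verbatim, since it runs by structural induction on the well-typed term $M$, and to supply the single new case $M = \mu f x.\ N$. As in the non-recursive fragment, \cref{prop:transLambdacfix} lets us rephrase the goal purely in terms of the extended interpretation over the fibration $p_D\colon \mathbb{K}_D \to \mathbb{C}^2$: it suffices to show that $\litp{\mathcal{A}}{M}{S \times T}{L}$ lifts to a morphism $\dot{L}V(\Gamma) \to C(\bt)$. Every pre-existing case, and all of \cref{lem:liftcv}, carries over unchanged, because none of $V$, $C$ or $\dot{L}$ is affected by the presence of recursion; thus everything reduces to lifting the parametrised fixpoint operator $\fixop{X}{Z}{\_}$ along $p_D$.

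To make sense of this we first equip $\mathbb{K}_D$ with $\omega\mathbf{CPO}$-structure compatible with $p_D$. Since $p_D$ is faithful (being a partial order bifibration), each hom-$\omega$CPO of $\mathbb{K}_D$ is obtained by pulling back the order from $\mathbb{C}^2$, and the point to check is that the join of an $\omega$-chain of $\mathbb{K}_D$-morphisms, computed in $\mathbb{C}^2$, still lies over $\mathbb{K}_D$. As $D$ is locally $\omega$-continuous (it acts by postcomposition, and composition in $\mathbb{C}$ is $\omega$-continuous), this holds provided the predicates $C(\bt)$ and $\dot L V(\Gamma)$ are \emph{admissible}, i.e.\ chain-closed and containing the least element. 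This is where the specific shape of $V$ and $C$ matters: for each type $\bt$ and object $H$, $V(\bt)(H)$ and $C(\bt)(H)$ are the graphs of the maps $f \mapsto \rho_\bt \circ f \circ \pi_1$ and $f \mapsto T(\rho_\bt \times \id_Y) \circ f^\dagger$ between hom-$\omega$CPOs; these are $\omega$-continuous because currying, composition and the action of the pseudo-lifting monad $T$ on morphisms are $\omega$-continuous in an $\omega\mathbf{CPO}$-enriched bi-CCC, so their graphs are admissible subobjects. Containment of the least element uses that $\leastelement{S}{X}$ is induced from $\leastelement{T}{0}$ along the monad morphism, as noted after \cref{def:lambdacfix}; applying $\dot L$, which only reindexes by $Y$ and pairs with $\pi_2$, preserves admissibility.

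For the recursion case itself: the induction hypothesis applied to $\Gamma, f\colon \bt_1 \to \bt_2, x\colon \bt_1 \vdash N\colon \bt_2$ gives a lifting of $\litp{\mathcal{A}}{N}{S \times T}{L}$, and hence the map $g \defeq \eta \circ (\litp{\mathcal{A}}{N}{S \times T}{L} \circ \lstrength{L}{})^\dagger \circ (\id \times \mathrm{al})$ occurring in the interpretation of $\mu f x.\ N$ lifts as well, since transpose, $\eta$, the canonical Eilenberg--Moore algebra $\mathrm{al}$, tupling, projections and $L$ all lift by (the same instances of) \cref{lem:liftcv}. Now $\fixop{}{}{g} = \pi_2 \circ \bigvee_{n \in \nat} \langle \pi_1, g\rangle^n \circ \langle \id, \leastelement{}{}\rangle$: each finite approximant lifts, because $\langle \id, \leastelement{}{}\rangle$ lifts by admissibility (containment of $\bot$), $\langle \pi_1, g\rangle$ lifts by the lifting of $g$ together with pairing and projection, and composition lifts; the join over $n$ lifts by admissibility (chain-closure); and post-composition with $\pi_2$, with $L(\mathrm{al})$ and with $\eta$ lifts. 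Assembling these yields a lifting of $\litp{\mathcal{A}}{\mu f x.\ N}{S \times T}{L}$, completing the induction. Exactly as in the non-recursive case, instantiating the resulting morphism $\dot L V(\Gamma) \to C(\bt_1 \to \bt_2)$ at $(\id, \rho_\Gamma \times \id) \in \dot L V(\Gamma)(\itp{\mathcal{A}}{\Gamma}{S})$ recovers \cref{eq:uncurryTransKeyLemma}, i.e.\ \cref{lem:keyrecursion}.

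The main obstacle is precisely the admissibility bookkeeping of the previous paragraph: confirming that $V$ and $C$, which here are given directly as graphs rather than built from the bi-CCC and monad structure, nonetheless carry the chain-completeness and least-element closure that the fixpoint requires, and that joins of $\omega$-chains of $\mathbb{K}_D$-morphisms remain $\mathbb{K}_D$-morphisms. Once this local-continuity and admissibility argument is settled, the lifting of $\fixop{}{}{\_}$ is the routine "induct on $n$, then take joins" pattern, mirroring the treatment of uniform $T$-fixpoint operators along fibrations for logical relations in \cite{Katsumata13}.
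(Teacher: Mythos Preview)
Your high-level plan is right: structural induction, the only new case is $\mu f x.\,N$, and the heart of that case is an admissibility argument for the fixpoint. But there is a concrete gap in the admissibility step.

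The fixpoint $\fixop{\litp{\mathcal{A}}{\Gamma}{U}{L}}{\litp{\mathcal{A}}{\bt_1\to\bt_2}{U}{L}}{\_}$ produces a morphism into $U(\litp{\mathcal{A}}{\bt_1\to\bt_2}{U}{L})$, so the admissible predicate you need must live in the fibre over that object. Neither $V(\bt_1\to\bt_2)$ (which is over $\litp{\mathcal{A}}{\bt_1\to\bt_2}{U}{L}$) nor $C(\bt_1\to\bt_2)$ (which is over $UL\litp{\mathcal{A}}{\bt_1\to\bt_2}{U}{L}$) is that object, so your statement that ``$V$ and $C$ are admissible'' does not type-check against the definition you want to use. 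The paper's choice is the reindexing $(\cal{U}{L\litp{\mathcal{A}}{\bt_1}{U}{L}}{L\litp{\mathcal{A}}{\bt_2}{U}{L}})^{*}V(\bt_1\to\bt_2)$: the canonical Eilenberg--Moore algebra is then a cartesian morphism into $V(\bt_1\to\bt_2)$ for free, and since $\cal{}{}{}\circ\eta=\id$ one gets $\eta\colon V(\bt_1\to\bt_2)\darrow(\cal{}{}{})^{*}V(\bt_1\to\bt_2)$, so the iterate $\eta\circ f\circ(\id\times\cal{}{}{})$ restricts to this predicate. Your claim that $\cal{}{}{}$ ``lifts by the same instances of \cref{lem:liftcv}'' is not correct: that lemma has no clause for $\cal{}{}{}$, and it is precisely by choosing the reindexed predicate that the lifting of $\cal{}{}{}$ becomes trivial.

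The second gap is the $\bot$-condition for that reindexed predicate. Your ``graph of an $\omega$-continuous map'' argument is fine for closure under $\omega$-chains (condition~(ii) of admissibility), but for condition~(i) one must show that $\cal{U}{}{}\circ\bot^{U}$ lands in $V(\bt_1\to\bt_2)$, which unfolds to $\rho_{\bt_1\to\bt_2}\circ\cal{S}{}{}\circ\bot^{S}=(\cal{T}{}{}\circ\bot^{T})\circ\pi_1$. This needs a separate computation relating the two canonical algebras through $\rho$ (the paper isolates it as \cref{lem:commuteCanAl}), together with the fact that the least element of $S$ is induced from that of $T$; it does not follow from the graph description alone, and is not covered by \cref{lem:liftcv}. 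Once you plug these two pieces in, the ``approximate and take the join'' pattern you describe goes through exactly as in the paper.
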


\section{Case Studies}
\label{sec:caseStudy}

Finally, we present our case studies for our framework. 
These case studies include temporal verification problems for probabilistic programs and angelic non-deterministic programs. 
We use the category $\dCPO$ of directed complete partial orders as the base category for all case studies; note that we can always take small subcategories of $\dCPO$ with a suitalby chosen Grothendieck universe.   
All case studies consider only terminating traces; the term "terminating" is omitted throughout.

We first introduce a simple temporal specification, called \emph{coalgebraic DFA}, before moving to the case studies.

\begin{definition}[coalgebraic DFA]
  A \emph{coalgebraic DFA} is a function $d\colon Y\rightarrow Y^A\times \boolsets$, where 
  $Y$ is a finite set of  \emph{states}, $A$ is a finite set of \emph{characters}, and $\boolsets$ is the set of Booleans $\{\top, \bot\}$.
\end{definition}
Later, we regard $d$ as a morphism in $\dCPO$ with the discrete ordering. 

We define the \emph{semantics} and \emph{acceptance condition} of coalgebraic DFAs in a manner analogous to the standard concrete definition.

\begin{definition}[semantics of coalgebraic DFA]
  Given a coalgebraic DFA $d\colon Y\rightarrow Y^A\times \boolsets$,
  we recursively define a \emph{coalgebra} $\widehat{d}\colon Y\rightarrow \gendfa{Y}$ by the run of $d$, that is, for each $y\in Y$, and $w\in A^{\ast}$, 
  \begin{align*}
    \widehat{d}(y)(w) \defeq \begin{cases*}
      y &\text{ if  $w = \epsilon$, }\\
      \widehat{d}\big(d(y)(a)\big)(w') &\text{ if $w = a\cdot w'$ for some $a\in A$.}
    \end{cases*}
  \end{align*}
  We call $\widehat{d}$ the \emph{semantics} of $d$.
  Given $w\in A^{\ast}$ and $y\in Y$, we say that 
  $y$ is \emph{accepting} if $(\pi_2\circ d)(y) = \top$, and
  $w$ is \emph{accepting} in $d$ from $y$ if $\widehat{d}(y)(w)$ is accepting.
\end{definition}

With the semantics of coalgebraic DFAs, we define a strong monad morphism that is an ingredient of synchronisation with coalgebraic DFAs. 

\begin{definition}[strong monad morphism induced by $d$]
  \label{def:monadMorDFA}
  Given a coalgebraic DFA $d\colon Y\rightarrow Y^A\times \boolsets$,
  we define a \emph{strong monad morphism} $\alpha^d\colon \_\times A^{\ast}\Rightarrow (\_\times Y)^{Y}$ as
  \begin{align*}
    \alpha^d_X(x, w)(y)\defeq \big(x, \widehat{d}(y)(w)\big), \text{ for each $X$.}
  \end{align*}
\end{definition}
See~\cref{subsec:omitproofEx1} for the proof that shows $\alpha$ is indeed a strong monad morphism.

We fix a DFA $d\colon Y\rightarrow Y^A \times \boolsets$ as a specification DFA for the case studies in~\cref{subsec:probAcceptingTrace,subsec:expectedRewardAcceptingTrace,subsec:emptinessCheckingAcceptingTraces}.

\subsection{Probability of Accepting Traces}
\label{subsec:probAcceptingTrace}

We first study the temporal verification of a higher-order probabilistic program, which concerns computing the probability of accepting (terminating) traces generated by the program.
To this end, we instantiate our framework in the category $\dCPO$ of dcpos, equipped with a commutative probability monad $\comProbPowerdommonad$~\cite{JiaLMZ21}; see~\cref{subsec:preliminaries} for background on $\comProbPowerdommonad$.
To apply our framework, it suffices to identify an appropriate product situation that yields the desired synchronisation.

  We employ a product situation $(\comProbPowerdommonad, (A^{\ast}, =), (Y, =), \tau^{\_ \times A^*}, \tau^\comProbPowerdommonad, \alpha^d, q)$ on $\dCPO$ where:
  \begin{itemize}
    \item $\comProbPowerdommonad$ is the commutative probabilistic monad, and $\_\times A^{\ast}$ is the writer monad induced by the monoid structure on $A^\ast$. 
    \item $\tau^{\_ \times A^{\ast}}\colon A^{\ast}\times A^{\ast}\rightarrow A^{\ast}$ is the multiplication of $\_\times A^{\ast}$, and $\tau^{\comProbPowerdommonad}\colon \comProbPowerdommonad([0, 1])\rightarrow [0, 1]$ is the expectation $\tau^{\comProbPowerdommonad}(\nu)\defeq \int p  d\nu$ of the identity function $p\colon [0, 1]\rightarrow [0, 1]$, where the order on $[0, 1]$ is the standard total order.
    \item $q$ is the inference query $q\colon A^{\ast}\rightarrow [0, 1]^{Y}$ defined by $q(w)(y) \defeq 1$ if $w$ is accepting in $d$ from $y$, and  $q(w)(y) \defeq 0$ otherwise.
  \end{itemize}
  Note that we write $(X, =)$ for the discrete dcpo $X$. 
Then~\cref{prop:liftingCorrectness} yields an Eilenberg-Moore algebra $\tau^{\comProbPowerdommonad(\_\times A^{\ast})}\colon \comProbPowerdommonad\big(\comProbPowerdommonad(A^{\ast})\times A^{\ast}\big)\rightarrow \comProbPowerdommonad(A^{\ast})$ and an inference query $q^{\comProbPowerdommonad}\colon \comProbPowerdommonad(A^{\ast})\rightarrow [0, 1]^{Y}$. 
We now formally define the target problem as follows. 
\vspace{0.1cm}
\begin{mdframed}
{\bf Problem: Probability of Accepting Traces.}
 \quad
\noindent 
Let $\Gamma\vdash M\colon \bt$ be a $\comProbPowerdommonad(\_\times A^{\ast})$-effectful well-typed term  such that all types in $\Gamma$ and $\bt$ are ground types.
Given $x\in \itp{\mathcal{A}}{\Gamma}{\comProbPowerdommonad(\_\times A^{\ast})}$ 
and an initial state $y\in Y$, compute:
\begin{align*}
  \Big(q^{\comProbPowerdommonad}\circ \big(\wpcond{M}{\tau^{\comProbPowerdommonad(\_\times A^{\ast})}}{Q}{\mathcal{A}}\big)\Big)(x)(y)\in [0, 1],
\end{align*}
where $Q\colon \itp{\mathcal{A}}{\bt}{\comProbPowerdommonad(\_\times A^{\ast})}\rightarrow \comProbPowerdommonad(A^{\ast})$ is the constant function
assigning
the Dirac valuation $\eta_{A^{\ast}}(\epsilon)$ (see \cref{def:dirac}).
Note that
the weakest pre-condition $\big(\wpcond{M}{\tau^{\comProbPowerdommonad(\_\times A^{\ast})}}{Q}{\mathcal{A}}\big)(x)\in \comProbPowerdommonad(A^{\ast})$ computes the subdistribution of traces,
and $q^{\comProbPowerdommonad}\colon\comProbPowerdommonad(A^{\ast})\rightarrow [0, 1]^{Y}$ computes the expectation of accepting traces.
\end{mdframed}

\begin{example}
  \label{ex:hoRW}
  Consider the following simple higher-order random walk:
  \begin{equation*}
    \letrec{\randomwalk}{f\ x}{\ifexpr{x\geq 0}{\probbranch{(\randomwalk\ g\ f(x+2))^{\up}}{1/2}{(\randomwalk\ g\ f(x-2))^{\down}}}{()}}{\randomwalk\ (\lambda z. z)\ 1},
  \end{equation*}
  where $f\colon\typeReal\rightarrow \typeReal$, $x\colon \typeReal$, and $g\defeq \lambda z. (fz - 1)$; 
   following~\cite{KuraUnno2024}, we interpret the type $\typeReal$ by setting $\itp{\mathcal{A}}{\typeReal}{\comProbPowerdommonad} \defeq (\real, =)$.
Here, the variable $x$ represents the current position, and the program terminates when the position becomes strictly negative.
During execution, the position is updated probabilistically; the update rule is determined by a function $f$, which is itself defined recursively.
The program returns $\up$ if it attempts to increase the position---specifically, by calling the function $f$ with the argument $x + 2$---and returns $\down$ otherwise.
We are interested in the termination probability of the program under the condition that it attempts to increase the position at least twice.
  See~\cref{subsec:omitSpec} for the precise specification. 
  The SPS transformation of the program is 
  \begin{equation*}
    \letrec{\randomwalk}{f\ x\ y}{\ifexpr{x \geq 0}{N}{((), y)}}{\randomwalk\ (\lambda z. z)\ 1\ y},
  \end{equation*}
  where the term $N$ is given by 
  \begin{equation*}
    N \defeq \probbranch{(\randomwalk\ g\ f(x+2)\ (c^{\up} y))}{1/2}{(\randomwalk\ g\ f(x-2)\ y)},
  \end{equation*}
  with the effect-free constant $c^{\up}$ that changes the state $y$ by reading the character $\up$. 
\end{example}

\begin{remark}[optimisation and selective SPS transformation]
  To obtain a more compact expression and make the inputs more suitable for the solver, we apply several optimisations in our effectful SPS transformation.
  For instance, the effectful SPS transformation $\uncurry{((M)^{\heads})}$ of the event-raising effect used in~\cref{eq:exProbProg} corresponds to 
  $\uncurry{((M)^{\heads})} = \uncurry{\big(\big(e^{\heads}\, ()\big) ;\,  M\big)}  = (\uncurry{M})[y_2/y]$.
In general, such optimisations can be formally justified by a \emph{selective} extension of our effectful SPS transformation, for which a corresponding variant in CPS transformations has been extensively studied in~\cite{SatoUK13,KuraUnno2024,Nielsen01}.
  While implementing such automated optimisations remains as future work, we believe that this is feasible by following a procedure broadly similar to that employed in~\cite{SatoUK13}. 
\end{remark}

By~\cref{prop:liftingCorrectness,thm:extendedMain}, we immediately obtain the following result:

\begin{corollary}
  Given $\Gamma\vdash M\colon \bt$ such that all types in $\Gamma$ and $\bt$ are ground types, the temporal verification problem (LHS) is reduced to the computation of a weakest pre-condition (RHS) as follows:
  \begin{equation}
    \label{eq:correctProbAccTrace}
    \Big(q^{\comProbPowerdommonad}\circ \big(\wpcond{M}{\tau^{\comProbPowerdommonad(\_\times A^{\ast})}}{Q}{\mathcal{A}}\big)\Big)^{\dagger}= \wpcond{\uncurry{M}}{\tau^{\comProbPowerdommonad}}{(q^{\comProbPowerdommonad}\circ Q)^{\dagger}}{\uncurry{\beta^d(\mathcal{A})}},
  \end{equation}
  where $Q\colon \itp{\mathcal{A}}{\bt}{\comProbPowerdommonad(\_\times A^{\ast})}\rightarrow \comProbPowerdommonad(A^{\ast})$ is the 
  constant function assigning the Dirac valuation $\eta_{A^{\ast}}(\epsilon)$, 
  and $\beta^d$ is the strong monad morphism constructed by the product situation. \qed
\end{corollary}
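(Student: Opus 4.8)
The plan is to obtain \cref{eq:correctProbAccTrace} by assembling the two main theorems: we first synchronise via \cref{thm:correctProd} and then apply the SPS correctness theorem \cref{thm:mainwithoutrecursion}, both in the recursion-enabled form guaranteed by \cref{thm:extendedMain}, feeding them the data manufactured by \cref{prop:liftingCorrectness} from the product situation. So the first step is to record that the stated tuple $(\comProbPowerdommonad, (A^{\ast}, =), (Y, =), \tau^{\_\times A^{\ast}}, \tau^{\comProbPowerdommonad}, \alpha^d, q)$ is indeed a product situation: $\comProbPowerdommonad$ is a commutative—hence strong—monad and, on $\dCPO$, a pseudo-lifting strong monad (this is what makes the recursion-enabled theorems applicable); $\_\times A^{\ast}$ is the writer monad of the free monoid $A^{\ast}$; $\tau^{\_\times A^{\ast}}$ and $\tau^{\comProbPowerdommonad}$ are Eilenberg–Moore algebras; $\alpha^d$ is a strong monad morphism by \cref{def:monadMorDFA} (shown in \cref{subsec:omitproofEx1}); and $q\colon A^{\ast}\to[0,1]^{Y}$ is an inference query for $\alpha^d$, $\tau^{\_\times A^{\ast}}$ and $\tau^{(\_\times Y)^Y}=(\ev{[0,1]}{Y})^Y$. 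This last point is the only verification specific to this instance: unwinding \cref{def:inferenceMap} it reduces to the run-decomposition property of the DFA semantics $\widehat d$ (the run on a concatenated word factors as the run from the state reached after the first factor) together with the definition of acceptance, and is proved by a straightforward induction on words.

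Given the product situation, \cref{prop:liftingCorrectness} produces exactly the data required by \cref{thm:correctProd}: the strong monads $T\defeq\comProbPowerdommonad(\_\times A^{\ast})$ and $U\defeq\comProbPowerdommonad(\_\times Y)^Y$, the strong monad morphism $\beta^d\colon T\Rightarrow U$ with $\beta^d_X=\natu{\comProbPowerdommonad,Y}_{X\times Y}\circ\comProbPowerdommonad(\alpha^d_X)$, the Eilenberg–Moore algebras $\tau^{\comProbPowerdommonad(\_\times A^{\ast})}$ and $\tau^{\comProbPowerdommonad(\_\times Y)^Y}$, and the inference query $q^{\comProbPowerdommonad}\colon\comProbPowerdommonad(A^{\ast})\to[0,1]^{Y}$. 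Since every type in $\Gamma$ and $\bt$ is ground, the recursion-enabled form of \cref{thm:correctProd} (via \cref{thm:extendedMain}), applied with the Dirac post-condition $Q$, gives
\begin{equation*}
  q^{\comProbPowerdommonad}\circ\big(\wpcond{M}{\tau^{\comProbPowerdommonad(\_\times A^{\ast})}}{Q}{\mathcal{A}}\big)
  \;=\; \wpcond{M}{\tau^{\comProbPowerdommonad(\_\times Y)^Y}}{q^{\comProbPowerdommonad}\circ Q}{\mapLambdac{\mathcal{A}}{\beta^d}}.
\end{equation*}

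For the second step I would apply the recursion-enabled form of \cref{thm:mainwithoutrecursion} (again via \cref{thm:extendedMain}) to the $\lambda_c$-structure $\mapLambdac{\mathcal{A}}{\beta^d}=\beta^d(\mathcal{A})$, whose monad is $S=\comProbPowerdommonad(\_\times Y)^Y=T(\_\times Y)^Y$ with $T=\comProbPowerdommonad$ and $\tau^T=\tau^{\comProbPowerdommonad}$, and with post-condition $q^{\comProbPowerdommonad}\circ Q\colon\itp{\mathcal{A}}{\bt}{S}\to[0,1]^{Y}$; one uses here that the algebra $\tau^{\comProbPowerdommonad(\_\times Y)^Y}$ synthesised by \cref{prop:liftingCorrectness} is, by construction, precisely the $\tau^{T(\_\times Y)^Y}$ that \cref{thm:mainwithoutrecursion} consumes. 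Because $\Gamma$ and $\bt$ are ground, the isomorphisms $\rho_{\Gamma}$ and $\rho_{\bt}$ of \cref{def:rho} are identities, so \cref{thm:mainwithoutrecursion} (in the form of its ground-type specialisation) yields
\begin{equation*}
  \big(\wpcond{M}{\tau^{\comProbPowerdommonad(\_\times Y)^Y}}{q^{\comProbPowerdommonad}\circ Q}{\mapLambdac{\mathcal{A}}{\beta^d}}\big)^{\dagger}
  \;=\; \wpcond{\uncurry{M}}{\tau^{\comProbPowerdommonad}}{(q^{\comProbPowerdommonad}\circ Q)^{\dagger}}{\uncurry{\mapLambdac{\mathcal{A}}{\beta^d}}}.
\end{equation*}
Applying $(\_)^{\dagger}$ to the first displayed equation, chaining with the second, and identifying $\uncurry{\mapLambdac{\mathcal{A}}{\beta^d}}=\uncurry{\beta^d(\mathcal{A})}$ then delivers \cref{eq:correctProbAccTrace}. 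I do not expect a genuine obstacle here: all the heavy lifting is already discharged by the cited theorems, and the only work left is the inference-query check for $q$ described above and the purely bookkeeping confirmation that the monad structures and Eilenberg–Moore algebras handed from \cref{prop:liftingCorrectness} to \cref{thm:correctProd,thm:mainwithoutrecursion} agree on the nose.
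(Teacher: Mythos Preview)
Your proposal is correct and follows precisely the paper's approach: the corollary is stated as an immediate consequence of \cref{prop:liftingCorrectness} and \cref{thm:extendedMain}, and you have simply spelled out the two-step chaining (synchronisation via \cref{thm:correctProd}, then SPS correctness via \cref{thm:mainwithoutrecursion}, both in their recursion-enabled forms) that the paper leaves implicit. The only instance-specific verification---that the tuple is a product situation, in particular that $q$ is an inference query---is handled in the appendix (\cref{subsec:omitproofEx1}) exactly as you outline.
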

By expanding the definition, the post-condition $(q^{\comProbPowerdommonad}\circ Q)^{\dagger}\colon \itp{\mathcal{A}}{\bt}{\comProbPowerdommonad}\times \itp{\mathcal{A}}{b^Y}{\comProbPowerdommonad}\rightarrow [0, 1]$ is given by $(q^{\comProbPowerdommonad}\circ Q)^{\dagger}(x, y) \defeq  1$ if $y$ is an accepting state, and  $(q^{\comProbPowerdommonad}\circ Q)^{\dagger}(x, y) \defeq  0$ otherwise, for any $x\in \itp{\mathcal{A}}{\bt}{\comProbPowerdommonad}$ and $y\in \itp{\mathcal{A}}{b^Y}{\comProbPowerdommonad} \ (=Y)$. 
Importantly, the (RHS) of~\cref{eq:correctProbAccTrace} is the standard weakest pre-condition for reachability probabilities, for which an automated solver is available~\cite{KuraUnno2024} to compute over approximations. We elaborate on this in~\cref{sec:experiments}.

\subsection{Partial Expected Rewards of Accepting Traces}
\label{subsec:expectedRewardAcceptingTrace}
Next, we incorporate rewards into probabilistic programs and consider \emph{partial} expected rewards of accepting traces. 
Here, ``partial'' indicates that the expectation is taken only over accepting traces; see e.g.~\cite{Baier0KW17,WatanabeJRH25}.

To formalize this setting, we define a product situation 
\begin{displaymath}
(\comProbPowerdommonad(\_\times \nonnegrat), (A^{\ast}, =), (Y, =), \tau^{\_ \times A^\ast}, \tau^{\comProbPowerdommonad(\_\times \nonnegrat)}, \alpha^d, q)
\end{displaymath}
 on $\dCPO$ where:
  \begin{itemize}
    \item $\comProbPowerdommonad(\_\times \nonnegrat)$ is the composite strong monad of the commutative probabilistic monad $\comProbPowerdommonad$ and the writer monad $\_\times \nonnegrat$ given by the monoid structure $(\nonnegrat, +, 0)$ with the discrete order. 
    Also, $\_\times A^{\ast}$ is the strong monad induced by the monoid structure on $A^\ast$. 
    \item $\tau^{\_ \times A^{\ast}}\colon A^{\ast}\times A^{\ast}\rightarrow A^{\ast}$ is the multiplication of $\_\times A^{\ast}$, and $\tau^{\comProbPowerdommonad(\_\times \nonnegrat)}\colon \comProbPowerdommonad([0, 1]\times [0, \infty]\times \nonnegrat)\rightarrow [0, 1]\times [0, \infty]$ is the pair of expectations $\tau^{\comProbPowerdommonad}(\nu)\defeq \big(\int_{p, r, n}  p d\nu, \int_{p, r, n} (p\cdot n + r) d\nu\big)$, where $(p, r, n)$ moves in $[0, 1]\times [0, \infty]\times \nonnegrat$, and the order in $[0, 1]\times [0, \infty]$ is the product of the standard total orders.
    \item $q$ is an inference query $q\colon A^{\ast}\rightarrow ([0, 1]\times [0, \infty])^{Y}$ defined by $q(w)(y) \defeq (1, 0)$ if $w$ is accepting in $d$ from $y$, and  $q(w)(y) \defeq (0, 0)$ otherwise.
  \end{itemize}
See~\cref{sec:omitProofPartialExp} for the proof that shows the data above is indeed a product situation. 
The inference query $q$ may appear unintuitive, as it returns a pair $(p, r) \in [0, 1] \times [0, \infty]$ for each input $(w, y)$.
This is because, in order to compute the partial expected reward iteratively, it is necessary to track not only the expected reward itself but also the reachability probability (see e.g.~\cite{WatanabeJRH25}).
Given this product situation, we formulate the target problem as follows.
\vspace{0.1cm}
\begin{mdframed}
{\bf Problem: Partial Expected Rewards of Accepting Traces.}
 \quad
\noindent 
Let $\Gamma\vdash M\colon \bt$ be a $\comProbPowerdommonad(\_\times A^{\ast}\times \nonnegrat)$-effectful well-typed term  such that all types in $\Gamma$ and $\bt$ are ground types.
Given $x\in \itp{\mathcal{A}}{\Gamma}{\comProbPowerdommonad(\_\times A^{\ast}\times \nonnegrat)}$ and an initial state $y\in Y$, compute:
\begin{align*}
  \Big(\pi_2\circ q^{\comProbPowerdommonad(\_\times\nonnegrat)}\circ \big(\wpcond{M}{\tau^{\comProbPowerdommonad(\_\times A^{\ast}\times \nonnegrat)}}{Q}{\mathcal{A}}\big)\Big)(x)(y)\in [0, \infty],
\end{align*}
where $Q\colon \itp{\mathcal{A}}{\bt}{\comProbPowerdommonad(\_\times A^{\ast}\times \nonnegrat)}\rightarrow \comProbPowerdommonad(A^{\ast}\times \nonnegrat)$ is the constant function assigning the Dirac valuation $\eta_{A^{\ast}\times \nonnegrat}(\epsilon, 0)$.
Note that the weakest pre-condition $\big(\wpcond{M}{\tau^{\comProbPowerdommonad(\_\times A^{\ast}\times \nonnegrat)}}{Q}{\mathcal{A}}\big)(x)\in \comProbPowerdommonad(A^{\ast}\times \nonnegrat)$ computes the subdistribution of traces and their cumulative rewards,
and $q^{\comProbPowerdommonad(\_\times\nonnegrat)}\colon\comProbPowerdommonad(A^{\ast}\times \nonnegrat)\rightarrow ([0, 1]\times [0, \infty])^{Y}$ computes the pair of the probability and (partial) expected cumulative reward of accepting traces.
\end{mdframed}

\begin{example}
  \label{ex:rewrdFO}
  Consider the following program: 
  \begin{equation*}
    \letrec{\gainreward}{x}{\exprobbranch{\big(\exprobbranch{{(\gainreward\ ())}^{\fail}}{1/2}{1}{{(\gainreward\ ())}^{\success}}\big)}{3/4}{0}{()}}{\gainreward\ ()},
  \end{equation*}
  where the generic effect $\exprobbranch{\_}{p}{r}{\_} \colon \unit \rightarrow \unit + \unit$ models the probabilistic branch that selects each branch with the probability $p$ and $1 - p$, respectively, while accumulating a reward of $r$ regardless of the chosen branch.
  The alphabet is $\{\fail, \success\}$.
  We then require that the program terminates without encountering any failures, denoted by $\fail$. This specification is illustrated as follows:
  \begin{equation*}
  \label{eq:exProbSpec}
    \begin{tikzpicture}
          \node[state, initial, accepting] (y1) at (0, 0) {\tiny $y_1$};
          \node[state] (y2) at (2, 0) {\tiny $y_2$};
          \draw[->] (y1) to node[pos=0.5, inner sep=3pt, above] {$\fail$} (y2);
          \draw[->] (y1) edge [loop above] node[pos=0.5, inner sep=3pt, above] {$\success$} (y1);
          \draw[->] (y2) edge [loop above] node[pos=0.5, inner sep=3pt, above] {$\fail, \success$} (y2);
    \end{tikzpicture}
  \end{equation*}
  We then have the following term by the SPS-transformation.
  \begin{equation*}
    \letrec{\gainreward}{x\ y}{\exprobbranch{\big(\exprobbranch{\gainreward\ ((), y_2)}{1/2}{1}{\gainreward\ ((), y)}\big)}{3/4}{0}{((), y )}}{\gainreward\ ((), y)}.
  \end{equation*}
\end{example}

We immediately obtain the following corollary:
\begin{corollary}
  Given  $\Gamma\vdash M\colon \bt$ such that all types in $\Gamma$ and $\bt$ are ground types, the temporal verification problem (LHS) is reduced to the computation of a weakest pre-condition (RHS) as follows:
  \begin{align*}
    &\Big(q^{\comProbPowerdommonad(\_\times \nonnegrat)}\circ \big(\wpcond{M}{\tau^{\comProbPowerdommonad(\_\times A^{\ast}\times \nonnegrat)}}{Q}{\mathcal{A}}\big)\Big)^{\dagger}\\
     = &\wpcond{\uncurry{M}}{\tau^{\comProbPowerdommonad(\_\times \nonnegrat)}}{(q^{\comProbPowerdommonad(\_\times \nonnegrat)}\circ Q)^{\dagger}}{\uncurry{\beta^d(\mathcal{A})}},
  \end{align*}
  where $Q\colon \itp{\mathcal{A}}{\bt}{\comProbPowerdommonad(\_\times \nonnegrat)}\rightarrow \comProbPowerdommonad(A^{\ast}\times \nonnegrat)$ is the constant function assigning the Dirac valuation $\eta_{A^{\ast}\times \nonnegrat}(\epsilon, 0)$,
  and $\beta^d$ is the strong monad morphism constructed by the product situation. \qed
\end{corollary}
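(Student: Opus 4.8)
This corollary is a direct instantiation of the generic framework: it follows by combining \cref{prop:liftingCorrectness} with the recursion-enabled correctness results of \cref{thm:extendedMain}, exactly as in the preceding corollary on reachability probabilities. First I would confirm that the stated tuple $(\comProbPowerdommonad(\_\times \nonnegrat), (A^{\ast}, =), (Y, =), \tau^{\_ \times A^\ast}, \tau^{\comProbPowerdommonad(\_\times \nonnegrat)}, \alpha^d, q)$ really is a product situation --- i.e. $\alpha^d$ is a strong monad morphism (by \cref{def:monadMorDFA} and \cref{subsec:omitproofEx1}), $\tau^{\_\times A^\ast}$ and $\tau^{\comProbPowerdommonad(\_\times\nonnegrat)}$ are Eilenberg--Moore algebras, and $q$ is an inference query for $\alpha^d$, $\tau^{\_\times A^\ast}$ and $\tau^{(\_\times Y)^Y}$ --- which is carried out in \cref{sec:omitProofPartialExp}.

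Granting this, \cref{prop:liftingCorrectness}, applied with base monad $T \coloneqq \comProbPowerdommonad(\_\times\nonnegrat)$ and writer object $Z \coloneqq A^\ast$, manufactures precisely the hypotheses of \cref{thm:correctProd}: the composite strong monads $\comProbPowerdommonad(\_\times\nonnegrat)(\_\times A^\ast)$ and $\comProbPowerdommonad(\_\times\nonnegrat)(\_\times Y)^Y$, the strong monad morphism $\beta^d$, the Eilenberg--Moore algebras $\tau^{\comProbPowerdommonad(\_\times\nonnegrat)(\_\times A^\ast)}$ and $\tau^{\comProbPowerdommonad(\_\times\nonnegrat)(\_\times Y)^Y}$, and the inference query $q^{\comProbPowerdommonad(\_\times\nonnegrat)}$. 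Along the way I would identify $\comProbPowerdommonad(\_\times\nonnegrat)(\_\times A^\ast)$ with $\comProbPowerdommonad(\_\times A^\ast\times\nonnegrat)$ via the canonical strong-monad isomorphism induced by associativity and commutativity of the two writer (commutative-monoid) monads $\_\times A^\ast$ and $\_\times\nonnegrat$ through $\comProbPowerdommonad$, checking that it carries $\tau^{\comProbPowerdommonad(\_\times\nonnegrat)(\_\times A^\ast)}$ to $\tau^{\comProbPowerdommonad(\_\times A^\ast\times\nonnegrat)}$ and fixes $Q$ (the constant Dirac valuation at $(\epsilon,0)$). Since all types in $\Gamma$ and $\bt$ are ground and synchronisation does not alter the syntax of $M$, the recursion-enabled form of \cref{thm:correctProd} (valid by \cref{thm:extendedMain}) then gives
\[
q^{\comProbPowerdommonad(\_\times\nonnegrat)}\circ\big(\wpcond{M}{\tau^{\comProbPowerdommonad(\_\times A^\ast\times\nonnegrat)}}{Q}{\mathcal{A}}\big) \;=\; \wpcond{M}{\tau^{\comProbPowerdommonad(\_\times\nonnegrat)(\_\times Y)^Y}}{q^{\comProbPowerdommonad(\_\times\nonnegrat)}\circ Q}{\beta^d(\mathcal{A})}.
\]

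Next I would invoke the SPS-transformation correctness. Because all types involved are ground, the recursion-enabled form of \cref{thm:mainwithoutrecursion} --- again available by \cref{thm:extendedMain}, specialised through \cref{eq:thmuncurry} with base monad $\comProbPowerdommonad(\_\times\nonnegrat)$, structure $\beta^d(\mathcal{A})$ and post-condition $q^{\comProbPowerdommonad(\_\times\nonnegrat)}\circ Q$ --- yields
\[
\big(\wpcond{M}{\tau^{\comProbPowerdommonad(\_\times\nonnegrat)(\_\times Y)^Y}}{q^{\comProbPowerdommonad(\_\times\nonnegrat)}\circ Q}{\beta^d(\mathcal{A})}\big)^{\dagger} \;=\; \wpcond{\uncurry{M}}{\tau^{\comProbPowerdommonad(\_\times\nonnegrat)}}{(q^{\comProbPowerdommonad(\_\times\nonnegrat)}\circ Q)^{\dagger}}{\uncurry{\beta^d(\mathcal{A})}},
\]
where $\uncurry{M}$ is the SPS transform of the (syntactically unchanged) synchronised term. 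Applying the transpose $(\_)^{\dagger}$ to the first display and chaining with the second immediately produces the claimed equality, since the left-hand side of the corollary is exactly $\big(q^{\comProbPowerdommonad(\_\times\nonnegrat)}\circ\wpcond{M}{\tau^{\comProbPowerdommonad(\_\times A^\ast\times\nonnegrat)}}{Q}{\mathcal{A}}\big)^{\dagger}$.

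The only genuinely non-routine ingredient is the product-situation verification deferred to \cref{sec:omitProofPartialExp}: checking that $q(w)(y)$, equal to $(1,0)$ when $w$ is accepting from $y$ and to $(0,0)$ otherwise, is an inference query for the pair-of-expectations algebra $\tau^{\comProbPowerdommonad(\_\times\nonnegrat)}(\nu) = \big(\int_{p,r,n} p\, d\nu,\ \int_{p,r,n}(p\cdot n + r)\, d\nu\big)$. This is where the design of $q$ as a \emph{pair}-valued map is forced: the compatibility square closes only if the reachability probability is propagated alongside the expected reward, since it enters linearly into the reward update $p\cdot n + r$ when the writer monoid $\_\times\nonnegrat$ accumulates. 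A secondary, but purely bookkeeping, point is to confirm that the two writer monads commute through $\comProbPowerdommonad$ compatibly with all the Eilenberg--Moore algebras and with $Q$, so that the identification used above is legitimate; everything else is a mechanical transcription of \cref{prop:liftingCorrectness} and \cref{thm:extendedMain}.
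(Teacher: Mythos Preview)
Your proposal is correct and follows exactly the paper's approach: the corollary is obtained immediately from \cref{prop:liftingCorrectness} and \cref{thm:extendedMain}, just as in the analogous corollary for reachability probabilities. You spell out more detail than the paper (which simply ends the statement with \qed), including the bookkeeping identification of $\comProbPowerdommonad(\_\times\nonnegrat)(\_\times A^\ast)$ with $\comProbPowerdommonad(\_\times A^\ast\times\nonnegrat)$, but the argument is the same.
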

We note that the post condition $(q^{\comProbPowerdommonad(\_\times  \nonnegrat)}\circ Q)^{\dagger}\colon \itp{\mathcal{A}}{\bt}{\comProbPowerdommonad(\_\times \nonnegrat)}\times \itp{\mathcal{A}}{b^Y}{\comProbPowerdommonad(\_\times \nonnegrat)}\rightarrow [0, 1]\times [0,\infty]$ maps $(x, y)$ to $(1, 0)$
if $y$ is an accepting state, and $(0, 0)$ otherwise. 
We verify two examples, including~\cref{ex:rewrdFO}, for partial expected rewards with the solver~\cite{KuraUnno2024} in~\cref{sec:experiments}.

\subsection{Emptiness Checking of Accepting Traces}
\label{subsec:emptinessCheckingAcceptingTraces}
We then move on to non-deterministic programs: we consider angelic non-determinism that tries to satisfy the specification, which is reduced to the emptiness checking of accepting traces. 
We employ a \emph{lower monad} $\hoaremonad$, which is an extended Hoare powerdomain monad including the emptyset (see e.g.~\cite{AbramskyJ95}), to model an angelic non-determinism on the category $\dCPO$ of dcpos; see~\cref{subsec:preliminaries} for the definition. 

We define a product situation $(\hoaremonad, (A^{\ast}, =), (Y, =), \tau^{\_ \times A^\ast}, \tau^{\hoaremonad}, \alpha^d, q)$ on $\dCPO$ as follows:
  \begin{itemize}
    \item $\hoaremonad$ is the lower monad, and $\_\times A^{\ast}$ is the strong writer monad induced by the monoid structure on $A^\ast$.
    \item $\tau^{\_ \times A^{\ast}}\colon A^{\ast}\times A^{\ast}\rightarrow A^{\ast}$ is the multiplication of $\_\times A^{\ast}$, and $\tau^{\hoaremonad}\colon \hoaremonad(\boolsets)\rightarrow\boolsets$ is the disjunction $\tau^{\hoaremonad}(S)\defeq \bigvee S$, where the order on $\boolsets$ is the partial order $\preceq$ such that $\bot \prec \top$.
    \item $q$ is an inference query $q\colon A^{\ast}\rightarrow \boolsets^{Y}$ defined by $q(w)(y) \defeq \top$ if $w$ is accepting in $d$ from $y$, and  $q(w)(y) \defeq \bot$ otherwise.
  \end{itemize}
See~\cref{subsection:omittedProofsEmptinessCheck} for the details. We state the emptiness checking  as follows: 
\vspace{0.1cm}
\begin{mdframed}
{\bf Problem: Emptiness Checking of Accepting Traces.}
 \quad
\noindent 
Let $\Gamma\vdash M\colon \bt$ be a $\hoaremonad(\_\times A^{\ast})$-effectful well-typed term  such that all types in $\Gamma$ and $\bt$ are ground types.
Given $x\in \itp{\mathcal{A}}{\Gamma}{\hoaremonad(\_\times A^{\ast})}$ and an initial state $y\in Y$, compute:
\begin{align*}
  \Big(q^{\hoaremonad}\circ \big(\wpcond{M}{\tau^{\hoaremonad(\_\times A^{\ast})}}{Q}{\mathcal{A}}\big)\Big)(x)(y)\in \boolsets,
\end{align*}
where  $Q\colon \itp{\mathcal{A}}{\bt}{\hoaremonad(\_\times A^{\ast})}\rightarrow \hoaremonad(A^{\ast})$ is the constant function assigning the singleton $\{\epsilon\}$. Note that the weakest pre-condition $\big(\wpcond{M}{\tau^{\hoaremonad(\_\times A^{\ast})}}{Q}{\mathcal{A}}\big)(x)\in \hoaremonad(A^{\ast})$ computes accepting traces,
and $q^{\hoaremonad}\colon\hoaremonad(A^{\ast})\rightarrow \boolsets^{Y}$ checks whether there is an accepting trace.
\end{mdframed}
\begin{example}
  Consider the following non-deterministic program. It continues either writing a file and closing it (by raising $\writech\cdot \closech$), or just writing a file (by raising $\writech$) until it terminates (by raising $\terminatech$).   
  These behaviours are non-deterministic, induced by the generic effect $\probbranch{}{}{}$. 
\begin{equation}
  \label{eq:exEmptinessProg}
  \letrec{\nefunc}{x}{\probbranch{\big(\probbranch{(\nefunc\ ())^{\writech\cdot \closech}}{}{(\nefunc\ ())^{\writech}}\big)}{}{()^{\terminatech}}}{\nefunc\ ()}
\end{equation}
Now we impose the specification to detect a \emph{bad} behaviour, writing files continuously without closing the former one. 
This can be easily described by a DFA, and the emptiness checking asks that whether there is a \emph{bad} finite trace, that is,  a finite trace that is accepted by the DFA, which is indeed the case for the example (e.g. $\writech\cdot \writech\cdot \terminatech$ is a bad trace).   
\end{example}

Similarly, we obtain the following corollary. 
\begin{corollary}
  Given $\Gamma\vdash M\colon \bt$ such that all types in $\Gamma$ and $\bt$ are ground types,  the temporal verification problem (LHS) is reduced to the computation of a weakest pre-condition (RHS) as follows:
  \begin{align*}
   \Big( q^{\hoaremonad}\circ \big(\wpcond{M}{\tau^{\hoaremonad(\_\times A^{\ast})}}{Q}{\mathcal{A}}\big)\Big)^{\dagger} = \wpcond{\uncurry{M}}{\tau^{\hoaremonad}}{(q^{\hoaremonad}\circ Q)^{\dagger}}{\uncurry{\beta^d(\mathcal{A})}},
  \end{align*}
  where $Q\colon \itp{\mathcal{A}}{\bt}{\hoaremonad(\_\times A^{\ast})}\rightarrow \hoaremonad(A^{\ast})$ is the constant function assigning the singleton $\{\epsilon\}$, 
  and $\beta^d$ is the strong monad morphism constructed by the product situation.
  \qed
\end{corollary}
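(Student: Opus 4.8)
The plan is to read this corollary off \cref{prop:liftingCorrectness} together with the recursion-extended correctness results collected in \cref{thm:extendedMain}; the only real work is to certify that the displayed tuple $(\hoaremonad, (\fstrings,=), (Y,=), \tau^{\_\times\fstrings}, \tau^{\hoaremonad}, \alpha^d, q)$ is a product situation. Most conditions are routine: $\hoaremonad$ is a strong monad and $\_\times\fstrings$ is the writer monad from the free-monoid structure on $\fstrings$ (see \cref{subsec:preliminaries}); $\tau^{\_\times\fstrings}$ is the monoid multiplication, hence literally the multiplication component of the writer monad at $\fstrings$ and so a (free) Eilenberg--Moore algebra; and $\tau^{\hoaremonad}(S)\defeq\bigvee S$ is Scott-continuous (it preserves all joins on the complete lattice $\boolsets$) and satisfies both Eilenberg--Moore laws because on $\boolsets$ the unit of $\hoaremonad$ is the singleton map and its multiplication is union, giving $\bigvee\{b\}=b$ and $\bigvee\bigcup_i S_i=\bigvee_i\bigvee S_i$. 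That $\alpha^d\colon\_\times\fstrings\Rightarrow(\_\times Y)^Y$ is a strong monad morphism is \cref{def:monadMorDFA}, proved in \cref{subsec:omitproofEx1}.

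The slightly more substantive point is that $q\colon\fstrings\to\boolsets^Y$ is an inference query, i.e.\ that $q\circ\tau^{\_\times\fstrings}=\tau^{(\_\times Y)^Y}\circ(\_\times Y)^Y(q)\circ\alpha^d_{\fstrings}$ with $\tau^{(\_\times Y)^Y}=(\ev{\boolsets}{Y})^Y$. First I would prove by induction on $w\in\fstrings$ that the coalgebraic-DFA semantics satisfies $\widehat{d}(y)(w\cdot w')=\widehat{d}\bigl(\widehat{d}(y)(w)\bigr)(w')$, i.e.\ the state reached after reading $w\cdot w'$ from $y$ is the state reached by reading $w'$ from the state reached after $w$. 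Hence $w\cdot w'$ is accepting from $y$ iff $w'$ is accepting from $\widehat{d}(y)(w)$, that is $q(w\cdot w')(y)=q(w')\bigl(\widehat{d}(y)(w)\bigr)$. Unfolding the right-hand side of the query equation then matches: $\alpha^d_{\fstrings}(w,w')$ sends $y$ to $(w',\widehat{d}(y)(w))$, $(\_\times Y)^Y(q)$ turns this into $y\mapsto(q(w'),\widehat{d}(y)(w))$, and $(\ev{\boolsets}{Y})^Y$ evaluates to $y\mapsto q(w')\bigl(\widehat{d}(y)(w)\bigr)$, which is $q(w\cdot w')(y)$. This completes the verification of the product situation.

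With the product situation established, \cref{prop:liftingCorrectness} supplies the strong monad morphism $\beta^d\colon\hoaremonad(\_\times\fstrings)\Rightarrow\hoaremonad(\_\times Y)^Y$, the Eilenberg--Moore algebras $\tau^{\hoaremonad(\_\times\fstrings)}$ and $\tau^{\hoaremonad(\_\times Y)^Y}$, and the inference query $q^{\hoaremonad}\colon\hoaremonad(\fstrings)\to\boolsets^Y$, all satisfying the hypotheses of \cref{thm:correctProd}. Applying the recursion-extended \cref{thm:correctProd} (via \cref{thm:extendedMain}) with $\alpha\defeq\beta^d$ and $Q$ the constant function to $\{\epsilon\}$ gives $q^{\hoaremonad}\circ\bigl(\wpcond{M}{\tau^{\hoaremonad(\_\times\fstrings)}}{Q}{\mathcal{A}}\bigr)=\wpcond{M}{\tau^{\hoaremonad(\_\times Y)^Y}}{q^{\hoaremonad}\circ Q}{\mapLambdac{\mathcal{A}}{\beta^d}}$. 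Since all types in $\Gamma$ and $\bt$ are ground, the isomorphisms $\rho_{\Gamma},\rho_{\bt}$ of \cref{def:rho} are identities, so \cref{eq:thmuncurry} applies — again via the recursion-extended \cref{thm:mainwithoutrecursion} of \cref{thm:extendedMain} — to the $\lambda_c(\Sigma)$-structure $\mapLambdac{\mathcal{A}}{\beta^d}$, whose monad is $\hoaremonad(\_\times Y)^Y=T(\_\times Y)^Y$ for $T\defeq\hoaremonad$, with $\tau^T\defeq\tau^{\hoaremonad}$ and post-condition $q^{\hoaremonad}\circ Q$; this yields $\bigl(\wpcond{M}{\tau^{\hoaremonad(\_\times Y)^Y}}{q^{\hoaremonad}\circ Q}{\mapLambdac{\mathcal{A}}{\beta^d}}\bigr)^{\dagger}=\wpcond{\uncurry{M}}{\tau^{\hoaremonad}}{(q^{\hoaremonad}\circ Q)^{\dagger}}{\uncurry{\beta^d(\mathcal{A})}}$. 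Transposing the previous equation and chaining it with this one gives exactly $\bigl(q^{\hoaremonad}\circ(\wpcond{M}{\tau^{\hoaremonad(\_\times\fstrings)}}{Q}{\mathcal{A}})\bigr)^{\dagger}=\wpcond{\uncurry{M}}{\tau^{\hoaremonad}}{(q^{\hoaremonad}\circ Q)^{\dagger}}{\uncurry{\beta^d(\mathcal{A})}}$, the claimed identity, and unfolding $(q^{\hoaremonad}\circ Q)^{\dagger}$ gives the pointwise description stated after the corollary.

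I expect the main obstacle to be bookkeeping rather than anything conceptual: one must check that the Eilenberg--Moore algebra $\tau^{\hoaremonad(\_\times Y)^Y}$ produced by \cref{prop:liftingCorrectness} is literally the one \cref{thm:mainwithoutrecursion} takes as $\tau^{T(\_\times Y)^Y}$, that the domain-theoretic ingredients over $\dCPO$ are in order (Scott-continuity of $\bigvee$, of $\alpha^d_X$, and of the derived algebras $\tau^{\hoaremonad(\_\times\fstrings)}$ and $q^{\hoaremonad}$, which hold because $\fstrings$ and $Y$ carry the discrete order), and that the transposes $(\_)^{\dagger}$ line up correctly across the synchronisation and SPS-transformation correctness theorems. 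None of these is difficult, but they are where care is needed.
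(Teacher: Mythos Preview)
Your proposal is correct and follows exactly the paper's approach: the corollary is obtained by verifying the product situation (the only nontrivial part being that $q$ is an inference query, which reduces to the DFA-semantics identity $\widehat d(y)(w\cdot w')=\widehat d(\widehat d(y)(w))(w')$, as in \cref{subsection:omittedProofsEmptinessCheck}), then invoking \cref{prop:liftingCorrectness} and the recursion-extended \cref{thm:correctProd,thm:mainwithoutrecursion} from \cref{thm:extendedMain}. One cosmetic remark: your labelling of the two $A^\ast$ arguments in $\alpha^d_{\fstrings}(w,w')$ is swapped relative to the paper's convention $\alpha^d_X(x,w)(y)=(x,\widehat d(y)(w))$, but since you use the swap consistently (also in $\tau^{\_\times A^\ast}$) the computation is internally correct and the conclusion unaffected.
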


\subsection{Maximising Rewards of Accepting Traces}
\label{subsec:maximumRewardsAcceptingTraces}
Lastly, we present a temporal verification problem with a different specification, called \emph{reward machine}. 
Reward machine has been used for assigning rewards on traces directly, which is particularly useful in a black-box setting (assuming we can only observe traces, not states) that is common in reinforcement learning; see e.g.~\cite{IcarteKVM22,KopruluT23,AlurBBJ22}.
For this, we introduce \emph{coalgebraic reward machines} and their semantics.

\begin{definition}[coalgebraic reward machine]
  A \emph{coalgebraic reward machine} $d$ is a function $d\colon U\rightarrow (U\times \boolsets\times\nonnegreal )^A$, where
  $U$ is a finite set of \emph{states}, $A$ is a finite set of \emph{characters}, $\boolsets$ is the set of Booleans $\{\top, \bot\}$, and $\nonnegreal$ is the set of non-negative real numbers.
\end{definition}
We often regard $d$ as a morphism in $\dCPO$ with the discrete ordering. 

\begin{definition}[semantics of coalgebraic reward machine]
  Given a coalgebraic reward machine $d\colon U\rightarrow (U\times \boolsets\times \nonnegreal)^A$,
  we recursively define a \emph{coalgebra} $\widehat{d}\colon U\times \boolsets\times \nonnegreal\rightarrow (U\times \boolsets\times \nonnegreal)^{A^{\ast}}$ by the run of $d$, that is, for each $w\in A^{\ast}$, $u\in U$, $b\in \boolsets$, and $m\in \nonnegreal$,
  \begin{align*}
    \widehat{d}(u,b, m)(w) \defeq \begin{cases*}
      (u, b, m) &\text{ if  $w = \epsilon$, }\\
      \widehat{d}\Big(\pi_1\big(d(u)(a)\big), \pi_2\big(d(u)(a)\big)+m\Big)(w') &\text{ if $a\in A$ and  $w = a\cdot w'$,}
    \end{cases*}
  \end{align*}
  where $\pi_1$ and $\pi_2$ are the projections such that  $\pi_1\colon U\times \boolsets\times \nonnegreal\rightarrow U\times \boolsets$ and $\pi_2\colon U\times \boolsets\times \nonnegreal\rightarrow\nonnegreal$.
  We call $\widehat{d}$ \emph{semantics} of $d$.   Given $w\in A^{\ast}$ and $(u, b, m)\in U\times \boolsets$, we say that $w$ gains a reward $m'$ in $d$ from $(u, b, m)$ if $\pi_2\big(\widehat{d}(u,b,m)(w)\big) = \top$ and $\pi_3\big(\widehat{d}(u,b,m)(w)\big) = m'$.
\end{definition}

Similar to coalgebraic DFAs, the semantics induces a strong monad morphism. 
\begin{definition}[strong monad morphism induced by $d$]
  Given a coalgebraic reward machine $d\colon U\rightarrow (U\times \boolsets\times \nonnegreal)^A$,
  we define a \emph{strong monad morphism} $\alpha^d\colon \_\times A^{\ast}\Rightarrow (\_\times (U\times \boolsets\times \nonnegreal))^{U\times \boolsets\times \nonnegreal}$ as
  \begin{align*}
    \alpha^d_X(x, w)(u, b, m)\defeq \big(x, \widehat{d}(u, b, m)(w)\big), \text{ for each $X$.}
  \end{align*}
  Note that the order in $U\times \boolsets\times \nonnegreal$ is the discrete order. 
\end{definition}

With this strong monad morphism $\alpha^d$, we employ the product situation  $(\hoaremonad, (A^{\ast}, =), (U\times \boolsets\times \nonnegreal, =), \tau^{\_ \times A^\ast}, \tau^{\hoaremonad}, \alpha^d, q)$ on $\dCPO$ where:
  \begin{itemize}
    \item $\hoaremonad$ is the lower monad and $\_\times A^{\ast}$ is the strong monad induced by the monoid structure.
    \item $\tau^{\_ \times A^{\ast}}\colon A^{\ast}\times A^{\ast}\rightarrow A^{\ast}$ is the multiplication of $\_\times A^{\ast}$, and $\tau^{\hoaremonad}\colon \hoaremonad([0, \infty])\rightarrow[0, \infty]$ is the supremum $\tau^{\hoaremonad}(S)\defeq \bigvee S \ (= \sup S)$, where the order in $[0, \infty]$ is the standard total order.
    \item $q$ is the inference query $q\colon A^{\ast}\rightarrow {[0, \infty]}^{U\times \boolsets\times  \nonnegreal}$ defined by $q(w)(u, b, m) \defeq m'$ if $b'=\top$, and  $q(w)(u, b, m) \defeq 0$ otherwise, where $(u', b', m')\defeq \widehat{d}(u, b, m)(w)$.
  \end{itemize}
See~\cref{subsection:omittedProofsRewardMachine} for details. We define an optimisation problem over accepting traces as follows.
\vspace{0.1cm}
\begin{mdframed}
{\bf Problem: Maximising Rewards of Accepting Traces.}
 \quad
\noindent 
Let $\Gamma\vdash M\colon \bt$ be a $\hoaremonad(\_\times A^{\ast})$-effectful well-typed term $\Gamma\vdash M\colon \bt$ such that all types in $\Gamma$ and $\bt$ are ground types.
Given $x\in \itp{\mathcal{A}}{\Gamma}{\hoaremonad(\_\times A^{\ast})}$ and an initial state $(u, b)\in U\times \boolsets$, compute:
\begin{align*}
  \Big(q^{\hoaremonad}\circ \big(\wpcond{M}{\tau^{\hoaremonad(\_\times A^{\ast})}}{Q}{\mathcal{A}}\big)\Big)(x)(u, b, 0)\in [0, \infty],
\end{align*}
where  $Q\colon \itp{\mathcal{A}}{\bt}{\hoaremonad(\_\times A^{\ast})}\rightarrow \hoaremonad(A^{\ast})$ is the constant function assigning the singleton $\{\epsilon\}$. 
Note that the weakest pre-condition $\big(\wpcond{M}{\tau^{\hoaremonad(\_\times A^{\ast})}}{Q}{\mathcal{A}}\big)(x)\in \hoaremonad(A^{\ast})$ computes accepting traces,
and $q^{\hoaremonad}\colon\hoaremonad(A^{\ast})\rightarrow [0, \infty]^{U\times \boolsets\times \nonnegreal}$ computes the optimal reward of accepting traces.

\end{mdframed}
\vspace{0.1cm}

Again, we derive the following corollary that ensures the correctness of our reduction:

\begin{corollary}
  Given $\Gamma\vdash M\colon \bt$ such that all types in $\Gamma$ and $\bt$ are ground types, the temporal verification problem (LHS) is reduced to the computation of a weakest pre-condition (RHS) as follows:
  \begin{align*}
    \Big( q^{\hoaremonad}\circ \big(\wpcond{M}{\tau^{\hoaremonad(\_\times A^{\ast})}}{Q}{\mathcal{A}}\big)\Big)^{\dagger} = \wpcond{\uncurry{M}}{\tau^{\hoaremonad}}{(q^{\hoaremonad}\circ Q)^{\dagger}}{\uncurry{\beta^d(\mathcal{A})}},
  \end{align*}
  where $Q\colon \itp{\mathcal{A}}{\bt}{\hoaremonad(\_\times A^{\ast})}\rightarrow \hoaremonad(A^{\ast})$ is the constant function assigning the singleton $\{\epsilon\}$,
  and $\beta^d$ is the strong monad morphism constructed by the product situation.
  \qed
\end{corollary}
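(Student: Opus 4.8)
The plan is to instantiate the two-step recipe already used for the three preceding corollaries of this section — synchronisation via a product situation, then the SPS transformation — with the reward-machine data. First I would verify that the tuple $(\hoaremonad, (A^{\ast}, =), (U\times \boolsets\times \nonnegreal, =), \tau^{\_ \times A^\ast}, \tau^{\hoaremonad}, \alpha^d, q)$ displayed above is a product situation. Writing $Y \defeq U\times\boolsets\times\nonnegreal$, the obligations that are not routine are that $\alpha^d\colon \_\times A^\ast \Rightarrow (\_\times Y)^Y$ is a strong monad morphism and that $q\colon A^\ast \to [0,\infty]^Y$ is an inference query for $\alpha^d$, $\tau^{\_\times A^\ast}$, and $\tau^{(\_\times Y)^Y} = (\ev{[0,\infty]}{Y})^Y$, i.e.\ that $q\circ \tau^{\_\times A^\ast} = (\ev{[0,\infty]}{Y})^Y \circ (\_\times Y)^Y(q) \circ \alpha^d_{[0,\infty]}$ (cf.\ \cref{def:inferenceMap}). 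All objects in sight are discrete dcpos or carry the standard order on $[0,\infty]$ and $\boolsets$, so Scott-continuity is automatic, and $\hoaremonad$ on $\dCPO$ is a pseudo-lifting strong monad (its least element being $\emptyset$), so the recursion-enabled statements used below apply; this verification is exactly what \cref{subsection:omittedProofsRewardMachine} supplies, and it is the main obstacle (see the last paragraph).

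Granting the product situation, \cref{prop:liftingCorrectness} produces precisely the data required to invoke \cref{thm:correctProd}: the strong monads $\hoaremonad(\_\times A^\ast)$ and $\hoaremonad(\_\times Y)^Y$, the strong monad morphism $\beta^d\colon \hoaremonad(\_\times A^\ast)\Rightarrow \hoaremonad(\_\times Y)^Y$, the Eilenberg--Moore algebras $\tau^{\hoaremonad(\_\times A^\ast)}$ and $\tau^{\hoaremonad(\_\times Y)^Y}$, and the inference query $q^{\hoaremonad}\colon \hoaremonad(A^\ast)\to [0,\infty]^Y$. Since all types in $\Gamma$ and $\bt$ are ground, \cref{thm:correctProd} — valid in the $\lambdacfix$-setting by \cref{thm:extendedMain} — gives
\begin{equation*}
  q^{\hoaremonad}\circ \big(\wpcond{M}{\tau^{\hoaremonad(\_\times A^{\ast})}}{Q}{\mathcal{A}}\big) = \wpcond{M}{\tau^{\hoaremonad(\_\times Y)^Y}}{q^{\hoaremonad}\circ Q}{\mapLambdac{\mathcal{A}}{\beta^d}}.
\end{equation*}
Applying the transpose $(\_)^\dagger$ of $(\_\times Y)\dashv(\_)^Y$ to both sides and rewriting the right-hand side by \cref{eq:thmuncurry} of \cref{thm:mainwithoutrecursion} (again in its recursion-enabled form from \cref{thm:extendedMain}), taken with base monad $\hoaremonad$, source $\lambda_c(\Sigma)$-structure $\mapLambdac{\mathcal{A}}{\beta^d}$, and post-condition $q^{\hoaremonad}\circ Q$, yields
\begin{equation*}
  \Big(q^{\hoaremonad}\circ \big(\wpcond{M}{\tau^{\hoaremonad(\_\times A^{\ast})}}{Q}{\mathcal{A}}\big)\Big)^{\dagger} = \Big(\wpcond{M}{\tau^{\hoaremonad(\_\times Y)^Y}}{q^{\hoaremonad}\circ Q}{\mapLambdac{\mathcal{A}}{\beta^d}}\Big)^{\dagger} = \wpcond{\uncurry{M}}{\tau^{\hoaremonad}}{(q^{\hoaremonad}\circ Q)^{\dagger}}{\uncurry{\beta^d(\mathcal{A})}},
\end{equation*}
which is the asserted equality (with $\uncurry{\beta^d(\mathcal{A})}$ denoting the SPS transformation of $\mapLambdac{\mathcal{A}}{\beta^d}$); unfolding $q^{\hoaremonad}$, $Q$, and the transpose then gives the stated description of $(q^{\hoaremonad}\circ Q)^{\dagger}$, exactly as in the discussion accompanying \cref{subsec:emptinessCheckingAcceptingTraces}.

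The one place where genuine work is needed is the inference-query equation for $q$. I would prove it by induction on the length of $w\in A^\ast$ using the recursive clause defining the run $\widehat d$ of the coalgebraic reward machine: in the base case $w=\epsilon$ both sides evaluate to the input via $\widehat d(u,b,m)(\epsilon)=(u,b,m)$, and the step case $w=a\cdot w'$ reduces to the observation that reading $a$ updates the machine-state component $u$ and then accumulates the emitted reward onto $m$ additively — precisely the compatibility of the $A^\ast$-action on $Y$ induced by $\widehat d$ with word concatenation and with the reward monoid $(\nonnegreal,+,0)$. The acceptance bit is carried through verbatim, so deciding whether $b'=\top$ contributes no further subtlety. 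Once $q$ is an inference query, \cref{prop:liftingCorrectness} lifts it to $q^{\hoaremonad}$ automatically — Scott-continuity of $\tau^{\hoaremonad}=\bigvee$ on $[0,\infty]$ being what makes that lifting well-defined — and no additional side conditions remain.
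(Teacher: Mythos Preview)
Your proposal is correct and matches the paper's approach: the corollary carries a bare \qed\ because, like its three predecessors in \cref{sec:caseStudy}, it follows by instantiating \cref{prop:liftingCorrectness} with the reward-machine product situation (whose non-trivial ingredients---that $\alpha^d$ is a strong monad morphism, that $\tau^{\hoaremonad}$ is an Eilenberg--Moore algebra, and that $q$ is an inference query---are verified in \cref{subsection:omittedProofsRewardMachine}) and then chaining the recursion-enabled forms of \cref{thm:correctProd} and \cref{thm:mainwithoutrecursion} via \cref{thm:extendedMain}. One cosmetic slip: in your displayed inference-query equation the component of $\alpha^d$ should be $\alpha^d_{A^\ast}$ (the domain carrier $\Omega^{\_\times Z}$), not $\alpha^d_{[0,\infty]}$.
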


\section{Implementation}
\label{sec:experiments}
\begin{table}
	\caption{Experimental results.}
	\label{tab:experiment}
	\small
	\begin{tabular}{clc}
		Problem & Benchmark & Time (sec) \\
		\hline
		\multirow{2}{*}{Probability of Accepting Traces} & \texttt{coin\_flip} (the example in~\cref{sec:overview}) & 0.382 \\
		& \texttt{ho\_rw} (\cref{ex:hoRW}) & 26.375 \\
		\hline
		\multirow{2}{*}{Partial Expected Rewards of Accepting Traces} & \texttt{gr} (\cref{ex:rewrdFO}) & 2.322 \\
		& \texttt{ho\_gr} (\cref{ex:horeward}) & 7.508 \\
		\hline
	\end{tabular}
\end{table}
We instantiate our frameworks for probabilistic programs, presented in~\cref{subsec:probAcceptingTrace,subsec:expectedRewardAcceptingTrace}, with the existing solver~\cite{KuraUnno2024}. 
Given a probabilistic program and a DFA, we manually construct the effectful SPS transformed program, and provide its CPS transformed program as an input for the existing solver~\cite{KuraUnno2024}.
Automating these procedures is an orthogonal issue and remains a future work. 
We expect that we can extend and apply existing techniques, such as, efficient handling with administrative redexes or selective CPS transformation; see also~\cite{SatoUK13,SekiyamaU24}. 
The existing solver~\cite{KuraUnno2024} supports the  over-approximations (or upper-bounds) 
 of the reachability probabilities or partial expected rewards of product terms.
Our framework thus supports the computation of over-approximations for temporal verifications that we showed in~\cref{subsec:probAcceptingTrace,subsec:expectedRewardAcceptingTrace}. 
The existing solver~\cite{KuraUnno2024} reduces the problem to a type inference problem of their dependent refinement type system~\cite{KuraUnno2024}. 
The type inference algorithm in~\cite{KuraUnno2024} is automated by reducing the problem to the satisfiability problem of extended \emph{CHC constraints}~\cite{BjornerGMR15} with admissible predicate variables, for which they employ CEGIS-based CHC solving algorithm~\cite{UnnoTK21}.

We conduct a preliminary experiment to show the effectiveness of our approach. 
To the best of our knowledge, no existing tools provide comparable support for our target problems.
We check four instances, the running example in~\cref{sec:overview},~\cref{ex:hoRW},~\cref{ex:rewrdFO}, and an additional benchmark for partial expected rewards (see~\cref{sec:omittedBenchmarks}). 
\cref{tab:experiment} shows the overall result, which is obtained with 12th Gen Intel(R) Core(TM) i7-1270P 2.20 GHz with 32 GB of memory. 
For instance, in the first benchmark \texttt{coin\_flip}, we successfully verify the exact probability $1/7$ as an over-approximation in a reasonable time. 
In the third benchmark \texttt{gr}, we show an over-approximation $0.26$, where the exact partial expected reward is $0.24$.

\section{Related Work}
\label{sec:relatedWork}

\subsection{Categorical Approach to Product Constructions}

Categorical parallel compositions or product constructions have been actively studied, e.g.~\cite{HasuoJS08,TuriP97, GoncharovMSTU23,DAngeloGK0NRW24,Kori0RK24,CirsteaK23,WatanabeJRH25}.
In particular, our approach is inspired by the very recent work by Watanabe et al.~\cite{WatanabeJRH25}, where they provide a simple sufficient condition to ensure the correctness of product constructions for temporal verification of transition systems. 
One of our main novelty is that we define the product constructions \emph{symbolically} and \emph{compositionally}, in other words, we define product constructions on the level of \emph{programs}, not on the level of operational semantics (or transition systems).  
This leads our novel reduction to the computation of weakest pre-condition of (higher-order) programs.
To the best of our knowledge, this is the first work that provides such a generic reduction from temporal verification to computation of weakest pre-conditions for effectful higher-order programs, in which we can apply the existing off-the-shelf tool~\cite{KuraUnno2024} for probabilistic higher-order programs. 
We expect that we can apply the existing tools~\cite{YamadaKSS25,MaillardAAMHRT19,UnnoTK21} for angelic non-deterministic higher-order programs; this remains a future work.

\subsection{Categorical Frameworks for Weakest Pre-Conditions}

There is a line of research on categorical weakest pre-conditions for effectful programs~\cite{GoncharovS13,Hasuo15,HinoKH016,Rauch0S16,Jacobs17}. 
We employ the fibrational weakest pre-condition for effectful programs proposed by Aguirre et al.~\cite{AguirreKK22}.
They define Hoare triples in their fibrational account of weakest pre-conditions, and demonstrate its generality with a couple of examples, including weakest pre-expectation~\cite{McIverM05}, higher moment transformer~\cite{KuraUH19}, and expected runtime transformer~\cite{KaminskiKMO18}.
Recently, Kura and Unno~\cite{KuraUnno2024} develop a fully automated algorithm to compute weakest pre-conditions for probabilistic higher-order programs based on the fibrational weakest pre-condition~\cite{AguirreKK22}; see also~\cite{Kura2023}. 
Our approach also builds on their fibrational treatment of weakest pre-conditions, reducing temporal verification problems to the computation of weakest pre-conditions.
\subsection{Temporal Verification of Effectful Higher-Order Programs}

One promising methodology for temporal verification of effectful higher-order
programs is \emph{higher-order model checking} (HOMC)~\cite{Ong06,Kobayashi13},
which is an extension of model checking~\cite{ClarkeE81,Clarke18,BaierK08} to
higher-order programs. HOMC accommodates both safety and liveness problems with
\emph{branching-time} temporal verification (i.e., it can reason about trees
consisting of observations of interest) and leads to automated temporal
verifiers of higher-order programs~\cite{MuraseT0SU16}.
HOMC has also been extended to support a variety of specific effects, such as
non-determinism~\cite{Kobayashi09}, probability~\cite{KobayashiLG20},
exception~\cite{SatoUK13}, and control effects~\cite{LagoG24,SekiyamaU24}.
\citet{LagoG24} generalize HOMC to generic effects and show that it is possible
to reduce an HOMC problem with generic effects to one without them, via
\emph{continuation-passing style} (CPS) transformation.
By following \citeANP{LagoG24}'s recipe, \citet{SekiyamaU24} implemented an HOMC
tool for programs with generic effects (and control effects).
Their tool checks whether any trace generated by effectful higher-order
programs is accepting, but it does not support probabilistic temporal verification (e.g,
how likely accepting traces are generated).
Thus, it is left open whether \citeANP{LagoG24}'s recipe for HOMC can lead to
implementations of probabilistic temporal verifiers.
In contrast, our general framework, which is based on product constructions, is
proven useful to implement automated algorithms for probabilistic temporal
verification.

\emph{Dijkstra monads}~\cite{SwamyWSCL13} have also been shown to be an
effective approach to temporal verification of effectful higher-order programs.
They can lead to a semi-automated verification method, while HOMC, as well as
our implemented verifiers, are fully automated.
The verification method based on Dijkstra monads can support monadic effects,
including demonic and angelic nondeterminism and interactive
IO~\cite{AhmanHMMPPRS17,MaillardAAMHRT19}.
However, whether they can be applied to probabilistic verification remains
open~\cite{MaillardAAMHRT19}.

Another well-studied approach is \emph{type-and-effect systems} (or effect
systems for
short)~\cite{SkalkaS04,KoskinenT14,Gordon17,Nanjo0KT18,Gordon20,SekiyamaU23,ZhouYDJ24,SekiyamaU25}.
\citet{Gordon17,Gordon21} presents a theoretical general framework for effect
systems tailored to linear-time temporal safety verification (Gordon calls such
effect systems \emph{sequential}). However, it has not been explored whether the
framework allows probabilistic temporal verification and whether it can lead to
automated verification algorithms.

\section{Conclusion}
We propose a denotational product construction for linear-time temporal verification of effectful higher-order programs with recursion. 
We instantiate our framework with probabilistic and angelic non-deterministic programs, and we provide an automated verifier for probabilistic programs on top of~\cite{KuraUnno2024}. 

An important direction for future work is linear-time temporal verification for \emph{non-terminating traces}, which lies outside the scope of the present study.
This is particularly challenging for probabilistic programs, as it requires measure-theoretic foundations to define probability measures over infinite traces (see, e.g.,~\cite{BaierK08}).
It would be exciting to extend our framework to handle $\omega$-regular properties, such as liveness.
As an orthogonal direction, we aim to incorporate continuous distributions as computational effects.
This may be achievable using well-studied strong monads for continuous distributions (see, e.g.,~\cite{VakarKS19,GoubaultLarrecqJT23}).
\label{sec:conclusion}

\bibliographystyle{ACM-Reference-Format}
\bibliography{mybib}


\begin{thebibliography}{80}


\ifx \showCODEN    \undefined \def \showCODEN     #1{\unskip}     \fi
\ifx \showISBNx    \undefined \def \showISBNx     #1{\unskip}     \fi
\ifx \showISBNxiii \undefined \def \showISBNxiii  #1{\unskip}     \fi
\ifx \showISSN     \undefined \def \showISSN      #1{\unskip}     \fi
\ifx \showLCCN     \undefined \def \showLCCN      #1{\unskip}     \fi
\ifx \shownote     \undefined \def \shownote      #1{#1}          \fi
\ifx \showarticletitle \undefined \def \showarticletitle #1{#1}   \fi
\ifx \showURL      \undefined \def \showURL       {\relax}        \fi
\providecommand\bibfield[2]{#2}
\providecommand\bibinfo[2]{#2}
\providecommand\natexlab[1]{#1}
\providecommand\showeprint[2][]{arXiv:#2}

\bibitem[Abramsky and Jung(1995)]%
        {AbramskyJ95}
\bibfield{author}{\bibinfo{person}{Samson Abramsky} {and} \bibinfo{person}{Achim Jung}.} \bibinfo{year}{1995}\natexlab{}.
\newblock \bibinfo{booktitle}{\emph{Domain theory}}.
\newblock \bibinfo{publisher}{Oxford University Press, Inc.}, \bibinfo{address}{USA}, \bibinfo{pages}{1^^e2^^80^^93168}.
\newblock
\showISBNx{019853762X}


\bibitem[Aguirre et~al\mbox{.}(2022)]%
        {AguirreKK22}
\bibfield{author}{\bibinfo{person}{Alejandro Aguirre}, \bibinfo{person}{Shin{-}ya Katsumata}, {and} \bibinfo{person}{Satoshi Kura}.} \bibinfo{year}{2022}\natexlab{}.
\newblock \showarticletitle{Weakest preconditions in fibrations}.
\newblock \bibinfo{journal}{\emph{Math. Struct. Comput. Sci.}} \bibinfo{volume}{32}, \bibinfo{number}{4} (\bibinfo{year}{2022}), \bibinfo{pages}{472--510}.
\newblock


\bibitem[Ahman et~al\mbox{.}(2017)]%
        {AhmanHMMPPRS17}
\bibfield{author}{\bibinfo{person}{Danel Ahman}, \bibinfo{person}{Catalin Hritcu}, \bibinfo{person}{Kenji Maillard}, \bibinfo{person}{Guido Mart{\'{\i}}nez}, \bibinfo{person}{Gordon~D. Plotkin}, \bibinfo{person}{Jonathan Protzenko}, \bibinfo{person}{Aseem Rastogi}, {and} \bibinfo{person}{Nikhil Swamy}.} \bibinfo{year}{2017}\natexlab{}.
\newblock \showarticletitle{Dijkstra monads for free}. In \bibinfo{booktitle}{\emph{{POPL}}}. \bibinfo{publisher}{{ACM}}, \bibinfo{pages}{515--529}.
\newblock


\bibitem[Alur et~al\mbox{.}(2022)]%
        {AlurBBJ22}
\bibfield{author}{\bibinfo{person}{Rajeev Alur}, \bibinfo{person}{Suguman Bansal}, \bibinfo{person}{Osbert Bastani}, {and} \bibinfo{person}{Kishor Jothimurugan}.} \bibinfo{year}{2022}\natexlab{}.
\newblock \showarticletitle{A Framework for Transforming Specifications in Reinforcement Learning}. In \bibinfo{booktitle}{\emph{Principles of Systems Design}} \emph{(\bibinfo{series}{Lecture Notes in Computer Science}, Vol.~\bibinfo{volume}{13660})}. \bibinfo{publisher}{Springer}, \bibinfo{pages}{604--624}.
\newblock


\bibitem[Baier and Katoen(2008)]%
        {BaierK08}
\bibfield{author}{\bibinfo{person}{Christel Baier} {and} \bibinfo{person}{Joost{-}Pieter Katoen}.} \bibinfo{year}{2008}\natexlab{}.
\newblock \bibinfo{booktitle}{\emph{Principles of model checking}}.
\newblock \bibinfo{publisher}{{MIT} Press}.
\newblock


\bibitem[Baier et~al\mbox{.}(2017)]%
        {Baier0KW17}
\bibfield{author}{\bibinfo{person}{Christel Baier}, \bibinfo{person}{Joachim Klein}, \bibinfo{person}{Sascha Kl{\"{u}}ppelholz}, {and} \bibinfo{person}{Sascha Wunderlich}.} \bibinfo{year}{2017}\natexlab{}.
\newblock \showarticletitle{Maximizing the Conditional Expected Reward for Reaching the Goal}. In \bibinfo{booktitle}{\emph{{TACAS} {(2)}}} \emph{(\bibinfo{series}{Lecture Notes in Computer Science}, Vol.~\bibinfo{volume}{10206})}. \bibinfo{pages}{269--285}.
\newblock


\bibitem[Beck(1969)]%
        {Beck69}
\bibfield{author}{\bibinfo{person}{Jon Beck}.} \bibinfo{year}{1969}\natexlab{}.
\newblock \showarticletitle{Distributive laws}. In \bibinfo{booktitle}{\emph{Seminar on Triples and Categorical Homology Theory}}, \bibfield{editor}{\bibinfo{person}{B.~Eckmann}} (Ed.). \bibinfo{publisher}{Springer Berlin Heidelberg}, \bibinfo{address}{Berlin, Heidelberg}, \bibinfo{pages}{119--140}.
\newblock
\showISBNx{978-3-540-36091-9}


\bibitem[Benton et~al\mbox{.}(2000)]%
        {BentonHM00}
\bibfield{author}{\bibinfo{person}{Nick Benton}, \bibinfo{person}{John Hughes}, {and} \bibinfo{person}{Eugenio Moggi}.} \bibinfo{year}{2000}\natexlab{}.
\newblock \showarticletitle{Monads and Effects}. In \bibinfo{booktitle}{\emph{{APPSEM}}} \emph{(\bibinfo{series}{Lecture Notes in Computer Science}, Vol.~\bibinfo{volume}{2395})}. \bibinfo{publisher}{Springer}, \bibinfo{pages}{42--122}.
\newblock


\bibitem[Bj{\o}rner et~al\mbox{.}(2015)]%
        {BjornerGMR15}
\bibfield{author}{\bibinfo{person}{Nikolaj~S. Bj{\o}rner}, \bibinfo{person}{Arie Gurfinkel}, \bibinfo{person}{Kenneth~L. McMillan}, {and} \bibinfo{person}{Andrey Rybalchenko}.} \bibinfo{year}{2015}\natexlab{}.
\newblock \showarticletitle{Horn Clause Solvers for Program Verification}. In \bibinfo{booktitle}{\emph{Fields of Logic and Computation {II}}} \emph{(\bibinfo{series}{Lecture Notes in Computer Science}, Vol.~\bibinfo{volume}{9300})}. \bibinfo{publisher}{Springer}, \bibinfo{pages}{24--51}.
\newblock


\bibitem[C{\^{\i}}rstea and Kupke(2023)]%
        {CirsteaK23}
\bibfield{author}{\bibinfo{person}{Corina C{\^{\i}}rstea} {and} \bibinfo{person}{Clemens Kupke}.} \bibinfo{year}{2023}\natexlab{}.
\newblock \showarticletitle{Measure-Theoretic Semantics for Quantitative Parity Automata}. In \bibinfo{booktitle}{\emph{{CSL}}} \emph{(\bibinfo{series}{LIPIcs}, Vol.~\bibinfo{volume}{252})}. \bibinfo{publisher}{Schloss Dagstuhl - Leibniz-Zentrum f{\"{u}}r Informatik}, \bibinfo{pages}{14:1--14:20}.
\newblock


\bibitem[Clarke and Emerson(1981)]%
        {ClarkeE81}
\bibfield{author}{\bibinfo{person}{Edmund~M. Clarke} {and} \bibinfo{person}{E.~Allen Emerson}.} \bibinfo{year}{1981}\natexlab{}.
\newblock \showarticletitle{Design and Synthesis of Synchronization Skeletons Using Branching-Time Temporal Logic}. In \bibinfo{booktitle}{\emph{Logics of Programs, Workshop}} \emph{(\bibinfo{series}{Lecture Notes in Computer Science}, Vol.~\bibinfo{volume}{131})}. \bibinfo{publisher}{Springer}, \bibinfo{pages}{52--71}.
\newblock


\bibitem[Clarke et~al\mbox{.}(2018)]%
        {Clarke18}
\bibfield{editor}{\bibinfo{person}{Edmund~M. Clarke}, \bibinfo{person}{Thomas~A. Henzinger}, \bibinfo{person}{Helmut Veith}, {and} \bibinfo{person}{Roderick Bloem}} (Eds.). \bibinfo{year}{2018}\natexlab{}.
\newblock \bibinfo{booktitle}{\emph{Handbook of Model Checking}}.
\newblock \bibinfo{publisher}{Springer}.
\newblock


\bibitem[Cook et~al\mbox{.}(2007)]%
        {CookGPRV07}
\bibfield{author}{\bibinfo{person}{Byron Cook}, \bibinfo{person}{Alexey Gotsman}, \bibinfo{person}{Andreas Podelski}, \bibinfo{person}{Andrey Rybalchenko}, {and} \bibinfo{person}{Moshe~Y. Vardi}.} \bibinfo{year}{2007}\natexlab{}.
\newblock \showarticletitle{Proving that programs eventually do something good}. In \bibinfo{booktitle}{\emph{{POPL}}}. \bibinfo{publisher}{{ACM}}, \bibinfo{pages}{265--276}.
\newblock


\bibitem[D'Angelo et~al\mbox{.}(2024)]%
        {DAngeloGK0NRW24}
\bibfield{author}{\bibinfo{person}{Keri D'Angelo}, \bibinfo{person}{Sebastian Gurke}, \bibinfo{person}{Johanna~Maria Kirss}, \bibinfo{person}{Barbara K{\"{o}}nig}, \bibinfo{person}{Matina Najafi}, \bibinfo{person}{Wojciech Rozowski}, {and} \bibinfo{person}{Paul Wild}.} \bibinfo{year}{2024}\natexlab{}.
\newblock \showarticletitle{Behavioural Metrics: Compositionality of the Kantorovich Lifting and an Application to Up-To Techniques}. In \bibinfo{booktitle}{\emph{{CONCUR}}} \emph{(\bibinfo{series}{LIPIcs}, Vol.~\bibinfo{volume}{311})}. \bibinfo{publisher}{Schloss Dagstuhl - Leibniz-Zentrum f{\"{u}}r Informatik}, \bibinfo{pages}{20:1--20:19}.
\newblock


\bibitem[Dijkstra(1975)]%
        {Dijkstra75}
\bibfield{author}{\bibinfo{person}{Edsger~W. Dijkstra}.} \bibinfo{year}{1975}\natexlab{}.
\newblock \showarticletitle{Guarded Commands, Nondeterminacy and Formal Derivation of Programs}.
\newblock \bibinfo{journal}{\emph{Commun. {ACM}}} \bibinfo{volume}{18}, \bibinfo{number}{8} (\bibinfo{year}{1975}), \bibinfo{pages}{453--457}.
\newblock


\bibitem[Fiore and Simpson(1999)]%
        {FioreS99}
\bibfield{author}{\bibinfo{person}{Marcelo~P. Fiore} {and} \bibinfo{person}{Alex~K. Simpson}.} \bibinfo{year}{1999}\natexlab{}.
\newblock \showarticletitle{Lambda Definability with Sums via Grothendieck Logical Relations}. In \bibinfo{booktitle}{\emph{{TLCA}}} \emph{(\bibinfo{series}{Lecture Notes in Computer Science}, Vol.~\bibinfo{volume}{1581})}. \bibinfo{publisher}{Springer}, \bibinfo{pages}{147--161}.
\newblock


\bibitem[Floyd(1967)]%
        {Floyd67}
\bibfield{author}{\bibinfo{person}{Robert~W. Floyd}.} \bibinfo{year}{1967}\natexlab{}.
\newblock \showarticletitle{Nondeterministic Algorithms}.
\newblock \bibinfo{journal}{\emph{J. {ACM}}} \bibinfo{volume}{14}, \bibinfo{number}{4} (\bibinfo{year}{1967}), \bibinfo{pages}{636--644}.
\newblock


\bibitem[Gierz et~al\mbox{.}(2003)]%
        {Berger07}
\bibfield{author}{\bibinfo{person}{Gerhard Gierz}, \bibinfo{person}{Karl~Heinrich Hofmann}, \bibinfo{person}{Klaus Keimel}, \bibinfo{person}{Jimmie~D. Lawson}, \bibinfo{person}{Michael~W. Mislove}, {and} \bibinfo{person}{Dana~S. Scott}.} \bibinfo{year}{2003}\natexlab{}.
\newblock \showarticletitle{Continuous Lattices and Domains}.
\newblock  (\bibinfo{year}{2003}).
\newblock


\bibitem[Goncharov et~al\mbox{.}(2023)]%
        {GoncharovMSTU23}
\bibfield{author}{\bibinfo{person}{Sergey Goncharov}, \bibinfo{person}{Stefan Milius}, \bibinfo{person}{Lutz Schr{\"{o}}der}, \bibinfo{person}{Stelios Tsampas}, {and} \bibinfo{person}{Henning Urbat}.} \bibinfo{year}{2023}\natexlab{}.
\newblock \showarticletitle{Towards a Higher-Order Mathematical Operational Semantics}.
\newblock \bibinfo{journal}{\emph{Proc. {ACM} Program. Lang.}} \bibinfo{volume}{7}, \bibinfo{number}{{POPL}} (\bibinfo{year}{2023}), \bibinfo{pages}{632--658}.
\newblock


\bibitem[Goncharov and Schr{\"{o}}der(2013)]%
        {GoncharovS13}
\bibfield{author}{\bibinfo{person}{Sergey Goncharov} {and} \bibinfo{person}{Lutz Schr{\"{o}}der}.} \bibinfo{year}{2013}\natexlab{}.
\newblock \showarticletitle{A Relatively Complete Generic Hoare Logic for Order-Enriched Effects}. In \bibinfo{booktitle}{\emph{{LICS}}}. \bibinfo{publisher}{{IEEE} Computer Society}, \bibinfo{pages}{273--282}.
\newblock


\bibitem[Gordon(2017)]%
        {Gordon17}
\bibfield{author}{\bibinfo{person}{Colin~S. Gordon}.} \bibinfo{year}{2017}\natexlab{}.
\newblock \showarticletitle{A Generic Approach to Flow-Sensitive Polymorphic Effects}. In \bibinfo{booktitle}{\emph{{ECOOP}}} \emph{(\bibinfo{series}{LIPIcs}, Vol.~\bibinfo{volume}{74})}. \bibinfo{publisher}{Schloss Dagstuhl - Leibniz-Zentrum f{\"{u}}r Informatik}, \bibinfo{pages}{13:1--13:31}.
\newblock


\bibitem[Gordon(2020)]%
        {Gordon20}
\bibfield{author}{\bibinfo{person}{Colin~S. Gordon}.} \bibinfo{year}{2020}\natexlab{}.
\newblock \showarticletitle{Lifting Sequential Effects to Control Operators}. In \bibinfo{booktitle}{\emph{{ECOOP}}} \emph{(\bibinfo{series}{LIPIcs}, Vol.~\bibinfo{volume}{166})}. \bibinfo{publisher}{Schloss Dagstuhl - Leibniz-Zentrum f{\"{u}}r Informatik}, \bibinfo{pages}{23:1--23:30}.
\newblock


\bibitem[Gordon(2021)]%
        {Gordon21}
\bibfield{author}{\bibinfo{person}{Colin~S. Gordon}.} \bibinfo{year}{2021}\natexlab{}.
\newblock \showarticletitle{Polymorphic Iterable Sequential Effect Systems}.
\newblock \bibinfo{journal}{\emph{{ACM} Trans. Program. Lang. Syst.}} \bibinfo{volume}{43}, \bibinfo{number}{1} (\bibinfo{year}{2021}), \bibinfo{pages}{4:1--4:79}.
\newblock


\bibitem[Goubault{-}Larrecq and Jia(2019)]%
        {Goubault-Larrecq19}
\bibfield{author}{\bibinfo{person}{Jean Goubault{-}Larrecq} {and} \bibinfo{person}{Xiaodong Jia}.} \bibinfo{year}{2019}\natexlab{}.
\newblock \showarticletitle{Algebras of the Extended Probabilistic Powerdomain Monad}. In \bibinfo{booktitle}{\emph{{ISDT}}} \emph{(\bibinfo{series}{Electronic Notes in Theoretical Computer Science}, Vol.~\bibinfo{volume}{345})}. \bibinfo{publisher}{Elsevier}, \bibinfo{pages}{37--61}.
\newblock


\bibitem[Goubault{-}Larrecq et~al\mbox{.}(2023)]%
        {GoubaultLarrecqJT23}
\bibfield{author}{\bibinfo{person}{Jean Goubault{-}Larrecq}, \bibinfo{person}{Xiaodong Jia}, {and} \bibinfo{person}{Cl{\'{e}}ment Th{\'{e}}ron}.} \bibinfo{year}{2023}\natexlab{}.
\newblock \showarticletitle{A Domain-theoretic Approach to Statistical Programming Languages}.
\newblock \bibinfo{journal}{\emph{J. {ACM}}} \bibinfo{volume}{70}, \bibinfo{number}{5} (\bibinfo{year}{2023}), \bibinfo{pages}{35:1--35:63}.
\newblock


\bibitem[Hasegawa and Kakutani(2002)]%
        {HasegawaK02}
\bibfield{author}{\bibinfo{person}{Masahito Hasegawa} {and} \bibinfo{person}{Yoshihiko Kakutani}.} \bibinfo{year}{2002}\natexlab{}.
\newblock \showarticletitle{Axioms for Recursion in Call-by-Value}.
\newblock \bibinfo{journal}{\emph{High. Order Symb. Comput.}} \bibinfo{volume}{15}, \bibinfo{number}{2-3} (\bibinfo{year}{2002}), \bibinfo{pages}{235--264}.
\newblock


\bibitem[Hasuo(2015)]%
        {Hasuo15}
\bibfield{author}{\bibinfo{person}{Ichiro Hasuo}.} \bibinfo{year}{2015}\natexlab{}.
\newblock \showarticletitle{Generic weakest precondition semantics from monads enriched with order}.
\newblock \bibinfo{journal}{\emph{Theor. Comput. Sci.}}  \bibinfo{volume}{604} (\bibinfo{year}{2015}), \bibinfo{pages}{2--29}.
\newblock


\bibitem[Hasuo et~al\mbox{.}(2008)]%
        {HasuoJS08}
\bibfield{author}{\bibinfo{person}{Ichiro Hasuo}, \bibinfo{person}{Bart Jacobs}, {and} \bibinfo{person}{Ana Sokolova}.} \bibinfo{year}{2008}\natexlab{}.
\newblock \showarticletitle{The Microcosm Principle and Concurrency in Coalgebra}. In \bibinfo{booktitle}{\emph{FoSSaCS}} \emph{(\bibinfo{series}{Lecture Notes in Computer Science}, Vol.~\bibinfo{volume}{4962})}. \bibinfo{publisher}{Springer}, \bibinfo{pages}{246--260}.
\newblock


\bibitem[Hermida(1993)]%
        {hermidaThesis}
\bibfield{author}{\bibinfo{person}{Claudio Hermida}.} \bibinfo{year}{1993}\natexlab{}.
\newblock \emph{\bibinfo{title}{Fibrations, logical predicates and indeterminates}}.
\newblock \bibinfo{thesistype}{Ph.\,D. Dissertation}. \bibinfo{school}{University of Edinburgh, {UK}}.
\newblock


\bibitem[Hino et~al\mbox{.}(2016)]%
        {HinoKH016}
\bibfield{author}{\bibinfo{person}{Wataru Hino}, \bibinfo{person}{Hiroki Kobayashi}, \bibinfo{person}{Ichiro Hasuo}, {and} \bibinfo{person}{Bart Jacobs}.} \bibinfo{year}{2016}\natexlab{}.
\newblock \showarticletitle{Healthiness from Duality}. In \bibinfo{booktitle}{\emph{{LICS}}}. \bibinfo{publisher}{{ACM}}, \bibinfo{pages}{682--691}.
\newblock


\bibitem[Icarte et~al\mbox{.}(2022)]%
        {IcarteKVM22}
\bibfield{author}{\bibinfo{person}{Rodrigo~Toro Icarte}, \bibinfo{person}{Toryn~Q. Klassen}, \bibinfo{person}{Richard~Anthony Valenzano}, {and} \bibinfo{person}{Sheila~A. McIlraith}.} \bibinfo{year}{2022}\natexlab{}.
\newblock \showarticletitle{Reward Machines: Exploiting Reward Function Structure in Reinforcement Learning}.
\newblock \bibinfo{journal}{\emph{J. Artif. Intell. Res.}}  \bibinfo{volume}{73} (\bibinfo{year}{2022}), \bibinfo{pages}{173--208}.
\newblock


\bibitem[Jacobs(2017)]%
        {Jacobs17}
\bibfield{author}{\bibinfo{person}{Bart Jacobs}.} \bibinfo{year}{2017}\natexlab{}.
\newblock \showarticletitle{A Recipe for State-and-Effect Triangles}.
\newblock \bibinfo{journal}{\emph{Log. Methods Comput. Sci.}} \bibinfo{volume}{13}, \bibinfo{number}{2} (\bibinfo{year}{2017}).
\newblock


\bibitem[Jia et~al\mbox{.}(2021)]%
        {JiaLMZ21}
\bibfield{author}{\bibinfo{person}{Xiaodong Jia}, \bibinfo{person}{Bert Lindenhovius}, \bibinfo{person}{Michael~W. Mislove}, {and} \bibinfo{person}{Vladimir Zamdzhiev}.} \bibinfo{year}{2021}\natexlab{}.
\newblock \showarticletitle{Commutative Monads for Probabilistic Programming Languages}. In \bibinfo{booktitle}{\emph{{LICS}}}. \bibinfo{publisher}{{IEEE}}, \bibinfo{pages}{1--14}.
\newblock


\bibitem[Kaminski(2019)]%
        {Kaminski19}
\bibfield{author}{\bibinfo{person}{Benjamin~Lucien Kaminski}.} \bibinfo{year}{2019}\natexlab{}.
\newblock \emph{\bibinfo{title}{Advanced weakest precondition calculi for probabilistic programs}}.
\newblock \bibinfo{thesistype}{Ph.\,D. Dissertation}. \bibinfo{school}{{RWTH} Aachen University, Germany}.
\newblock


\bibitem[Kaminski et~al\mbox{.}(2018)]%
        {KaminskiKMO18}
\bibfield{author}{\bibinfo{person}{Benjamin~Lucien Kaminski}, \bibinfo{person}{Joost{-}Pieter Katoen}, \bibinfo{person}{Christoph Matheja}, {and} \bibinfo{person}{Federico Olmedo}.} \bibinfo{year}{2018}\natexlab{}.
\newblock \showarticletitle{Weakest Precondition Reasoning for Expected Runtimes of Randomized Algorithms}.
\newblock \bibinfo{journal}{\emph{J. {ACM}}} \bibinfo{volume}{65}, \bibinfo{number}{5} (\bibinfo{year}{2018}), \bibinfo{pages}{30:1--30:68}.
\newblock


\bibitem[Katsumata(2005)]%
        {Katsumata05}
\bibfield{author}{\bibinfo{person}{Shin{-}ya Katsumata}.} \bibinfo{year}{2005}\natexlab{}.
\newblock \showarticletitle{A Semantic Formulation of TT-Lifting and Logical Predicates for Computational Metalanguage}. In \bibinfo{booktitle}{\emph{{CSL}}} \emph{(\bibinfo{series}{Lecture Notes in Computer Science}, Vol.~\bibinfo{volume}{3634})}. \bibinfo{publisher}{Springer}, \bibinfo{pages}{87--102}.
\newblock


\bibitem[Katsumata(2013)]%
        {Katsumata13}
\bibfield{author}{\bibinfo{person}{Shin{-}ya Katsumata}.} \bibinfo{year}{2013}\natexlab{}.
\newblock \showarticletitle{Relating computational effects by {\(\top\)}{\(\top\)}-lifting}.
\newblock \bibinfo{journal}{\emph{Inf. Comput.}}  \bibinfo{volume}{222} (\bibinfo{year}{2013}), \bibinfo{pages}{228--246}.
\newblock


\bibitem[Keimel and Lawson(2009)]%
        {KeimelL09}
\bibfield{author}{\bibinfo{person}{Klaus Keimel} {and} \bibinfo{person}{Jimmie~D. Lawson}.} \bibinfo{year}{2009}\natexlab{}.
\newblock \showarticletitle{D-completions and the d-topology}.
\newblock \bibinfo{journal}{\emph{Ann. Pure Appl. Log.}} \bibinfo{volume}{159}, \bibinfo{number}{3} (\bibinfo{year}{2009}), \bibinfo{pages}{292--306}.
\newblock


\bibitem[Keimel and Plotkin(2017)]%
        {KeimelP16}
\bibfield{author}{\bibinfo{person}{Klaus Keimel} {and} \bibinfo{person}{Gordon~D. Plotkin}.} \bibinfo{year}{2017}\natexlab{}.
\newblock \showarticletitle{Mixed powerdomains for probability and nondeterminism}.
\newblock \bibinfo{journal}{\emph{Log. Methods Comput. Sci.}} \bibinfo{volume}{13}, \bibinfo{number}{1} (\bibinfo{year}{2017}).
\newblock


\bibitem[Kobayashi(2009)]%
        {Kobayashi09}
\bibfield{author}{\bibinfo{person}{Naoki Kobayashi}.} \bibinfo{year}{2009}\natexlab{}.
\newblock \showarticletitle{Types and higher-order recursion schemes for verification of higher-order programs}. In \bibinfo{booktitle}{\emph{{POPL}}}. \bibinfo{publisher}{{ACM}}, \bibinfo{pages}{416--428}.
\newblock


\bibitem[Kobayashi(2013)]%
        {Kobayashi13}
\bibfield{author}{\bibinfo{person}{Naoki Kobayashi}.} \bibinfo{year}{2013}\natexlab{}.
\newblock \showarticletitle{Model Checking Higher-Order Programs}.
\newblock \bibinfo{journal}{\emph{J. {ACM}}} \bibinfo{volume}{60}, \bibinfo{number}{3} (\bibinfo{year}{2013}), \bibinfo{pages}{20:1--20:62}.
\newblock


\bibitem[Kobayashi et~al\mbox{.}(2020)]%
        {KobayashiLG20}
\bibfield{author}{\bibinfo{person}{Naoki Kobayashi}, \bibinfo{person}{Ugo~Dal Lago}, {and} \bibinfo{person}{Charles Grellois}.} \bibinfo{year}{2020}\natexlab{}.
\newblock \showarticletitle{On the Termination Problem for Probabilistic Higher-Order Recursive Programs}.
\newblock \bibinfo{journal}{\emph{Log. Methods Comput. Sci.}} \bibinfo{volume}{16}, \bibinfo{number}{4} (\bibinfo{year}{2020}).
\newblock


\bibitem[Kobayashi et~al\mbox{.}(2018)]%
        {KobayashiTW2018}
\bibfield{author}{\bibinfo{person}{Naoki Kobayashi}, \bibinfo{person}{Takeshi Tsukada}, {and} \bibinfo{person}{Keiichi Watanabe}.} \bibinfo{year}{2018}\natexlab{}.
\newblock \showarticletitle{Higher-Order Program Verification via {HFL} Model Checking}. In \bibinfo{booktitle}{\emph{{ESOP}}} \emph{(\bibinfo{series}{Lecture Notes in Computer Science}, Vol.~\bibinfo{volume}{10801})}. \bibinfo{publisher}{Springer}, \bibinfo{pages}{711--738}.
\newblock


\bibitem[Kock(1972)]%
        {kock1972strong}
\bibfield{author}{\bibinfo{person}{Anders Kock}.} \bibinfo{year}{1972}\natexlab{}.
\newblock \showarticletitle{Strong functors and monoidal monads}.
\newblock \bibinfo{journal}{\emph{Archiv der Mathematik}}  \bibinfo{volume}{23} (\bibinfo{year}{1972}), \bibinfo{pages}{113--120}.
\newblock


\bibitem[K{\"{o}}pr{\"{u}}l{\"{u}} and Topcu(2023)]%
        {KopruluT23}
\bibfield{author}{\bibinfo{person}{Cevahir K{\"{o}}pr{\"{u}}l{\"{u}}} {and} \bibinfo{person}{Ufuk Topcu}.} \bibinfo{year}{2023}\natexlab{}.
\newblock \showarticletitle{Reward-machine-guided, self-paced reinforcement learning}. In \bibinfo{booktitle}{\emph{{UAI}}} \emph{(\bibinfo{series}{Proceedings of Machine Learning Research}, Vol.~\bibinfo{volume}{216})}. \bibinfo{publisher}{{PMLR}}, \bibinfo{pages}{1121--1131}.
\newblock


\bibitem[Kori et~al\mbox{.}(2024)]%
        {Kori0RK24}
\bibfield{author}{\bibinfo{person}{Mayuko Kori}, \bibinfo{person}{Kazuki Watanabe}, \bibinfo{person}{Jurriaan Rot}, {and} \bibinfo{person}{Shin{-}ya Katsumata}.} \bibinfo{year}{2024}\natexlab{}.
\newblock \showarticletitle{Composing Codensity Bisimulations}. In \bibinfo{booktitle}{\emph{{LICS}}}. \bibinfo{publisher}{{ACM}}, \bibinfo{pages}{52:1--52:13}.
\newblock


\bibitem[Koskinen and Terauchi(2014)]%
        {KoskinenT14}
\bibfield{author}{\bibinfo{person}{Eric Koskinen} {and} \bibinfo{person}{Tachio Terauchi}.} \bibinfo{year}{2014}\natexlab{}.
\newblock \showarticletitle{Local Temporal Reasoning}. In \bibinfo{booktitle}{\emph{{CSL-LICS}}}. \bibinfo{publisher}{{ACM}}, \bibinfo{pages}{59:1--59:10}.
\newblock


\bibitem[Kura(2025)]%
        {Kura2023}
\bibfield{author}{\bibinfo{person}{Satoshi Kura}.} \bibinfo{year}{2025}\natexlab{}.
\newblock \showarticletitle{Higher-Order Weakest Precondition Transformers via a {CPS} Transformation}.
\newblock \bibinfo{journal}{\emph{CoRR}}  \bibinfo{volume}{abs/2301.09997v2} (\bibinfo{year}{2025}).
\newblock


\bibitem[Kura and Unno(2024)]%
        {KuraUnno2024}
\bibfield{author}{\bibinfo{person}{Satoshi Kura} {and} \bibinfo{person}{Hiroshi Unno}.} \bibinfo{year}{2024}\natexlab{}.
\newblock \showarticletitle{Automated Verification of Higher-Order Probabilistic Programs via a Dependent Refinement Type System}.
\newblock \bibinfo{journal}{\emph{Proc. {ACM} Program. Lang.}} \bibinfo{volume}{8}, \bibinfo{number}{{ICFP}} (\bibinfo{year}{2024}), \bibinfo{pages}{973--1002}.
\newblock


\bibitem[Kura et~al\mbox{.}(2019)]%
        {KuraUH19}
\bibfield{author}{\bibinfo{person}{Satoshi Kura}, \bibinfo{person}{Natsuki Urabe}, {and} \bibinfo{person}{Ichiro Hasuo}.} \bibinfo{year}{2019}\natexlab{}.
\newblock \showarticletitle{Tail Probabilities for Randomized Program Runtimes via Martingales for Higher Moments}. In \bibinfo{booktitle}{\emph{{TACAS} {(2)}}} \emph{(\bibinfo{series}{Lecture Notes in Computer Science}, Vol.~\bibinfo{volume}{11428})}. \bibinfo{publisher}{Springer}, \bibinfo{pages}{135--153}.
\newblock


\bibitem[Lago and Ghyselen(2024)]%
        {LagoG24}
\bibfield{author}{\bibinfo{person}{Ugo~Dal Lago} {and} \bibinfo{person}{Alexis Ghyselen}.} \bibinfo{year}{2024}\natexlab{}.
\newblock \showarticletitle{On Model-Checking Higher-Order Effectful Programs}.
\newblock \bibinfo{journal}{\emph{Proc. {ACM} Program. Lang.}} \bibinfo{volume}{8}, \bibinfo{number}{{POPL}} (\bibinfo{year}{2024}), \bibinfo{pages}{2610--2638}.
\newblock


\bibitem[Launchbury and Jones(1994)]%
        {LaunchburyJ94}
\bibfield{author}{\bibinfo{person}{John Launchbury} {and} \bibinfo{person}{Simon L.~Peyton Jones}.} \bibinfo{year}{1994}\natexlab{}.
\newblock \showarticletitle{Lazy Functional State Threads}. In \bibinfo{booktitle}{\emph{{PLDI}}}. \bibinfo{publisher}{{ACM}}, \bibinfo{pages}{24--35}.
\newblock


\bibitem[Maillard et~al\mbox{.}(2019)]%
        {MaillardAAMHRT19}
\bibfield{author}{\bibinfo{person}{Kenji Maillard}, \bibinfo{person}{Danel Ahman}, \bibinfo{person}{Robert Atkey}, \bibinfo{person}{Guido Mart{\'{\i}}nez}, \bibinfo{person}{Catalin Hritcu}, \bibinfo{person}{Exequiel Rivas}, {and} \bibinfo{person}{{\'{E}}ric Tanter}.} \bibinfo{year}{2019}\natexlab{}.
\newblock \showarticletitle{Dijkstra monads for all}.
\newblock \bibinfo{journal}{\emph{Proc. {ACM} Program. Lang.}} \bibinfo{volume}{3}, \bibinfo{number}{{ICFP}} (\bibinfo{year}{2019}), \bibinfo{pages}{104:1--104:29}.
\newblock


\bibitem[McIver and Morgan(2005)]%
        {McIverM05}
\bibfield{author}{\bibinfo{person}{Annabelle McIver} {and} \bibinfo{person}{Carroll Morgan}.} \bibinfo{year}{2005}\natexlab{}.
\newblock \bibinfo{booktitle}{\emph{Abstraction, Refinement and Proof for Probabilistic Systems}}.
\newblock \bibinfo{publisher}{Springer}.
\newblock


\bibitem[Moggi(1989)]%
        {Moggi89}
\bibfield{author}{\bibinfo{person}{Eugenio Moggi}.} \bibinfo{year}{1989}\natexlab{}.
\newblock \showarticletitle{Computational Lambda-Calculus and Monads}. In \bibinfo{booktitle}{\emph{{LICS}}}. \bibinfo{publisher}{{IEEE} Computer Society}, \bibinfo{pages}{14--23}.
\newblock


\bibitem[Murase et~al\mbox{.}(2016)]%
        {MuraseT0SU16}
\bibfield{author}{\bibinfo{person}{Akihiro Murase}, \bibinfo{person}{Tachio Terauchi}, \bibinfo{person}{Naoki Kobayashi}, \bibinfo{person}{Ryosuke Sato}, {and} \bibinfo{person}{Hiroshi Unno}.} \bibinfo{year}{2016}\natexlab{}.
\newblock \showarticletitle{Temporal verification of higher-order functional programs}. In \bibinfo{booktitle}{\emph{{POPL}}}. \bibinfo{publisher}{{ACM}}, \bibinfo{pages}{57--68}.
\newblock


\bibitem[Nanjo et~al\mbox{.}(2018)]%
        {Nanjo0KT18}
\bibfield{author}{\bibinfo{person}{Yoji Nanjo}, \bibinfo{person}{Hiroshi Unno}, \bibinfo{person}{Eric Koskinen}, {and} \bibinfo{person}{Tachio Terauchi}.} \bibinfo{year}{2018}\natexlab{}.
\newblock \showarticletitle{A Fixpoint Logic and Dependent Effects for Temporal Property Verification}. In \bibinfo{booktitle}{\emph{{LICS}}}. \bibinfo{publisher}{{ACM}}, \bibinfo{pages}{759--768}.
\newblock


\bibitem[Nielsen(2001)]%
        {Nielsen01}
\bibfield{author}{\bibinfo{person}{Lasse~R. Nielsen}.} \bibinfo{year}{2001}\natexlab{}.
\newblock \showarticletitle{A Selective {CPS} Transformation}. In \bibinfo{booktitle}{\emph{{MFPS}}} \emph{(\bibinfo{series}{Electronic Notes in Theoretical Computer Science}, Vol.~\bibinfo{volume}{45})}, \bibfield{editor}{\bibinfo{person}{Stephen~D. Brookes} {and} \bibinfo{person}{Michael~W. Mislove}} (Eds.). \bibinfo{publisher}{Elsevier}, \bibinfo{pages}{311--331}.
\newblock


\bibitem[Ong(2006)]%
        {Ong06}
\bibfield{author}{\bibinfo{person}{C.{-}H.~Luke Ong}.} \bibinfo{year}{2006}\natexlab{}.
\newblock \showarticletitle{On Model-Checking Trees Generated by Higher-Order Recursion Schemes}. In \bibinfo{booktitle}{\emph{{LICS}}}. \bibinfo{publisher}{{IEEE} Computer Society}, \bibinfo{pages}{81--90}.
\newblock


\bibitem[Plotkin(1975)]%
        {Plotkin75}
\bibfield{author}{\bibinfo{person}{Gordon~D. Plotkin}.} \bibinfo{year}{1975}\natexlab{}.
\newblock \showarticletitle{Call-by-Name, Call-by-Value and the lambda-Calculus}.
\newblock \bibinfo{journal}{\emph{Theor. Comput. Sci.}} \bibinfo{volume}{1}, \bibinfo{number}{2} (\bibinfo{year}{1975}), \bibinfo{pages}{125--159}.
\newblock
\href{https://doi.org/10.1016/0304-3975(75)90017-1}{doi:\nolinkurl{10.1016/0304-3975(75)90017-1}}


\bibitem[Plotkin and Power(2003)]%
        {PlotkinP03}
\bibfield{author}{\bibinfo{person}{Gordon~D. Plotkin} {and} \bibinfo{person}{John Power}.} \bibinfo{year}{2003}\natexlab{}.
\newblock \showarticletitle{Algebraic Operations and Generic Effects}.
\newblock \bibinfo{journal}{\emph{Appl. Categorical Struct.}} \bibinfo{volume}{11}, \bibinfo{number}{1} (\bibinfo{year}{2003}), \bibinfo{pages}{69--94}.
\newblock


\bibitem[Pnueli(1977)]%
        {Pnueli77}
\bibfield{author}{\bibinfo{person}{Amir Pnueli}.} \bibinfo{year}{1977}\natexlab{}.
\newblock \showarticletitle{The Temporal Logic of Programs}. In \bibinfo{booktitle}{\emph{{FOCS}}}. \bibinfo{publisher}{{IEEE} Computer Society}, \bibinfo{pages}{46--57}.
\newblock


\bibitem[Rauch et~al\mbox{.}(2016)]%
        {Rauch0S16}
\bibfield{author}{\bibinfo{person}{Christoph Rauch}, \bibinfo{person}{Sergey Goncharov}, {and} \bibinfo{person}{Lutz Schr{\"{o}}der}.} \bibinfo{year}{2016}\natexlab{}.
\newblock \showarticletitle{Generic Hoare Logic for Order-Enriched Effects with Exceptions}. In \bibinfo{booktitle}{\emph{{WADT}}} \emph{(\bibinfo{series}{Lecture Notes in Computer Science}, Vol.~\bibinfo{volume}{10644})}. \bibinfo{publisher}{Springer}, \bibinfo{pages}{208--222}.
\newblock


\bibitem[Sato et~al\mbox{.}(2013)]%
        {SatoUK13}
\bibfield{author}{\bibinfo{person}{Ryosuke Sato}, \bibinfo{person}{Hiroshi Unno}, {and} \bibinfo{person}{Naoki Kobayashi}.} \bibinfo{year}{2013}\natexlab{}.
\newblock \showarticletitle{Towards a scalable software model checker for higher-order programs}. In \bibinfo{booktitle}{\emph{{PEPM}}}. \bibinfo{publisher}{{ACM}}, \bibinfo{pages}{53--62}.
\newblock


\bibitem[Sekiyama and Unno(2023)]%
        {SekiyamaU23}
\bibfield{author}{\bibinfo{person}{Taro Sekiyama} {and} \bibinfo{person}{Hiroshi Unno}.} \bibinfo{year}{2023}\natexlab{}.
\newblock \showarticletitle{Temporal Verification with Answer-Effect Modification: Dependent Temporal Type-and-Effect System with Delimited Continuations}.
\newblock \bibinfo{journal}{\emph{Proc. {ACM} Program. Lang.}} \bibinfo{volume}{7}, \bibinfo{number}{{POPL}} (\bibinfo{year}{2023}), \bibinfo{pages}{2079--2110}.
\newblock


\bibitem[Sekiyama and Unno(2024)]%
        {SekiyamaU24}
\bibfield{author}{\bibinfo{person}{Taro Sekiyama} {and} \bibinfo{person}{Hiroshi Unno}.} \bibinfo{year}{2024}\natexlab{}.
\newblock \showarticletitle{Higher-Order Model Checking of Effect-Handling Programs with Answer-Type Modification}.
\newblock \bibinfo{journal}{\emph{Proc. {ACM} Program. Lang.}} \bibinfo{volume}{8}, \bibinfo{number}{{OOPSLA2}} (\bibinfo{year}{2024}), \bibinfo{pages}{2662--2691}.
\newblock


\bibitem[Sekiyama and Unno(2025)]%
        {SekiyamaU25}
\bibfield{author}{\bibinfo{person}{Taro Sekiyama} {and} \bibinfo{person}{Hiroshi Unno}.} \bibinfo{year}{2025}\natexlab{}.
\newblock \showarticletitle{Algebraic Temporal Effects: Temporal Verification of Recursively Typed Higher-Order Programs}.
\newblock \bibinfo{journal}{\emph{Proc. {ACM} Program. Lang.}} \bibinfo{volume}{9}, \bibinfo{number}{{POPL}} (\bibinfo{year}{2025}), \bibinfo{pages}{2306--2336}.
\newblock


\bibitem[Simpson and Plotkin(2000)]%
        {SimpsonP00}
\bibfield{author}{\bibinfo{person}{Alex~K. Simpson} {and} \bibinfo{person}{Gordon~D. Plotkin}.} \bibinfo{year}{2000}\natexlab{}.
\newblock \showarticletitle{Complete Axioms for Categorical Fixed-Point Operators}. In \bibinfo{booktitle}{\emph{{LICS}}}. \bibinfo{publisher}{{IEEE} Computer Society}, \bibinfo{pages}{30--41}.
\newblock


\bibitem[Skalka and Smith(2004)]%
        {SkalkaS04}
\bibfield{author}{\bibinfo{person}{Christian Skalka} {and} \bibinfo{person}{Scott~F. Smith}.} \bibinfo{year}{2004}\natexlab{}.
\newblock \showarticletitle{History Effects and Verification}. In \bibinfo{booktitle}{\emph{{APLAS}}} \emph{(\bibinfo{series}{Lecture Notes in Computer Science}, Vol.~\bibinfo{volume}{3302})}. \bibinfo{publisher}{Springer}, \bibinfo{pages}{107--128}.
\newblock


\bibitem[Swamy et~al\mbox{.}(2013)]%
        {SwamyWSCL13}
\bibfield{author}{\bibinfo{person}{Nikhil Swamy}, \bibinfo{person}{Joel Weinberger}, \bibinfo{person}{Cole Schlesinger}, \bibinfo{person}{Juan Chen}, {and} \bibinfo{person}{Benjamin Livshits}.} \bibinfo{year}{2013}\natexlab{}.
\newblock \showarticletitle{Verifying higher-order programs with the dijkstra monad}. In \bibinfo{booktitle}{\emph{{PLDI}}}. \bibinfo{publisher}{{ACM}}, \bibinfo{pages}{387--398}.
\newblock


\bibitem[Turi and Plotkin(1997)]%
        {TuriP97}
\bibfield{author}{\bibinfo{person}{Daniele Turi} {and} \bibinfo{person}{Gordon~D. Plotkin}.} \bibinfo{year}{1997}\natexlab{}.
\newblock \showarticletitle{Towards a Mathematical Operational Semantics}. In \bibinfo{booktitle}{\emph{{LICS}}}. \bibinfo{publisher}{{IEEE} Computer Society}, \bibinfo{pages}{280--291}.
\newblock


\bibitem[Unno et~al\mbox{.}(2018)]%
        {UnnoST18}
\bibfield{author}{\bibinfo{person}{Hiroshi Unno}, \bibinfo{person}{Yuki Satake}, {and} \bibinfo{person}{Tachio Terauchi}.} \bibinfo{year}{2018}\natexlab{}.
\newblock \showarticletitle{Relatively complete refinement type system for verification of higher-order non-deterministic programs}.
\newblock \bibinfo{journal}{\emph{Proc. {ACM} Program. Lang.}} \bibinfo{volume}{2}, \bibinfo{number}{{POPL}} (\bibinfo{year}{2018}), \bibinfo{pages}{12:1--12:29}.
\newblock


\bibitem[Unno et~al\mbox{.}(2021)]%
        {UnnoTK21}
\bibfield{author}{\bibinfo{person}{Hiroshi Unno}, \bibinfo{person}{Tachio Terauchi}, {and} \bibinfo{person}{Eric Koskinen}.} \bibinfo{year}{2021}\natexlab{}.
\newblock \showarticletitle{Constraint-Based Relational Verification}. In \bibinfo{booktitle}{\emph{{CAV} {(1)}}} \emph{(\bibinfo{series}{Lecture Notes in Computer Science}, Vol.~\bibinfo{volume}{12759})}. \bibinfo{publisher}{Springer}, \bibinfo{pages}{742--766}.
\newblock


\bibitem[V{\'{a}}k{\'{a}}r et~al\mbox{.}(2019)]%
        {VakarKS19}
\bibfield{author}{\bibinfo{person}{Matthijs V{\'{a}}k{\'{a}}r}, \bibinfo{person}{Ohad Kammar}, {and} \bibinfo{person}{Sam Staton}.} \bibinfo{year}{2019}\natexlab{}.
\newblock \showarticletitle{A domain theory for statistical probabilistic programming}.
\newblock \bibinfo{journal}{\emph{Proc. {ACM} Program. Lang.}} \bibinfo{volume}{3}, \bibinfo{number}{{POPL}} (\bibinfo{year}{2019}), \bibinfo{pages}{36:1--36:29}.
\newblock


\bibitem[Vardi(1985)]%
        {Vardi85}
\bibfield{author}{\bibinfo{person}{Moshe~Y. Vardi}.} \bibinfo{year}{1985}\natexlab{}.
\newblock \showarticletitle{Automatic Verification of Probabilistic Concurrent Finite-State Programs}. In \bibinfo{booktitle}{\emph{{FOCS}}}. \bibinfo{publisher}{{IEEE} Computer Society}, \bibinfo{pages}{327--338}.
\newblock


\bibitem[Vardi and Wolper(1986)]%
        {VardiW86}
\bibfield{author}{\bibinfo{person}{Moshe~Y. Vardi} {and} \bibinfo{person}{Pierre Wolper}.} \bibinfo{year}{1986}\natexlab{}.
\newblock \showarticletitle{Automata-Theoretic Techniques for Modal Logics of Programs}.
\newblock \bibinfo{journal}{\emph{J. Comput. Syst. Sci.}} \bibinfo{volume}{32}, \bibinfo{number}{2} (\bibinfo{year}{1986}), \bibinfo{pages}{183--221}.
\newblock


\bibitem[Watanabe et~al\mbox{.}(2025)]%
        {WatanabeJRH25}
\bibfield{author}{\bibinfo{person}{Kazuki Watanabe}, \bibinfo{person}{Sebastian Junges}, \bibinfo{person}{Jurriaan Rot}, {and} \bibinfo{person}{Ichiro Hasuo}.} \bibinfo{year}{2025}\natexlab{}.
\newblock \showarticletitle{A Unifying Approach to Product Constructions for Quantitative Temporal Inference}.
\newblock \bibinfo{journal}{\emph{Proc. {ACM} Program. Lang.}} \bibinfo{volume}{9}, \bibinfo{number}{{OOPSLA1}} (\bibinfo{year}{2025}), \bibinfo{pages}{1575--1603}.
\newblock


\bibitem[Yamada et~al\mbox{.}(2025)]%
        {YamadaKSS25}
\bibfield{author}{\bibinfo{person}{Risa Yamada}, \bibinfo{person}{Naoki Kobayashi}, \bibinfo{person}{Ken Sakayori}, {and} \bibinfo{person}{Ryosuke Sato}.} \bibinfo{year}{2025}\natexlab{}.
\newblock \showarticletitle{On the Relationship between Dijkstra Monads and Higher-Order Fixpoint Logic}. In \bibinfo{booktitle}{\emph{{ESOP} {(2)}}} \emph{(\bibinfo{series}{Lecture Notes in Computer Science}, Vol.~\bibinfo{volume}{15695})}. \bibinfo{publisher}{Springer}, \bibinfo{pages}{402--428}.
\newblock


\bibitem[Zhao and Fan(2010)]%
        {ZhaoF10}
\bibfield{author}{\bibinfo{person}{Dongsheng Zhao} {and} \bibinfo{person}{Taihe Fan}.} \bibinfo{year}{2010}\natexlab{}.
\newblock \showarticletitle{Dcpo-completion of posets}.
\newblock \bibinfo{journal}{\emph{Theor. Comput. Sci.}} \bibinfo{volume}{411}, \bibinfo{number}{22-24} (\bibinfo{year}{2010}), \bibinfo{pages}{2167--2173}.
\newblock


\bibitem[Zhou et~al\mbox{.}(2024)]%
        {ZhouYDJ24}
\bibfield{author}{\bibinfo{person}{Zhe Zhou}, \bibinfo{person}{Qianchuan Ye}, \bibinfo{person}{Benjamin Delaware}, {and} \bibinfo{person}{Suresh Jagannathan}.} \bibinfo{year}{2024}\natexlab{}.
\newblock \showarticletitle{A {HAT} Trick: Automatically Verifying Representation Invariants using Symbolic Finite Automata}.
\newblock \bibinfo{journal}{\emph{Proc. {ACM} Program. Lang.}} \bibinfo{volume}{8}, \bibinfo{number}{{PLDI}} (\bibinfo{year}{2024}), \bibinfo{pages}{1387--1411}.
\newblock


\end{thebibliography}
\newpage

\appendix

\section{Omitted Definitions in~\cref{sec:overview}}
\label{sec:omittedDefOverview}
\begin{definition}[strong monad $T(\_\times Y)^Y$]
  \label{def:effectStateMonad}
  Given a strong monad $T$, we define the strong monad $(T(\_\times Y)^Y, \eta, \mu, \lstrength{}{})$ as follows:
  \begin{align*}
    \eta^{}_X &\defeq (\eta^T_{X \times Y})^\dagger, \\
    \mu^{}_X &\defeq \big(\mu^T_{X\times Y}\circ T(\ev{T(X\times Y)}{Y})\big)^Y, \\
    \lstrength{}{X,Z} &\defeq \big(\lstrength{T}{X, Z\times Y} \circ X \times \ev{T(Z\times Y)}{Y} \big)^\dagger.
  \end{align*}

\end{definition}

\section{Omitted Definitions in~\cref{sec:preliminaries} }
\label{sec:appSourceProgram}
\begin{figure}[t]
  \begin{mathpar}
  \inferrule* [Left=$\varintro$]{
    (x, \bt)\in \Gamma
  }{
    \Gamma \vdash x \colon \bt
  }
  \and
  \inferrule* [Left=$\exchange$]{
    \Gamma, x_2\colon \bt_2, x_1\colon \bt_1, \Delta \vdash M\colon \bt
  }{
    \Gamma, x_1\colon \bt_1, x_2\colon \bt_2, \Delta \vdash M\colon \bt
  }
  \and
  \inferrule* [Left=$\conintro$]{
		\Gamma \vdash M \colon \arfunc(c)
	}{
		\Gamma \vdash c\ M \colon \carfunc(c)
	}
	\and
	\inferrule* [Left=$\genintro$]{
		\Gamma \vdash M \colon \arfunc(e)
	}{
		\Gamma \vdash e\ M \colon \carfunc(e)
	}
  \\
  \inferrule*[Left=$\unitprod$]{
	}{
		\Gamma\vdash () \colon \mathbf{1}
	}\and
  \inferrule*[Left=$\prodintro$]{
    \Gamma \vdash M \colon \bt_1\\
    \Gamma \vdash N \colon \bt_2
	}{
		\Gamma\vdash (M, N) \colon \bt_1\times \bt_2
	}
  \and
  \inferrule*[Left=$\projintroone$]{
    \Gamma \vdash M\colon \bt_1\times \bt_2
	}{
    \Gamma \vdash \pi_1\ M\colon \bt_1
	}
  \and
  \inferrule*[Left=$\projintrotwo$]{
    \Gamma \vdash M\colon \bt_1\times \bt_2
	}{
    \Gamma \vdash \pi_2\ M\colon \bt_2
	}
  \\
  \inferrule*[Left=$\unitcoprod$]{
    \Gamma \vdash M\colon \mathbf{0}
	}{
    \Gamma \vdash \delta(M)\colon \bt
	}
  \and
  \inferrule*[Left=$\coprodone$]{
    \Gamma \vdash M\colon \bt_1
	}{
    \Gamma \vdash \iota_1\ M\colon \bt_1 +  \bt_2
	}
  \and
  \inferrule*[Left=$\coprodtwo$]{
    \Gamma \vdash M\colon \bt_2
	}{
    \Gamma \vdash \iota_2\ M\colon \bt_1 +  \bt_2
	}
  \and
  \inferrule*[Left=$\coprodelm$]{
    \Gamma \vdash M\colon \bt_1 +  \bt_2 \\  \Gamma, x_1\colon \bt_1 \vdash M_1\colon \bt \\  \Gamma, x_2\colon \bt_2 \vdash M_2\colon \bt
	}{
    \Gamma \vdash \delta(M, x_1\colon \bt_1.\ M_1, x_2\colon \bt_2.\ M_2)\colon \bt
	}\\
  \inferrule*[Left=$\abst$]{
    \Gamma, x\colon \bt_1 \vdash M\colon \bt_2
	}{
    \Gamma \vdash \lambda x.\ M\colon \bt_1\rightarrow \bt_2
	}
  \and
  \inferrule*[Left=$\appl$]{
    \Gamma \vdash M\colon \bt_1\rightarrow \bt_2
    \\ \Gamma \vdash N\colon \bt_1
	}{
    \Gamma \vdash M\ N\colon \bt_2
	}
\end{mathpar}
\caption{Typing rules of the source program without recursion.}
\label{fig:typingRulesSource}
\end{figure}

\begin{definition}[stable~\cite{FioreS99}]
  \label{def:stableCoproduct}
  A binary coproduct $X_1 \xrightarrow{\iota_1} X \xleftarrow{\iota_2} X_2$ is \emph{stable}   if for any morphism $f\colon Z\rightarrow X$, 
  there are pullbacks $\iota^{\ast}_1 Z$ and  $\iota^{\ast}_2 Z$ of $Z$ along $\iota_{i_1}$ and $\iota_{i_2}$, respectively, and $\iota^{\ast}_1(Z) \xrightarrow{} Z \xleftarrow{} \iota^{\ast}_2(Z)$ is a binary coproduct. 
  A bi-CCC is \emph{stable} if it has stable binary coproducts. 
\end{definition}

We present the typing rules of the source program without recursion in~\cref{fig:typingRulesSource}.

\section{Omitted Definitions and Proofs in~\cref{sec:prodLambdaCalc} }
\label{sec:omitDefProofProd}

\begin{proposition} \label{ap:st_monad_morphism_bij}
  Let $Y \in \mathbb{C}$ be an object and 
  consider a strong writer monad $\_ \times Z\colon \mathbb{C} \to \mathbb{C}$
  induced by
 a monoid object $(Z, e, \cdot)$.
  Then there is a bijective correspondence between 
    a strong monad morphism $\alpha\colon \_\times Z \Rightarrow (\_\times Y)^Y$
    and
    a morphism $f\colon Z \times Y \to Y$ satisfying that (i) $f \circ e \times Y$ is the canonical isomorphism $1 \times Y \cong Y$ and (ii) $f \circ (\cdot) \times Y = \ev{Y}{Y} \circ f^\dagger \times f$.
\end{proposition}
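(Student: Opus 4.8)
The plan is to exhibit explicit maps in both directions and check they are mutually inverse, with conditions (i) and (ii) corresponding exactly to the unit and multiplication laws of a monad morphism, while the left‑strength compatibility turns out to hold automatically. Throughout I specialise \cref{def:effectStateMonad} to $T = \id$, so that $\eta^{(\_\times Y)^Y}_X = (\id_{X\times Y})^\dagger$, $\mu^{(\_\times Y)^Y}_X = (\ev{X\times Y}{Y})^Y$, and the strength of $(\_\times Y)^Y$ is built from $\ev{}{}$ alone.

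\textbf{From $f$ to $\alpha$.} Given $f\colon Z\times Y\to Y$, define for each $X$ the morphism $g_X \defeq \langle \pi_1\circ\pi_1,\; f\circ\langle\pi_2\circ\pi_1,\pi_2\rangle\rangle\colon (X\times Z)\times Y\to X\times Y$ (``keep the $X$‑component, act by $f$ on the $(Z,Y)$‑pair'') and set $\alpha_X \defeq g_X^{\dagger}\colon X\times Z\to (X\times Y)^Y$. Naturality of $\alpha$ in $X$ is immediate from naturality of the projections and of $(\_)^\dagger$. I then verify the monad‑morphism axioms. Unfolding $\eta^{(\_\times Y)^Y}_X$, the equation $\eta^{(\_\times Y)^Y}_X = \alpha_X\circ\eta^{\_\times Z}_X$ reduces exactly to $f\circ(e\times Y)$ being the unitor $1\times Y\cong Y$, i.e.\ condition~(i). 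Unfolding $\mu^{(\_\times Y)^Y}_X$, the law $\alpha_X\circ\mu^{\_\times Z}_X = \mu^{(\_\times Y)^Y}_X\circ (\alpha_X\times\id_Y)^Y\circ\alpha_{X\times Z}$ reduces exactly to $f\circ((\cdot)\times Y) = \ev{Y}{Y}\circ(f^\dagger\times f)$, i.e.\ condition~(ii). Finally, $\alpha_{X\times W}\circ t^{\_\times Z}_{X,W} = t^{(\_\times Y)^Y}_{X,W}\circ(\id_X\times\alpha_W)$ holds for \emph{any} $f$: since $t^{(\_\times Y)^Y}$ is assembled from $\ev{}{}$ and the trivial strength of $\id$, both sides just pair the $X$‑component with the value of $f$ in the $W,Z,Y$‑slots; this is a purely structural computation.

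\textbf{From $\alpha$ to $f$.} Given a strong monad morphism $\alpha$, take the component $\alpha_1\colon 1\times Z\to (1\times Y)^Y$, compose with the unitors $1\times Z\cong Z$ and $1\times Y\cong Y$ to obtain $\bar\alpha\colon Z\to Y^Y$, and set $f \defeq \ev{Y}{Y}\circ(\bar\alpha\times\id_Y)\colon Z\times Y\to Y$. That $f$ satisfies (i) and (ii) is obtained by reading the unit and multiplication laws of $\alpha$ at $X = 1$ and stripping the unitors — that is, the same equivalences as above, run backwards.

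\textbf{Mutual inverseness.} The round trip $f\mapsto\alpha\mapsto f'$ gives $f' = f$ by a one‑line unfolding: evaluating $g_1^\dagger$, un‑transposing, and discarding the terminal factor returns $f$. For the round trip $\alpha\mapsto f\mapsto\alpha'$, the crux — and the main obstacle — is a \emph{reconstruction lemma}: every component $\alpha_X$ is determined by $\alpha_1$. This follows from the left‑strength law instantiated at $W = \mathbf 1$, together with naturality of $\alpha$ along the right unitor $X\times\mathbf 1\xrightarrow{\sim}X$; combined, these express $\alpha_X$ as a canonical conjugate of $\id_X\times\alpha_1$ via the associator, the unitors, $\ev{}{}$, and $(\_)^\dagger$. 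Since $\bar\alpha$ (hence $f$) is precisely the data of $\alpha_1$, and since $\alpha'$ was built from $f$ so that $\alpha'_1$ recovers $\bar\alpha$, applying the reconstruction lemma to both $\alpha$ and $\alpha'$ yields $\alpha = \alpha'$. I will keep the coherence isomorphisms explicit wherever they are used; the only step that is not pure bookkeeping is assembling this reconstruction lemma, and even that is forced once one writes the $W = \mathbf 1$ instance of the strength square alongside the unitor‑naturality square.
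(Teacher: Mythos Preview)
Your proposal is essentially the paper's proof: the two constructions you write down coincide with the paper's $\alpha_f = t'_X\circ(X\times f^\dagger)$ and $f_\alpha = $ transpose of $Z\cong 1\times Z\xrightarrow{\alpha_1}(1\times Y)^Y\cong Y^Y$, and the verifications line up. You are also more explicit than the paper about mutual inverseness via the reconstruction lemma.

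One point needs tightening. In the direction $\alpha\mapsto f$, you say condition~(ii) comes from ``reading the multiplication law of $\alpha$ at $X=1$ and stripping the unitors --- the same equivalences as above, run backwards''. But the multiplication law at $X=1$ involves $\alpha_{1\times Z}$ (equivalently, via naturality, $\alpha_{(1\times Y)^Y}$), not only $\alpha_1$; you cannot read it off as a statement about $f$ alone until you have expressed that extra component in terms of $\alpha_1$. That is precisely your reconstruction lemma, and the paper uses exactly this device: it first invokes the strength law to identify $\alpha_{(1\times Y)^Y}^\dagger$ with $\id\times\alpha_1^\dagger$ (up to unitors), and only then unwinds the multiplication law to obtain $f_\alpha\circ((\cdot)\times Y)=\ev{Y}{Y}\circ(f_\alpha^\dagger\times f_\alpha)$. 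So the logical dependency is: strength law $\Rightarrow$ reconstruction lemma $\Rightarrow$ (mult law $\Leftrightarrow$ (ii)). You have all the pieces; just move the reconstruction lemma before the derivation of~(ii), or at least cite it there.
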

\begin{proof}
  We write $t'\colon \_ \times Y^Y \Rightarrow (\_ \times Y)^Y$ for the natural transformation given by $(\id \times \ev{Y}{Y})^\dagger$.
  For a strong monad morphism $\alpha\colon \_\times Z \Rightarrow (\_\times Y)^Y$,
  we define a morphism $f_\alpha\colon Z \times Y \to Y$ to be the transpose of $Z \cong 1 \times Z \xrightarrow{\alpha_1} (1 \times Y)^Y \cong Y^Y$.
  This $f_\alpha$ satisfies conditions (i) and (ii)
  because 
  \begin{itemize}
    \item 
  (i): 
  \begin{align*}
  &f_\alpha \circ e \times Y  \\
  &= \big(1 \times Y \xrightarrow{e \times Y} Z \times Y \cong 1 \times Z \times Y \xrightarrow{\alpha_1 \times Y} (1 \times Y)^Y \times Y \cong Y^Y \times Y \xrightarrow{\ev{Y}{Y}} Y\big) \\
  &= \big(1 \times Y \cong 1 \times 1 \times Y \xrightarrow{1 \times e \times Y} 1 \times Z \times Y  \xrightarrow{\alpha_1^\dagger} 1 \times Y \cong Y\big) \\
  &= \big(1 \times Y  \cong Y\big) \text{ since }\eta^{(\_ \times Y)^Y} = \alpha_1 \circ \eta^{\_ \times Z}_1.
  \end{align*}
  \item (ii): 
  Since $\alpha$ is a strong monad morphism,
  the following equalities hold.
  \begin{align*}
    &\big(Y^Y \times Z \cong (1 \times Y)^Y \times Z \xrightarrow{\alpha_{(1 \times Y)^Y}} ((1 \times Y)^Y \times Y)^Y \big) \\
    &=\big(Y^Y \times Z \xrightarrow{\alpha_{Y^Y}} (Y^Y \times Y)^Y \cong ((1 \times Y)^Y \times Y)^Y \big) \\
    &=\big(Y^Y \times Z \cong Y^Y \times 1 \times Z \xrightarrow{\alpha_{(Y^Y \times 1)}} (Y^Y \times 1 \times Y)^Y \cong ((1 \times Y)^Y \times Y)^Y \big) \\
    &=\big(Y^Y \times Z \cong Y^Y \times (1 \times Z) \xrightarrow{\id \times \alpha_{1}} Y^Y \times (1 \times Y)^Y \xrightarrow{\lstrength{(\_ \times Y)^Y}{}} (Y^Y \times 1 \times Y)^Y \cong ((1 \times Y)^Y \times Y)^Y \big). \\
  \end{align*}
  The equations above imply $\big(Y^Y \times Z \times Y \cong (1 \times Y)^Y \times Z \times Y  \xrightarrow{\alpha_{(1 \times Y)^Y}^\dagger} (1 \times Y)^Y \times Y \cong Y^Y \times Y\big) = \big(Y^Y \times Z \times Y \cong Y^Y \times (1 \times Z) \times Y   \xrightarrow{\id \times \alpha_1^\dagger} Y^Y \times 1 \times Y \cong Y^Y \times Y\big)$.
  Therefore, it follows that
  \begin{align*}
    &f_\alpha \circ (\cdot) \times Y  \\
    &= \big(Z \times Z \times Y \cong 1 \times Z \times Z \times Y \xrightarrow{1 \times (\cdot) \times Y} 1 \times Z \times Y \xrightarrow{1 \times f_\alpha} 1 \times Y \cong Y\big) \\
    &= \big(Z \times Z \times Y \cong 1 \times Z \times Z \times Y \xrightarrow{(\alpha_1 \circ \mu^{\_ \times Z}_1)^\dagger} 1 \times Y \cong Y\big) \\
    &= \big(Z \times Z \times Y \cong 1 \times Z \times Z \times Y \xrightarrow{(\mu^{(\_ \times Y)^Y} \circ \alpha_{(1 \times Y)^Y} \circ \alpha_1 \times Z)^\dagger} 1 \times Y \cong Y\big) \\
    &= \big(Z \times Z \times Y \cong 1 \times Z \times Z \times Y \xrightarrow{\alpha_1 \times Z \times Y} (1 \times Y)^Y \times Z \times Y \xrightarrow{\alpha_{(1 \times Y)^Y}^\dagger} (1 \times Y)^Y \times Y \xrightarrow{\ev{1 \times Y}{Y}} 1 \times Y\cong Y\big) \\
    &= \big(Z \times Z \times Y \xrightarrow{f_\alpha^\dagger \times \id} Y^Y \times Z \times Y \cong (1 \times Y)^Y \times Z \times Y  \xrightarrow{\alpha_{(1 \times Y)^Y}^\dagger} (1 \times Y)^Y \times Y \cong Y^Y \times Y\xrightarrow{\ev{Y}{Y}} Y\big) \\
    &= \big(Z \times Z \times Y \xrightarrow{f_\alpha^\dagger \times \id} Y^Y \times Z \times Y \cong Y^Y \times (1 \times Z) \times Y   \xrightarrow{\id \times \alpha_1^\dagger} Y^Y \times 1 \times Y \cong Y^Y \times Y\xrightarrow{\ev{Y}{Y}} Y\big) \\
    &= \ev{Y}{Y} \circ f_\alpha^\dagger \times f_\alpha.
  \end{align*}
  \end{itemize}

  For a morphism $f\colon Z \times Y \to Y$ satisfying $f \circ e \times Y$ and $f \circ (\cdot) \times Y = \ev{Y}{Y} \circ f^\dagger \times f$,
  we define a strong monad morphism $\alpha_f\colon \_\times Z \Rightarrow (\_\times Y)^Y$ by
  $(\alpha_f)_X \defeq t'_X \circ (X \times f^\dagger)$.
  It is easy to see that $\alpha_f$ is a natural transformation.
  Moreover, it forms a strong monad morphism because for each $I, J \in \mathbb{C}$, 
  \begin{itemize}
    \item $(\alpha_f)_I \circ \eta^{\_ \times Z} = \eta^{(\_ \times Y)^Y}$ holds since the following diagram commutes.
    \begin{displaymath}
      \xymatrix@R=1.5em@C=1.5em{
        I \ar[r]^-\cong \ar[ddrr]^-{\eta} &I \times 1 \ar[dr]^{I \times \cong^\dagger} \ar[r]^-{I \times e} &I \times Z \ar[d]^{I \times f^\dagger} \\
        & &I \times Y^Y \ar[d]^{t'} \\
        & &(I \times Y)^Y.
      }
    \end{displaymath}
    The top-right triangle commutes by the condition (i). 
    \item 
    We write $g_{X}$ for the isomorphism $1 \times X \cong X$.
    The condition (ii) implies that
    the following diagram commutes:
    \begin{displaymath}
      \xymatrix@R=1.5em@C=1.5em{
        I \times Z \times Z \times Y \ar[dd]^-{I \times f^\dagger \times f^\dagger \times Y} \ar[r]^-\cong &I \times (Z \times Z) \times Y \ar[r]^-{I \times (\cdot) \times Y} \ar[d]^-\cong &I \times Z \times Y \ar[r]^-{I \times f^\dagger \times Y} \ar@/_1em/[rr]_-{I \times f} &I \times Y^Y \times Y \ar[r]^-{I \times \ev{Y}{Y}} &I \times Y \\
        &I \times Z \times (Z \times Y) \ar[d]^-{I \times f^\dagger \times f} \\
        I \times Y^Y \times Y^Y \times Y \ar[dr]^-{t' \times \id} \ar[r]^-{\id \times \ev{Y}{Y}} &I \times Y^Y \times Y \ar[rrr]^-{t' \times Y} \ar[uurrr]^-{I \times \ev{Y}{Y}} & & &(I \times Y)^Y \times Y \ar[uu]^{\ev{I \times Y}{Y}} \\
        &(I \times Y)^Y \times Y^Y \times Y \ar[urrr]_-{\id \times \ev{Y}{Y}}
      }
    \end{displaymath}
    By taking its transpose, we have $(\alpha_f)_I \circ \mu_I^{\_ \times Z} = \ev{I \times Y}{Y}^Y \circ (\alpha_f)_{(I \times Y)^Y} \circ T (\alpha_f)_I$.
    \item 
    Because
    \begin{align*}
    (t'_{I \times J} \circ \cong)^\dagger 
    &= I \times (J \times Y^Y) \times Y \cong I \times J \times Y^Y \times Y \xrightarrow{I \times J \times \ev{Y}{Y}} I \times J \times Y \\
    &= I \times (J \times Y^Y) \times Y  \cong I \times (J \times Y^Y \times Y) \xrightarrow{I \times t'^\dagger}  I \times (J \times Y) \cong I \times J \times Y \\
    &= I \times (J \times Y^Y) \times Y \xrightarrow{I \times t' \times Y} I \times (J \times Y)^Y \times Y \xrightarrow{I \times \ev{J \times Y}{Y}} I \times (J \times Y) \cong I \times J \times Y \\
    &= (\lstrength{(\_ \times Y)^Y}{} \circ I \times t'_{J})^\dagger,
    \end{align*}
    we have
    \begin{align*}
      &(\alpha_f)_{I \times J} \circ \lstrength{\_ \times Z}{} \\
      &= t' \circ \id \times f^\dagger \circ \lstrength{\_ \times Z}{} \\
      &= t' \circ \cong \circ \id \times f^\dagger \text{ where }\cong\colon I \times (J \times Y^Y) \to (I \times J) \times Y^Y \\
      &= \lstrength{(\_ \times Y)^Y}{} \circ I \times t' \circ \id \times f^\dagger \text{ since }t'_{I \times J} \circ \cong \ = \lstrength{(\_ \times Y)^Y}{} \circ I \times t'_{J}.
    \end{align*}
    Therefore, $(\alpha_f)_{I \times J} \circ \lstrength{\_ \times Z}{} = \lstrength{(\_ \times Y)^Y}{} \circ I \times (\alpha_f)_I$ holds.
  \end{itemize}

  These assignment $f_{(\_)}$ and $\alpha_{(\_)}$ gives a bijective correspondence.
\end{proof}

\begin{lemma}
  \label{lem:alphaInvariant}
  Let $(T, Z, Y, \tau^{\_ \times Z}, \tau^T, \alpha, q)$ be a product situation.
  The following diagram commutes:

  \adjustbox{scale=1,center}{
  \begin{tikzcd}
    U\times Z   \arrow[r, "\alpha"] \arrow[d, "U\times \alpha"]& (U\times Y)^Y\\
    U\times Y^Y \arrow[r, "\eta"]& (U\times Y^Y\times Y)^Y \arrow[u, "(\id\times\evsyb)^Y"]
  \end{tikzcd}
  }
\end{lemma}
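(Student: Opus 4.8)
The plan is to reduce the claimed commutativity to the explicit description of $\alpha$ supplied by \cref{ap:st_monad_morphism_bij}. Throughout, write $\evsyb$ for $\ev{Y}{Y}\colon Y^Y\times Y\to Y$, let $t'\colon \_\times Y^Y\Rightarrow(\_\times Y)^Y$ be the natural transformation $t'_X\defeq(\id_X\times\evsyb)^\dagger$ appearing in the proof of \cref{ap:st_monad_morphism_bij}, and let $f\colon Z\times Y\to Y$ be the morphism corresponding to the strong monad morphism $\alpha$ under that proposition, so that $f^\dagger\colon Z\to Y^Y$ is $\alpha_1$ read through the canonical isomorphisms $Z\cong\mathbf 1\times Z$ and $(\mathbf 1\times Y)^Y\cong Y^Y$. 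Then, by \cref{ap:st_monad_morphism_bij}, $\alpha=\alpha_f$, so the top arrow of the diagram is $\alpha_U=(\alpha_f)_U=t'_U\circ(\id_U\times f^\dagger)$, while the left-hand arrow $U\times\alpha$ is, by the very definition of that bijection, $\id_U\times f^\dagger$. Hence the whole statement reduces to the single identity
\[
 t'_U \;=\; (\id_U\times\evsyb)^Y\circ\eta^{(\_\times Y)^Y}_{U\times Y^Y},
\]
after which composing both sides on the right with $\id_U\times f^\dagger$ closes the square.

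To prove this identity I would argue at the level of transposes for the adjunction $(\_\times Y)\dashv(\_)^Y$. By \cref{def:effectStateMonad} specialised to the identity monad, the unit of the state monad $(\_\times Y)^Y$ satisfies $\bigl(\eta^{(\_\times Y)^Y}_{X}\bigr)^\dagger=\id_{X\times Y}$ for every $X$; and transposition commutes with postcomposition, i.e.\ $(g^Y\circ h)^\dagger=g\circ h^\dagger$ for $h\colon C\to A^Y$ and $g\colon A\to B$ (an immediate consequence of naturality of $\evsyb$). Applying this with $h=\eta^{(\_\times Y)^Y}_{U\times Y^Y}$ and $g=\id_U\times\evsyb$ gives
\[
 \Bigl((\id_U\times\evsyb)^Y\circ\eta^{(\_\times Y)^Y}_{U\times Y^Y}\Bigr)^\dagger
 \;=\;(\id_U\times\evsyb)\circ\id_{U\times Y^Y\times Y}
 \;=\;\id_U\times\evsyb
 \;=\;(t'_U)^\dagger,
\]
and since transposition is a bijection this yields exactly the desired identity. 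Putting the pieces together, $\alpha_U=t'_U\circ(\id_U\times f^\dagger)=(\id_U\times\evsyb)^Y\circ\eta^{(\_\times Y)^Y}_{U\times Y^Y}\circ(\id_U\times f^\dagger)$, which is precisely the bottom-then-right path of the diagram; this establishes the lemma.

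As a self-contained alternative that avoids peeling open the proof of \cref{ap:st_monad_morphism_bij}, one can instead derive $\alpha_U=t'_U\circ(\id_U\times f^\dagger)$ directly from the strength axiom of the strong monad morphism $\alpha$, namely $\alpha_{X\times W}\circ\lstrength{\_\times Z}{X,W}=\lstrength{(\_\times Y)^Y}{X,W}\circ(\id_X\times\alpha_W)$, instantiated at $X\defeq U$ and $W\defeq\mathbf 1$, using the coherence isomorphisms $\mathbf 1\times Z\cong Z$, $U\times\mathbf 1\cong U$, and their image under $(\_)^Y$ to identify $\lstrength{\_\times Z}{U,\mathbf 1}$ and $\lstrength{(\_\times Y)^Y}{U,\mathbf 1}$ with an identity and with $t'_U$, respectively. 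Either way, the only genuine bookkeeping---and the step I expect to be the main obstacle---is keeping these three families of canonical isomorphisms ($\mathbf 1\times(-)$, $(-)\times\mathbf 1$, and their exponential transports) straight, so that the unlabelled arrow ``$U\times\alpha$'' in the diagram is recognised unambiguously as $\id_U\times f^\dagger$. Once the notation is pinned down, everything else is the one-line transpose computation above together with the recalled formula for $\eta^{(\_\times Y)^Y}$.
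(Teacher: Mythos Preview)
Your proposal is correct. The paper's own proof is the one-line assertion ``This follows from the fact that $\alpha = \lstrength{(\_\times Y)^Y}{}\circ U\times\alpha$,'' which is precisely your ``self-contained alternative'': the strength compatibility axiom of the monad morphism $\alpha$ instantiated at $W=\mathbf 1$, together with the (tacit) observation that $(\id\times\evsyb)^Y\circ\eta$ \emph{is} the left-strength of the state monad at $(U,\mathbf 1)$. Your main route via \cref{ap:st_monad_morphism_bij} unpacks the same content---that proposition packages exactly this strength axiom into the formula $\alpha_U=t'_U\circ(\id_U\times f^\dagger)$---and then makes the transpose computation $t'_U=(\id\times\evsyb)^Y\circ\eta$ explicit, whereas the paper leaves it to the reader. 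So the two arguments coincide; you have simply written out what the paper's one-liner is gesturing at.
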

\begin{proof}
  This follows from the fact that $\alpha = \lstrength{(\_ \times Y)^Y}{} \circ U \times \alpha$.
\end{proof}

\begin{lemma}
  \label{lem:alphaStrengthInvariant}
  Let $(T, Z, Y, \tau^{\_ \times Z}, \tau^T, \alpha, q)$ be a product situation.
  The following diagram commutes:

  \adjustbox{scale=1,center}{
  \begin{tikzcd}
    T(U)\times Z \arrow[d, "\rstrength{T}{}"] \arrow[r, "\alpha"] & \big(T(U)\times Y\big)^Y \arrow[r, "(\rstrength{T}{})^Y"] & T(U\times Y)^Y\\
    T(U\times Z) \arrow[r, "T(\alpha)"]& T\big((U\times Y)^Y\big) \arrow[ur, "u"]
  \end{tikzcd}
}
\end{lemma}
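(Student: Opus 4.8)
The plan is to reduce the claim to elementary coherences for the right strength $\rstrength{T}{}$ by transposing both legs along the adjunction $(\_ \times Y) \dashv (\_)^Y$. First I would invoke \cref{ap:st_monad_morphism_bij} to record that $\alpha$ corresponds to a morphism $f\colon Z \times Y \to Y$; concretely, the transpose $\overline{\alpha}_X\colon (X \times Z) \times Y \to X \times Y$ of $\alpha_X$ equals $\id_X \times f$ up to the reassociation $(X \times Z) \times Y \cong X \times (Z \times Y)$. Both legs of the square are morphisms $T(U) \times Z \to T(U \times Y)^Y$, so it suffices to prove that their transposes, which are morphisms $T(U) \times Z \times Y \to T(U \times Y)$, coincide; and I would show that both transposes are equal to $\rstrength{T}{U, Y} \circ (\id_{T(U)} \times f)$.

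For the top leg $(\rstrength{T}{U, Y})^Y \circ \alpha_{T(U)}$, transposing and using $\evsyb \circ (g^Y \times \id_Y) = g \circ \evsyb$ followed by $\evsyb \circ (\alpha_{T(U)} \times \id_Y) = \overline{\alpha}_{T(U)} = \id_{T(U)} \times f$ gives exactly $\rstrength{T}{U, Y} \circ (\id_{T(U)} \times f)$. For the bottom leg $\natu{T, Y}_{U\times Y} \circ T(\alpha_U) \circ \rstrength{T}{U, Z}$, I would expand the transpose using the defining equation $\evsyb \circ (\natu{T, Y}_{X} \times \id_Y) = T(\ev{X}{Y}) \circ \rstrength{T}{X^Y, Y}$ of $\natu{T, Y}$; then push $T(\alpha_U) \times \id_Y$ and $\rstrength{T}{U, Z} \times \id_Y$ through $\rstrength{T}{}$ using naturality of the right strength in its first argument; collapse $T(\ev{U \times Y}{Y}) \circ T(\alpha_U \times \id_Y) = T(\overline{\alpha}_U) = T(\id_U \times f)$; apply the associativity coherence $\rstrength{T}{U \times Z, Y} \circ (\rstrength{T}{U, Z} \times \id_Y) \cong \rstrength{T}{U, Z \times Y}$ of the strong monad $T$; and finally use naturality of $\rstrength{T}{}$ in its second argument along $f\colon Z \times Y \to Y$. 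This also yields $\rstrength{T}{U, Y} \circ (\id_{T(U)} \times f)$, so the two transposes agree and the diagram commutes. This mirrors how \cref{lem:alphaInvariant} was handled: express $\alpha$ through the underlying morphism $f$ and reduce everything to strength identities.

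The only genuine subtlety is keeping the cartesian associators consistent: the reassociations $(X \times Z) \times Y \cong X \times (Z \times Y)$ and the associator appearing in the strength coherence axiom must be tracked so that both transposes land on literally the same morphism. Working entirely in the transposed form keeps this manageable, since each side becomes a composite of a reassociation, one instance of $\rstrength{T}{}$, and $f$; alternatively, one may pass to a strict monoidal presentation of $\mathbb{C}$ and suppress the associators altogether. Beyond this diagram chase I expect no conceptual obstacle.
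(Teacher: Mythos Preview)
Your proposal is correct and follows essentially the same approach as the paper: both transpose along $(\_\times Y)\dashv(\_)^Y$ and reduce to strength coherences. The only cosmetic difference is that you work directly with the morphism $f\colon Z\times Y\to Y$ supplied by \cref{ap:st_monad_morphism_bij}, whereas the paper routes the same decomposition through \cref{lem:alphaInvariant} (which expresses $\alpha_X$ as $(\id\times\evsyb)^Y\circ\eta\circ(\id_X\times\alpha')$ with $\alpha' = f^\dagger$); the resulting diagram chase is the same, and your version is arguably a bit more streamlined.
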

\begin{proof}
 We take the transpose, and show that the following diagram commutes by~\cref{lem:alphaInvariant}:

  \adjustbox{scale=0.8,center}{
    \begin{tikzcd}
      T(U)\times Z\times Y \arrow[rrr, "\alpha\times \id"] \arrow[rd, "\id\times \alpha\times \id"]\arrow[dd, "\rstrength{T}{}\times \id"] &&& \big(T(U) \times Y\big)^Y \times Y \arrow[ddl, "\evsyb"]\arrow[ddd, "(\rstrength{T}{})^{\id}\times \id"]\\
      & T(U)\times Y^Y\times Y \arrow[r, "\eta\times \id"] \arrow[d, "\rstrength{T}{}\times \id"] \arrow[rd, "\id\times \evsyb"] & \big(T(U)\times Y^Y\times Y\big)^Y\times Y\arrow[ru, "(\id\times \evsyb)^{\id}\times \id"]\\
      T(U\times Z)\times Y \arrow[dd, "T(\alpha)\times \id"]\arrow[r,swap, "T(\id\times \alpha)\times \id"] \arrow[rd,swap,"\rstrength{T}{}"]& T(U\times Y^Y)\times Y \arrow[rd,swap,"\rstrength{T}{}"]& T(U)\times Y \arrow[ddr, "\rstrength{T}{}"]\\
      & T(U\times Z\times Y) \arrow[r,swap, "T(\id\times \alpha\times \id)"] \arrow[d, "T(\alpha\times \id)"] & T(U\times Y^Y\times Y) \arrow[dl, ""] \arrow[rd, swap, "T(\id\times \evsyb)"] & T(U\times Y)^Y\times Y \arrow[d, "\evsyb"]\\
      T\big((U\times Y)^Y\big)\times Y \arrow[d, "\eta\times \id"]\arrow[r, "\rstrength{T}{}"]& T\big((U\times Y)^Y\times Y\big) \arrow[rr, "T(\evsyb)"]& & T(U\times Y)\\
      \Big( T\big((U\times Y)^Y\big)\times Y \Big)^Y\times Y \arrow[d, "(\rstrength{T}{})^{\id}\times \id"]\arrow[rr, "\evsyb"]&& T\big((U\times Y)^Y\big)\times Y \arrow[lu, "\rstrength{T}{}"] \arrow[d, "\rstrength{T}{}"] \\
      \Big( T\big((U\times Y)^Y\times Y\big) \Big)^Y\times Y \arrow[rrd, "T(\evsyb)^\id \times \id"]\arrow[rr, "\evsyb"]&& T\big((U\times Y)^Y\times Y\big) \arrow[ruu, swap,"T(\evsyb)"]\\
      && T(U\times Y)^Y\times Y \arrow[uuur, bend right=20, swap, "\evsyb"]
    \end{tikzcd}
  }
\end{proof}

\subsection{Proof of~\cref{prop:liftingCorrectness}}
We first prove that $\tau^{T(\_\times Z)}$ is an Eilenberg-Moore algebra.
We can see that $\tau^{T(\_\times Z)} \circ \eta^{T(\_\times Z)}_{T(\Omega^{\_\times Z})} = \id_{T(\Omega^{\_\times Z})}$ by the following diagram.

\adjustbox{scale=0.8,center}{
  \begin{tikzcd}
    T(\Omega^{\_\times Z}) \arrow[rrr, "T(\eta)"]\arrow[d, "\eta"]&&& T(\Omega^{\_\times Z}\times Z) \arrow[dd, "T(\tau)"] \\
    T(\Omega^{\_\times Z})\times Z \arrow[r, "\rstrength{}{}"]\arrow[d, "\eta"] & T(\Omega^{\_\times Z}\times Z) \arrow[r, "\id"] \arrow[d, "\eta"] & T(\Omega^{\_\times Z}\times Z) \arrow[rd, "T(\tau)"] \arrow[ru, "\id"]\\
    T\big(T(\Omega^{\_\times Z})\times Z \big) \arrow[r, "T(\rstrength{}{})"] & T^2(\Omega^{\_\times Z}\times Z) \arrow[ru, "\mu"] \arrow[r, "T^2(\tau)"]& T^2(\Omega^{\_\times Z}) \arrow[r, "\mu^T"] & T(\Omega^{\_\times Z})
  \end{tikzcd}
}

We can also see that $\tau^{T(\_\times Z)}\circ T(\tau^{T(\_\times Z)}\times Z) = \tau^{T(\_\times Z)}\circ \mu^{T(\_\times Z)}_{\Omega^{\_\times Z}}$ by the following two diagrams.

\adjustbox{scale=0.8,center}{
  \begin{tikzcd}
    T\Big( T\big(T(\Omega)\times Z\big)\times Z \Big) \arrow[rr, "T\big(T(\rstrength{}{})\times Z\big)"] \arrow[d, "T(\rstrength{}{})"] &&  T\big( T^2(\Omega\times Z)\times Z \big)\arrow[d, "T(\rstrength{}{})"]\\
    T^2\big(T(\Omega)\times Z\times Z\big) \arrow[rr, "T^2(\rstrength{}{}\times Z)"]\arrow[d, "\mu"]&&  T^2\big( T(\Omega\times Z)\times Z \big) \arrow[d, "\mu"]\\
    T\big(T(\Omega)\times Z\times Z\big) \arrow[rr, "T(\rstrength{}{}\times Z)"] \arrow[d, "T\big(\mu\big)"]&& T\big( T(\Omega\times Z)\times Z \big) \arrow[rrr, "T\big(T(\tau)\times Z\big)"] \arrow[d, "T(\rstrength{}{})"]&&& T\big(T(\Omega)\times Z\big)\arrow[d, "T(\rstrength{}{})"]\\
    T\big(T(\Omega)\times Z\big)      \arrow[d, "T(\rstrength{}{})"]    && T^2(\Omega\times Z\times Z) \arrow[dll, "T^2(\mu)"] \arrow[d, "\mu"] \arrow[rrr, "T^2\big(T(\tau)\times Z\big)"]&&& T^2(\Omega\times Z) \arrow[dd, "T^2(\tau)"]\\
    T^2(\Omega\times Z) \arrow[d, "T^2(\tau)"]  \arrow[drr, "\mu"]&& T(\Omega\times Z\times Z) \arrow[d, "T(\mu)"] \arrow[dr, "T(\tau\times Z)"]\\
    T^2(\Omega)  \arrow[d, "\mu"]&& T(\Omega\times Z) \arrow[lld, "T(\tau)"] & T(\Omega\times Z) \arrow[llld, bend left=5, "T(\tau)"] && T^2(\Omega) \arrow[llllld, bend left = 10, "\mu"]\\
    T(\Omega)
  \end{tikzcd}
}

and

\adjustbox{scale=0.8,center}{
  \begin{tikzcd}
    && T\big(T^2(\Omega\times Z)\times Z \big) \arrow[rr, "T(T^2(\tau)\times Z)"]&& T\big(T^2\Omega\times Z\big) \arrow[d, "T(\mu\times Z)"]\\
    T\Big( T\big(T(\Omega)\times Z\big)\times Z \Big) \arrow[d, "T(T(\rstrength{}{})\times Z)"]\arrow[rru, "T(T(\rstrength{}{})\times Z)"]&& T\big(T(\Omega\times Z)\times Z \big) \arrow[rr, "T(T(\tau)\times Z)"]&& T(T\Omega\times Z) \arrow[r, "T(\rstrength{}{})"]& T^2(\Omega\times Z)\arrow[ddddddd, "T^2(\tau)"]\\
    T\big( T^2(\Omega\times Z)\times Z \big) \arrow[d, "T(\rstrength{}{})"]\arrow[rr, "T(T^2(\tau)\times Z)"] \arrow[rru, "T(\mu\times Z)"]&& T(T^2\Omega\times Z)\arrow[rru, "T(\mu\times Z)"] \arrow[d, "T(\rstrength{}{})"]\\
    T^2\big( T(\Omega\times Z)\times Z \big)\arrow[d, "\mu"] \arrow[rr, "T^2(T(\tau)\times Z)"] && T^2(T\Omega\times Z)\arrow[d, "T^2(\rstrength{}{})"]\\
    T\big( T(\Omega\times Z)\times Z \big) \arrow[d, "T(T(\tau)\times Z)"]&& T^3(\Omega\times Z) \arrow[rrruuu, "T(\mu)"] \arrow[d, "T^3(\tau)"] \arrow[rrddd, bend left=15, "T^3(\tau)"]\\
    T\big( T\Omega\times Z \big) \arrow[d, "T(\rstrength{}{})"] && T^3(\Omega) \arrow[d,"\mu"]\\
    T^2( \Omega\times Z ) \arrow[d, "T^2(\tau)"]&& T^2(\Omega) \arrow[lldd, "\mu"]\\
    T^2( \Omega) \arrow[d, "\mu"] &&&& T^3(\Omega) \arrow[rd, "T(\mu)"]\\
    T( \Omega)   &&&&& T^2(\Omega)\arrow[lllll, "\mu"]\\
  \end{tikzcd}
}

Showing that $\tau^{T(\_\times Y)^Y}$ is an Eilenberg-Moore algebra is straightforward.

We then prove that $\beta\colon T(\_\times Z)\rightarrow T(\_\times Y)^Y$ is a strong monad morphism.
The condition on units is easy to show. We show that $\beta\circ \mu = \mu \circ T(\beta\times Y)^Y\circ \beta$.
By~\cref{lem:alphaStrengthInvariant}, we see that

\adjustbox{scale=1,center}{
  \begin{tikzcd}
    T\big((X\times Y)^Y\big)\times Z \arrow[rr, "\natu{T, Y}\times \id"]\arrow[d, "\rstrength{T}{}"] \arrow[rd, "\alpha"]& & T(X\times Y)^Y\times Z\arrow[dd, "\alpha"]\\
    T\big((X\times Y)^Y\times Z\big) \arrow[dd, "T(\alpha)"] & \Big(T\big((X\times Y)^Y\big)\times Y\Big)^Y \arrow[dr, "(\natu{T, Y}\times \id)^{\id}"] \arrow[d, "(\rstrength{T}{})^Y"] & \\
    & \Big(T\big((X\times Y)^Y\times Y\big)\Big)^Y \arrow[rd, "T(\evsyb)^Y"]& \Big(T(X\times Y)^Y\times Y\Big)^Y \arrow[d, "\evsyb^Y"]\\
    T\Big(\big((X\times Y)^Y\times Y\big)^Y\Big) \arrow[ru, "\natu{T, Y}"]\arrow[r, "T(\evsyb^Y)"] & T\big((X\times Y)^Y\big) \arrow[r, "\natu{T, Y}"] & T(X\times Y)^Y
  \end{tikzcd}
}

With this commuting diagram, we can see that

\adjustbox{scale=0.8,center}{
  \begin{tikzcd}
    T\big(T(X\times Z)\times Z\big) \arrow[rr, "T(T(\alpha)\times \id)"]\arrow[d, "T(\rstrength{T}{})"]&& T\Big(T\big((X\times Y)^Y\big)\times Z\Big)\arrow[d, "T(\rstrength{T}{})"]\arrow[rrrr, "T(\natu{T, Y}\times \id)"] && && T\big(T(X\times Y)^Y\times Z\big)\arrow[d, "T(\alpha)"]\\
    T^2(X\times Z\times Z)\arrow[d, "T^2(\mu)"]          \arrow[rr, "T^2(\alpha\times \id)"]&& T^2\big((X\times Y)^Y\times Z\big) \arrow[d, "T^2(\alpha)"]&&  T\big(T(X\times Y)^Y\big) \arrow[rrdd, bend right = 10,  "\natu{T, Y}"] && T\Big(\big(T(X\times Y)^Y\times Y \big)^Y\Big)\arrow[ll, "T((\evsyb)^Y )"]\arrow[d, "\natu{T, Y}"]\\
    T^2(X\times Z)\arrow[drr, "T^2(\alpha)"] \arrow[d, "\mu"]&& T^2\Big(\big((X\times Y)^Y\times Y\big)^Y\Big) \arrow[d, "T^2\big((\evsyb)^Y\big)"]&& && T\big(T(X\times Y)^Y\times Y \big)^Y\arrow[d, "T(\evsyb)^Y"]\\
    T(X\times Z) \arrow[rrd, "T(\alpha)"] && T^2\big((X\times Y)^Y \big) \arrow[d, "\mu"] \arrow[rruu, bend right = 20, swap, "T(\natu{T, Y})"] && && T^2(X\times Y)^Y\arrow[d, "\mu^Y"]\\
    && T\big((X\times Y)^Y \big) \arrow[rrrr, "\natu{T, Y}"]&& && T(X\times Y)^Y
  \end{tikzcd}
}

Lastly, we prove that $q^T$ is an inference query.
By~\cref{lem:alphaStrengthInvariant}, this is indeed shown by the following diagram:

\adjustbox{scale=0.8,center}{
  \begin{tikzcd}
    T\big(T(\Omega^{\_\times Z})\times Z\big) \arrow[r, "T(\rstrength{T}{})"] \arrow[d, "T(\alpha)"] & T^2(\Omega^{\_\times Z}\times Z) \arrow[r, "\mu"] \arrow[d, "T^2(\alpha)"]& T(\Omega^{\_\times Z}\times Z)\arrow[r, "T(\tau^{\_\times Z})"]\arrow[d, "T(\alpha)"] & T(\Omega^{\_\times Z})\arrow[d, "T(q)"] \\
    T\Big(\big(T(\Omega^{\_\times Z})\times Y\big)^Y\Big) \arrow[d, "\natu{T, Y}"] \arrow[dr, "T\big((\rstrength{T}{})^Y\big)"]& T^2\big((\Omega^{\_\times Z}\times Y)^Y\big) \arrow[d, "T(\natu{T, Y})"] \arrow[r, "\mu"] & T\big((\Omega^{\_\times Z}\times Y)^Y\big) \arrow[d, "\natu{T, Y}"] &T({\Omega^T}^{Y}) \arrow[ddddl, bend left = 20, "\natu{T, Y}"]\\
    T\big(T(\Omega^{\_\times Z})\times Y\big)^Y \arrow[d, "T(T(q)\times Y)^Y"] \arrow[rd, "T(\rstrength{T}{})^Y"]& T\Big(\big(T(\Omega^{\_\times Z}\times Y)\big)^Y\Big) \arrow[d, "\natu{T, Y}"]& T(\Omega^{\_\times Z}\times Y)^Y \arrow[d, "T(q\times \id)^{\id}"]\\
    T\big(T({\Omega^{T}}^Y)\times Y\big)^Y \arrow[d, "T(\natu{T, Y}\times \id)^{\id}"] \arrow[dr, "T(\rstrength{T}{})^{\id}"] & T^2(\Omega^{\_\times Z}\times Y)^Y \arrow[ur, "\mu^Y"] \arrow[d, "T^2(q\times \id)^{\id}"]& T\big({\Omega^{T}}^Y\times Y\big)^Y \arrow[dd, "T(\evsyb)^{\id}"]\\
    T\big(T({\Omega^{T}})^Y\times Y\big)^Y \arrow[dr, "T(\evsyb)^{\id}"]& T^2({\Omega^{T}}^Y\times Y)^Y \arrow[d, "T^2(\evsyb)^{\id}"] \arrow[ru, "\mu^{\id}"]\\
    & T^2(\Omega^T)^Y \arrow[r, "\mu^{\id}"]& T(\Omega^T)^Y \arrow[r, "(\tau^T)^Y"] & {\Omega^T}^Y
  \end{tikzcd}
}

We then show the condition on the strenghs by the following diagram.  

\adjustbox{scale=0.6,center}{
  \begin{tikzcd}
    X\times T(W\times Z) \arrow[d, "\lstrength{}{}"] \arrow[r, "\id\times T(\alpha)"]& X\times T\big((W\times Y)^Y\big) \arrow[d, "\lstrength{}{}"] \arrow[r, "\id\times \eta"] \arrow[dr, "\eta"]& X\times\Big(T\big((W\times Y)^Y\big)\times Y\Big)^Y\arrow[r, "\id\times (\rstrength{}{})^Y"] \arrow[d, "\lstrength{(\_ \times Y)^Y}{}"]&  X\times\Big(T\big((W\times Y)^Y\times Y\big)\Big)^Y \arrow[r, "\id\times T(\evsyb)^Y"] \arrow[d, ""]& X\times T(W\times Y)^Y\arrow[dd, "\eta"]\\
    T(X\times W\times Z) \arrow[d, "T(\alpha)"] \arrow[r, "T(\id\times \alpha)"] & T\big(X\times (W\times Y)^Y\big)\arrow[dl, "T(\lstrength{(\_ \times Y)^Y}{})"] \arrow[dr, "\eta"]& \Big(X\times T\big((W\times Y)^Y\big)\times Y\Big)^Y\arrow[d, "(\lstrength{}{}\times \id)^Y"] \arrow[r, "(\id\times \rstrength{}{})^Y"]&\Big( X\times T\big((W\times Y)^Y\times Y\big)\Big)^Y\arrow[ddl, bend left = 5, "(\lstrength{}{})^Y"] \arrow[ddd, "\big(\id\times T(\evsyb)\big)^Y"]\\
    T\big((X\times W\times Y)^Y\big) \arrow[d, "\eta"]&  & \Big(T\big(X\times (W\times Y)^Y\big)\times Y\Big)^Y \arrow[d, "(\rstrength{}{})^Y"] \arrow[dll, "\big(T(\lstrength{(\_ \times Y)^Y}{})\times Y\big)^Y"] & & \big(X\times T(W\times Y)^Y\times Y\big )^Y\arrow[ddl, "(\id\times \evsyb)^Y"]\\
    \Big(T\big((X\times W\times Y)^Y\big) \times Y\Big)^Y \arrow[d, "(\rstrength{}{})^Y"] & & T\big(X\times (W\times Y)^Y\times Y\big)^Y \arrow[dll, "T(\lstrength{(\_ \times Y)^Y}{}\times Y)^Y"] \arrow[ddll, bend left = 10, "T(\id\times \evsyb)^Y"] \\
    T\big((X\times W\times Y)^Y\times Y\big)^Y \arrow[d, "T(\evsyb)^Y"]& & &  \big(X\times T(W\times Y)\big)^Y\arrow[llld, bend left = 5, "(\lstrength{}{})^Y"]\\
    T(X\times W\times Y)^Y &
  \end{tikzcd}
}

\qed

\section{Omitted Definitions and Proofs in~\cref{sec:fibrationalapproach}}
 We sometimes write $\Gamma \times \bt$ instead of $(\prod_i \bt_i) \times \bt$ for abbreviation.

\begin{lemma}[weakening]
  \label{lem:weakening}
  For any well-typed term $\Gamma\vdash M\colon \bt_1$, 
  the term $\Gamma, x\colon \bt_2\vdash M\colon \bt_1$ is also well-typed.
  Moreover, the following equation holds:
  \begin{align*}
    \itp{A}{\Gamma, x\colon \bt_2\vdash M\colon \bt_1}{T} = \itp{A}{\Gamma\vdash M\colon \bt_1}{T}\circ \pi_1.
  \end{align*}
\end{lemma}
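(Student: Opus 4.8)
The plan is to prove both halves of the statement simultaneously by induction on the derivation of $\Gamma\vdash M\colon\bt_1$. For the well-typedness part, observe that every rule in \cref{fig:typingRulesSource} is monotone in the context, in the sense that enlarging $\Gamma$ with an extra hypothesis $x\colon\bt_2$ leaves the premises and conclusion applicable: the axiom $\varintro$ only checks membership $(y,\bt)\in\Gamma$, and the remaining rules carry the ambient context along unchanged. The only subtlety is in the binder rules $\abst$ (and, in the $\lambdacfix$ setting, $\rec$), which push a fresh variable onto the right of the context, so that the induction hypothesis produces $\Gamma, z\colon\bt, x\colon\bt_2\vdash M'\colon\bt'$ rather than the context shape we want. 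To avoid repeatedly invoking the exchange rule $\exchange$, I would from the start prove the slightly more general statement in which $x\colon\bt_2$ may be inserted at an arbitrary position of $\Gamma$; its induction hypothesis then applies verbatim in the binder cases.

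\textbf{The semantic equality.} Writing $\Gamma^+\defeq\Gamma, x\colon\bt_2$, we have $\itp{\mathcal{A}}{\Gamma^+}{T}=\itp{\mathcal{A}}{\Gamma}{T}\times\itp{\mathcal{A}}{\bt_2}{T}$ with $\pi_1$ the evident projection, and the claim is that $\itp{\mathcal{A}}{\Gamma^+\vdash M\colon\bt_1}{T}=\itp{\mathcal{A}}{\Gamma\vdash M\colon\bt_1}{T}\circ\pi_1$. I would prove this by the same induction, in each case unfolding the interpretation as defined in \cref{def:lambdac} and sliding $\pi_1$ through. The variable case is immediate from the interpretation of projections. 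The cases for effect-free constants, generic effects, $()$, pairing, the projections $\pi_i\,M$, the coproduct injections $\iota_i\,M$, the empty-case $\delta(M)$ and the binary case analysis all reduce, after applying the induction hypothesis to the immediate subterms, to naturality of the structural transformations used in the interpretation ($\eta^T$, $\mu^T$, and the left and right strengths of $T$), bifunctoriality of $\times$ and $+$, and the universal properties of the bicartesian closed structure. The application case $M\,N$ additionally uses naturality of the evaluation (counit) morphisms; the abstraction case $\lambda z.\,M$ uses the strengthened induction hypothesis at the context $\Gamma, z\colon\bt$ together with the compatibility of currying with precomposition by $\pi_1\times\id$.

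\textbf{Recursion and the main obstacle.} In the $\lambdacfix$ setting the only additional case is $\mu f z.\,M$, whose interpretation (\cref{def:lambdacfix}) is assembled from $\eta^T$, the canonical Eilenberg--Moore algebras $\cal{T}{X}{Z}$, and the parametrised fixpoint operator $\fixop{X}{Z}{f}$. Here the decisive ingredient is the \emph{uniformity} of the stable uniform call-by-value fixpoint operator: precomposing the parameter object with $\pi_1\times\id$ commutes with taking parametrised fixed points, and the side condition required to invoke uniformity is precisely the induction hypothesis for $\lambda z.\,M$; the surrounding occurrences of $\eta^T$ and $\cal{T}{X}{Z}$ are handled by naturality as before. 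The main difficulty throughout is organisational rather than conceptual: keeping the context shapes aligned across the binder cases, since the literal statement appends the new variable at the end of $\Gamma$ while $\abst$ and $\rec$ grow the context on the right. Committing from the outset to the ``weakening at an arbitrary position'' formulation — equivalently, proving an exchange lemma first and threading it through — removes this friction and reduces every remaining case to a routine naturality computation.
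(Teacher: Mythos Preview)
Your proposal is correct and takes the same approach as the paper: the paper's proof is literally the single line ``By the straightforward induction,'' and your write-up is a faithful expansion of exactly that induction on typing derivations. Your handling of the binder cases via a generalised insertion position (or equivalently via $\exchange$) and your identification of the parameter-naturality property of the fixpoint operator for the $\rec$ case (cf.\ the paper's \cref{lem:fixpi}) are precisely the details such an induction requires.
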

\begin{proof}
  By the straightforward induction.
\end{proof}

\begin{lemma}
  \label{lem:betaReduction}
  For any strong monad $T$ and well-typed terms $\Gamma\vdash \lambda x. M \colon \bt_1\rightarrow \bt_2$ and $\Gamma\vdash N \colon \bt_1$, the following equation holds:
  \begin{align*}
  \itp{A}{(\lambda x. M)(N)}{T} = \mu_{\itp{A}{\bt_2}{T}}\circ T(\itp{A}{M}{T})\circ \lstrength{T}{\itp{A}{\Gamma}{T}, \itp{A}{\bt_1}{T}} \circ \langle \id_{\itp{A}{\Gamma}{T}}, \itp{A}{N}{T}\rangle
  \end{align*}
\end{lemma}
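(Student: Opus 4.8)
The statement is the standard $\beta$-reduction lemma for the interpretation in a $\lambda_c(\Sigma)$-structure, so the plan is to unfold the definitions of the interpretations of application and abstraction from \cref{def:lambdac} and simplify using the monad laws and the strength coherence conditions. First I would recall that the interpretation of an application $M\ N$ is obtained from the interpretation of a general application---which pairs $\itp{A}{M}{T}$ and $\itp{A}{N}{T}$, applies the strengths to push the monad outside, and then composes with the evaluation morphism $\ev{}{}$ followed by a multiplication $\mu$---and that for a $\lambda$-abstraction $\lambda x.\ M$ the interpretation is $\eta^T \circ (\itp{A}{M}{T})^{\dagger}$, i.e.\ $\eta$ of the transpose of $\itp{A}{M}{T}$. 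Substituting the abstraction $\lambda x. M$ for the function position of the application and unwinding, the key simplification is that $\ev{}{} \circ ((\itp{A}{M}{T})^{\dagger} \times \id) = \itp{A}{M}{T}$ (the defining property of the transpose/counit), so the evaluation cancels the currying.

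The second ingredient is handling the $\eta$ that appears in the interpretation of the abstraction: after pairing $\langle \id_{\itp{A}{\Gamma}{T}}, \itp{A}{N}{T} \rangle$ and applying the left-strength, the $\eta$ gets pushed through by naturality of $\lstrength{T}{}$ and the unit coherence of the strength (namely $\lstrength{T}{} \circ (\id \times \eta^T) = \eta^T$), and then the outer $\mu^T \circ T(\eta^T) = \id$ monad law collapses one layer, leaving exactly $\mu_{\itp{A}{\bt_2}{T}} \circ T(\itp{A}{M}{T}) \circ \lstrength{T}{\itp{A}{\Gamma}{T}, \itp{A}{\bt_1}{T}} \circ \langle \id, \itp{A}{N}{T} \rangle$. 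I would organize this as a single commuting diagram (or a short chain of equalities) whose cells are instances of: naturality of $\lstrength{T}{}$, the two strength coherence axioms, the monad unit laws, and the counit property of $(\_)^{\dagger}$. Since the paper lists this essentially as a sanity-check lemma used later (e.g.\ in unfolding the SPS transformation), a clean diagrammatic argument citing \cite{Moggi89,BentonHM00,kock1972strong} for the coherence conditions suffices.

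The main obstacle---though it is more bookkeeping than genuine difficulty---is keeping track of precisely which strengths (left vs.\ right) and which projections appear in the unfolded interpretation of the general application rule, since \cref{def:lambdac} only spells out effect-free constants and generic effects explicitly and defers the product/application cases to the cited references; one must fix a convention (consistent with the one implicitly used in \cref{fig:uncurrying} and \cref{def:eval_L}) and verify the coherence cells line up. Once the convention is pinned down, every cell is a named axiom, so no creative step is required; I would therefore present the proof as "By unfolding \cref{def:lambdac} and applying the monad laws together with the strength coherence conditions; see the diagram below," and include the diagram.

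\begin{proof}
  Unfolding the interpretation of application and of $\lambda$-abstraction from \cref{def:lambdac}, the claim follows from the counit property $\ev{}{} \circ ((\itp{A}{M}{T})^\dagger \times \id) = \itp{A}{M}{T}$, the naturality and unit coherence of the strength $\lstrength{T}{}$, and the monad unit law $\mu^T \circ T(\eta^T) = \id$. Concretely, the following diagram commutes:
  \[
  \begin{tikzcd}
    \itp{A}{\Gamma}{T} \arrow[r, "\langle \id, \itp{A}{N}{T}\rangle"] \arrow[dr, swap, "\langle \id, \itp{A}{N}{T}\rangle"] & \itp{A}{\Gamma}{T} \times T\itp{A}{\bt_1}{T} \arrow[r, "\lstrength{T}{}"] & T(\itp{A}{\Gamma}{T} \times \itp{A}{\bt_1}{T}) \arrow[d, "T\itp{A}{M}{T}"] \arrow[r, "T\langle (\itp{A}{M}{T})^\dagger\pi_1, \pi_2\rangle"] & T(\itp{A}{\bt_1 \to \bt_2}{T} \times \itp{A}{\bt_1}{T}) \arrow[d, "T\ev{}{}"] \\
    & \itp{A}{\Gamma}{T} \times \itp{A}{\bt_1}{T} \arrow[ur, swap, near start, "\eta^T"] \arrow[r, swap, "\itp{A}{M}{T}"] & T\itp{A}{\bt_2}{T} \arrow[r, "\eta^T"] \arrow[dr, swap, "\id"] & T^2\itp{A}{\bt_2}{T} \arrow[d, "\mu^T"] \\
    & & & T\itp{A}{\bt_2}{T}
  \end{tikzcd}
  \]
  The upper-left triangle uses the unit coherence $\lstrength{T}{} \circ (\id \times \eta^T) = \eta^T$; the square on the right of the top row is naturality of $\eta^T$ together with the counit property $\ev{}{} \circ ((\itp{A}{M}{T})^\dagger \times \id) = \itp{A}{M}{T}$; and the lower-right triangle is the monad unit law. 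Reading off the outer path gives $\itp{A}{(\lambda x. M)(N)}{T} = \mu_{\itp{A}{\bt_2}{T}} \circ T(\itp{A}{M}{T}) \circ \lstrength{T}{\itp{A}{\Gamma}{T}, \itp{A}{\bt_1}{T}} \circ \langle \id_{\itp{A}{\Gamma}{T}}, \itp{A}{N}{T}\rangle$, as required.
\end{proof}
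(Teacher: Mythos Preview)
Your plan is the same as the paper's---unfold the interpretation of application and of the abstraction, then simplify using strength coherence and the monad laws---and the ingredients you list (the counit property of the transpose, the unit coherence of the strength, the monad unit law) are exactly what the paper uses. The paper carries this out as a five-line chain of equalities.

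However, your concrete diagram does not implement your plan and contains a genuine error. The paper's interpretation of application (visible in the first line of its proof and in the extended interpretation in \cref{def:eval_L}) is
\[
\mu \circ T\mu \circ T^2(\ev{}{}) \circ T(\lstrength{T}{}) \circ \rstrength{T}{} \circ \langle \itp{A}{\lambda x. M}{T},\, \itp{A}{N}{T}\rangle,
\]
i.e.\ it applies the \emph{right} strength to the function component first, then the left strength under $T$, with two multiplications. Your diagram never starts from this expression; its ``outer path'' is already the target right-hand side, so nothing is being proved. More importantly, you invoke the coherence $\lstrength{T}{} \circ (\id \times \eta^T) = \eta^T$, but the $\eta^T$ produced by the abstraction sits in the \emph{first} (function) component of the pair, so the relevant axiom is the right-strength coherence $\rstrength{T}{} \circ (\eta^T \times \id) = \eta^T$. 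Correspondingly, your diagonal arrow labelled $\langle \id, \itp{A}{N}{T}\rangle$ into $\itp{A}{\Gamma}{T}\times \itp{A}{\bt_1}{T}$ is ill-typed, since $\itp{A}{N}{T}$ lands in $T\itp{A}{\bt_1}{T}$.

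The fix is exactly what the paper does: substitute $\itp{A}{\lambda x. M}{T} = \eta^T \circ (\itp{A}{M}{T})^\dagger$, use $\rstrength{T}{} \circ (\eta^T \times \id) = \eta^T$ to collapse the right strength and one $\mu$, then use naturality of $\lstrength{T}{}$ to pull $(\itp{A}{M}{T})^\dagger \times \id$ through, and finish with the counit identity $\ev{}{} \circ ((\itp{A}{M}{T})^\dagger \times \id) = \itp{A}{M}{T}$.
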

\begin{proof}
  \begin{align*}
    \itp{A}{(\lambda x. M)(N)}{T} &= \mu_{\itp{A}{\bt_2}{T}} \circ T(\mu_{\itp{A}{\bt_2}{T}})\circ T^2(\ev{T(\itp{A}{\bt_2}{T})}{\itp{A}{\bt_1}{T}})\circ T\big(\lstrength{T}{T(\itp{A}{\bt_2}{T})^{\itp{A}{\bt_1}{T}},\itp{A}{\bt_1}{T}}\big)\\
                                  &\quad \circ \rstrength{T}{T(\itp{A}{\bt_2}{T})^{\itp{A}{\bt_1}{T}},T(\itp{A}{\bt_1}{T})}\circ \langle \itp{A}{\lambda x. M}{T}, \itp{A}{N}{T} \rangle\\
                                  &=  \mu_{\itp{A}{\bt_2}{T}}\circ \mu_{T(\itp{A}{\bt_2}{T})}\circ T^2(\ev{T(\itp{A}{\bt_2}{T})}{\itp{A}{\bt_1}{T}})\circ T\big(\lstrength{T}{T(\itp{A}{\bt_2}{T})^{\itp{A}{\bt_1}{T}},\itp{A}{\bt_1}{T}}\big)\\
    &\quad \circ \eta^T_{T(\itp{A}{\bt_2}{T})^{\itp{A}{\bt_1}{T}} \times T\itp{A}{\bt_1}{T}} \circ \langle \itp{A}{M}{T}^\dagger, \itp{A}{N}{T} \rangle\\
      &=  \mu_{\itp{A}{\bt_2}{T}}\circ T(\ev{T(\itp{A}{\bt_2}{T})}{\itp{A}{\bt_1}{T}})\circ \mu_{T(\itp{A}{\bt_2}{T})^{\itp{A}{\bt_1}{T}}\times \itp{A}{\bt_1}{T}}\circ T\big(\lstrength{T}{T(\itp{A}{\bt_2}{T})^{\itp{A}{\bt_1}{T}},\itp{A}{\bt_1}{T}}\big)\\
    &\quad \circ \eta^T_{T(\itp{A}{\bt_2}{T})^{\itp{A}{\bt_1}{T}} \times T\itp{A}{\bt_1}{T}}\circ \itp{A}{M}{T}^\dagger \times \id_{T(\itp{A}{\bt_1}{T})}\circ  \langle \id_{\itp{A}{\Gamma}{T}}, \itp{A}{N}{T} \rangle\\
    &=  \mu_{\itp{A}{\bt_2}{T}}\circ T(\ev{T(\itp{A}{\bt_2}{T})}{\itp{A}{\bt_1}{T}})\circ \lstrength{T}{T(\itp{A}{\bt_2}{T})^{\itp{A}{\bt_1}{T}},\itp{A}{\bt_1}{T}}
    \circ \itp{A}{M}{T}^\dagger \times \id_{T(\itp{A}{\bt_1}{T})}\circ  \langle \id_{\itp{A}{\Gamma}{T}}, \itp{A}{N}{T} \rangle\\
    &=  \mu_{\itp{A}{\bt_2}{T}}\circ T(\ev{T(\itp{A}{\bt_2}{T})}{\itp{A}{\bt_1}{T}})\circ T(\itp{A}{M}{T}^\dagger \times \id_{\itp{\mathbt{t}_1}{M}{T}}) \circ \lstrength{T}{\itp{A}{\Gamma}{T},\itp{A}{\bt_1}{T}}
    \circ  \langle \id_{\itp{A}{\Gamma}{T}}, \itp{A}{N}{T} \rangle\\
    &=  \mu_{\itp{A}{\bt_2}{T}}\circ T(\itp{A}{M}{T})\circ \lstrength{T}{\itp{A}{\Gamma}{T},\itp{A}{\bt_1}{T}}\circ  \langle \id_{\itp{A}{\Gamma}{T}}, \itp{A}{N}{T} \rangle.
  \end{align*}
\end{proof}

\subsection{Omitted Definitions of $\litp{A}{M}{T}{L}$} \label{ap:litp}

\begin{align*}
  \litp{A}{c \ M}{T}{L} &\coloneqq T(a(c)) \circ \litp{A}{M}{T}{L}, \\
  \litp{A}{e \ M}{T}{L} &\coloneqq \mu_{TL\itp{\mathcal{A}}{\carfunc(e)}{T}} \circ T(a(e)) \circ \litp{A}{M}{T}{L}, \\
  \litp{A}{()}{T}{L} &\coloneqq \eta \circ L! \text{ where }!\colon \litp{A}{\Gamma}{T}{L} \to 1 \text{ is the unique morphism}, \\
  \litp{A}{\pi_i M}{T}{L} &\coloneqq TL\pi_i \circ \litp{A}{M}{T}{L}, \\
  \litp{A}{\delta(M)}{T}{L} &\coloneqq TL! \circ \litp{A}{M}{T}{L}, \\
  \litp{A}{\iota_i M}{T}{L} &\coloneqq TL\iota_i \circ \litp{A}{M}{T}{L}, \\
  \litp{A}{\delta(M, x_1\colon \bt_1.~M_1, x_2\colon \bt_2.~M_2)}{T}{L} &\coloneqq \mu \circ T([\litp{A}{M_1}{T}{L}, \litp{A}{M_2}{T}{L}]) \circ T(\natcoprod{\litp{A}{\Gamma, \bt_1}{T}{L}, \litp{A}{\Gamma, \bt_2}{T}{L}}) \circ TL(\cong) \circ T(\lstrength{L}{\litp{A}{\Gamma}{T}{L}, \litp{A}{\bt_1}{T}{L}+\litp{A}{\bt_2}{T}{L}}) \\
      &\qquad \circ \lstrength{T}{\litp{A}{\Gamma}{T}{L}, L(\litp{A}{\bt_1}{T}{L}+\litp{A}{\bt_2}{T}{L})} \circ \langle \natdisc{\litp{A}{\Gamma}{T}{L}}, \litp{A}{M}{T}{L} \rangle \\
  &\qquad \text{where }\cong\colon \litp{A}{\Gamma, \bt_1}{T}{L}+\litp{A}{\Gamma, \bt_2}{T}{L} \to \litp{A}{\Gamma}{T}{L} \times (\litp{A}{\bt_1}{T}{L}+\litp{A}{\bt_2}{T}{L}), \\
  \litp{A}{\lambda x.~M}{T}{L} &\coloneqq \eta \circ L((\litp{A}{M}{T}{L} \circ \lstrength{L}{})^\dagger), \\
  \litp{A}{M \ N}{T}{L} &\coloneqq \mu \circ T(\mu) \circ T^2(\ev{L\litp{A}{\bt_1}{T}{L}}{TL\litp{A}{\bt_2}{T}{L}}) \circ T(\lstrength{T}{TL\litp{A}{\bt_2}{T}{L}^{L\litp{A}{\bt_1}{T}{L}}, L\litp{A}{\bt_1}{T}{L}}) \circ T(\id \times \litp{A}{N}{T}{L}) \\ 
      &\qquad \circ T(\langle \pi_1 \natdisc{L(TL\litp{A}{\bt_2}{T}{L}^{L\litp{A}{\bt_1}{T}{L}}) \times \litp{A}{\Gamma}{T}{L}}, L\pi_2 \rangle \circ \rstrength{L}{TL\litp{A}{\bt_2}{T}{L}^{L\litp{A}{\bt_1}{T}{L}}, \litp{A}{\Gamma}{T}{L}}) \\ 
      &\qquad  \circ \rstrength{T}{L(TL\litp{A}{\bt_2}{T}{L}^{L\litp{A}{\bt_1}{T}{L}}), \litp{A}{\Gamma}{T}{L}}\circ \langle \litp{A}{M}{T}{L}, \natdisc{\litp{A}{\Gamma}{T}{L}} \rangle
\end{align*}

\subsection{Proof of \cref{prop:trans}} \label{ap:transproof}
\begin{proof}
 We prove it by induction on the structure of $M$. 
 Here we let $L\colon \mathbb{C} \to \mathbb{C}$ for the functor $(\_ \times Y)$.
 Note that we consider a $\lambda_c(\uncurry{\Sigma})$ or $\lambda_c(\Sigma)$-structure $\tstructure = (\mathbb{C}, T, A, \uncurry{a})$.
 \begin{description}
  \item[Case $\varintro$.]
    For $\Gamma\vdash x\colon \bt_i$, we have
    \begin{align*}
    \itp{\tstructure}{\uncurry{x}}{T} 
    &= \itp{\tstructure}{(x, y)}{T} \\
    &= \mu_{T(\litp{\tstructure}{\bt_i}{T}{L} \times Y)} \circ T\lstrength{T}{\litp{\tstructure}{\bt_i}{T}{L}, Y} \circ \rstrength{T}{\litp{\tstructure}{\bt_i}{T}{L}, TY} \circ \langle \itp{\tstructure}{x}{T}, \itp{\tstructure}{y}{T}\rangle \\
    &= \mu_{T(\litp{\tstructure}{\bt_i}{T}{L} \times Y)} \circ T\lstrength{T}{\litp{\tstructure}{\bt_i}{T}{L}, Y} \circ \rstrength{T}{\litp{\tstructure}{\bt_i}{T}{L}, TY} \circ (\eta_{\litp{\tstructure}{\bt_i}{T}{L}} \times \eta_{Y}) \circ (\pi_i \times \id_Y) \\
    &= \mu_{T(\litp{\tstructure}{\bt_i}{T}{L} \times Y)} \circ \eta_{\litp{\tstructure}{\bt_i}{T}{L} \times Y} \circ (\pi_i \times \id_Y) \\
    &= \eta_{L\litp{\tstructure}{\bt_i}{T}{L}} \circ L\pi_i
    = \litp{\tstructure}{x}{T}{\_ \times Y}.
    \end{align*}
  \item[Case $\exchange$.] Easy.
  \item[Case $\conintro$.]
    \begin{align*}
    \itp{\tstructure}{\uncurry{(c \ M)}}{T} 
    &= \itp{\tstructure}{c \ \uncurry{M}}{T} \\
    &= T(\uncurry{a}(c)) \circ \itp{\tstructure}{\uncurry{M}}{T} \\
    &= T(\uncurry{a}(c)) \circ \litp{A}{M}{T}{L} \text{ by induction hypothesis} \\
    &= \litp{\tstructure}{c \ M}{T}{\_ \times Y}.
    \end{align*}
  \item[Case $\genintro$.] 
    \begin{align*}
    \itp{\tstructure}{\uncurry{(e \ M)}}{T} 
    &= \itp{\tstructure}{e \ \uncurry{M}}{T} \\
    &= \mu_{T\itp{\mathcal{A}}{\uncurry{\carfunc(e)}}{T}}\circ T\big( \uncurry{a}(e)\big)\circ \itp{\mathcal{A}}{\uncurry{M}}{T} \\
    &= \mu_{T\itp{\mathcal{A}}{\uncurry{\carfunc(e)}}{T}} \circ T(\uncurry{a}(e)) \circ \litp{A}{M}{T}{L} \text{ by induction hypothesis} \\
    &= \litp{\tstructure}{e \ M}{T}{\_ \times Y}.
    \end{align*}
  \item[Case $\unitprod$.] 
    \begin{align*}
    \itp{\tstructure}{\uncurry{()}}{T} 
    &= \itp{\tstructure}{((), y)}{T} \\
    &= \mu \circ T\lstrength{T}{} \circ \rstrength{T}{} \circ \langle \itp{\tstructure}{()}{T}, \itp{\tstructure}{y}{T}\rangle \\
    &= \eta_{1 \times Y} \circ L! \\
    &= \litp{\tstructure}{()}{T}{\_ \times Y}.
    \end{align*}
  \item[Case $\prodintro$.] 
    We can see the following equations.
    \begin{align*}
      &\itp{A}{(\lambda y. \uncurry{N})(\pi_2 z)}{T}\colon \itp{A}{\uncurry{\Gamma}, \primetrans{\bt_1} \times b^Y}{T} \to TL\itp{A}{\primetrans{\bt_2} \times b^Y}{T} \\
      &= \mu_{\itp{A}{\primetrans{\bt_2}, b^Y}{T}} \circ T\itp{A}{\uncurry{\Gamma}, y'\colon b^Y \vdash \uncurry{N}[y'/y]}{T} \circ \lstrength{T}{\itp{A}{\uncurry{\Gamma}}{T}, Y} \circ \itp{A}{\pi_2 z}{T} \text{ by }\cref{lem:betaReduction} \\
      &= \mu_{\itp{A}{\primetrans{\bt_2}, b^Y}{T}} \circ T\itp{A}{\uncurry{\Gamma} \vdash \uncurry{N}}{T} \circ T(\pi_1 \times \id_Y)\circ \lstrength{T}{\itp{A}{\uncurry{\Gamma}}{T}, Y} \circ \itp{A}{\pi_2 z}{T} \text{ by }\cref{lem:weakening} \\
      &= \mu_{\itp{A}{\primetrans{\bt_2}, b^Y}{T}} \circ T\itp{A}{\uncurry{N}}{T} \circ \lstrength{T}{\itp{A}{\primetrans{\Gamma}}{T}, Y} \circ \pi_1 \times (\eta_{Y} \circ \pi_2) \\
      &= \mu_{\itp{A}{\primetrans{\bt_2}, b^Y}{T}} \circ T\itp{A}{\uncurry{N}}{T} \circ \lstrength{T}{\itp{A}{\primetrans{\Gamma}}{T}, Y} \circ (\id_{\itp{A}{\primetrans{\Gamma}}{T}} \times \eta_Y) \circ (\pi_1 \times \pi_2) \\
      &= \itp{A}{\uncurry{N}}{T} \circ (\pi_1 \times \pi_2).
    \end{align*}
    \begin{align*}
      &\itp{\tstructure}{\big(\lambda z. (\pi_1 z, (\lambda y. \uncurry{N})(\pi_2 z) )\big) (\uncurry{M})}{T}\colon \itp{A}{\uncurry{\Gamma}}{T} \to TL\itp{A}{\primetrans{\bt_1} \times (\primetrans{\bt_2} \times b^Y)}{T} \\
      &= \mu \circ T(\itp{A}{(\pi_1 z, (\lambda y. \uncurry{N})(\pi_2 z))}{T}) \circ \lstrength{T}{\itp{A}{\uncurry{\Gamma}}, \itp{A}{\primetrans{\bt_1}\times b^Y}{T}} \circ \langle\id, \itp{A}{\uncurry{M}}{T} \rangle \text{ by }\cref{lem:betaReduction} \\
      &= \mu \circ T(T\mu \circ T\lstrength{T}{\itp{A}{\primetrans{\bt_1}}{T}, \itp{A}{\primetrans{\bt_2}\times b^Y}{T}} \circ T\rstrength{T}{\itp{A}{\primetrans{\bt_1}}{T}, T\itp{A}{\primetrans{\bt_2}\times b^Y}{T}} \circ \langle \itp{A}{\pi_1 z}{T}, \itp{A}{(\lambda y. \uncurry{N})(\pi_2 z)}{T} \rangle) \\
        &\qquad \circ \lstrength{T}{\itp{A}{\uncurry{\Gamma}}, \itp{A}{\primetrans{\bt_1}\times b^Y}{T}} \circ \langle\id, \itp{A}{\uncurry{M}}{T} \rangle \\
        &= \mu \circ T(T\mu \circ T\lstrength{T}{\itp{A}{\primetrans{\primetrans{\bt_1}}}{T}, \itp{A}{\primetrans{\bt_2}\times b^Y}{T}} \circ T\rstrength{T}{\itp{A}{\primetrans{\bt_1}}{T}, T\itp{A}{\primetrans{\bt_2}\times b^Y}{T}} \circ \eta_{\itp{A}{\primetrans{\bt_1}}{T}} \times \id \circ \langle \pi_1 \circ \pi_2, \itp{A}{(\lambda y. \uncurry{N})(\pi_2 z)}{T} \rangle) \\ 
        &\quad \circ \lstrength{T}{\itp{A}{\uncurry{\Gamma}}, \itp{A}{\primetrans{\bt_1}\times b^Y}{T}} \circ \langle\id, \itp{A}{\uncurry{M}}{T} \rangle \text{ since }\itp{A}{\pi_1 z}{T} = \eta_{\itp{A}{\primetrans{\bt_1}}{T}} \circ \pi_1 \circ \pi_2 \\
        &= \mu \circ T(\lstrength{T}{\itp{A}{\primetrans{\bt_1}}{T}, \itp{A}{\primetrans{\bt_2}\times b^Y}{T}} \circ \langle \pi_1 \circ \pi_2, \itp{A}{(\lambda y. \uncurry{N})(\pi_2 z)}{T} \rangle) \circ \lstrength{T}{\itp{A}{\uncurry{\Gamma}}, \itp{A}{\primetrans{\bt_1}\times b^Y}{T}} \circ \langle\id, \itp{A}{\uncurry{M}}{T} \rangle  \\
        &= \mu \circ T(\lstrength{T}{\itp{A}{\primetrans{\bt_1}}{T}, \itp{A}{\primetrans{\bt_2}\times b^Y}{T}} \circ \langle \pi_1 \circ \pi_2, \itp{A}{\uncurry{N}}{T} \circ \pi_1 \times \pi_2 \rangle) \circ \lstrength{T}{\itp{A}{\uncurry{\Gamma}}, \itp{A}{\primetrans{\bt_1}\times b^Y}{T}} \circ \langle\id, \itp{A}{\uncurry{M}}{T} \rangle  \\
        &= \mu \circ T(\lstrength{T}{\itp{A}{\primetrans{\bt_1}}{T}, \itp{A}{\primetrans{\bt_2}\times b^Y}{T}} \circ \langle \pi_1 \circ \pi_2, \itp{A}{\uncurry{N}}{T} \circ (\id_{\itp{A}{\primetrans{\Gamma}}{T}} \times \pi_2) \rangle) \circ  T(\pi_1 \times \id) \circ \lstrength{T}{\itp{A}{\uncurry{\Gamma}}, \itp{A}{\primetrans{\bt_1}\times b^Y}{T}} \circ \langle\id, \itp{A}{\uncurry{M}}{T} \rangle  \\
        &= \mu \circ T(\lstrength{T}{\itp{A}{\primetrans{\bt_1}}{T}, \itp{A}{\primetrans{\bt_2}\times b^Y}{T}} \circ (\id_{\itp{A}{\primetrans{\bt_1}}{T}} \times \itp{A}{\uncurry{N}}{T}) \circ \langle \pi_1 \circ \pi_2, \id_{\itp{A}{\primetrans{\Gamma}}{T}} \times \pi_2 \rangle) \circ \lstrength{T}{\itp{A}{\primetrans{\Gamma}}, \itp{A}{\primetrans{\bt_1}\times b^Y}{T}} \circ \langle\pi_1, \itp{A}{\uncurry{M}}{T} \rangle.
    \end{align*}

    \begin{align*}
      &\itp{A}{\big((\pi_1 z, \pi_1 \pi_2 z), \pi_2 \pi_2 z\big)}{T}\colon \itp{A}{\uncurry{\Gamma}, \primetrans{\bt_1} \times (\primetrans{\bt_2} \times b^Y)}{T} \to TL\itp{A}{(\primetrans{\bt_1} \times \primetrans{\bt_2}) \times b^Y}{T} \\
      &= \eta \circ \lstrength{L}{\itp{A}{\primetrans{\bt_1}}{T}, \itp{A}{\primetrans{\bt_2}}{T}} \circ \pi_2.
    \end{align*}

    Therefore, we have
    \begin{align*}
    &\itp{\tstructure}{\uncurry{(M, N)}}{T} \colon \itp{A}{\uncurry{\Gamma}}{T} \to TL\itp{A}{(\primetrans{\bt_1} \times \primetrans{\bt_2}) \times b^Y)}{T} \\
    &= \itp{\tstructure}{\Big(\lambda z. \big((\pi_1 z, \pi_1 \pi_2 z), \pi_2 \pi_2 z \big) \Big)\ \Big(\big(\lambda z. (\pi_1 z, (\lambda y. \uncurry{N})(\pi_2 z) )\big) (\uncurry{M})\Big)}{T} \\
    &= \mu_{\itp{A}{(\primetrans{\bt_1} \times \primetrans{\bt_2}) \times b^Y}{T}}\circ T(\itp{A}{\big((\pi_1 z, \pi_1 \pi_2 z), \pi_2 \pi_2 z\big)}{T})\circ \lstrength{T}{\itp{A}{\uncurry{\Gamma}}{T}, \itp{A}{\primetrans{\bt_1}, \primetrans{\bt_2} \times b^Y}{T}} \\ 
      &\qquad \circ  \langle \id, \itp{\tstructure}{\big(\lambda z. (\pi_1 z, (\lambda y. \uncurry{N})(\pi_2 z) )\big) (\uncurry{M})}{T}\rangle  \text{ by }\cref{lem:betaReduction} \\
      &= \mu_{\itp{A}{(\primetrans{\bt_1} \times \primetrans{\bt_2}) \times b^Y}{T}}\circ T(\eta \circ \lstrength{L}{\itp{A}{\primetrans{\bt_1}}{T}, \itp{A}{\primetrans{\bt_2}}{T}} \circ \pi_2)\circ \lstrength{T}{\itp{A}{\uncurry{\Gamma}}{T}, \itp{A}{\primetrans{\bt_1}, \primetrans{\bt_2} \times b^Y}{T}} \circ  \langle \id, \itp{\tstructure}{\big(\lambda z. (\pi_1 z, (\lambda y. \uncurry{N})(\pi_2 z) )\big) (\uncurry{M})}{T}\rangle \\
      &= T(\lstrength{L}{\itp{A}{\primetrans{\bt_1}}{T}, \itp{A}{\primetrans{\bt_2}}{T}}) \circ \pi_2\circ  \langle \id, \itp{\tstructure}{\big(\lambda z. (\pi_1 z, (\lambda y. \uncurry{N})(\pi_2 z) )\big) (\uncurry{M})}{T}\rangle \\
      &= T(\lstrength{L}{\itp{A}{\primetrans{\bt_1}}{T}, \itp{A}{\primetrans{\bt_2}}{T}}) \circ \itp{\tstructure}{\big(\lambda z. (\pi_1 z, (\lambda y. \uncurry{N})(\pi_2 z) )\big) (\uncurry{M})}{T} \\
      &= T(\lstrength{L}{\itp{A}{\primetrans{\bt_1}}{T}, \itp{A}{\primetrans{\bt_2}}{T}}) \circ \mu \circ T(\lstrength{T}{\itp{A}{\primetrans{\bt_1}}{T}, \itp{A}{\primetrans{\bt_2}\times b^Y}{T}} \circ (\id_{\itp{A}{\primetrans{\bt_1}}{T}} \times \itp{A}{\uncurry{N}}{T}) \circ \langle \pi_1 \circ \pi_2, \id_{\itp{A}{\primetrans{\Gamma}}{T}} \times \pi_2 \rangle) \\ 
      &\qquad \circ \lstrength{T}{\itp{A}{\primetrans{\Gamma}}, \itp{A}{\primetrans{\bt_1}\times b^Y}{T}} \circ \langle\pi_1, \itp{A}{\uncurry{M}}{T} \rangle \\
      &= \mu \circ T^2(\lstrength{L}{\itp{A}{\primetrans{\bt_1}}{T}, \itp{A}{\primetrans{\bt_2}}{T}}) \circ T(\lstrength{T}{\itp{A}{\primetrans{\bt_1}}{T}, \itp{A}{\primetrans{\bt_2}\times b^Y}{T}}) \circ T(\id_{\itp{A}{\primetrans{\bt_1}}{T}} \times \itp{A}{\uncurry{N}}{T}) \circ T(\langle \pi_1 \circ \pi_2, \id_{\itp{A}{\primetrans{\Gamma}}{T}} \times \pi_2 \rangle)  \\
      &\qquad \circ \lstrength{T}{\itp{A}{\primetrans{\Gamma}}, \itp{A}{\primetrans{\bt_1}\times b^Y}{T}} \circ \langle\pi_1, \itp{A}{\uncurry{M}}{T} \rangle \\
      &= \mu \circ T^2(\lstrength{L}{\litp{A}{\bt_1}{T}{L}, \litp{A}{\bt_2}{T}{L}}) \circ T(\lstrength{T}{\litp{A}{t_1}{T}{L}, \litp{A}{t_2}{T}{L}}) \circ T(\id \times \litp{A}{N}{T}{L}) \circ T(\langle \pi_1 \natdisc{\litp{A}{t_1}{T}{L} \times \litp{A}{\Gamma}{T}{L}}, L\pi_2 \rangle \circ \rstrength{L}{\litp{A}{t_1}{T}{L}, \litp{A}{\Gamma}{T}{L}}) \\
        &\qquad \circ \rstrength{T}{L\litp{A}{t_1}{T}{L}, \litp{A}{\Gamma}{T}{L}} \circ \langle \litp{\mathcal{A}}{M}{T}{L}, \pi_1\rangle \\
    &= \litp{\tstructure}{(M, N)}{T}{\_ \times Y}.
    \end{align*}
  \item[Case $\projintroone$, $\projintrotwo$.] 
    Noting that 
    $\itp{A}{(\pi_i \pi_1 z, \pi_2 z)}{T} 
    = \eta \circ \pi_i \times Y \circ \pi_2\colon \itp{A}{\uncurry{\Gamma}, (\primetrans{\bt_1} \times \primetrans{\bt_2}) \times b^Y}{T} \to \itp{A}{\primetrans{\bt_i} \times b^Y}{T}$,
    we have
    \begin{align*}
    \itp{\tstructure}{\uncurry{(\pi_i \ M)}}{T} 
    &= \itp{\tstructure}{\big(\lambda z. (\pi_i \pi_1 z, \pi_2 z)\big)(\uncurry{M})}{T} \\
    &= \mu_{\itp{A}{\primetrans{\bt_i} \times b^Y}{T}}\circ T\itp{A}{(\pi_i \pi_1 z, \pi_2 z)}{T}\circ \lstrength{T}{\itp{A}{\uncurry{\Gamma}}{T}, \itp{A}{(\primetrans{\bt_1} \times \primetrans{\bt_2}) \times b^Y}{T}} \circ \langle \id_{\itp{A}{\uncurry{\Gamma}}{T}}, \itp{A}{\uncurry{M}}{T}\rangle \\
    &= \mu_{\itp{A}{\primetrans{\bt_i} \times b^Y}{T}}\circ T(\eta \circ \pi_i \times Y \circ \pi_2)\circ \lstrength{T}{\itp{A}{\uncurry{\Gamma}}{T}, \itp{A}{(\primetrans{\bt_1} \times \primetrans{\bt_2}) \times b^Y}{T}} \circ \langle \id_{\itp{A}{\uncurry{\Gamma}}{T}}, \itp{A}{\uncurry{M}}{T}\rangle \\
    &= T(\pi_i \times Y \circ \pi_2)\circ \lstrength{T}{\itp{A}{\uncurry{\Gamma}}{T}, \itp{A}{(\primetrans{\bt_1} \times \primetrans{\bt_2}) \times b^Y}{T}} \circ \langle \id_{\itp{A}{\uncurry{\Gamma}}{T}}, \itp{A}{\uncurry{M}}{T}\rangle \\
    &= T(\pi_i \times Y)\circ \pi_2 \circ \langle \id_{\itp{A}{\uncurry{\Gamma}}{T}}, \itp{A}{\uncurry{M}}{T}\rangle \\
    &= TL\pi_i \circ \litp{A}{M}{T}{L}
    = \litp{\tstructure}{\pi_i \ M}{T}{\_ \times Y}.
    \end{align*}
  \item[Case $\unitcoprod$.] 
    Noting that 
    $\itp{A}{(\delta(\pi_1 z), \pi_2 z)}{T} 
    = \eta \circ (! \times Y) \circ \pi_2\colon \itp{A}{\uncurry{\Gamma}, 0 \times b^Y}{T} \to \itp{A}{\primetrans{\bt} \times b^Y}{T}$,
    \begin{align*}
    \itp{\tstructure}{\uncurry{\delta(M)}}{T} 
    &= \itp{\tstructure}{\big(\lambda z. (\delta(\pi_1 z), \pi_2 z)\big) (\uncurry{M})}{T} \\
    &= \mu_{\itp{A}{\primetrans{\bt} \times b^Y}{T}}\circ T\itp{A}{(\delta(\pi_1 z), \pi_2 z)}{T}\circ \lstrength{T}{\itp{A}{\uncurry{\Gamma}}{T}, \itp{A}{0 \times b^Y}{T}} \circ \langle \id_{\itp{A}{\uncurry{\Gamma}}{T}}, \itp{A}{\uncurry{M}}{T}\rangle \\
    &= \mu_{\itp{A}{\primetrans{\bt} \times b^Y}{T}}\circ T(\eta \circ (! \times Y) \circ \pi_2)\circ \lstrength{T}{\itp{A}{\uncurry{\Gamma}}{T}, \itp{A}{0 \times b^Y}{T}} \circ \langle \id_{\itp{A}{\uncurry{\Gamma}}{T}}, \itp{A}{\uncurry{M}}{T}\rangle \\
    &= TL! \circ \litp{A}{M}{T}{L}
    = \litp{\tstructure}{\delta(M)}{T}{\_ \times Y}.
    \end{align*}
  \item[Case $\coprodone$, $\coprodtwo$.]
    Noting that 
    $\itp{A}{(\iota_1 \pi_1 z, \pi_2 z)}{T} 
    = \eta \circ (\iota_i \times Y) \circ \pi_2\colon \itp{A}{\uncurry{\Gamma}, \primetrans{\bt_i} \times b^Y}{T} \to \itp{A}{(\primetrans{\bt_1} + \primetrans{\bt_2}) \times b^Y}{T}$, we have
    \begin{align*}
    \itp{\tstructure}{\uncurry{\iota_i \ M}}{T} 
    &= \itp{\tstructure}{\big(\lambda z. (\iota_1 \pi_1 z, \pi_2 z)\big)(\uncurry{M})}{T} \\
    &= \mu_{\itp{A}{(\primetrans{\bt_1} + \primetrans{\bt_2}) \times b^Y}{T}}\circ T\itp{A}{(\iota_1 \pi_1 z, \pi_2 z)}{T}\circ \lstrength{T}{\itp{A}{\uncurry{\Gamma}}{T}, \itp{A}{\primetrans{\bt_i} \times b^Y}{T}} \circ \langle \id_{\itp{A}{\uncurry{\Gamma}}{T}}, \itp{A}{\uncurry{M}}{T}\rangle \\
    &= \mu_{\itp{A}{(\primetrans{\bt_1} + \primetrans{\bt_2}) \times b^Y}{T}}\circ T(\eta \circ (\iota_i \times Y) \circ \pi_2)\circ \lstrength{T}{\itp{A}{\uncurry{\Gamma}}{T}, \itp{A}{\primetrans{\bt_i} \times b^Y}{T}} \circ \langle \id_{\itp{A}{\uncurry{\Gamma}}{T}}, \itp{A}{\uncurry{M}}{T}\rangle \\
    &= T(\iota_i \times Y \circ \pi_2)\circ \lstrength{T}{\itp{A}{\uncurry{\Gamma}}{T}, \itp{A}{\primetrans{\bt_i} \times b^Y}{T}} \circ \langle \id_{\itp{A}{\uncurry{\Gamma}}{T}}, \itp{A}{\uncurry{M}}{T}\rangle \\
    &= TL\iota_i \circ \litp{A}{M}{T}{L}
    = \litp{\tstructure}{\iota_i \ M}{T}{\_ \times Y}.
    \end{align*}
  \item[Case $\coprodelm$.]
  \newcommand{\congbi}[3]{\cong_{#1, #2, #3}}
    We write $\congbi{A}{B}{C}$ for the isomorphism $A \times (B + C) \to A \times B + A \times C$.
    We can see that
    \begin{align*}
      &\itp{A}{\iota_i(x_i, \pi_2 z)}{T} \colon \itp{A}{\uncurry{\Gamma}, z\colon (\primetrans{\bt_1} + \primetrans{\bt_2}) \times b^Y, x_i\colon \primetrans{\bt_i}}{T} \to \itp{A}{(\primetrans{\bt_1} \times b^Y) + (\primetrans{\bt_2} \times b^Y)}{T} \\
      &= T\iota_i \circ (\eta_{\itp{A}{\primetrans{\bt_1} \times b^Y}{T}} \circ \langle \pi_{x_i}, \pi_2 \pi_z  \rangle) \\
      &= \eta_{\itp{A}{\primetrans{\bt_1} \times b^Y}{T}} \circ \iota_i \circ \langle \pi_{x_i}, \pi_2 \pi_z  \rangle,
    \end{align*}
    where $\pi_{x_i}\colon \itp{A}{\uncurry{\Gamma}, z\colon (\primetrans{\bt_1} + \primetrans{\bt_2}) \times b^Y, x_i\colon \primetrans{\bt_i}}{T} \to \itp{A}{\primetrans{\bt_i}}{T}$
    and $\pi_z \colon \itp{A}{\uncurry{\Gamma}, z\colon (\primetrans{\bt_1} + \primetrans{\bt_2}) \times b^Y, x_i\colon \primetrans{\bt_i}}{T} \to \itp{A}{(\primetrans{\bt_1} + \primetrans{\bt_2}) \times b^Y}{T}$ are projections,
    \begin{align*}
      &\itp{A}{\delta(\pi_1 z, x_1\colon \primetrans{\bt_1}.~\iota_1(x_1, \pi_2 z), x_2\colon \primetrans{\bt_2}.~\iota_2(x_2, \pi_2 z))}{T} \\
      &= \mu \circ T([\itp{A}{\iota_1(x_1, \pi_2 z)}{T}, \itp{A}{\iota_2(x_2, \pi_2 z)}{T}]) \circ T(\congbi{\itp{A}{\uncurry{\Gamma}, (\primetrans{\bt_1} + \primetrans{\bt_2}) \times b^Y}{T}}{\itp{A}{\primetrans{\bt_1}}{T}}{\itp{A}{\primetrans{\bt_2}}{T}}) \\ 
        &\qquad \circ \lstrength{T}{\itp{A}{\uncurry{\Gamma}, (\primetrans{\bt_1} + \primetrans{\bt_2}) \times b^Y}{T}, \itp{A}{\primetrans{\bt_1} + \primetrans{\bt_2}}{T}} \circ \langle \id, \itp{A}{\pi_1 z}{T} \rangle \text{ by }\cref{lem:betaReduction} \\
      &= \mu \circ T([\itp{A}{\iota_1(x_1, \pi_2 z)}{T}, \itp{A}{\iota_2(x_2, \pi_2 z)}{T}]) \circ T(\congbi{\itp{A}{\uncurry{\Gamma}, (\primetrans{\bt_1} + \primetrans{\bt_2}) \times b^Y}{T}}{\itp{A}{\primetrans{\bt_1}}{T}}{\itp{A}{\primetrans{\bt_2}}{T}}) \\ 
        &\qquad \circ \eta \circ \langle \id, \pi_1 \rangle \text{ since }\itp{A}{\pi_1 z}{T} = \eta \circ \pi_1  \\
      &= [\itp{A}{\iota_1(x_1, \pi_2 z)}{T}, \itp{A}{\iota_2(x_2, \pi_2 z)}{T}] \circ \congbi{\itp{A}{\uncurry{\Gamma}, (\primetrans{\bt_1} + \primetrans{\bt_2}) \times b^Y}{T}}{\itp{A}{\primetrans{\bt_1}}{T}}{\itp{A}{\primetrans{\bt_2}}{T}} \circ \langle \id, \pi_1 \rangle   \\
      &= \eta \circ (\langle \pi_{x_1}, \pi_2 \pi_z  \rangle + \langle \pi_{x_2}, \pi_2 \pi_z  \rangle) \circ \congbi{\itp{A}{\uncurry{\Gamma}, (\primetrans{\bt_1} + \primetrans{\bt_2}) \times b^Y}{T}}{\itp{A}{\primetrans{\bt_1}}{T}}{\itp{A}{\primetrans{\bt_2}}{T}} \circ \langle \id, \pi_1 \rangle   \\
      &= \eta \circ \natcoprod{\itp{A}{\primetrans{\bt_1}}{T}, \itp{A}{\primetrans{\bt_2}}{T}} \circ \pi_1,
    \end{align*}
    \begin{align*}
    &\itp{A}{\big(\lambda z.~\delta(\pi_1 z, x_1\colon \primetrans{\bt_1}.~\iota_1(x_1, \pi_2 z), x_2\colon \primetrans{\bt_2}.~\iota_2(x_2, \pi_2 z))\big)(\uncurry{M})}{T}  \\
    &= \mu \circ T(\itp{A}{\delta(\pi_1 z, x_1\colon \primetrans{\primetrans{\bt_1}}.~\iota_1(x_1, \pi_2 z), x_2\colon \primetrans{\primetrans{\bt_2}}.~\iota_2(x_2, \pi_2 z))}{T}) \circ \lstrength{T}{\itp{A}{\uncurry{\Gamma}}{T}, \itp{A}{(\primetrans{\bt_1} + \primetrans{\bt_2}) \times b^Y}{T}} \circ \langle \id_{\itp{A}{\uncurry{\Gamma}}{T}}, \itp{A}{\uncurry{M}}{T}\rangle \\
    &= \mu \circ T(\eta \circ \natcoprod{\itp{A}{\primetrans{\bt_1}}{T}, \itp{A}{\primetrans{\bt_2}}{T}} \circ \pi_1) \circ \lstrength{T}{\itp{A}{\uncurry{\Gamma}}{T}, \itp{A}{(\primetrans{\bt_1} + \primetrans{\bt_2}) \times b^Y}{T}} \circ \langle \id_{\itp{A}{\uncurry{\Gamma}}{T}}, \itp{A}{\uncurry{M}}{T}\rangle \\
    &= T(\natcoprod{\itp{A}{\primetrans{\bt_1}}{T}, \itp{A}{\primetrans{\bt_2}}{T}}) \circ \pi_1 \circ \langle \id_{\itp{A}{\uncurry{\Gamma}}{T}}, \itp{A}{\uncurry{M}}{T}\rangle \\
    &= T(\natcoprod{\itp{A}{\primetrans{\bt_1}}{T}, \itp{A}{\primetrans{\bt_2}}{T}}) \circ \itp{A}{\uncurry{M}}{T}.
    \end{align*}
    Therefore, we have
    \begin{align*}
    &\itp{\tstructure}{\uncurry{\big(\delta(M, x_1\colon \bt_1.\ M_1, x_2\colon \bt_2.\ M_2)\big)}}{T}  \\
    &= \itp{\tstructure}{\delta(N, z\colon {\uncurry{\bt_1}}.\ \uncurry{M_1}[\pi_1 z / x_1, \pi_2 z / y],\, z\colon \uncurry{\bt_2}.\ \uncurry{M_2}[\pi_1 z / x_2, \pi_2 z / y])}{T} \\
    &= \mu \circ T([\itp{A}{\uncurry{M_1}[\pi_1 z / x_1, \pi_2 z / y]}{T}, \itp{A}{\uncurry{M_2}[\pi_1 z / x_2, \pi_2 z / y]}{T}]) \circ T(\congbi{\itp{A}{\uncurry{\Gamma}}{T}}{\itp{A}{\uncurry{\bt_1}}{T}}{\itp{A}{\uncurry{\bt_2}}{T}}) \\
      &\qquad \circ \lstrength{T}{\itp{A}{\uncurry{\Gamma}}{T}, \itp{A}{\bt_1 \times b^Y}{T}+\itp{A}{\bt_2 \times b^Y}{T}} \circ \langle \id, \itp{A}{N}{T} \rangle \\
      &= \mu \circ T([\itp{A}{\uncurry{M_1}}{T}, \itp{A}{\uncurry{M_2}}{T}] \circ ((\pi_1 \times \id_{\itp{A}{\uncurry{\bt_1}}{T}}) + (\pi_1 \times \id_{\itp{A}{\uncurry{\bt_2}}{T}}))) \circ T(\congbi{\itp{A}{\uncurry{\Gamma}}{T}}{\itp{A}{\uncurry{\bt_1}}{T}}{\itp{A}{\uncurry{\bt_2}}{T}})  \\
      &\qquad \circ \lstrength{T}{\itp{A}{\uncurry{\Gamma}}{T}, \itp{A}{\uncurry{\bt_1}}{T}+\itp{A}{\uncurry{\bt_2}}{T}} \circ \langle \id, T(\natcoprod{\itp{A}{\primetrans{\bt_1}}{T}, \itp{A}{\primetrans{\bt_2}}{T}}) \circ \itp{A}{\uncurry{M}}{T} \rangle \\
      &= \mu \circ T([\itp{A}{\uncurry{M_1}}{T}, \itp{A}{\uncurry{M_2}}{T}]) \circ T(\congbi{\itp{A}{\primetrans{\Gamma}}{T}}{\itp{A}{\uncurry{\bt_1}}{T}}{\itp{A}{\uncurry{\bt_2}}{T}}) \circ \lstrength{T}{\itp{A}{\primetrans{\Gamma}}{T}, \itp{A}{\uncurry{\bt_1}}{T}+\itp{A}{\uncurry{\bt_2}}{T}} \circ \pi_1 \times \id \\
      &\qquad \circ (\id \times T(\natcoprod{\itp{A}{\primetrans{\bt_1}}{T}, \itp{A}{\primetrans{\bt_2}}{T}})) \circ \langle \id, \itp{A}{\uncurry{M}}{T} \rangle \\
      &= \mu \circ T([\itp{A}{\uncurry{M_1}}{T}, \itp{A}{\uncurry{M_2}}{T}]) \circ T(\congbi{\itp{A}{\primetrans{\Gamma}}{T}}{\itp{A}{\uncurry{\bt_1}}{T}}{\itp{A}{\uncurry{\bt_2}}{T}}) \circ T(\id \times \natcoprod{\itp{A}{\primetrans{\bt_1}}{T}, \itp{A}{\primetrans{\bt_2}}{T}}))  \\
      &\qquad \circ \lstrength{T}{\itp{A}{\primetrans{\Gamma}}{T}, \itp{A}{(\primetrans{\bt_1} + \primetrans{\bt_2}) \times b^Y}{T}} \circ \langle \pi_1, \itp{A}{\uncurry{M}}{T} \rangle \\
      &= \mu \circ T([\litp{A}{M_1}{T}{L}, \litp{A}{M_2}{T}{L}]) \circ T(\natcoprod{\litp{A}{\Gamma, \bt_1}{T}{L}, \litp{A}{\Gamma, \bt_2}{T}{L}}) \circ TL(\cong) \circ T(\lstrength{L}{\litp{A}{\Gamma}{T}{L}, \litp{A}{\bt_1}{T}{L}+\litp{A}{\bt_2}{T}{L}}) \circ \lstrength{T}{\litp{A}{\Gamma}{T}{L}, L(\litp{A}{\bt_1}{T}{L}+\litp{A}{\bt_2}{T}{L})} \circ \langle \natdisc{\litp{A}{\Gamma}{T}{L}}, \litp{A}{M}{T}{L} \rangle \\
    &= \litp{\tstructure}{\big(\delta(M, x_1\colon \bt_1.\ M_1, x_2\colon \bt_2.\ M_2)\big)}{T}{\_ \times Y},
    \end{align*}
    where $N \defeq \big(\lambda z.~\delta(\pi_1 z, x_1\colon \primetrans{\bt_1}.~\iota_1(x_1, \pi_2 z), x_2\colon \primetrans{\bt_2}.~\iota_2(x_2, \pi_2 z)\big)(\uncurry{M})$.
  \item[Case $\abst$.]
    We have 
    \begin{align*}
      \itp{A}{\lambda z\colon \uncurry{\bt_1}.~\uncurry{M}[\pi_1 z/x, \pi_2 z/y]}{T} &= \eta \circ \itp{A}{\uncurry{M}[\pi_1 z/x, \pi_2 z/y]}{T}^\dagger, \text{ and } \\
      itp{A}{\uncurry{M}[\pi_1 z/x, \pi_2 z/y]}{T} &= \itp{A}{\uncurry{M}}{T} \circ \lstrength{L}{\itp{A}{\primetrans{\Gamma}}{T}, \itp{A}{\primetrans{\bt_1}}{T}} \circ \pi_1 \times \id_{\itp{A}{\primetrans{\bt_1}}{T} \times Y}.
    \end{align*}
    Therefore,
    \begin{align*}
    \itp{\tstructure}{\uncurry{(\lambda x\colon \bt_1.~M)}}{T} 
    &= \itp{\tstructure}{\big(\lambda z\colon \uncurry{\bt_1}\uncurry{M}[\pi_1 z/x, \pi_2 z/y],\, y\big)}{T} \\
    &= \mu \circ T\lstrength{T}{T\itp{A}{\uncurry{\bt_2}}{T}^{\itp{A}{\uncurry{\bt_1}}{T}}, Y} \circ \rstrength{T}{T\itp{A}{\uncurry{\bt_2}}{T}^{\itp{A}{\uncurry{\bt_1}}{T}}, TY} \circ \langle \itp{A}{\lambda z\colon \uncurry{\bt_1}\uncurry{M}[\pi_1 z/x, \pi_2 z/y]}{T}, \eta \pi_2\rangle \\
    &= \lstrength{T}{T\itp{A}{\uncurry{\bt_2}}{T}^{\itp{A}{\uncurry{\bt_1}}{T}}, Y} \circ \langle \itp{A}{\lambda z\colon \uncurry{\bt_1}\uncurry{M}[\pi_1 z/x, \pi_2 z/y]}{T}, \pi_2\rangle \\
    &= \eta \circ \langle \itp{A}{\uncurry{M}[\pi_1 z/x, \pi_2 z/y]}{T}^\dagger, \pi_2\rangle \\
    &= \eta \circ \langle (\itp{A}{\uncurry{M}}{T} \circ \lstrength{L}{})^\dagger \circ \pi_1, \pi_2\rangle \\
    &= \eta \circ L((\litp{A}{M}{T}{L} \circ \lstrength{L}{})^\dagger)
    = \litp{\tstructure}{\lambda x\colon \bt_1.~M}{T}{\_ \times Y}.
    \end{align*}
  \item[Case $\appl$.]
    Since 
    \begin{align*}
      &\itp{A}{(\lambda y. \uncurry{N})(\pi_2 z)}{T} \\
      &= \mu \circ T(\itp{A}{\uncurry{\Gamma}, z\colon (\uncurry{\bt_1} \to \uncurry{\bt_2}) \times b^Y, y'\colon b^Y \vdash \uncurry{N}[y'/y]}{T})\circ \lstrength{T}{\itp{A}{\uncurry{\Gamma}, \uncurry{\bt_1 \to \bt_2}}{T}, Y} \circ \langle \id_{\itp{A}{\uncurry{\Gamma}}{T}}, \itp{A}{\pi_2 z}{T}\rangle \\
      &= \mu \circ T(\itp{A}{\uncurry{\Gamma} \vdash \uncurry{N}}{T} \circ \pi_{\primetrans{\Gamma}} \times \id_Y) \circ \lstrength{T}{\itp{A}{\uncurry{\Gamma}, \uncurry{\bt_1 \to \bt_2}}{T}, Y} \circ \langle \id_{\itp{A}{\uncurry{\Gamma}}{T}}, \itp{A}{\pi_2 z}{T}\rangle \text{ by }\cref{lem:weakening} \\
      &= \mu \circ T(\itp{A}{\uncurry{N}}{T}) \circ \lstrength{T}{\itp{A}{\primetrans{\Gamma}}{T}, Y} \circ \id \times \eta_Y \circ \pi_1 \times \pi_2  \\
      &= \itp{A}{\uncurry{N}}{T} \circ \pi_1 \times \pi_2,
    \end{align*}
    where $\pi_{\primetrans{\Gamma}}\colon \itp{A}{\primetrans{\Gamma}, b^Y, \uncurry{\bt_1 \to \bt_2}}{T} \to \itp{A}{\primetrans{\Gamma}}{T}$ is the projection,
    the following equations hold.
    \begin{align*}
    &\itp{A}{\big(\pi_1 z\big)\big((\lambda y. \uncurry{N})(\pi_2 z) \big)}{T}  \\
    &= \mu\circ T\mu \circ T^2(\ev{\itp{A}{\uncurry{\bt_1}}{T}}{T\itp{A}{\uncurry{\bt_2}}{T}}) \circ T\lstrength{T}{\itp{A}{\uncurry{\bt_1} \to \uncurry{\bt_2}}{T}, \itp{A}{\uncurry{\bt_1}}{T}} \circ \rstrength{T}{\itp{A}{\uncurry{\bt_1} \to \uncurry{\bt_2}}{T}, T\itp{A}{\uncurry{\bt_1}}{T}} \circ \langle \itp{A}{\pi_1 z}{T}, \itp{A}{(\lambda y. \uncurry{N})(\pi_2 z)}{T}\rangle \\
    &= \mu\circ T\mu \circ T^2(\ev{\itp{A}{\uncurry{\bt_1}}{T}}{T\itp{A}{\uncurry{\bt_2}}{T}}) \circ T\lstrength{T}{\itp{A}{\uncurry{\bt_1} \to \uncurry{\bt_2}}{T}, \itp{A}{\uncurry{\bt_1}}{T}} \circ \rstrength{T}{\itp{A}{\uncurry{\bt_1} \to \uncurry{\bt_2}}{T}, T\itp{A}{\uncurry{\bt_1}}{T}} \circ \eta \times \itp{A}{\uncurry{N}}{T} \\ 
      &\qquad \circ \langle \pi_1 \pi_2, \pi_1 \times \pi_2\rangle \\
    &= \mu \circ T(\ev{\itp{A}{\uncurry{\bt_1}}{T}}{T\itp{A}{\uncurry{\bt_2}}{T}}) \circ \lstrength{T}{\itp{A}{\uncurry{\bt_1} \to \uncurry{\bt_2}}{T}, \itp{A}{\uncurry{\bt_1}}{T}} \circ \id \times \itp{A}{\uncurry{N}}{T} \circ \langle \pi_1 \pi_2, \pi_1 \times \pi_2\rangle.
    \end{align*}
    Therefore, we have
    \begin{align*}
    \itp{\tstructure}{\uncurry{(M \ N)}}{T} 
    &= \itp{\tstructure}{\Big(\lambda z. \big(\pi_1 z\big)\big((\lambda y. \uncurry{N})(\pi_2 z) \big)\Big)\big(\uncurry{M}\big)}{T} \\
    &= \mu\circ T(\itp{A}{\big(\pi_1 z\big)\big((\lambda y. \uncurry{N})(\pi_2 z) \big)}{T})\circ \lstrength{T}{\itp{A}{\uncurry{\Gamma}}{T}, \itp{A}{(\uncurry{\bt_1} \to \uncurry{\bt_2}) \times b^Y}{T}} \circ \langle \id_{\itp{A}{\uncurry{\Gamma}}{T}}, \itp{A}{\uncurry{M}}{T}\rangle \\
    &= \mu\circ T(\mu \circ T(\ev{\itp{A}{\uncurry{\bt_1}}{T}}{T\itp{A}{\uncurry{\bt_2}}{T}}) \circ \lstrength{T}{\itp{A}{\uncurry{\bt_1} \to \uncurry{\bt_2}}{T}, \itp{A}{\uncurry{\bt_1}}{T}} \circ \id \times \itp{A}{\uncurry{N}}{T} \circ \langle \pi_1 \pi_2, \pi_1 \times \pi_2\rangle) \\ 
      &\qquad \circ \lstrength{T}{\itp{A}{\uncurry{\Gamma}}{T}, \itp{A}{(\uncurry{\bt_1} \to \uncurry{\bt_2}) \times b^Y}{T}} \circ \langle \id_{\itp{A}{\uncurry{\Gamma}}{T}}, \itp{A}{\uncurry{M}}{T}\rangle \\
      &= \mu\circ T(\mu \circ T(\ev{\itp{A}{\uncurry{\bt_1}}{T}}{T\itp{A}{\uncurry{\bt_2}}{T}}) \circ \lstrength{T}{\itp{A}{\uncurry{\bt_1} \to \uncurry{\bt_2}}{T}, \itp{A}{\uncurry{\bt_1}}{T}} \circ \id \times \itp{A}{\uncurry{N}}{T} \circ \langle \pi_1 \pi_2, \id_{\itp{A}{\primetrans{\Gamma}}{T}} \times \pi_2\rangle) \\ 
      &\qquad \circ \lstrength{T}{\itp{A}{\primetrans{\Gamma}}{T}, \itp{A}{(\uncurry{\bt_1} \to \uncurry{\bt_2}) \times b^Y}{T}} \circ \langle \pi_1, \itp{A}{\uncurry{M}}{T}\rangle \\
    &= \mu \circ T(\mu) \circ T^2(\ev{L\litp{A}{\bt_1}{T}{L}}{TL\litp{A}{\bt_2}{T}{L}}) \circ T(\lstrength{T}{TL\litp{A}{\bt_2}{T}{L}^{L\litp{A}{\bt_1}{T}{L}}, L\litp{A}{\bt_1}{T}{L}}) \circ T(\id \times \litp{A}{N}{T}{L}) \\ &\qquad \circ T(\langle \pi_1 \natdisc{}, L\pi_2 \rangle \circ \rstrength{L}{TL\litp{A}{\bt_2}{T}{L}^{L\litp{A}{\bt_1}{T}{L}}, \litp{A}{\Gamma}{T}{L}}) \circ \rstrength{T}{L(TL\litp{A}{\bt_2}{T}{L}^{L\litp{A}{\bt_1}{T}{L}}), \litp{A}{\Gamma}{T}{L}} \circ \langle \litp{A}{M}{T}{L}, \natdisc{\litp{A}{\Gamma}{T}{L}} \rangle \\
    &= \litp{\tstructure}{M \ N}{T}{\_ \times Y}.
    \end{align*}
 \end{description}
\end{proof}

\subsection{Proof of \cref{thm:mainfib}} \label{ap:proof_liftlambdac}
We write $f\colon P \darrow Q$ when there is a morphism $P \to Q$ in $\mathbb{K}_D$ above $f\colon pP \to pQ$.
Since $p$ is full and faithful, there is at most one such morphism.
Here let $L \coloneqq \id \times (\_ \times Y)\colon \mathbb{C} \to \mathbb{C}$ and $U \coloneqq S \times T$.
We consider natural transformations $d \coloneqq \pi_2$ and $\lstrength{L}{}\coloneqq \langle \langle \pi_1, \pi_1 \pi_2 \rangle, \pi_2 \pi_2 \rangle$.
\begin{lemma} \label{lem:liftcv}
  \begin{enumerate}
    \item \label{item:liftnat} $\dot{L}$ is a strong functor lifting of $L$, and there exists a lifting of the natural transformation $\natdisc{}\colon L \Rightarrow \id$.
    \item \label{item:ce} For any $c \in K$, $(a \times \uncurry{a})(c) \colon \dot{L}V(\arfunc(c)) \darrow \dot{L}V(\carfunc(c))$,
      and for any $e \in E$, $(a \times \uncurry{a})(e) \colon \dot{L}V(\arfunc(e)) \darrow C(\carfunc(e))$.
    \item \label{item:eta} For any type $\bt$, 
      $\eta_{L\litp{A}{\bt}{U}{L}}\colon \dot{L}V(\bt) \darrow C(\bt)$.
    \item \label{item:bind} If $f\colon V(\bt_1) \times \dot{L}V(\bt_2) \darrow C(\bt_3)$, then 
      $\mu \circ U(f) \circ \lstrength{U}{}\colon V(\bt_1) \times C(\bt_2) \darrow C(\bt_3)$.
      In particular, it induces 
      $U\lstrength{L}{\litp{A}{\bt_1}{U}{L}, \litp{A}{\bt_2}{U}{L}}\circ \lstrength{U}{\litp{A}{\bt_1}{U}{L}, L\litp{A}{\bt_2}{U}{L}}\colon V(\bt_1) \times C(\bt_2) \darrow C(\bt_1 \times \bt_2)$.
    \item \label{item:coprod} 
      For any types $\bt_1, \bt_2$, 
      $!_{\litp{A}{\bt_1}{U}{L}}\colon V(0) \darrow V(\bt_1)$, $\iota_i\colon V(\bt_i) \to V(\bt_1 + \bt_2)$ for $i \in \{1, 2\}$, and $\cong\colon V(\Gamma \times (\bt_1 + \bt_2)) \darrow V(\Gamma \times \bt_1 + \Gamma \times \bt_2)$. 
      Moreover, if $f_i\colon \dot{L}V(\bt_i) \darrow C(\bt)$ for $i \in \{1, 2\}$ then $[f_1, f_2] \circ \natcoprod{\litp{A}{\bt_1}{U}{L}, \litp{A}{\bt_2}{U}{L}}\colon \dot{L}V(\bt_1 + \bt_2) \darrow C(\bt)$.
    \item \label{item:exp}
      For any $f\colon \litp{A}{\Gamma}{U}{L} \times L\litp{A}{\bt_1}{U}{L} \to UL\litp{A}{\bt_2}{U}{L}$,
      $f\colon V(\Gamma) \times \dot{L}V(\bt_1) \darrow C(\bt_2)$ if and only if $f^\dagger\colon V(\Gamma) \darrow V(\bt_1 \to \bt_2)$.
  \end{enumerate}
\end{lemma}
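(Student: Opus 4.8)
The plan is to verify the six items one at a time, in each case reducing the assertion to a concrete condition by unwinding the definitions of $V$, $C$, $\dot L$ and the change-of-base fibration $p_D$. Unwinding $p_D$, a (necessarily unique) morphism $P\darrow Q$ above $(h_S,h_T)\colon(I_S,I_T)\to(J_S,J_T)$ in $\mathbb{C}^2$ exists exactly when $(h_S\circ\phi,\ h_T\circ\psi)\in Q(H)$ for every $H$ and every $(\phi,\psi)\in P(H)$. The two ingredients that recur everywhere are: (a) the coherence of $\rho_{\bt}$ with the bi-CCC structure hard-wired into \cref{def:rho} — it is an identity on ground types, it commutes with $\times$ and $+$, and on function types it is $\cong\circ\,S(\rho_{\bt_2})^{\rho_{\bt_1}^{-1}}$; and (b) the fact from \cref{def:effectStateMonad} that $S=T(\_\times Y)^Y$ is obtained from $T$ by transposing along $(\_\times Y)\dashv(\_)^Y$, so that $\eta^S$, $\mu^S$, $\lstrength{S}{}$ become $\eta^T$, $\mu^T$, $\lstrength{T}{}$ after applying $(\_)^{\dagger}$. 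Every membership check below is then a short calculation combining (a) and (b) with the naturality of transposition, $(g\circ u)^{\dagger}=g^{\dagger}\circ(u\times\id_Y)$.

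I would first dispatch the bookkeeping items. That $\dot L$ is a functor lifting $L$ is routine: $\dot L(P)=\{(f,\langle g,\pi_2\rangle)\mid(f,g)\in P\}$ is a subpresheaf over $L\,p_D(P)$, its action on morphisms is forced and immediately checked, and the strength of $\dot L$ together with the lift of $\natdisc{}$ lie above the corresponding data for $L$ by faithfulness of $p_D$. For the constant/effect item, all arities and coarities involved are ground, so the relevant $\rho$'s are identities; for $c\in K$ the claim collapses to $(a(c)\times\id_Y)\circ\langle u\circ\pi_1,\pi_2\rangle=\langle a(c)\circ u\circ\pi_1,\pi_2\rangle$, and for $e\in E$, recalling $\uncurry{a}(e)=a(e)^{\dagger}$, it collapses to $(a(e)\circ u)^{\dagger}=a(e)^{\dagger}\circ\langle u\circ\pi_1,\pi_2\rangle$. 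The unit item $\eta\colon\dot LV(\bt)\darrow C(\bt)$ is the same computation one level up: $\eta^S=(\eta^T)^{\dagger}$ yields $(\eta^S\circ u)^{\dagger}=\eta^T\circ(u\times\id_Y)$, and naturality of $\eta^T$ slides $\rho_{\bt}\times\id_Y$ past it, which is precisely what membership in $C(\bt)$ demands.

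The two substantive items are the bind/strength item and the currying item. For the bind/strength item I would unfold $\mu^{U}\circ U(f)\circ\lstrength{U}{}$ componentwise (with $U=S\times T$); the crux is the identity, provable by diagram chasing from \cref{def:effectStateMonad} and \cref{lem:betaReduction}, that the $S$-Kleisli extension of $f$ corresponds under transposition to the $T$-Kleisli extension of $f^{\dagger}$, after which closure of $C(\bt_3)$ under this operation follows by the same $\rho$-commutation as above; the stated special case then follows by applying the item to $f=\eta^{U}\circ\lstrength{L}{}$, which itself lifts to $V(\bt_1)\times\dot LV(\bt_2)\darrow C(\bt_1\times\bt_2)$ by the functoriality and unit items. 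The coproduct item is intermediate: $!$, the coproduct injections and the distributivity isomorphism commute with $\rho$ by the way $\rho$ is defined on sums and products (using that $\mathbf{0}$ is strict initial, a consequence of stability, so that any $\mathbb{C}(X,-)$ with $\mathbb{C}(X,\mathbf{0})\neq\emptyset$ is a singleton), and the ``moreover'' clause is proved by pulling back the relevant coproduct along the generalised element under consideration — stability of $\mathbb{C}$ makes this a genuine decomposition of the domain — so that $[f_1,f_2]\circ\natcoprod{}$ reduces on each summand to $f_1$ or $f_2$ and hence lands in $C(\bt)$.

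For the currying item, the $(\Leftarrow)$ direction is a direct transpose manipulation. The $(\Rightarrow)$ direction is the delicate one, and the step I expect to be the main obstacle: one must convert the hypothesis, which by the criterion above quantifies over all generalised elements of $\dot LV(\bt_1)$, into the single equation defining membership in $V(\bt_1\to\bt_2)$, which requires expanding $\rho_{\bt_1\to\bt_2}=\cong\circ\,S(\rho_{\bt_2})^{\rho_{\bt_1}^{-1}}$ and the $L$-twisted interpretation $\litp{A}{\bt_1\to\bt_2}{U}{L}=(UL\litp{A}{\bt_2}{U}{L})^{L\litp{A}{\bt_1}{U}{L}}$, instantiating the hypothesis at the universal element of $\dot LV(\bt_1)$ (a fibred-Yoneda move), and keeping the two nested transpositions — the outer $(\_)^{\dagger}$ and the one concealed inside $\rho_{\bt_1\to\bt_2}$ — correctly threaded through the $L$-decorated exponential. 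This is the only point where the partial-order bifibration structure of $p_D$ is genuinely used rather than merely convenient.
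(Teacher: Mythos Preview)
Your proposal is correct and follows essentially the same route as the paper: reduce each item to a membership check in $\mathbb{K}_D$, then verify it using the coherence of $\rho_\bt$ with the bi-CCC structure together with the fact that the monad structure of $S$ is that of $T$ under transposition.

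Two small points of divergence are worth noting. First, the reference to \cref{lem:betaReduction} for item~\ref{item:bind} is misplaced: that lemma is about $\beta$-reduction of interpreted terms and plays no role here. The identity you need --- that the $S$-Kleisli extension transposes to the $T$-Kleisli extension --- comes straight from \cref{def:effectStateMonad}, and the paper simply unfolds $(\mu^S\circ Sf_1\circ\lstrength{S}{}\circ\langle g_1,g_2\rangle)^\dagger$ directly. Second, for the ``moreover'' clause of item~\ref{item:coprod} you propose a stability-based decomposition of the domain, whereas the paper reuses the same universal-element trick you correctly identify for item~\ref{item:exp}: since $(\id,\rho_{\bt_i}\times\id_Y)\in\dot LV(\bt_i)$, each hypothesis $f_i\colon\dot LV(\bt_i)\darrow C(\bt)$ is equivalent to the single equation $T(\rho_\bt\times\id_Y)\circ f_{i,1}^\dagger=f_{i,2}\circ(\rho_{\bt_i}\times\id_Y)$, and the result then follows by a short direct calculation. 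Your stability argument also works, but it is heavier than necessary and obscures the uniform pattern --- every $V(\bt)$, $\dot LV(\bt)$, $C(\bt)$ is the graph of a function, so a $\darrow$ condition is always equivalent to one equation at the generic element --- that the paper exploits throughout items~\ref{item:bind}--\ref{item:exp}.
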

\begin{proof}
  \begin{enumerate}
    \item For each $P, Q \in \mathbb{K}_D$ above $(I_1, I_2), (J_1, J_2)$ respectively,
    $\lstrength{L}{}\colon P \times \dot{L}Q \darrow \dot{L}(P \times Q)$
    because $P \times \dot{L}Q = \{(\langle f_1, f_2 \rangle, \langle g_1, \langle g_2, \pi_2 \rangle \rangle) \mid (f_1, g_1) \in P, (f_2, g_2) \in Q\}$
    and $\dot{L}(P \times Q) = \{(\langle f_1, f_2 \rangle, \langle \langle g_1, g_2 \rangle \pi_2 \rangle) \mid (f_1, g_1) \in P, (f_2, g_2) \in Q\}$.
    It is easy to see that $\pi_2\colon \dot{L}P \darrow P$ for each $P \in \mathbb{K}_D$.
    \item It is easy by definition of $\uncurry{a}$.
    \item It is because the transpose of the unit at $X$ for $S$ is equal to the unit at $X \times Y$ for $T$.
    \item Assume that $f\colon V(\bt_1) \times \dot{L}V(\bt_2) \darrow C(\bt_3)$, which is equivalent to $T(\rho_{\bt_3} \times \id_Y) \circ f_1^\dagger = f_2 \circ \rho_{\bt_1 \times (\bt_2 \times b^Y)} \circ (\lstrength{L}{\itp{A}{\bt_1}{S}, \itp{A}{\bt_2}{S}})^{-1}$.
    Then for each $\langle g_1, g_2 \rangle\colon H \to \itp{A}{\bt_1}{S} \times S\itp{A}{\bt_2}{S}$,
    \begin{align*}
      &T(\rho_{\bt_3} \times \id_Y) \circ (\mu^S \circ Sf_1 \circ \lstrength{S}{\itp{A}{\bt_1}{S}, \itp{A}{\bt_2}{S}} \circ \langle g_1, g_2 \rangle)^\dagger \\
      &= T(\rho_{\bt_3} \times \id_Y) \circ \mu \circ T(\ev{T(\itp{A}{\bt_3}{S} \times Y)}{Y}) \circ T(f_1 \times \id_Y) \circ T(\lstrength{L}{\itp{A}{\bt_1}{S}, \itp{A}{\bt_2}{S}})  \\
      &\qquad \circ \lstrength{T}{\itp{A}{\bt_1}{S}, \itp{A}{\bt_2}{S} \times Y} \circ \id \times \ev{T(\itp{A}{\bt_2}{S} \times Y)}{Y} \circ \langle g_1, g_2 \rangle \times \id_Y \\ 
      &= \mu \circ T^2(\rho_{\bt_3} \times \id_Y) \circ T(f_1^\dagger) \circ T(\lstrength{L}{\itp{A}{\bt_1}{S}, \itp{A}{\bt_2}{S}}) \circ \lstrength{T}{\itp{A}{\bt_1}{S}, \itp{A}{\bt_2}{S} \times Y} \circ \id \times \ev{T(\itp{A}{\bt_2}{S} \times Y)}{Y} \circ \langle g_1, g_2 \rangle \times \id_Y \\ 
      &= \mu \circ T(f_2 \circ \rho_{\bt_1 \times (\bt_2 \times b^Y)}) \circ \lstrength{T}{\itp{A}{\bt_1}{S}, \itp{A}{\bt_2}{S} \times Y} \circ \id \times \ev{T(\itp{A}{\bt_2}{S} \times Y)}{Y} \circ \langle g_1, g_2 \rangle \times \id_Y \\ 
      &= \mu \circ T(f_2) \circ \lstrength{T}{\litp{A}{\bt_1}{T}{\_ \times Y}, \itp{A}{\bt_2}{T}{\_ \times Y} \times Y} \circ  \rho_{\bt_1} \times T(\rho_{\bt_2} \times Y) \circ \langle g_1 \pi_2, g_2^\dagger\rangle.
    \end{align*}
    Therefore, we have $\mu \circ U(f) \circ \lstrength{U}{}\colon V(\bt_1) \times C(\bt_2) \darrow C(\bt_3)$.
    \item It holds because $!_{\litp{A}{\bt_1}{T}{\_ \times Y}} = \rho_{\bt_1} \circ !_{\itp{A}{\bt}{S}}$, $\rho_{\bt_1 + \bt_2} \circ \iota_i = \iota_i \circ \rho_{\bt_i}$, and $\rho_{(\Gamma \times \bt_1) + (\Gamma \times \bt_2)} \circ (\cong) = (\cong') \circ \rho_{\Gamma \times (\bt_1 + \bt_2)}$
    where $\cong\colon \itp{A}{\Gamma}{S} \times (\itp{A}{\bt_1}{S}+\itp{A}{\bt_2}{S}) \to \itp{A}{\Gamma, \bt_1}{S}+\itp{A}{\Gamma, \bt_2}{S}$
    and $\cong'\colon \litp{A}{\Gamma}{T}{\_ \times Y} \times (\litp{A}{\bt_1}{T}{\_ \times Y}+\litp{A}{\bt_2}{T}{\_ \times Y}) \to \litp{A}{\Gamma, \bt_1}{T}{\_ \times Y}+\litp{A}{\Gamma, \bt_2}{T}{\_ \times Y}$.

    Assume $(f_{i, 1}, f_{i, 2})\colon \dot{L}V(\bt_i) \darrow C(\bt)$ for $i \in \{1, 2\}$.
    Since $(\id, \rho_{\bt_i} \times \id_Y) \in \dot{L}V(\bt_i)(\litp{A}{\bt_i}{U}{L})$,
    we have $T(\rho_\bt \times Y) \circ (f_{i, 1})^\dagger = f_{i, 2} \circ \rho_{\bt_i} \times \id_Y$.
    Therefore, for each $H \in \mathbb{C}$ and $f\colon H \to \litp{A}{\bt_1 + \bt_2}{U}{L}$,
    \begin{align*}
    &[f_{1, 2}, f_{2, 2}] \circ \natcoprod{} \circ \rho_{\bt_1 + \bt_2} \times \id_Y \circ f \times \id_Y \\ 
    &= [f_{1, 2}, f_{2, 2}] \circ (\rho_{\bt_1} \times \id_Y) + (\rho_{\bt_1} \times \id_Y) \circ \natcoprod{} \circ f \times \id_Y \\ 
    &= T(\rho_\bt \times \id_Y) \circ [f_{1, 1}, f_{2, 1}]^\dagger \circ f \times \id_Y.
    \end{align*}
    It yields that $[f_1, f_2] \circ \natcoprod{\litp{A}{\bt_1}{U}{L}, \litp{A}{\bt_2}{U}{L}}\colon \dot{L}V(\bt_1 + \bt_2) \darrow C(\bt)$.
    \item
      The condition $(f_1, f_2)\colon V(\Gamma) \times \dot{L}V(\bt_1) \darrow C(\bt_2)$ is equivalent to $T(\rho_{\bt_2} \times \id_Y) \circ f_1^\ddagger = f_2  \circ \rho_{\Gamma, \bt_1} \times \id_Y\colon \itp{A}{\Gamma, \bt_1}{S} \times Y \to T(\litp{A}{\bt_2}{T}{\_ \times Y} \times Y)$ where $f_1^\ddagger$ is the transpose of $f_1$ for $(\_ \times Y) \dashv (\_)^Y$.
      Similarly, $(f_1^\dagger, f_2^\dagger)\colon V(\Gamma) \darrow V(\bt_1 \to \bt_2)$ is equivalent to 
      $\rho_{\bt_1 \to \bt_2} \circ f_1^\dagger = f_2^\dagger  \circ \rho_{\Gamma}\colon \itp{A}{\Gamma}{S} \to T(\litp{A}{\bt_2}{T}{\_ \times Y} \times Y)^{\litp{A}{\bt_1}{T}{\_ \times Y} \times Y}$,
      which is also equivalent to
      $\rho_{\bt_1 \to \bt_2}^\dagger \circ f_1^\dagger \times \id = f_2 \circ \rho_\Gamma \times \id$
      where it follows 
      by taking the transpose with respect to $(\_ \times (\litp{A}{\bt_1}{T}{\_ \times Y} \times Y)) \dashv (\_)^{\litp{A}{\bt_1}{T}{\_ \times Y} \times Y}$.
      Then the following equations conclude the proof:
      \begin{align*}
        &\rho_{\bt_1 \to \bt_2}^\dagger \circ f_1^\dagger \times \id  \circ (\lstrength{\_ \times Y}{})^{-1} \circ \id \times (\rho_{\bt_1} \times \id_Y)\\
        &= \ev{T(\litp{A}{\bt_2}{T}{\_ \times Y} \times Y)}{Y} \circ \ev{S\litp{A}{\bt_2}{T}{\_ \times Y}}{\litp{A}{\bt_1}{T}{\_ \times Y}} \times \id_Y \circ (S\rho_{\bt_2})^{\rho_{\bt_1}^{-1}} \circ \id \times \rho_{\bt_1} \times \id_Y \circ f_1^\dagger \times \id \\
        &= \ev{T(\litp{A}{\bt_2}{T}{\_ \times Y} \times Y)}{Y} \circ S(\rho_{\bt_2}) \times \id_Y \circ \ev{S\itp{A}{\bt_2}{S}}{\itp{A}{\bt_1}{S}} \times \id_Y \circ f_1^\dagger \times \id \\
        &= T(\rho_{\bt_2} \times \id_Y) \circ \ev{T(\itp{A}{\bt_2}{S} \times Y)}{Y} \circ \ev{S\itp{A}{\bt_2}{S}}{\itp{A}{\bt_1}{S}} \times \id_Y \circ f_1 \times \id \circ \eta^{(\_ \times (\litp{A}{\bt_1}{T}{\_ \times Y} \times Y))^{\litp{A}{\bt_1}{T}{\_ \times Y} \times Y}} \times \id \\
        &= T(\rho_{\bt_2} \times \id_Y) \circ f_1^\ddagger.
      \end{align*}
  \end{enumerate}
\end{proof}
Letting $\Gamma\coloneqq \bt_1 \to \bt_2$ and $f\coloneqq \id$ in \cref{lem:liftcv}.\ref{item:exp},
we have $V(\bt_1 \to \bt_2) \subseteq C(\bt_2)^{\dot{L}V(\bt_1)}$,
although the converse does not hold in general.

\begin{proof}[Proof of \cref{thm:mainfib}]
 We prove it by induction on the structure of $M$. 
 \begin{description}
  \item[Case $\varintro$.] 
    For $\Gamma\vdash x\colon \bt_i$, 
    $\eta_{L\litp{\tstructure}{\bt_i}{U}{L}} \circ L\pi_i\colon \dot{L}V(\Gamma) \darrow C(\bt_i)$ by \cref{lem:liftcv}.\ref{item:eta}.
  \item[Case $\exchange$.] Easy.
  \item[Case $\conintro$.] It is easy by \cref{lem:liftcv}.\ref{item:ce}.
  \item[Case $\genintro$.] It is easy by \cref{lem:liftcv}.\ref{item:ce} and \ref{item:bind}.
  \item[Case $\unitprod$.] It is easy by \cref{lem:liftcv}.\ref{item:eta}.
  \item[Case $\prodintro$.] 
    By \cref{lem:liftcv}.\ref{item:liftnat} and \ref{item:bind}, $\litp{A}{N}{U}{L}\colon \dot{L}V(\Gamma) \darrow C(\bt_2)$
    implies $U\lstrength{L}{\litp{A}{\bt_1}{U}{L}, \litp{A}{\bt_2}{U}{L}} \circ \lstrength{U}{\litp{A}{\bt_1}{U}{L}, \litp{A}{\bt_2}{U}{L}} \circ \id \times \litp{A}{N}{U}{L} \circ \langle \pi_1 \natdisc{\litp{A}{\bt_1}{U}{L} \times \litp{A}{\Gamma}{U}{L}}, L\pi_2 \rangle\colon \dot{L}V(\bt_1 \times \Gamma) \darrow C(\bt_1 \times \bt_2)$, which further implies
    $\mu \circ 
    U^2(\lstrength{L}{\litp{A}{\bt_1}{U}{L}, \litp{A}{\bt_2}{U}{L}}) \circ U(\lstrength{U}{\litp{A}{\bt_1}{U}{L}, \litp{A}{\bt_2}{U}{L}}) \circ U(\id \times \litp{A}{N}{U}{L}) \circ U(\langle \pi_1 \natdisc{\litp{A}{\bt_1}{U}{L} \times \litp{A}{\Gamma}{U}{L}}, L\pi_2 \rangle)\colon C(\bt_1 \times \Gamma) \darrow C(\bt_1 \times \bt_2)$ by \cref{lem:liftcv}.\ref{item:bind}.

    Then the statement holds by \cref{lem:liftcv}.\ref{item:liftnat} and \ref{item:bind}.
  \item[Case $\projintroone$, $\projintrotwo$.]
    By \cref{lem:liftcv}.\ref{item:eta}, 
    we know that $\eta \circ \pi_i\colon \dot{L}V(\bt_1 \times \bt_2) \darrow C(\bt_i)$,
    which implies $UL\pi_i \colon C(\bt_1 \times \bt_2) \darrow C(\bt_i)$
    by \cref{lem:liftcv}.\ref{item:bind}.
    Therefore, 
    $UL\pi_i \circ \litp{A}{M}{U}{L}\colon \dot{L}V(\Gamma) \darrow C(\bt_i)$.
  \item[Case $\unitcoprod$, $\coprodone$, $\coprodtwo$.] It is easy by \cref{lem:liftcv}.\ref{item:eta}, \ref{item:bind}, and \ref{item:coprod}.
  \item[Case $\coprodelm$.]
    By \cref{lem:liftcv}.\ref{item:liftnat}, we obtain the morphism $\langle \natdisc{\litp{A}{\Gamma}{T}{L}}, \litp{A}{M}{T}{L} \rangle\colon L\litp{A}{\Gamma}{T}{L} \darrow \litp{A}{\Gamma}{T}{L} \times TL(\litp{A}{\bt_1 + \bt_2}{T}{L})$,
    and \cref{lem:liftcv}.\ref{item:coprod} and \ref{item:eta} yield $\mu \circ T([\litp{A}{M_1}{T}{L}, \litp{A}{M_2}{T}{L}]) \circ T(\natcoprod{\litp{A}{\Gamma, \bt_1}{T}{L}, \litp{A}{\Gamma, \bt_2}{T}{L}})\colon C(\Gamma \times \bt_1 + \Gamma \times \bt_2) \darrow C(\bt)$.
    Then the result follows from \cref{lem:liftcv}.\ref{item:bind} and \ref{item:coprod}.
  \item[Case $\abst$.]
    By \cref{lem:liftcv}.\ref{item:liftnat} and \ref{item:exp},
    it follows that $(\litp{A}{M}{T}{L} \circ \lstrength{L}{})^\dagger\colon V(\Gamma) \darrow V(\bt_1 \to \bt_2)$.
    Then the result follows from \cref{lem:liftcv}.\ref{item:eta}.
  \item[Case $\appl$.] 
    \cref{lem:liftcv}.\ref{item:exp} and $\ref{item:bind}$ imply $\mu \circ T(\ev{L\litp{A}{\bt_1}{T}{L}}{TL\litp{A}{\bt_2}{T}{L}}) \circ \lstrength{T}{TL\litp{A}{\bt_2}{T}{L}^{L\litp{A}{\bt_1}{T}{L}}, L\litp{A}{\bt_1}{T}{L}}\colon V(\bt_1 \to \bt_2) \times C(\bt_1) \darrow C(\bt_2)$.
    Then the result follows from \cref{lem:liftcv}.\ref{item:liftnat} and \ref{item:bind}.
 \end{description}
\end{proof}

\subsection{Proof of \cref{prop:transLambdacfix}} \label{ap:proof_transLambdacfix}
\begin{lemma} \label{lem:fixpi}
  For any morphism $f\colon X\times T(Z)\rightarrow T(Z)$,
  $\fixop{X \times X'}{Z}{f \circ (\pi_1 \times \id_{TZ})} = \fixop{X}{Z}{f} \circ \pi_1$.
\end{lemma}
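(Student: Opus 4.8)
The plan is to unfold both sides via the definition of $\mathrm{fix}$ and to exhibit an intertwining morphism that relates the two iteration sequences. Write $p \coloneqq \pi_1\colon X\times X'\to X$ and $\psi \coloneqq p\times\id_{TZ}\colon (X\times X')\times TZ\to X\times TZ$, and set $g \coloneqq f\circ\psi = f\circ(\pi_1\times\id_{TZ})$, so that by definition $\fixop{X\times X'}{Z}{g} = \pi_2\circ\bigvee_{n}\langle\pi_1,g\rangle^{n}\circ\langle\id_{X\times X'},\leastelement{T}{Z}\rangle$, where $\leastelement{T}{Z}$ in the second component is understood as the least element of $\mathbb{C}(X\times X',TZ)$, i.e. $\leastelement{T}{Z}\circ !_{X\times X'}$.

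First I would establish the key commutation $\langle\pi_1,f\rangle\circ\psi = \psi\circ\langle\pi_1,g\rangle$: a direct computation with projections shows that both sides equal $\langle p\circ\pi_1,\, g\rangle$, where $\pi_1$ denotes the relevant product projection out of $(X\times X')\times TZ$. Iterating this identity gives $\langle\pi_1,f\rangle^{n}\circ\psi = \psi\circ\langle\pi_1,g\rangle^{n}$ for every $n\in\nat$. Next, for the base point I would check $\psi\circ\langle\id_{X\times X'},\leastelement{T}{Z}\rangle = \langle\id_X,\leastelement{T}{Z}\rangle\circ p$, which follows because $!_{X\times X'} = !_X\circ p$, so $\leastelement{T}{Z}$ on $X\times X'$ factors as $\leastelement{T}{Z}$ on $X$ precomposed with $p$. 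Finally I would use that $\pi_2\colon(X\times X')\times TZ\to TZ$ equals $\pi_2\colon X\times TZ\to TZ$ postcomposed with $\psi$ (both equal $\pi_2$ up to the evident identification), so that
\[
\fixop{X\times X'}{Z}{g} \;=\; \pi_2\circ\psi\circ\bigvee_{n}\langle\pi_1,g\rangle^{n}\circ\langle\id_{X\times X'},\leastelement{T}{Z}\rangle .
\]
Since composition in $\mathbb{C}$ is $\omega$-continuous (part of the $\omegaCPO$-enrichment assumed in \cref{sec:recursion}), we may push $\psi$ inside the join; applying the two identities above turns the $n$-th term into $\langle\pi_1,f\rangle^{n}\circ\langle\id_X,\leastelement{T}{Z}\rangle\circ p$; pulling $p$ back out of the join (continuity again) and using $\pi_2 = \pi_2\circ\psi$ yields $\pi_2\circ\bigl(\bigvee_{n}\langle\pi_1,f\rangle^{n}\circ\langle\id_X,\leastelement{T}{Z}\rangle\bigr)\circ p = \fixop{X}{Z}{f}\circ p = \fixop{X}{Z}{f}\circ\pi_1$, which is the claim.

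All the computations involved are routine equational manipulations of projections, strengths being absent here; the only points requiring care are the bookkeeping of the several different product projections (for instance distinguishing $\pi_1\colon(X\times X')\times TZ\to X\times X'$ from $\pi_1\colon X\times TZ\to X$) and invoking $\omega$-continuity of composition correctly when moving $\psi$ and $p$ across the suprema $\bigvee_n$. I do not expect any genuine conceptual obstacle; existence of the relevant $\omega$-chains and their joins is already guaranteed by well-definedness of $\mathrm{fix}$.
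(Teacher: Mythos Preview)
Your proposal is correct and follows essentially the same argument as the paper: the paper's proof draws the commuting ladder with vertical map $\pi_1\times\id$ (your $\psi$) showing that the base points, the one-step maps $\langle\pi_1,f\rangle$ and $\langle\pi_1,g\rangle$, and the final $\pi_2$ all intertwine with $\psi$, then performs exactly the same chain of equalities through the supremum using $\omega$-continuity of composition. Your explicit naming of $\psi$ and $p$ makes the bookkeeping of the various projections clearer than the paper's more terse diagram, but the content is identical.
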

\begin{proof}
  Since the following diagram commutes,
  \begin{displaymath}
    \xymatrix{
      X \times X' \ar[d]_{\pi_1} \ar[r]^-{\langle \id, \bot_Z\rangle} &X \times X' \times TZ \ar[d]_{\pi_1 \times \id} \ar[r]^-{\langle \pi_1, f \circ (\pi_1 \times \id) \rangle} &X \times X' \times TZ \ar[d]_-{\pi_1 \times \id} \ar[r]^-{\pi_2} &TZ \\
      X \ar[r]_-{\langle \id, \bot_Z \rangle} &X \times TZ \ar[r]_-{\langle \pi_1, f \rangle} &X \times TZ \ar[ur]_-{\pi_2}
    }
  \end{displaymath}
  we have
  \begin{align*}
    \fixop{X \times X'}{Z}{f \circ (\pi_1 \times \id_{TZ})} 
    &= \pi_2\circ \bigvee_{n\in \nat} \langle\pi_1, f \circ (\pi_1 \times \id_{TZ}) \rangle^{n}\circ \langle \id_{X \times X'}, \leastelement{T}{Z} \rangle \\
    &= \pi_2\circ \bigvee_{n\in \nat} (\pi_1 \times \id) \circ \langle\pi_1, f \circ (\pi_1 \times \id_{TZ}) \rangle^{n}\circ \langle \id_{X \times X'}, \leastelement{T}{Z} \rangle \\
    &= \pi_2\circ \bigvee_{n\in \nat} \langle\pi_1, f \rangle^{n}\circ (\pi_1 \times \id) \circ \langle \id_{X \times X'}, \leastelement{T}{Z} \rangle \\
    &= \pi_2\circ \bigvee_{n\in \nat} \langle\pi_1, f \rangle^{n}\circ \langle \id_{X}, \leastelement{T}{Z} \rangle \circ \pi_1 
    = \fixop{X}{Z}{f} \circ \pi_1.
  \end{align*}
\end{proof}

\begin{proof}[Proof of \cref{prop:transLambdacfix}]
    It follows that 
    \begin{align*}
        &\itp{\mathcal{A}}{\lambda z.\ \uncurry{M}[\pi_1 z/x, \pi_2 z/y]}{T} \\
        &= \eta \circ \itp{\mathcal{A}}{\uncurry{\Gamma}, f\colon \uncurry{\bt_1} \to \uncurry{\bt_2}, z\colon \uncurry{\bt_2} \vdash \uncurry{M}[\pi_1 z/x, \pi_2 z/y]}{T}^\dagger  \\
        &= \eta \circ (\itp{\mathcal{A}}{\primetrans{\Gamma}, f\colon \uncurry{\bt_1} \to \uncurry{\bt_2}, x\colon \primetrans{\bt_1}, y\colon b^Y \vdash \uncurry{M}}{T} \circ \lstrength{\_ \times Y}{} \circ \pi_1 \times \id_{(T\itp{A}{\uncurry{\bt_2}}{T}^{\itp{A}{\uncurry{\bt_1}}{T}} \times \itp{A}{\bt_1 \times b^Y}{T})})^\dagger
    \end{align*}
    \begin{align*}
    &\itp{\tstructure}{\uncurry{(\mu f x.~M)}}{T}  \\
    &= \itp{\tstructure}{(\mu f z.\, \uncurry{M}[\pi_1 z/x, \pi_2 z/y], y)}{T} \\
    &= \mu \circ T(\lstrength{T}{T\itp{A}{\uncurry{\bt_2}}{T}^{\itp{A}{\uncurry{\bt_1}}{T}}, Y}) \circ \rstrength{T}{T\itp{A}{\uncurry{\bt_2}}{T}^{\itp{A}{\uncurry{\bt_1}}{T}}, TY} \circ \langle \itp{A}{\mu f z.~\uncurry{M}[\pi_1 z/x, \pi_2 z/y]}{T}, \eta \pi_2\rangle \\
    &= \lstrength{T}{T\itp{A}{\uncurry{\bt_2}}{T}^{\itp{A}{\uncurry{\bt_1}}{T}}, Y} \circ \langle \itp{A}{\mu f z.~\uncurry{M}[\pi_1 z/x, \pi_2 z/y]}{T}, \pi_2\rangle \\
    &= \eta \circ \langle \cal{T}{\itp{\mathcal{A}}{\uncurry{\bt_1}}{T}}{\itp{\mathcal{A}}{\uncurry{\bt_2}}{T}}\circ \fixop{\itp{\mathcal{A}}{\uncurry{\Gamma}}{T}}{\itp{\mathcal{A}}{\uncurry{\bt_1}\rightarrow \uncurry{\bt_2}}{T}}{\itp{\mathcal{A}}{\lambda z.\ \uncurry{M}[\pi_1 z/x, \pi_2 z/y]}{T} \circ \id_{\itp{\mathcal{A}}{\uncurry{\Gamma}}{T}}\times \cal{T}{\itp{\mathcal{A}}{\uncurry{\bt_1}}{T}}{\itp{\mathcal{A}}{\uncurry{\bt_2}}{T}}}, \pi_2\rangle \\
    &= \eta \circ \langle \cal{T}{\itp{\mathcal{A}}{\uncurry{\bt_1}}{T}}{\itp{\mathcal{A}}{\uncurry{\bt_2}}{T}}\circ \fixop{\itp{\mathcal{A}}{\uncurry{\Gamma}}{T}}{\itp{\mathcal{A}}{\uncurry{\bt_1}\rightarrow \uncurry{\bt_2}}{T}}{\eta \circ (\itp{\mathcal{A}}{\uncurry{M}}{T} \circ \lstrength{\_ \times Y}{} \circ \pi_1 \times \id)^\dagger \circ \id_{\itp{\mathcal{A}}{\uncurry{\Gamma}}{T}}\times \cal{T}{\itp{\mathcal{A}}{\uncurry{\bt_1}}{T}}{\itp{\mathcal{A}}{\uncurry{\bt_2}}{T}}}, \pi_2\rangle \\
    &= \eta \circ \langle \cal{T}{\itp{\mathcal{A}}{\uncurry{\bt_1}}{T}}{\itp{\mathcal{A}}{\uncurry{\bt_2}}{T}}\circ \fixop{\itp{\mathcal{A}}{\uncurry{\Gamma}}{T}}{\itp{\mathcal{A}}{\uncurry{\bt_1}\rightarrow \uncurry{\bt_2}}{T}}{\eta \circ (\itp{\mathcal{A}}{\uncurry{M}}{T} \circ \lstrength{\_ \times Y}{})^\dagger \circ \pi_1 \times \cal{T}{\itp{\mathcal{A}}{\uncurry{\bt_1}}{T}}{\itp{\mathcal{A}}{\uncurry{\bt_2}}{T}}}, \pi_2\rangle \\
    &= \eta \circ \langle \cal{T}{\itp{\mathcal{A}}{\uncurry{\bt_1}}{T}}{\itp{\mathcal{A}}{\uncurry{\bt_2}}{T}}\circ \fixop{\itp{\mathcal{A}}{\primetrans{\Gamma}}{T}}{\itp{\mathcal{A}}{\uncurry{\bt_1}\rightarrow \uncurry{\bt_2}}{T}}{\eta \circ (\itp{\mathcal{A}}{\uncurry{M}}{T} \circ \lstrength{\_ \times Y}{})^\dagger \circ \id \times \cal{T}{\itp{\mathcal{A}}{\uncurry{\bt_1}}{T}}{\itp{\mathcal{A}}{\uncurry{\bt_2}}{T}}} \circ \pi_1, \pi_2\rangle \text{ by }\cref{lem:fixpi} \\
    &= \eta \circ  (\cal{T}{\litp{\mathcal{A}}{\bt_1}{T}{\_ \times Y} \times Y}{\litp{\mathcal{A}}{\bt_2}{T}{\_ \times Y} \times Y}\circ \fixop{\litp{\mathcal{A}}{\Gamma}{T}{\_ \times Y}}{\litp{\mathcal{A}}{\bt_1\rightarrow \bt_2}{T}{\_ \times Y}}{
      \eta \circ (\litp{\mathcal{A}}{M}{T}{\_ \times Y} \circ \lstrength{\_ \times Y}{})^\dagger \circ \id \times \cal{T}{\litp{\mathcal{A}}{\bt_1}{T}{\_ \times Y} \times Y}{\litp{\mathcal{A}}{\bt_2}{T}{\_ \times Y} \times Y}}) \times Y \\
    &= \litp{\tstructure}{\mu f x.~M}{T}{\_ \times Y}.
    \end{align*}
  \end{proof}

  \subsection{Proof of \cref{thm:mainfibfix}} \label{ap:proof_mainfibfix}
  We use the same notations as in \cref{ap:proof_liftlambdac}.

\begin{lemma}
  \label{lem:commuteCanAl}
  The following diagram commutes: 

   \adjustbox{scale=0.6,center}{
      \begin{tikzcd}
        T\Big(T(\itp{A}{\uncurry{\bt_2}}{T})^{\itp{A}{\uncurry{\bt_1}}{T}}\Big)\arrow[d, " T(\rho^{-1})"]\arrow[rrrrrr, "\cal{T}{\itp{A}{\uncurry{\bt_1}}{T}}{\itp{A}{\uncurry{\bt_2}}{T}}"] &&&&&& T(\itp{A}{\uncurry{\bt_2}}{T})^{\itp{A}{\uncurry{\bt_1}}{T}} \arrow[d, "\rho^{-1}"]\\
        T\big(S(\itp{A}{\bt_2}{S})^{\itp{A}{\bt_1}{S}}\big)\arrow[rr, "\eta"]&& \Big(T\big(S(\itp{A}{\bt_2}{S})^{\itp{A}{\bt_1}{S}}\big)\times Y\Big)^Y \arrow[rr, "(\rstrength{T}{})^{\id}"]&&  S\big(S(\itp{A}{\bt_2}{S})^{\itp{A}{\bt_1}{S}}\big) \arrow[rr, "\cal{S}{\itp{A}{\bt_1}{S}}{\itp{A}{\bt_2}{S}}"] && S(\itp{A}{\bt_2}{S})^{\itp{A}{\bt_1}{S}}
      \end{tikzcd}
    }
\end{lemma}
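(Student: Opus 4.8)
\textbf{Proof proposal for \cref{lem:commuteCanAl}.}

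The plan is to unfold both sides as transposes and to reduce the commutation to the definitions of the two canonical Eilenberg--Moore algebras $\cal{T}{X}{Z}$ and $\cal{S}{X}{Z}$ together with the coherence laws of the strength of $T$, the definition of $S = T(\_\times Y)^Y$ from \cref{def:effectStateMonad}, and the definition of $\rho_{\bt_1\rightarrow\bt_2}$ from \cref{def:rho}. First I would recall that $\rho_{\bt_1\rightarrow\bt_2} = {\cong}\circ S(\rho_{\bt_2})^{\rho_{\bt_1}^{-1}}$, so $\rho^{-1}$ on the function-type interpretation decomposes into the inverse currying isomorphism $\cong^{-1}$ followed by $S(\rho_{\bt_2}^{-1})^{\rho_{\bt_1}}$. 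I would also recall that $\cal{T}{X}{Z} = \big(\mu^T_{Z}\circ T(\ev{T(Z)}{X})\circ \rstrength{T}{T(Z)^X,X}\big)^{\dagger}$ and that the composite along the bottom of the square, $\cal{S}{\itp{A}{\bt_1}{S}}{\itp{A}{\bt_2}{S}}\circ (\rstrength{T}{})^{\id}\circ \eta$, can be rewritten using $\mu^S$ and $\eta^S$ from \cref{def:effectStateMonad} — in fact $(\rstrength{T}{})^{\id}\circ\eta$ is, by inspection, exactly the composite $S\big(\mathrm{id}\big)$ paired with the structure that turns a $T$-computation of an $S$-value into an $S$-computation, i.e.\ it is the strength-induced map $T(S(-))\to S(S(-))$ used implicitly in the monad structure of $S$.

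The main step is then a transpose computation: since both vertical maps are of the form $\rho^{-1}$ on exponentials, and the top and bottom maps land in exponentials, I would take the adjoint transpose with respect to $(\_\times \itp{A}{\bt_1}{S})\dashv(\_)^{\itp{A}{\bt_1}{S}}$ (after first moving across the $\rho_{\bt_1}$-change-of-variable on the base of the exponential, which is harmless because it is an isomorphism and natural). After transposing, the square becomes a statement purely about $T$ applied to objects of the form $\itp{A}{\bt_2}{S}\times Y$, with no exponentials: on one side we get $\mu^T \circ T(\ev{}{}) \circ \rstrength{T}{}$ (the defining formula of $\cal{T}{}{}$ unfolded) post-composed with $T(\rho_{\bt_2}^{-1}\times\mathrm{id}_Y)$-type coherences, and on the other side the unfolded formula for $\mu^S$ from \cref{def:effectStateMonad}, which is itself built from $\mu^T$, $T(\ev{}{})$ and $\rstrength{T}{}$. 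I expect these two to coincide after a diagram chase using: (i) naturality of $\rstrength{T}{}$, (ii) the associativity/unit coherence of $\lstrength{T}{},\rstrength{T}{}$ with $\mu^T,\eta^T$, and (iii) the fact that $\ev{}{}$ composed with currying isomorphisms collapses to projection. This is morally the same kind of bookkeeping that already appears in the proof of \cref{prop:liftingCorrectness} in \cref{sec:omitDefProofProd}, so I would structure the chase as a sequence of two or three labelled commuting sub-diagrams.

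The hard part will be keeping the bookkeeping of the $Y$-exponentials and the two separate currying isomorphisms (the one hidden in $\rho_{\bt_1\rightarrow\bt_2}$ and the one $(\rstrength{T}{})^{\id}$ uses) straight, since a naive chase produces a diagram with many parallel $(\_)^Y$-shifted copies of the same arrow. I would mitigate this by first proving the transposed (exponential-free) version as a separate sublemma stated only in terms of $T$, $\mu^T$, $\rstrength{T}{}$ and $\ev{}{}$ on objects $W\times Y$ — essentially the assertion that the canonical $T$-algebra on $S(Z)$ induced by viewing $S$-values as $T$-computations of $(\_\times Y)^Y$-values agrees with the $\cal{S}{}{}$ algebra — and then deriving \cref{lem:commuteCanAl} by transposing back and pre/post-composing with the isomorphisms $\rho_{\bt_1},\rho_{\bt_2}$ and $\cong$, each of which contributes only a routine naturality square. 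No step requires anything beyond the strong-monad coherence axioms and the explicit formulas already fixed in \cref{def:effectStateMonad,def:rho}, so the argument is long but not deep; the deferral of the full calculation to the appendix is appropriate.
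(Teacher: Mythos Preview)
Your proposal is correct and follows essentially the same route as the paper: the paper's proof is a single large commutative diagram obtained by transposing across the $(\_\times(\itp{A}{\bt_1}{S}\times Y))\dashv(\_)^{\itp{A}{\bt_1}{S}\times Y}$ adjunction, unfolding $\cal{T}{}{}$, $\cal{S}{}{}$, $\mu^S$, and $\rho_{\bt_1\to\bt_2}$, and then chasing the strength/monad coherence squares exactly as you describe. Your idea of isolating the exponential-free core as a separate sublemma is a cleaner presentation than the paper's single monolithic diagram, but the underlying computation is identical.
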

\begin{proof}
  By the following diagram:

  \adjustbox{scale=0.4,center}{
    \begin{tikzcd}
        T\Big(T(\itp{A}{\uncurry{\bt_2}}{T})^{\itp{A}{\uncurry{\bt_1}}{T}}\Big)\arrow[d, " T(\rho^{-1})"]\arrow[rr, "\eta"] &&  \Big(T\Big(T(\itp{A}{\uncurry{\bt_2}}{T})^{\itp{A}{\uncurry{\bt_1}}{T}}\Big)\times \itp{A}{\uncurry{\bt_1}}{T}\Big)^{\itp{A}{\uncurry{\bt_1}}{T}} \arrow[rr, "(\rstrength{}{})^{\id}"]&& T\Big(T(\itp{A}{\uncurry{\bt_2}}{T})^{\itp{A}{\uncurry{\bt_1}}{T}}\times \itp{A}{\uncurry{\bt_1}}{T}\Big)^{\itp{A}{\uncurry{\bt_1}}{T}} \arrow[rr, "\evsyb^{\id}"]&& T^2(\itp{A}{\uncurry{\bt_2}}{T})^{\itp{A}{\uncurry{\bt_1}}{T}} \arrow[dd, "\mu^{\id}"]\\
        T\Big(T(\itp{A}{\bt_2}{S}\times Y)^{\itp{A}{\bt_1}{S}\times Y}\Big) \arrow[d, "T(\cong)"] \arrow[rr, "\eta"]&& \Big(T\Big(T(\itp{A}{\bt_2}{S}\times Y)^{\itp{A}{\bt_1}{S}\times Y}\Big)\times \itp{A}{\bt_1}{S}\times Y\Big)^{\itp{A}{\bt_1}{S}\times Y} \arrow[rr, "(\rstrength{}{})^{\id}"] \arrow[d, ""]&& T\Big(T(\itp{A}{\bt_2}{S}\times Y)^{\itp{A}{\bt_1}{S}\times Y}\times \itp{A}{\bt_1}{S}\times Y\Big)^{\itp{A}{\bt_1}{S}\times Y} \arrow[rru, ""] \arrow[d, "T(\evsyb)^{\id}"] \arrow[llllddddd, bend left = 10, ""]&& \\
        T\Big(S(\itp{A}{\bt_2}{S})^{\itp{A}{\bt_1}{S}}\Big) \arrow[d, "\eta"] \arrow[rr, "\eta"]&& \Big(T\Big(S(\itp{A}{\bt_2}{S})^{\itp{A}{\bt_1}{S}}\Big)\times \itp{A}{\bt_1}{S}\times Y \Big)^{\itp{A}{\bt_1}{S}\times Y} \arrow[rru, ""] \arrow[ddddll, bend left = 15, ""] && T^2(\itp{A}{\bt_2}{S}\times Y)^{\itp{A}{\bt_1}{S}\times Y}  \arrow[drr, "\mu^{\id}"] \arrow[ddddd, "\id"] &&  T(\itp{A}{\uncurry{\bt_2}}{T})^{\itp{A}{\uncurry{\bt_1}}{T}} \arrow[d, ""] \\
        \Big(T\Big(S(\itp{A}{\bt_2}{S})^{\itp{A}{\bt_1}{S}}\Big)\times Y\Big)^{Y} \arrow[d, "(\rstrength{}{})^{\id}"]&& && && T(\itp{A}{\bt_2}{S}\times Y)^{\itp{A}{\bt_1}{S}\times Y} \arrow[dddd, "\cong"]\\
        S\Big(S(\itp{A}{\bt_2}{S})^{\itp{A}{\bt_1}{S}}\Big) \arrow[d, "\eta"]&& && && \\
        \Big(S\Big(S(\itp{A}{\bt_2}{S})^{\itp{A}{\bt_1}{S}}\Big) \times \itp{A}{\bt_1}{S}\Big)^{\itp{A}{\bt_1}{S}} \arrow[d, "(\rstrength{}{})^{\id}"] && && && \\
        S\Big(S(\itp{A}{\bt_2}{S})^{\itp{A}{\bt_1}{S}} \times \itp{A}{\bt_1}{S}\Big)^{\itp{A}{\bt_1}{S}} \arrow[d, "S(\evsyb)^{\id}"] && && && \\
        S^2(\itp{A}{\bt_2}{S})^{\itp{A}{\bt_1}{S}} \arrow[rrrr, ""]&& && T^2(\itp{A}{\bt_2}{S}\times Y)^{\itp{A}{\bt_1}{S}\times Y} \arrow[rr, ""]&& S(\itp{A}{\bt_2}{S})^{\itp{A}{\bt_1}{S}}
      \end{tikzcd}
    }

\end{proof}
  \begin{definition} \label{def:admissible}
    Let $p\colon \mathbb{E} \to \mathbb{C}$ be a fibration for logical relations.
    We say that an object $X \in \mathbb{E}_{TI}$ is \emph{admissible} 
    if (i) $\bot_I\colon 1 \darrow X$ and 
    (ii) for any $Y \in \mathbb{E}$ and an $\omega$-chain $f_0 \leq f_1 \leq \cdots $ in $\mathbb{C}(pY, TI)$, $\bigvee_{n \in \mathbb{N}} f_n\colon Y \darrow X$.
  \end{definition} 
  \begin{lemma} \label{lem:admissible_lift}
    Let $p\colon \mathbb{E} \to \mathbb{C}$ be a fibration for logical relations,
    and $X$ be an admissible object in $\mathbb{E}_{TI}$.
    If $f\colon Z \times X \darrow X$, then $\fixop{pZ}{I}{f}\colon Z \darrow X$.
    \qed
  \end{lemma}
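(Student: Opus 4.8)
The plan is to unfold the definition of $\fixop{pZ}{I}{f}$ and exhibit it as a join of an $\omega$-chain of morphisms that each lie above the admissible object $X$, then invoke condition (ii) of \cref{def:admissible}. Concretely, $\fixop{pZ}{I}{f} = \pi_2 \circ \bigvee_{n \in \nat} \langle \pi_1, f \rangle^n \circ \langle \id_{pZ}, \leastelement{T}{I} \rangle$. Define the $\omega$-chain in $\mathbb{C}(pZ, TI)$ by $g_n \coloneqq \pi_2 \circ \langle \pi_1, f \rangle^n \circ \langle \id_{pZ}, \leastelement{T}{I} \rangle$; it is an $\omega$-chain because $\langle \pi_1, f \rangle$ is $\omega$-continuous (being built from composition, tupling, and $f$ in the $\omegaCPO$-enriched $\mathbb{C}$) and $\leastelement{T}{I}$ is the least element, so $g_0 \leq g_1 \leq \cdots$. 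Since compositions and tupling commute with directed joins in an $\omegaCPO$-enriched category, $\bigvee_n g_n = \fixop{pZ}{I}{f}$. Thus by condition (ii) it suffices to show each $g_n \colon Z \darrow X$.

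First I would prove $g_n \colon Z \darrow X$ by induction on $n$. For $n = 0$: $g_0 = \pi_2 \circ \langle \id_{pZ}, \leastelement{T}{I} \rangle = \leastelement{T}{I} \circ {!}_{pZ}$, and since $p$ strictly preserves the bi-CCC structure and $X$ is admissible, $\leastelement{T}{I} = \leastelement{T}{I} \circ \id \colon 1 \darrow X$ (condition (i)); precomposing with ${!}_{pZ}\colon Z \darrow 1$ gives $g_0 \colon Z \darrow X$. For the inductive step, observe that $\langle \pi_1, f \rangle^{n+1} = \langle \pi_1, f \rangle \circ \langle \pi_1, f \rangle^n$, so $g_{n+1} = f \circ \langle \pi_1 \circ \langle \pi_1, f\rangle^n \circ \langle \id, \leastelement{T}{I}\rangle,\, g_n \rangle = f \circ \langle \id_{pZ},\, g_n \rangle$, using that $\pi_1 \circ \langle \pi_1, f\rangle^n = \pi_1$ and $\pi_1 \circ \langle \id, \leastelement{T}{I}\rangle = \id$. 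By the induction hypothesis $g_n \colon Z \darrow X$, hence $\langle \id_{Z}, g_n \rangle \colon Z \darrow Z \times X$ (since the pairing of $\id_Z \colon Z \darrow Z$ and $g_n$ lies above $\langle \id_{pZ}, g_n \rangle$, as $p$ preserves products on the nose), and then the hypothesis $f \colon Z \times X \darrow X$ gives $g_{n+1} = f \circ \langle \id_Z, g_n \rangle \colon Z \darrow X$.

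The main obstacle, though largely bookkeeping rather than a genuine difficulty, is making precise the interplay between the $\omegaCPO$-enrichment and the fibrational structure: one must be careful that the join $\bigvee_n g_n$ computed in the homset $\mathbb{C}(pZ, TI)$ is indeed the one appearing in $\fixop{}{}{}$ (this is exactly the $\omega$-continuity of composition and tupling, part of the definition of $\omegaCPO$-enriched bi-CCC), and that condition (ii) of admissibility is stated for precisely this situation—an $\omega$-chain in $\mathbb{C}(pY, TI)$ with each term above $X$. The verification that $g_n$ itself is the $n$-th term of the relevant chain, and that the reindexing $\pi_1 \circ \langle \pi_1, f \rangle^n = \pi_1$ holds, is routine induction. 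Once these are in place, condition (ii) immediately yields $\bigvee_n g_n = \fixop{pZ}{I}{f} \colon Z \darrow X$, completing the proof.
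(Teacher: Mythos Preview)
Your proposal is correct and follows the standard approach for this kind of admissibility argument; since the paper gives no explicit proof (the lemma is marked \qed), there is nothing to compare against, and your unfolding of $\fixop{pZ}{I}{f}$ as $\bigvee_n g_n$ with an induction showing each $g_n\colon Z \darrow X$ is exactly what is expected. One small remark: you correctly read condition~(ii) of \cref{def:admissible} as implicitly requiring each $f_n\colon Y \darrow X$ before concluding that the join lifts; the paper's statement omits this hypothesis, but your interpretation is the standard one and is necessary for the argument to make sense.
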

  \begin{lemma} \label{lem:admissible_lift_al}
    For any types $\bt_1, \bt_2$, 
    $(\cal{U}{L\litp{\mathcal{A}}{\bt_1}{U}{L}}{L\litp{\mathcal{A}}{\bt_2}{U}{L}})^*V(\bt_1 \to \bt_2)$ is admissible.
  \end{lemma}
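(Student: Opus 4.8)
The plan is to verify directly the two clauses of \cref{def:admissible}; this admissibility is exactly what makes \cref{lem:admissible_lift} applicable to the parametrised fixpoint occurring in $\litp{\mathcal{A}}{\mu f x.~M}{U}{L}$. Write $U \defeq S \times T$ and $L \defeq \id \times (\_ \times Y)$ as before, put $P \defeq L\litp{\mathcal{A}}{\bt_1}{U}{L}$, $Q \defeq L\litp{\mathcal{A}}{\bt_2}{U}{L}$, and $R \defeq \litp{\mathcal{A}}{\bt_1 \to \bt_2}{U}{L} = (UQ)^P$, so that $\cal{U}{P}{Q}\colon UR \to R$. Since $p_D$ is a partial-order bifibration, reindexing along $\cal{U}{P}{Q}$ is characterised by $u\colon W \darrow (\cal{U}{P}{Q})^*V(\bt_1 \to \bt_2)$ iff $\cal{U}{P}{Q} \circ u\colon W \darrow V(\bt_1 \to \bt_2)$. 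Hence admissibility of $(\cal{U}{P}{Q})^*V(\bt_1 \to \bt_2)$ is equivalent to: (i) $\cal{U}{P}{Q} \circ \bot_R \colon 1 \darrow V(\bt_1 \to \bt_2)$, and (ii) for every $\omega$-chain $(f_n)_n$ in $\mathbb{C}^2(p_D W, UR)$ with $\cal{U}{P}{Q} \circ f_n \colon W \darrow V(\bt_1 \to \bt_2)$ for all $n$, also $\cal{U}{P}{Q} \circ \bigvee_n f_n \colon W \darrow V(\bt_1 \to \bt_2)$.

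I would first record two facts about the graph relation $V(\bt_1 \to \bt_2)$. First, it contains the pair of least morphisms: since $V(\bt_1 \to \bt_2)(H) = \{(g, \rho_{\bt_1 \to \bt_2} \circ g \circ \pi_1)\}$ and $\rho_{\bt_1 \to \bt_2}$ is an isomorphism, postcomposition with $\rho_{\bt_1 \to \bt_2}$ and precomposition with $\pi_1$ send the least morphism to the least morphism, so the pair $(\bot,\bot)$ lies in $V(\bt_1 \to \bt_2)$. Second, $V(\bt_1 \to \bt_2)$ is closed under joins of $\omega$-chains in $\mathbb{K}_D$: joins in $\mathbb{K}_D$ are computed pointwise (it lies over the subobject fibration of presheaves) and componentwise in $\mathbb{C}^2$, and $\omega$-continuity of composition in an $\omegaCPO$-enriched bi-CCC gives $\bigvee_n (g_n, \rho_{\bt_1 \to \bt_2} \circ g_n \circ \pi_1) = (\bigvee_n g_n, \rho_{\bt_1 \to \bt_2} \circ (\bigvee_n g_n) \circ \pi_1)$, which is again of the required form. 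With these, clause (i) reduces to the statement that $\cal{U}{P}{Q}$ preserves least morphisms: the monad structure of $U = S \times T$ is componentwise, so $\cal{U}{P}{Q}$ is the product of the corresponding canonical algebras for $S$ and for $T$, and each of these---being assembled from $\mu$, a functor application, the right strength $\rstrength{T}{}$, and a transpose---maps least morphisms to least morphisms, the strength step being precisely the coherence of the right strength with the pointed structure of a pseudo-lifting monad from \cite{Katsumata13}. Therefore $\cal{U}{P}{Q} \circ \bot_R$ is the least element of $\mathbb{C}^2(1, R)$, i.e.\ the pair of least morphisms, which lies in $V(\bt_1 \to \bt_2)$ by the first fact; this is clause (i).

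For clause (ii) I would invoke $\omega$-continuity once more: every component entering $\cal{U}{P}{Q}$, and composition in $\mathbb{C}^2$, is $\omega$-continuous, so $\cal{U}{P}{Q} \circ \bigvee_n f_n = \bigvee_n (\cal{U}{P}{Q} \circ f_n)$; since each $\cal{U}{P}{Q} \circ f_n$ lies in $V(\bt_1 \to \bt_2)$, the second fact (closure of $V(\bt_1 \to \bt_2)$ under $\omega$-chain joins) gives $\cal{U}{P}{Q} \circ \bigvee_n f_n \colon W \darrow V(\bt_1 \to \bt_2)$, as required.

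The main obstacle I anticipate is the step in clause (i) that the canonical Eilenberg--Moore algebra $\cal{U}{P}{Q}$ preserves least morphisms at the relevant (non-terminal, exponential) objects. Concretely this requires that $\bot$, introduced only at the terminal object via $T(!_X) \circ \leastelement{T}{0}$, is stable under the right strength and the multiplication out of which $\cal{T}{}{}$ is built, and that ``$\bot$ composed with $!$'' is genuinely the least element of the hom-$\omegaCPO$s that occur, even though not every hom-$\omegaCPO$ has a least element. All of this is contained in the pseudo-lifting machinery of \cite{Katsumata13,SimpsonP00,HasegawaK02}; the remaining work---unwinding how a lift into the graph relation $V(\bt_1 \to \bt_2)$ over $\mathbb{C}^2$ behaves under reindexing, and that joins and products in $\mathbb{K}_D$ are pointwise/componentwise---is routine bookkeeping.
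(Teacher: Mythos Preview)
Your overall structure is correct: reducing admissibility of $(\cal{U}{P}{Q})^*V(\bt_1 \to \bt_2)$ to membership conditions on $V(\bt_1 \to \bt_2)$ via the reindexing adjunction is exactly right, and your argument for clause (ii)---$\omega$-continuity of postcomposition with $\cal{U}{P}{Q}$ plus closure of the graph relation $V(\bt_1\to\bt_2)$ under $\omega$-joins---is essentially what the paper has in mind (its terse remark about $\eta$ and $\cal{U}{}{}$ notwithstanding).

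For clause (i), however, you take a genuinely different route from the paper. You argue that each component of $\cal{U}{P}{Q}\circ\bot^U_R$ is the least element of the respective hom-$\omega$CPO, and conclude the required equation $\cal{T}{}{}\circ\bot^T=\rho\circ\cal{S}{}{}\circ\bot^S$ by uniqueness of least elements. The paper instead derives this equation \emph{equationally}: it invokes \cref{lem:commuteCanAl}, which shows $\rho^{-1}\circ\cal{T}{}{}=\cal{S}{}{}\circ\beta\circ T(\rho^{-1})$ for a certain map $\beta\colon T\Rightarrow S$, and then uses that the strong monad morphism $\beta$ carries the generic effect $\bot^T$ to $\bot^S$ (the fact recorded after the definition of pseudo-lifting monads). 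Composing these gives the desired equation directly, without ever needing to know that $\cal{T}{}{}\circ\bot^T$ or $\cal{S}{}{}\circ\bot^S$ is actually a least element of $\mathbb{C}(1,R_i)$. This is precisely the point you flag as ``the main obstacle'': your argument requires $\bot^T_Z\circ!_X$ to be least in $\mathbb{C}(X,TZ)$ for non-terminal $X$ (and analogously for precomposition with $\pi_1$), which the paper's definition of pseudo-lifting only guarantees at $X=1$. The paper's diagrammatic route sidesteps this entirely, at the cost of the extra structural lemma \cref{lem:commuteCanAl}. Your approach is more elementary when the order-theoretic assumption holds (as it does in $\dCPO$), but the paper's is more robust against the exact axioms chosen for pseudo-lifting monads.
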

  \begin{proof}
    For the condition (i) in \cref{def:admissible},
    it is enough to show that for each $H \in \mathbb{C}$, the following diagram commutes.
    \begin{displaymath}
      \xymatrix@C=5em{
        H \ar[r]^{!} \ar[d]^{\pi_1} &1 \ar[r]^-\bot &S(S(\itp{A}{\bt_2}{S})^{\itp{A}{\bt_1}{S}}) \ar[r]^{\cal{S}{\itp{\mathcal{A}}{\bt_1}{S}}{\itp{\mathcal{A}}{\bt_2}{S}}}  &S(\itp{A}{\bt_2}{S})^{\itp{A}{\bt_1}{S}} \\
        H \times Y \ar[r]^{!} &1 \ar[u]^{\id}\ar[r]^-\bot &T(T(\itp{A}{\uncurry{\bt_2}}{T})^{\itp{A}{\uncurry{\bt_1}}{S}}) \ar[u]^{\alpha \circ S\rho_{\bt_1 \to \bt_2}}\ar[r]^{\cal{T}{\itp{\mathcal{A}}{\uncurry{\bt_1}}{T}}{\itp{\mathcal{A}}{\uncurry{\bt_2}}{T}}} &T(\itp{A}{\uncurry{\bt_2}}{T})^{\itp{A}{\uncurry{\bt_1}}{T}}  \ar[u]_{\rho_{\bt_1 \to \bt_2}}
      }
    \end{displaymath}
    The middle square commutes because $\alpha$ preserves $\bot$,
    and the right square commutes by \cref{lem:commuteCanAl}.
    Noting that $\eta \circ \cal{U}{L\litp{A}{\bt_1}{U}{L}}{L\litp{A}{\bt_2}{U}{L}} = \id$, the condition (ii) in \cref{def:admissible} can be easily checked.
  \end{proof}
  These lemma above yield the following result.
  \begin{lemma} \label{lem:lift_fix_al}
    If $f\colon V(\Gamma) \times V(\bt_1 \to \bt_2) \darrow V(\bt_1 \to \bt_2)$,
    then $\eta \circ \cal{U}{L\litp{A}{\bt_1}{U}{L}}{L\litp{A}{\bt_2}{U}{L}} \circ \fixop{\litp{\mathcal{A}}{\Gamma}{T}{\_ \times Y}}{\litp{A}{\bt_1 \to \bt_2}{U}{L}}{\eta \circ f \circ \id \times \cal{U}{L\litp{A}{\bt_1}{U}{L}}{L\litp{A}{\bt_2}{U}{L}}}\colon V(\Gamma) \darrow C(\bt_1 \to \bt_2)$.
  \end{lemma}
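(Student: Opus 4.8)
The plan is to run the standard ``admissible relations are closed under fixpoints'' argument inside the fibration $p_D\colon \mathbb{K}_D\to\mathbb{C}^2$, which is a fibration for logical relations. Write $U\coloneqq S\times T$, let $\gamma\coloneqq\cal{U}{L\litp{A}{\bt_1}{U}{L}}{L\litp{A}{\bt_2}{U}{L}}\colon U\litp{A}{\bt_1\to\bt_2}{U}{L}\to\litp{A}{\bt_1\to\bt_2}{U}{L}$ be the canonical Eilenberg--Moore algebra, and abbreviate the iteration map by $g\coloneqq\eta\circ f\circ(\id\times\gamma)$; the morphism in the statement is then $\eta\circ\gamma\circ\fixop{\litp{\mathcal{A}}{\Gamma}{U}{L}}{\litp{A}{\bt_1\to\bt_2}{U}{L}}{g}$. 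I would take $W\coloneqq\gamma^{\ast}V(\bt_1\to\bt_2)$ --- the reindexing of $V(\bt_1\to\bt_2)$ along $\gamma$, which lies in the fibre over $U\litp{A}{\bt_1\to\bt_2}{U}{L}$ --- as the admissible predicate: it is exactly the object shown admissible in \cref{lem:admissible_lift_al}. So it suffices to lift $g$ to a morphism $V(\Gamma)\times W\darrow W$, apply \cref{lem:admissible_lift} to the fixpoint, and then transport along the two trailing maps.

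The crux is the lifting of $g$, which I would exhibit as a composite of three liftings. First, the cartesian morphism witnessing the reindexing gives $\gamma\colon W\darrow V(\bt_1\to\bt_2)$, hence $\id\times\gamma\colon V(\Gamma)\times W\darrow V(\Gamma)\times V(\bt_1\to\bt_2)$. Second, $f\colon V(\Gamma)\times V(\bt_1\to\bt_2)\darrow V(\bt_1\to\bt_2)$ is the hypothesis. Third, $\eta\colon V(\bt_1\to\bt_2)\darrow W$: for every $H$ and every $h\in V(\bt_1\to\bt_2)(H)$, the Eilenberg--Moore unit law gives $\gamma\circ\eta\circ h=h\in V(\bt_1\to\bt_2)(H)$, so $\eta\circ h$ lies in the preimage $W(H)$. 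Composing the three yields $g\colon V(\Gamma)\times W\darrow W$. Since $W$ is admissible --- which in particular supplies $\bot\colon 1\darrow W$, i.e.\ condition (i) of \cref{def:admissible} --- \cref{lem:admissible_lift}, applied with $Z\coloneqq V(\Gamma)$ and $X\coloneqq W$, gives $\fixop{\litp{\mathcal{A}}{\Gamma}{U}{L}}{\litp{A}{\bt_1\to\bt_2}{U}{L}}{g}\colon V(\Gamma)\darrow W$.

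It remains to transport this lift along $\gamma$ and the final unit. Post-composing with $\gamma\colon W\darrow V(\bt_1\to\bt_2)$ gives $\gamma\circ\fixop{\dots}{\dots}{g}\colon V(\Gamma)\darrow V(\bt_1\to\bt_2)$, and a last application of $\eta$ --- the function-type instance of the coherence between $\eta^U$, $L$ and $\rho_{\bt_1\to\bt_2}$ already used for \cref{lem:liftcv}.\ref{item:eta} --- lands the whole composite in $C(\bt_1\to\bt_2)$. This lemma then slots into the inductive skeleton of \cref{thm:mainfib} exactly where the $\rec$ rule is handled, yielding \cref{thm:mainfibfix}, with \cref{prop:transLambdacfix} identifying $\aitp{\tstructure}{\uncurry{(\mu f x.~M)}}{T}$ with $\tstructure\litp{\tstructure}{\mu f x.~M}{T}{\_\times Y}$. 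The main obstacle is not the fixpoint step --- which is essentially mechanical once \cref{lem:admissible_lift,lem:admissible_lift_al} are in hand --- but the coherence bookkeeping for the strong functor $L=\id\times(\_\times Y)$: one must check that the transposes, the canonical algebras $\gamma$, the natural transformations $\natu{U,Y}$ and the strengths all commute with the $L$-wrappings in precisely the way required, so that $g$ really is the iteration map of the extended interpretation of $\mu f x.~M$ and so that the final unit witnesses a morphism into $C(\bt_1\to\bt_2)$ rather than merely into $U\litp{A}{\bt_1\to\bt_2}{U}{L}$. This is the same flavour of diagram chase as in the proof of \cref{prop:trans}, and it is where the real care lies.
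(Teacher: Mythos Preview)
Your proposal is correct and follows essentially the same route as the paper: take $W=\gamma^{\ast}V(\bt_1\to\bt_2)$ (admissible by \cref{lem:admissible_lift_al}), lift $g$ as the composite of the cartesian morphism $\id\times\gamma$, the hypothesis $f$, and $\eta$ (using the Eilenberg--Moore unit law $\gamma\circ\eta=\id$, which the paper states with the factors accidentally swapped), invoke \cref{lem:admissible_lift} for the fixpoint, and then post-compose with $\gamma$ and $\eta$ via \cref{lem:liftcv}.\ref{item:eta}. The additional discussion you include about coherence bookkeeping and the fit into \cref{thm:mainfibfix} is context rather than part of this lemma's proof, but it is accurate.
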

  \begin{proof}
    Since $\cal{U}{L\litp{A}{\bt_1}{U}{L}}{L\litp{A}{\bt_2}{U}{L}}$ is an Eilenberg-Moore algebra,
    $\eta \circ \cal{U}{L\litp{A}{\bt_1}{U}{L}}{L\litp{A}{\bt_2}{U}{L}} = \id$, which implies $\eta\colon V(\bt_1 \to \bt_2) \darrow (\cal{U}{L\litp{\mathcal{A}}{\bt_1}{U}{L}}{L\litp{\mathcal{A}}{\bt_2}{U}{L}})^*V(\bt_1 \to \bt_2)$.
    Therefore, we have $\eta \circ f \circ (\id \times \cal{U}{L\litp{A}{\bt_1}{U}{L}}{L\litp{A}{\bt_2}{U}{L}})\colon V(\Gamma) \times (\cal{U}{L\litp{\mathcal{A}}{\bt_1}{U}{L}}{L\litp{\mathcal{A}}{\bt_2}{U}{L}})^*V(\bt_1 \to \bt_2) \darrow (\cal{U}{L\litp{\mathcal{A}}{\bt_1}{U}{L}}{L\litp{\mathcal{A}}{\bt_2}{U}{L}})^*V(\bt_1 \to \bt_2)$.
    By \cref{lem:admissible_lift} and \cref{lem:admissible_lift_al}, it follows that 
    $\fixop{\litp{\mathcal{A}}{\Gamma}{T}{\_ \times Y}}{\litp{A}{\bt_1 \to \bt_2}{U}{L}}{\eta \circ f \circ \id \times \cal{U}{L\litp{A}{\bt_1}{U}{L}}{L\litp{A}{\bt_2}{U}{L}}}\colon V(\Gamma) \darrow (\cal{U}{L\litp{\mathcal{A}}{\bt_1}{U}{L}}{L\litp{\mathcal{A}}{\bt_2}{U}{L}})^*V(\bt_1 \to \bt_2)$.
    Then the statement follows from \cref{lem:liftcv}.\ref{item:eta}.
  \end{proof}
  \begin{proof}[Proof of \cref{thm:mainfibfix}]
    It is enough to show that $\litp{A}{M}{U}{L}\colon \dot{L}V(\Gamma, \bt_1 \to \bt_2, \bt_1) \to C(\bt_2)$ implies 
    $\litp{A}{\mu f x.~M}{U}{L}\colon \dot{L}V(\Gamma) \to C(\bt_1 \to \bt_2)$.

    By \cref{lem:liftcv}.\ref{item:liftnat} and \ref{item:exp},
    we have $(\litp{\mathcal{A}}{M}{T}{\_ \times Y} \circ \lstrength{\_ \times Y}{})^\dagger \colon V(\Gamma) \times V(\bt_1 \to \bt_2) \darrow V(\bt_1 \to \bt_2)$.
    Then \cref{lem:lift_fix_al}, \cref{lem:liftcv}.\ref{item:eta} concludes the proof.
  \end{proof}

\section{Omitted Definitions and Proofs in~\cref{sec:caseStudy}}

\subsection{Preliminaries}
\label{subsec:preliminaries}

\begin{definition}[DCPO]
  A poset $X$ is a \emph{dcpo} if all directed family $D$ has a supremum $\bigvee D$.
\end{definition}
We fix the base bi-CCC $\mathbb{C}$ as $\dCPO$. Note that $\dCPO$ is stable. 

\begin{definition}[Scott topology]
  Given a dcpo $X$, the \emph{Scott topology} is the topology whose open sets are sets $U\subseteq X$ such that 1) for any $u_1\in U$ and $u_2\in X$, $u_1\preceq_X u_2$ implies $u_2\in U$;
  and 2) for any directed family $D$, $\vee D\in U$ implies $D_{\lambda}\in U$ for some index $\lambda$.
  We write $\scottTop{X}$ for the set of open sets in the Scott topology of $X$.
  Note that $\scottTop{X}$ is a dcpo with the inclusion ordering.
\end{definition}

Scott-continuous maps are precisely continuous maps in Scott topology~\cite{Berger07}.

\begin{definition}[d-topology]
  Given a dcpo $X$, the \emph{d-topology} is the topology whose closed sets consist of sub-dcpos of $X$.

\end{definition}
\begin{lemma}[\cite{ZhaoF10}]
  Scott-continuous maps are continuous maps in d-topology.
\end{lemma}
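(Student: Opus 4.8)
The plan is to unfold both topologies and verify the preimage condition directly; this is a short argument with no genuine difficulty, so the work is essentially bookkeeping with two standard equivalences. I would start by recalling that for a dcpo $Z$ the closed sets of the $d$-topology are exactly the sub-dcpos of $Z$, i.e.\ the subsets $C \subseteq Z$ with $\bigvee D \in C$ for every directed $D \subseteq C$ (the supremum being computed in $Z$), and that a function between dcpos is Scott-continuous iff it is monotone and preserves suprema of directed families.

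First I would record the elementary observation that a Scott-continuous map $f \colon X \to Y$ is monotone: if $x \preceq x'$ then $\{x, x'\}$ is directed with supremum $x'$, whence $f(x') = f\big(\bigvee\{x, x'\}\big) = \bigvee\{f(x), f(x')\} \succeq f(x)$. Then, to establish $d$-continuity, I would fix an arbitrary $d$-closed set $C \subseteq Y$ --- i.e.\ a sub-dcpo of $Y$ --- and show that $f^{-1}(C)$ is a sub-dcpo of $X$. Given a directed $D \subseteq f^{-1}(C)$, monotonicity makes $f(D)$ a directed family in $Y$ contained in $C$; since $C$ is a sub-dcpo, $\bigvee f(D) \in C$; and since $f$ preserves directed suprema, $f\big(\bigvee D\big) = \bigvee f(D) \in C$, so $\bigvee D \in f^{-1}(C)$. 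Hence $f^{-1}$ carries $d$-closed sets to $d$-closed sets, which is exactly continuity for the $d$-topologies.

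The main obstacle, such as it is, is purely a matter of invoking the right facts cleanly: that sub-dcpos really are the $d$-closed sets (this is built into the definition of the $d$-topology quoted above; the only mildly nontrivial point, which that definition presupposes, is that the sub-dcpos do form the closed sets of a topology --- closure under arbitrary intersections is immediate, and closure under finite unions uses directedness) and that Scott-continuity coincides with monotonicity together with preservation of directed joins. Both are standard, so once they are in place the proof reduces to the two short paragraphs sketched above; no category-theoretic or domain-theoretic machinery beyond the definitions is needed.
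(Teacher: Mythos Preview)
Your argument is correct: showing that the preimage of a sub-dcpo under a Scott-continuous map is again a sub-dcpo is exactly the right unfolding, and the two ingredients you invoke (monotonicity giving directedness of the image, preservation of directed joins giving closure) are precisely what is needed. There is nothing to compare against here, since the paper does not supply its own proof of this lemma but simply cites it from the literature.
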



\subsubsection{Commutative Probabilistic Monad}
\label{sec:appendixProbMonad}
We recall the definition of integration for continuous valuations and lower semicontinuous functions; see \cite{Goubault-Larrecq19,GoubaultLarrecqJT23} as references.

\begin{definition}[continuous valuation]
  Given a dcpo $X$, a \emph{continuous valuation} is a Scott continuous function $\nu\colon\scottTop{X}\rightarrow [0, \infty]$ such that
  1) $\nu(\emptyset) = 0$; 2) $\nu(U) + \nu(V) = \nu(U\cup V) + \nu(U\cap V)$.
\end{definition}

\begin{definition}[lower semicontinuous]
  Given a dcpo $X$, a function $f\colon X\rightarrow [0, \infty]$ is \emph{lower semicontinuous} (l.s.c.)
  if $f^{-1}\big((r, \infty]\big)$ is open (w.r.t. the Scott topology) for any $r\geq 0$.
\end{definition}

\begin{definition}[integration]
    Given a dcpo $X$, a continuous valuation $\nu\in\scottTop{X}\rightarrow [0, \infty] $, and a l.s.c. function $f\colon X\rightarrow [0, \infty]$,
    the \emph{integration} $\int f d\nu$ is defined by   $\int f d\nu \defeq \int^{\infty}_0 \nu\big(f^{-1}((t, \infty])\big)dt$,
    where the right integration is the ordinal Riemann integral.
\end{definition}

We list some useful properties related to the integration.

\begin{definition}[Dirac valuation] \label{def:dirac}
  Given $x\in X$, the Dirac valuation $\eta_X(x)$ is the continuous valuation defined by $\eta_X(x)(V) \defeq 1$ if $x\in V$ and  $\eta_X(x)(V) \defeq 0$ otherwise.
\end{definition}

\begin{definition}[simple valuation]
  A continuous valuation $\nu\colon\scottTop{X}\rightarrow [0, \infty]$ is \emph{simple}
  if $\nu$ is a weighted sum of the Dirac valuation $\nu \defeq \sum^{m}_{i=1} w_i \cdot \eta_{X}(x_i)$, where $w_i\in [0, \infty)$ for each $i$.
\end{definition}

\begin{lemma}[\cite{Goubault-Larrecq19}]
  The following statements holds:
\begin{enumerate}
  \item Given a simple valuation $\nu \defeq \sum^{m}_{i=1} w_i \cdot \eta_{X}(x_i)$, and l.s.c. function $f$, the equation $\int f d\nu = \sum^{m}_{i=1} w_i\cdot f(x_i)$ holds.
  \item Given a continuous valuation $\nu\colon\scottTop{X}\rightarrow [0, \infty]$,  l.s.c. functions $f_1, f_2\colon X\rightarrow [0, \infty]$ and $r_1, r_2\in [0, \infty)$, the equation $\int (r_1 f_1 + r_2 f_2)d\nu = r_1 \int f_1 d\nu +  r_2 \int f_2 d\nu$ holds.
  \item Given a continuous valuation $\nu\colon\scottTop{X}\rightarrow [0, \infty]$ and directed family $(f_\lambda)_{\lambda\in \Lambda}$ of l.s.c. functions (in the pointwise order), the equation
        $\int \bigvee_{\lambda\in \Lambda} f_\lambda d\nu = \bigvee_{\lambda\in \Lambda} \int f_{\lambda} d\nu $ holds.
  \item Given $U\in \scottTop{X}$ and a continuous valuation $\nu\colon\scottTop{X}\rightarrow [0, \infty]$, the equation $\int \chi_U d\nu = \nu(U) $ holds, where $\chi_U$ is the characteristic function of $U$.
  \item Given continuous valuations $\nu_1, \nu_2\colon\scottTop{X}\rightarrow [0, \infty]$,  $r_1, r_2\in [0, \infty)$, and l.s.c. function $f\colon X\rightarrow [0, \infty]$,
  the equation $\int f d(r_1\nu_1 + r_2\nu_2) = r_1 \int f d\nu_1 + r_2 \int f d\nu_2$ holds.
\end{enumerate}
\end{lemma}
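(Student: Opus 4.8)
The plan is to verify each of the five items directly from the defining formula $\int f\,d\nu = \int_0^\infty \nu\big(f^{-1}((t,\infty])\big)\,dt$, in each case pushing the relevant property of $\nu$ or of the Riemann integral through the outer $t$-integration. For item~(4), one computes $\chi_U^{-1}((t,\infty]) = U$ for $t\in[0,1)$ and $=\emptyset$ for $t\ge 1$, so the inner integral is $\int_0^1\nu(U)\,dt = \nu(U)$. For item~(1), since $\eta_X(x_i)\big(f^{-1}((t,\infty])\big)\in\{0,1\}$ is $1$ exactly when $f(x_i)>t$, we get $\nu\big(f^{-1}((t,\infty])\big)=\sum_{i=1}^m w_i\,[\,f(x_i)>t\,]$, and integrating over $t$ — a finite sum, so the Riemann integral is linear here — together with $\int_0^\infty [\,c>t\,]\,dt = c$ for $c\in[0,\infty]$ yields $\sum_i w_i f(x_i)$. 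For item~(5), the sum of valuations satisfies $(r_1\nu_1+r_2\nu_2)(U)=r_1\nu_1(U)+r_2\nu_2(U)$ by definition, and $r_1\nu_1+r_2\nu_2$ is again a continuous valuation (both modularity and Scott-continuity are preserved by nonnegative linear combinations), so linearity of the Riemann integral in its integrand gives the claim.

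For item~(3), write $g=\bigvee_\lambda f_\lambda$. Since the family is directed, for each fixed $t$ the opens $f_\lambda^{-1}((t,\infty])$ form a directed family in $\scottTop{X}$ with union $g^{-1}((t,\infty])$; because $\nu$ is Scott-continuous on the dcpo $\scottTop{X}$ (where directed unions are suprema), $\nu\big(g^{-1}((t,\infty])\big)=\bigvee_\lambda \nu\big(f_\lambda^{-1}((t,\infty])\big)$. Hence the integrands $t\mapsto \nu\big(f_\lambda^{-1}((t,\infty])\big)$ form an increasing net of nonnegative, decreasing-in-$t$ functions whose pointwise supremum is the integrand for $g$, and a monotone-convergence argument for $\int_0^\infty$ lets the supremum commute with the integral, giving $\int g\,d\nu=\bigvee_\lambda\int f_\lambda\,d\nu$.

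The substantive item is~(2). Positive homogeneity, $\int (rf)\,d\nu=r\int f\,d\nu$, is immediate via the substitution $t\mapsto t/r$ in the Riemann integral (and trivial for $r=0$). Additivity $\int(f_1+f_2)\,d\nu=\int f_1\,d\nu+\int f_2\,d\nu$ is exactly where \emph{modularity} of $\nu$ — condition~(2) in the definition of continuous valuation, $\nu(U)+\nu(V)=\nu(U\cup V)+\nu(U\cap V)$ — is used; it is the classical fact that a valuation integral is additive precisely because the valuation is modular. I would carry it out by the standard route: first prove additivity for \emph{simple} valuations, where by item~(1) it collapses to $\sum_i w_i\big(f_1(x_i)+f_2(x_i)\big)=\sum_i w_i f_1(x_i)+\sum_i w_i f_2(x_i)$; then extend to an arbitrary continuous valuation $\nu$ on $\mathbb{C}=\dCPO$ using that $\nu$ is a directed supremum of simple valuations, so that the simple case, item~(5) for finite sums, and Scott-continuity of $\nu\mapsto\int f\,d\nu$ (established by the same argument as in item~(3)) transfer additivity from the approximants to $\nu$. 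The main obstacle is assembling this approximation argument cleanly and checking it applies in $\dCPO$ (one may pass to a continuous subdomain, or invoke the known density of simple valuations, or give the direct proof from modularity by partitioning the superlevel sets of $f_1+f_2$); all remaining steps are routine manipulations of the Riemann integral, and since this is exactly the standard valuation-integral toolkit one may alternatively just appeal to~\cite{Goubault-Larrecq19}.
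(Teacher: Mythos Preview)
The paper does not give a proof of this lemma; it is simply quoted from \cite{Goubault-Larrecq19}. So there is nothing in-paper to compare your argument against, and your sketches for items~(1), (3), (4), (5) are correct and standard.

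For item~(2), however, your primary route has a real gap: it is \emph{not} true that an arbitrary continuous valuation on an arbitrary dcpo is a directed supremum of simple valuations. That approximation holds on continuous dcpos (via Jones' splitting lemma), but the lemma is stated for general $X$ in $\dCPO$, and there it can fail. The correct direct argument approximates the \emph{functions} rather than the valuation: every l.s.c.\ $f\colon X\to[0,\infty]$ is a directed supremum of step functions $\sum_i r_i\chi_{U_i}$ with $U_i$ Scott-open; by items~(4) and~(5) the integral of such a step function is $\sum_i r_i\,\nu(U_i)$; modularity of $\nu$ is then exactly what yields additivity for step functions; and item~(3) passes additivity to the limit. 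You do gesture at this alternative (``direct proof from modularity by partitioning the superlevel sets''), so the gap is recoverable, but the approximation-by-simple-valuations route as written does not go through for the statement as given.
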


We then recall a commutative probabilistic monad $\comProbPowerdommonad$ (that contains the simple valuations) on $\dCPO$~\cite{JiaLMZ21}.
\begin{definition}[commutative  probabilistic monad~\cite{JiaLMZ21}]
  The \emph{commutative probabilistic monad} is the  monad $\comProbPowerdommonad$ on $\dCPO$ given by the following Kleisli triple $(\comProbPowerdommonad, \eta, (\_)^{\ast})$, where
  1) $\comProbPowerdommonad(X)$ is the smallest dcpo that contains simple valuations $f\colon \scottTop{X}\rightarrow [0, 1]$; 2) the \emph{unit} $\eta_X\colon X\rightarrow \comProbPowerdommonad(X)$ is defined by the Dirac valuation;
  and 3) the \emph{extension} $f^{\ast}$ for $f\colon X\rightarrow \comProbPowerdommonad(Y)$ is given by $f(\nu)(V) \defeq \int_{x} f(x)(V) d\nu$ for each $V\in \scottTop{Y}$.
\end{definition}
The multiplication $\mu^{\comProbPowerdommonad}_X\colon \comProbPowerdommonad\big(\comProbPowerdommonad(X)\big) \rightarrow \comProbPowerdommonad(X)$ is given by $\mu^{\comProbPowerdommonad}_X(\nu)(V) \defeq \int_{\sigma} \sigma(V) d\nu $.
The monad $\comProbPowerdommonad$ is a strong monad with the strengh $\lstrength{}{X, Y}$ defined by
\begin{align*}
  \lstrength{}{X, Y}(x, \nu) \defeq \lambda U.\int_y \chi_U(x, y) d\nu.
\end{align*}

We use the following lemma to prove that some algebras of $\comProbPowerdommonad$ are indeed Eilenberg-Moore algebras: see~\cite{KeimelP16,JiaLMZ21} for the details.
\begin{lemma}[\cite{JiaLMZ21}]
  \label{lem:KegelspitzeEilenberg}
  Given a continuous Kegelspitze $K$, the linear barycenter map $\tau\colon \comProbPowerdommonad(K)\rightarrow K$ is an Eilenberg-Moore algebra.
\end{lemma}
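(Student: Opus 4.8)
The plan is to realise $\tau$ as the barycenter map and to reduce both Eilenberg--Moore laws to the defining property of barycenters together with linearity of integration against lower semicontinuous functions.

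First I would invoke the barycenter theorem for continuous Kegelspitzen --- the core analytic input from \cite{KeimelP16,JiaLMZ21}: for a continuous Kegelspitze $K$, every $\nu \in \comProbPowerdommonad(K)$ has a unique \emph{barycenter} $\tau(\nu) \in K$ characterised by $\Lambda\big(\tau(\nu)\big) = \int_K \Lambda \, d\nu$ for all Scott-continuous linear functionals $\Lambda\colon K \to [0,\infty]$, and such functionals order-separate the points of $K$. On a simple valuation $\nu = \sum_{i=1}^m w_i \cdot \eta^{\comProbPowerdommonad}_K(x_i)$ one checks directly that $\tau(\nu) = \sum_{i=1}^m w_i \cdot x_i$, the convex combination formed using the scalar multiplication and addition of $K$: indeed $\Lambda\big(\sum_i w_i \cdot x_i\big) = \sum_i w_i\, \Lambda(x_i) = \int_K \Lambda \, d\nu$ by linearity of $\Lambda$ and the integration formula for simple valuations recalled in the excerpt. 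Scott-continuity of $\tau$ then follows: monotonicity, because $\nu_1 \leq \nu_2$ implies $\int_K \Lambda \, d\nu_1 \leq \int_K \Lambda \, d\nu_2$ for every such $\Lambda$, which order-separation lifts to $\tau(\nu_1) \leq \tau(\nu_2)$; and preservation of directed suprema, because $\comProbPowerdommonad(K)$ is generated as a dcpo by its simple valuations and the Kegelspitze operations of $K$ together with Scott-continuity of integration force $\tau$ to commute with the relevant directed joins.

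Next I would verify the two algebra laws. For the unit law, the Dirac valuation $\eta^{\comProbPowerdommonad}_K(x)$ is the simple valuation $1 \cdot \eta^{\comProbPowerdommonad}_K(x)$, so $\tau\big(\eta^{\comProbPowerdommonad}_K(x)\big) = 1 \cdot x = x$, i.e.\ $\tau \circ \eta^{\comProbPowerdommonad}_K = \id_K$. For the multiplication law $\tau \circ \mu^{\comProbPowerdommonad}_K = \tau \circ \comProbPowerdommonad(\tau)$, fix $\Xi \in \comProbPowerdommonad\big(\comProbPowerdommonad(K)\big)$ and an arbitrary Scott-continuous linear $\Lambda\colon K \to [0,\infty]$. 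Using the formula $\mu^{\comProbPowerdommonad}_K(\Xi)(V) = \int_\nu \nu(V) \, d\Xi$ one gets the Fubini-type identity $\int_K \Lambda \, d\mu^{\comProbPowerdommonad}_K(\Xi) = \int_{\comProbPowerdommonad(K)} \big(\int_K \Lambda \, d\nu\big)\, d\Xi(\nu)$, so the barycenter property yields $\Lambda\big(\tau(\mu^{\comProbPowerdommonad}_K(\Xi))\big) = \int_{\comProbPowerdommonad(K)} \big(\int_K \Lambda \, d\nu\big)\, d\Xi(\nu)$. On the other hand $\comProbPowerdommonad(\tau)(\Xi)$ is the pushforward of $\Xi$ along $\tau$, hence $\int_K \Lambda \, d\big(\comProbPowerdommonad(\tau)(\Xi)\big) = \int_{\comProbPowerdommonad(K)} \Lambda\big(\tau(\nu)\big)\, d\Xi(\nu) = \int_{\comProbPowerdommonad(K)} \big(\int_K \Lambda \, d\nu\big)\, d\Xi(\nu)$, again by the barycenter property of $\tau(\nu)$, so $\Lambda\big(\tau(\comProbPowerdommonad(\tau)(\Xi))\big)$ equals the same value. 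Since Scott-continuous linear functionals order-separate the points of $K$, this forces $\tau(\mu^{\comProbPowerdommonad}_K(\Xi)) = \tau(\comProbPowerdommonad(\tau)(\Xi))$, as required.

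The hard part is the barycenter theorem itself, namely (a) existence of $\tau(\nu)$ for \emph{every} $\nu$, not merely the simple ones --- this requires a limiting argument that genuinely uses continuity of the domain $K$ and the fact that every $\nu \in \comProbPowerdommonad(K)$ is a directed supremum of simple valuations --- and (b) the Hahn--Banach-style separation result that Scott-continuous linear maps $K \to [0,\infty]$ order-separate the points of a continuous Kegelspitze. Both are nontrivial facts about continuous Kegelspitzen and are exactly what is imported from \cite{KeimelP16,JiaLMZ21}; alternatively, one can bypass the explicit barycenter construction by appealing to the universal property of $\comProbPowerdommonad$ as (the restriction to continuous domains of) the free continuous-Kegelspitze monad, which immediately equips $K$ with an Eilenberg--Moore structure and identifies its structure map with $\tau$ via the unit law.
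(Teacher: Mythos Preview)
The paper does not prove this lemma; it simply cites it from \cite{JiaLMZ21} (with background in \cite{KeimelP16}) and uses it as a black box in \S\ref{subsec:omitproofEx1} and \S\ref{sec:omitProofPartialExp}. Your sketch is therefore not something to compare against a proof in the paper, but rather an attempt to unpack what the cited result says.

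As a sketch your argument is sound and follows the standard route: characterise the barycenter via testing against Scott-continuous linear functionals, use the Fubini-type identity for $\mu^{\comProbPowerdommonad}$ and the change-of-variables formula for $\comProbPowerdommonad(\tau)$ to reduce the multiplication law to equality after every such functional, and then appeal to order-separation. You are right that the genuine work lies in (a) existence of barycenters for arbitrary $\nu\in\comProbPowerdommonad(K)$ and (b) the order-separation property of linear Scott-continuous functionals on a continuous Kegelspitze, both of which are exactly what \cite{KeimelP16,JiaLMZ21} supply. The alternative you mention at the end---recognising $\comProbPowerdommonad$ as the free continuous-Kegelspitze monad and obtaining the algebra structure on $K$ for free---is in fact closer to how \cite{JiaLMZ21} packages the result, and has the advantage of delivering Scott-continuity of $\tau$ automatically rather than via the slightly delicate density-plus-separation argument you give.
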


\subsubsection{Monad for Angelic Non-Determinism}
We define a lower monad based on an adjunction~\cite{Jacobs17}, which coincides with the extended Hoare powerdomain monad that includes emptyset as an least element~\cite{AbramskyJ95}.

\begin{definition}
  The \emph{lower monad} $\hoaremonad$ is defined by an adjunction (\cite{Jacobs17}\footnote{Jacobs defines an adjunction from a subcategory of $(\cljoin)^{\op}$, but it can be naturally extended to $(\cljoin)^{\op}$.}) $\mathcal{O} \dashv (\_)^{\op} \colon (\cljoin)^{\op}\rightarrow \dCPO$, where
  $\cljoin$ is the category of complete lattices and join preserving maps, $\op$ is a functor that inverses the order, and $\mathcal{O}\colon \dCPO\rightarrow (\cljoin)^{\op}$ is an functor that maps a dcpo to its Scott-open set $\mathcal{O}(X)$.
\end{definition}

Indeed, $\hoaremonad(X)$ is isomorphic to the dcpo of Scott-closed sets.

\subsection{Omitted Proof in~\cref{subsec:probAcceptingTrace}}
\label{subsec:omitproofEx1}
We prove that the tuple $(\comProbPowerdommonad, (A^{\ast}, =), (Y, =), \tau^{\_ \times A^*}, \tau^\comProbPowerdommonad, \alpha^d, q)$ defined in~\cref{subsec:probAcceptingTrace} is a product situation.

We show that the expectation $\tau^{\comProbPowerdommonad}$ is an Eilenberg-Moore algebra. This is indeed immediate by~\cref{lem:KegelspitzeEilenberg}: the dcpo $[0, 1]$ is a continuous Kegelspitze, and the expectation is the linear barycenter map of $[0, 1]$.
To see that the expectation is the linear barycenter map of $[0, 1]$, it is enough to show that $\int p  d\nu = \sum_{w_i} w_i\cdot x_i$ for any simple valuation $\nu \defeq \sum^{m}_{i=1} w_i \cdot \eta_{X}(x_i)$ on $[0, 1]$.
This is because the set of simple valuations is dense in $\comProbPowerdommonad([0, 1])$ in the d-topology, the expectation and the linear barycenter map are Scott-continuous, and $[0, 1]$ is a Hausdorff space with the d-topology.

We then show that $\alpha^d$ is a strong monad morphism, which is defined in~\cref{def:monadMorDFA}.
By \cref{ap:st_monad_morphism_bij}, it suffices to check that 
$\widehat{d}^\dagger$ satisfies conditions (i) and (ii) written in \cref{ap:st_monad_morphism_bij}.
These conditions can be easily checked because
$\widehat{d}(y)(\epsilon) = y$
and $\widehat{d}(y)(w_2\cdot w_1) = \widehat{d}\big(\widehat{d}(y)(w_2) \big)(w_1)$ hold for each $y \in Y$ and $w_1, w_2 \in A^\ast$.

Finally, we prove that $q\colon A^{\ast}\rightarrow [0, 1]^{Y}$ is indeed an inference query.
We can see that:
\begin{align*}
  & (q\circ \tau^{\_\times A^{\ast}})(w_1, w_2)(y) =  \begin{cases}
    1 &\text{ if $w_2\cdot w_1$ is accepting in $d$ from $y$,}\\
    0 &\text{ otherwise.}
  \end{cases}\\
  & (\tau^{(\_\times Y)^{Y}}\circ (q\times Y)^{Y}\circ \alpha^d_{A^{\ast}})(w_1, w_2)(y) = \begin{cases}
    1 &\text{ if $w_1$ is accepting in $d$ from $\widehat{d}(y)(w_2)$,}\\
    0 &\text{ otherwise.}
  \end{cases}
\end{align*}
It is then clear that $w_2\cdot w_1$ is accepting in $d$ from $y$ iff $w_1$ is accepting in $d$ from $\widehat{d}(y)(w_2)$ by the construction of $\widehat{d}$. \qed

\subsection{Omitted Definitions for~\cref{ex:hoRW}}
\label{subsec:omitSpec}
We present the specification for~\cref{ex:hoRW}.
\begin{equation*}
  \label{eq:exHOProbSpec}
\begin{tikzpicture}
      \node[state, initial] (y1) at (0, 0) {\tiny $y_1$};
      \node[state] (y2) at (2, 0) {\tiny $y_2$};
      \node[state, accepting] (y3) at (4, 0) {\tiny $y_3$};
      \draw[->] (y1) to node[pos=0.5, inner sep=3pt, above] {$\up$} (y2);
      \draw[->] (y1) edge [loop above] node[pos=0.5, inner sep=3pt, above] {$\down$} (y1);
      \draw[->] (y2) to node[pos=0.5, inner sep=3pt, above] {$\up$} (y3);
      \draw[->] (y2) edge [loop above] node[pos=0.5, inner sep=3pt, above] {$\down$} (y2);
      \draw[->] (y3) edge [loop above] node[pos=0.5, inner sep=3pt, above] {$\up, \down$} (y3);
\end{tikzpicture}
\end{equation*}

\subsection{Omitted Proofs in~\cref{subsec:expectedRewardAcceptingTrace}}
\label{sec:omitProofPartialExp}

\begin{lemma}
  \label{lem:cartesianprodhaus}
  The Cartesian product $[0, 1]\times [0, \infty]$ is a Hausdorff space with the d-topology.
\end{lemma}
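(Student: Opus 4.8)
The statement claims that $[0,1]\times[0,\infty]$, equipped with the d-topology, is Hausdorff. The natural approach is to reduce to the already-understood case of the factors. The plan is as follows. First I would recall (or observe) that for a dcpo $X$ the d-topology has as its closed sets precisely the sub-dcpos of $X$, so a set is d-open iff its complement is closed under directed suprema. For the totally ordered dcpos $[0,1]$ and $[0,\infty]$ (with the standard order), the d-topology coincides with the order topology, which is Hausdorff: given $x\neq y$, say $x<y$, pick $z$ strictly between them (or handle endpoints directly), and then the down-set-type and up-set-type pieces $\{w : w < z'\}$ and $\{w : w > z''\}$ for suitable $z',z''$ separate $x$ and $y$ — one must just check these are d-open, i.e. that their complements are sub-dcpos, which is immediate since a directed sup of elements $\geq c$ is $\geq c$ and a directed sup of elements $\leq c$ is $\leq c$.

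The second and main step is to pass from the factors to the product. Here the subtlety is that the d-topology on a product dcpo need not in general coincide with the product of the d-topologies; however, there is always a continuous (in fact, as noted via \cite{ZhaoF10}, Scott-continuous maps are d-continuous, and likewise projections are d-continuous) comparison. What I actually need is weaker: I only need that the d-topology on $[0,1]\times[0,\infty]$ is \emph{finer than or equal to} the product topology, i.e. that the two projections $\pi_1,\pi_2$ are d-continuous. This follows because each projection is Scott-continuous, hence d-continuous by \cref{lem:admissible_lift}'s ambient fact that Scott-continuous maps are continuous in the d-topology (stated in the preliminaries, citing \cite{ZhaoF10}). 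Once both projections are d-continuous, the d-topology refines the product topology; since a finer topology than a Hausdorff one is Hausdorff, and the product of two Hausdorff spaces is Hausdorff, we conclude that $[0,1]\times[0,\infty]$ with the d-topology is Hausdorff.

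Concretely, given $(x_1,x_2)\neq(y_1,y_2)$ in $[0,1]\times[0,\infty]$, they differ in some coordinate $i$; pick d-open sets $U,V\subseteq$ (the $i$-th factor) separating $x_i$ and $y_i$ using the one-dimensional argument above, and then $\pi_i^{-1}(U)$ and $\pi_i^{-1}(V)$ are disjoint d-open neighbourhoods of the two points in the product. The one genuinely load-bearing ingredient is the d-continuity of the projections, i.e. that Scott-continuous maps are d-continuous, which is exactly the lemma quoted from \cite{ZhaoF10} in \cref{subsec:preliminaries}; everything else is a short order-theoretic verification that the separating sets in $[0,1]$ and $[0,\infty]$ have sub-dcpo complements.

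\textbf{Expected main obstacle.} The only place where care is needed is making sure the separating sets one writes down in the one-dimensional factors are genuinely d-open — i.e. that their complements are closed under \emph{directed} suprema, including the behaviour at the endpoints $0$, $1$, and $\infty$ (for instance, $\{w\in[0,\infty] : w<\infty\}$ is not d-open, so at $\infty$ one separates using an interval of the form $(c,\infty]$ on one side and $[0,c')$ on the other with $c'\le c$). Handling these endpoint cases explicitly is the main fiddly point; the reduction to the product via d-continuity of projections is routine.
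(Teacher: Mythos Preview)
Your approach is essentially the paper's: both reduce to showing that products of open intervals are d-open in the product, which follows because the d-topology on the product refines the product of the d-topologies on the factors --- the paper cites \cite{KeimelL09} for this directly, while you derive it from d-continuity of the (Scott-continuous) projections via the \cite{ZhaoF10} lemma already in the preliminaries. Two harmless side-remarks in your write-up are inaccurate: the d-topology on $[0,1]$ is strictly finer than the order topology (e.g.\ $\{0\}$ is d-open since $(0,1]$ is a sub-dcpo), and $[0,\infty)$ \emph{is} d-open in $[0,\infty]$ since $\{\infty\}$ is trivially a sub-dcpo --- but neither affects the argument, which only needs that open intervals are d-open.
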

\begin{proof}
We show that the Cartesian product of open intervals are open in the $d$-topology, which are sufficnet to separate two different points.
This is true because open intervals in $[0, 1]$ and $[0, \infty]$ are open in the d-topology, and the d-topology of $[0, 1]\times [0, \infty]$ is finer than the product topology of d-topologies~\cite{KeimelL09}.
\end{proof}

We first show that $\tau^{\comProbPowerdommonad(\_\times \nonnegrat)}\colon \comProbPowerdommonad([0, 1]\times [0, \infty]\times \nonnegrat)\rightarrow [0, 1]\times [0, \infty]$ is an Eilenberg-Moore algebra.
By~\cite{Beck69,AguirreKK22}, it suffices to show that there are two Eilenberg-Moore algebras such that $\tau^{\comProbPowerdommonad}\colon \comProbPowerdommonad([0, 1]\times [0, \infty])\rightarrow [0, 1]\times [0, \infty]$ and
$\tau^{\_\times \nonnegrat}\colon [0, 1]\times [0, \infty]\times \nonnegrat\rightarrow [0, 1]\times [0, \infty]$ defined by
\begin{align*}
  \tau^{\comProbPowerdommonad}(\nu) \defeq \Big(\int_{p, r}  p d\nu, \int_{p, r}  r d\nu\Big), \\
  \tau^{\_\times \nonnegrat}(p, r, n) \defeq (p, n\cdot p + r),\text{ and }
\end{align*}
the following equation holds:
\begin{align*}
  \tau^{\comProbPowerdommonad}\circ \comProbPowerdommonad(\tau^{\_\times \nonnegrat}) \circ \rstrength{\comProbPowerdommonad}{} = \tau^{\_\times \nonnegrat} \circ \tau^{\comProbPowerdommonad}\times \nonnegrat.
\end{align*}

We show that $\tau^{\comProbPowerdommonad}$ is an Eilenberg-Moore algebra.
Since $[0, 1]\times [0, \infty]$ is a continuous Kegelspitze,
 it suffices to show that $\tau^{\comProbPowerdommonad}$ is the linear barycenter map.
This can be shown by the similar argument in~\cref{subsec:omitproofEx1} by~\cref{lem:cartesianprodhaus}.

We then show that $\tau^{\_\times \nonnegrat}$ is an Eilenberg-Moore algebra.
For the condition on units, we have
\begin{align*}
  \tau^{\_\times \nonnegrat}\circ \eta^{\_\times \nonnegrat}_{[0, 1]\times [0, \infty]}(p, r) = (p, r), \text{ and }
\end{align*}
for the condition on multiplications, we have
\begin{align*}
  &\tau^{\_\times \nonnegrat}\circ \mu^{\_\times\nonnegrat}_{[0, 1]\times [0, \infty]}(p, r, n_1, n_2) = \big(p, (n_1+n_2)\cdot p + r\big),\\
  &\tau^{\_\times \nonnegrat}\circ (\tau^{\_\times \nonnegrat}\times \nonnegrat)(p, r, n_1, n_2) = \tau^{\_\times \nonnegrat}(p, n_1\cdot p + r, n_2) = \big(p, (n_1+n_2)\cdot p + r\big).
\end{align*}

Finally, we show the last equation.
For any $n\in \nonnegrat$ and simple valuation $\nu \defeq \sum^{m}_{i=1} w_i \cdot \eta_{[0, 1]\times [0, \infty]}(x_i, y_i)$, we have
\begin{align*}
  &\big(\tau^{\_\times \nonnegrat} \circ \tau^{\comProbPowerdommonad}\times \nonnegrat\big)(\nu, n) = \Big( \int_{p, r}  p d\nu,\, n\cdot \int_{p, r}  p d\nu + \int_{p, r}  r d\nu\Big)\\
  =& \Big( \sum^m_{i=1} w_i\cdot x_i, \sum^m_{i=1} w_i\cdot (n\cdot x_i + y_i)\Big),\\
  &\big(\tau^{\comProbPowerdommonad}\circ \comProbPowerdommonad(\tau^{\_\times \nonnegrat}) \circ \rstrength{\comProbPowerdommonad}{}\big)(\nu, n)\\
  =&\big( \tau^{\comProbPowerdommonad}\circ \comProbPowerdommonad(\tau^{\_\times \nonnegrat})\big)\Big(\lambda U.\int_{x, y} \chi_U(x, y, n) d\nu\Big)\\
  =& \big( \tau^{\comProbPowerdommonad}\circ \comProbPowerdommonad(\tau^{\_\times \nonnegrat})\big)\Big(\sum^{m}_{i=1} w_i \cdot \eta_{[0, 1]\times [0, \infty]\times \nonnegrat}(x_i, y_i, n)\Big)\\
  =& \big( \tau^{\comProbPowerdommonad}\big)\Big(\sum^{m}_{i=1} w_i \cdot \eta_{[0, 1]\times [0, \infty]}(x_i, n \cdot x_i + y_i)\Big)\\
  =& \Big( \sum^m_{i=1} w_i\cdot x_i, \sum^m_{i=1} w_i\cdot (n\cdot x_i + y_i)\Big).
\end{align*}
This implies the desired equality because the set of simple valuations are dense in the d-topology.

We then show that the $q\colon A^{\ast}\rightarrow ([0, 1]\times [0, \infty])^{Y}$ is an inference query. 
We can see that: 
\begin{align*}
  & (q\circ \tau^{\_\times A^{\ast}})(w_1, w_2)(y) =   \begin{cases}
    (1, 0) &\text{ if $w_2\cdot w_1$ is accepting in $d$ from $y$,}\\
    (0, 0) &\text{ otherwise.}
  \end{cases}\\
  & (\tau^{(\_\times Y)^{Y}}\circ (q\times Y)^{Y}\circ \alpha^d_{A^{\ast}})(w_1, w_2)(y)\\
=& \begin{cases}
    (1, 0) &\text{ if $w_1$ is accepting in $d$ from $\widehat{d}(y)(w_2)$,}\\
    (0, 0) &\text{ otherwise.}
  \end{cases}
\end{align*}
We know that $w_2\cdot w_1$ is accepting in $d$ from $y$ iff $w_1$ is accepting in $d$ from $\widehat{d}(y)(w_2)$ by the construction of $\widehat{d}$.
\qed

\subsection{Omitted Proofs in~\cref{subsec:emptinessCheckingAcceptingTraces}}
\label{subsection:omittedProofsEmptinessCheck}
We show that $q\colon A^{\ast}\rightarrow \boolsets^{Y}$ is an inference query. 
\begin{align*}
  & (q\circ \tau^{\_\times A^{\ast}})(w_1, w_2)(y) = \begin{cases}
    \top &\text{ if $w_2\cdot w_1$ is accepting in $d$ from $y$,}\\
    \bot &\text{ otherwise,}
  \end{cases}\\
  & (\tau^{(\_\times Y)^{Y}}\circ (q\times Y)^{Y}\circ \alpha^d_{A^{\ast}})(w_1, w_2)(y)\\
=& \begin{cases}
    \top &\text{ if $w_1$ is accepting in $d$ from $\widehat{d}(y)(w_2)$,}\\
    \bot &\text{ otherwise.}
  \end{cases}
\end{align*}
Again, we know that $w_2\cdot w_1$ is accepting in $d$ from $y$ iff $w_1$ is accepting in $d$ from $\widehat{d}(y)(w_2)$ by the construction of $\widehat{d}$. \qed

\subsection{Omitted Proofs in~\cref{subsec:maximumRewardsAcceptingTraces}}
\label{subsection:omittedProofsRewardMachine}
We show that $\alpha^d\colon \_\times A^{\ast}\Rightarrow (\_\times U\times \boolsets\times \nonnegreal)^{U\times \boolsets\times \nonnegreal}$ is a strong monad morphism. 
By \cref{ap:st_monad_morphism_bij}, it suffices to check that 
$\widehat{d}^\dagger$ satisfies conditions (i) and (ii) written in \cref{ap:st_monad_morphism_bij}.
These conditions can be easily checked because
$\widehat{d}(u, b, m)(\epsilon) = (u, b, m)$
and $\widehat{d}(u, b, m)(w_2\cdot w_1) = \widehat{d}\big(\widehat{d}(u, b, m)(w_2) \big)(w_1)$ hold for each $u, b, m \in U \times \boolsets \times \nonnegreal$ and $w_1, w_2 \in A^\ast$.

We then prove that $\tau^{\hoaremonad}\colon \hoaremonad([0, \infty])\rightarrow[0, \infty]$ is an Eilenberg-Moore algebra. 
Clearly, $\tau^{\hoaremonad}$ is Scott-continuous. It is easy to show the condition on units. We prove the condition on multiplication. 
We see this by the following calculation: 
\begin{align*}
  &\big(\tau^{\hoaremonad}\circ \mu^{\hoaremonad}_{[0, \infty]}\big)(S) = \sup \big\{r\in T \ \big|\ T\in S \big\} =   \sup \big\{\sup T \ \big|\ T\in S \big\}  = \big(\tau^{\hoaremonad}\circ \hoaremonad(\tau^{\hoaremonad})\big)(S).
\end{align*}

Finally, we show that $q\colon A^{\ast}\rightarrow {[0, \infty]}^{U\times \boolsets\times  \nonnegreal}$ is an inference query. 
We have 
\begin{align*}
  & (q\circ \tau^{\_\times A^{\ast}})(w_1, w_2)(u, b, m) = \begin{cases}
    m' &\text{ if $w_2\cdot w_1$ gains a reward $m'$ in $d$ from $(u, b, m)$,}\\
    0 &\text{ otherwise,}
  \end{cases}\\
  & (\tau^{(\_\times U\times \boolsets\times \nonnegreal)^{U\times \boolsets\times \nonnegreal}}\circ (q\times U\times \boolsets\times \nonnegreal)^{U\times \boolsets\times \nonnegreal}\circ \alpha^d_{A^{\ast}})(w_1, w_2)(u, b, m)\\
=& \begin{cases}
    m' &\text{ if $w_1$ gains a reward $m'$ in $d$ from $\widehat{d}(u, b, m)(w_2)$,}\\
    0 &\text{ otherwise.}
  \end{cases}
\end{align*}
They coincide because $w_2\cdot w_1$ gains a reward $m'$ in $d$ from $(u, b, m)$ iff $w_1$ gains a reward $m'$ in $d$ from $\widehat{d}(u, b, m)(w_2)$. \qed

\section{Omitted Benchmarks}
\label{sec:omittedBenchmarks}

In addition to the examples in~\cref{subsec:ovSafety} and~\cref{ex:hoRW}, we use the following benchmark.

\begin{example}
  \label{ex:horeward}
  Consider the following higher-order probabilistic program with rewards:
    \begin{equation*}
        \letrec{\gainreward}{f}{\exprobbranch{\big(\exprobbranch{{(\gainreward\ f)}^{\fail}}{1/2}{1}{{(\gainreward\ g)}^{\success}}\big)}{3/4}{0}{\big(f\ ()\big)}}{\gainreward\ (\lambda x. x)},
    \end{equation*}
    where $f\colon\unit\rightarrow \unit$ and  $g\defeq \lambda x.\ \exprobbranch{(f\ x)}{1/2}{0}{()}$.
    We use the same specification presented in~\cref{ex:rewrdFO}. 
\end{example}

\end{document}
\endinput